\newcommand{\Sp}{\mathrm{Sp}}
\renewcommand{\Re}{\mathrm{Re}}
\renewcommand{\Im}{\mathrm{Im}}
\newcommand{\Ted}{\underset{\rightarrow}{\mathbb{T}e}}
\newcommand{\Teg}{\underset{\leftarrow}{\mathbb{T}e}}
\newcommand{\Ped}{\underset{\rightarrow}{\mathbb{P}e}}
\newcommand{\llangle}{\langle \hspace{-0.2em} \langle}
\newcommand{\rrangle}{\rangle \hspace{-0.2em} \rangle}
\newcommand{\dist}{\mathrm{dist}}
\newcommand{\Ad}{\mathrm{Ad}}
\newcommand{\id}{\mathrm{id}}
\newcommand{\xrightarrow}[1]{\overset{#1}{\longrightarrow}}
\newcommand{\Env}{\mathrm{Env}}
\newcommand{\Der}{\mathrm{Der}}
\newcommand{\Aut}{\mathrm{Aut}}
\newcommand{\InnAut}{\mathrm{InnAut}}
\newcommand{\Hom}{\mathrm{Hom}}
\newcommand{\Lin}{\mathrm{Lin}}
\newcommand{\dnc}{\mathbf{d}}
\newcommand{\Diff}{\mathrm{Diff}}
\newcommand{\Dis}{\mathbf{D}}
\newcommand{\Lie}{\mathrm{Lie}}
\newcommand{\Obj}{\mathrm{Obj}}
\newcommand{\Morph}{\mathrm{Morph}}
\newtheorem{defi}{Definition}
\newtheorem{prop}{Property}
\newtheorem{propo}{Proposition}
\newtheorem{lemma}{Lemma}
\newenvironment{proof}{\noindent \textit{Proof:}\small }{\normalsize \hfill $\Box$ \\}
\newenvironment{example}[2]{\noindent \textbf{Example #1: #2} \hrulefill \\ \small}{\hrule \vspace{0.4cm}}
\begin{document}

\title{Metrics and geodesics on fuzzy spaces}

\author{David Viennot}
\address{Institut UTINAM (CNRS UMR 6213, Universit\'e de Franche-Comt\'e, Observatoire de Besan\c con), 41bis Avenue de l'Observatoire, BP1615, 25010 Besan\c con cedex, France.}

\begin{abstract}
We study the fuzzy spaces (as special examples of noncommutative manifolds) with their quasicoherent states in order to find their pertinent metrics. We show that they are naturally endowed with two natural ``quantum metrics'' which are associated with quantum fluctuations of ``paths''. The first one provides the length the mean path whereas the second one provides the average length of the fluctuated paths. Onto the classical manifold associated with the quasicoherent state (manifold of the mean values of the coordinate observables in the state minimising their quantum uncertainties) these two metrics provides two minimising geodesic equations. Moreover, fuzzy spaces being not torsion free, we have also two different autoparallel geodesic equations associated with two different adiabatic regimes in the move of a probe onto the fuzzy space. We apply these mathematical results to quantum gravity in BFSS matrix models, and to the quantum information theory of a controlled qubit submitted to noises of a large quantum environment.
\end{abstract}

\noindent{\it Keywords\/}: fuzzy spaces, noncommutative geometry, quantum geometry, quantum gravity, matrix models, quantum information, coherent states 

\section{Introduction}
Fuzzy spaces \cite{Barrett} are special cases of Connes' noncommutative geometry \cite{Connes}; fuzzy spaces are inspired from the fundamental example of the fuzzy sphere \cite{Madore}. They are mathematical framework for matrix models of quantum gravity (nonperturbative regime of type IIB string theory): BFSS (Banks-Fischler-Shenker-Susskind) \cite{BFSS} and IKKT (Ishibashi-Kawai-Kitazawa-Tsuchiya) \cite{IKKT} matrix models. Simple fuzzy spaces arise by considering a fermionic string linking a D2-brane and a D0-brane \cite{Steinacker, Berenstein} (the spacetime dimension could be reduced to 3+1 by considering a truncation by taking an orbifold \cite{Berenstein}). The D2-brane is then assimilated to a noncommutative manifold. Quantum gravity is not the only domain of application of fuzzy spaces. Indeed adiabatic control of qubits entangled with an environment exhibits higher gauge structures similar to the ones of string theory \cite{Viennot1}, and the Hamiltonian in the interaction picture is similar to the geometric operator of a fuzzy space.\\
In the spirit of the Connes' noncommutative geometry, we can simply seen a fuzzy space as the quantisation of an extended ``solid body''. Usual quantum mechanics (obtained by application of the canonical quantisation rules -- first quantisation --) is the quantisation of classical mechanics of point systems. Quantum field theory is the quantisation of classical field theory (second quantisation, by application of the canonical quantisation rules onto conjugated fields -- the four-potential vector field and the electric field for electrodynamics for example --); so the quantisation of the ``scalar, vector or spinor degrees of freedom'' by sustaining classical the base points of the fields. We can imagine the fuzzy spaces as resulting from a noncanonical third quantisation (``fuzzyfication'') of rigid or deformable bodies. Table \ref{fuzzyfication} summarises briefly this idea for the fuzzyfication of a surface.
\begin{table}
\begin{center}
  \begin{tabular}{rcl}
    \textit{Classical point mechanics} & $\xrightarrow{\text{first quantisation}}$ & \textit{Quantum mechanics} \\
    $\vec x \in \mathbb R^3$ & & $\psi \in L^2(\mathbb R^3,d\vec x) = \mathcal H$ \\
    $(x^i,p^i) \in \mathbb R^6$ & & $\hat x^i,\hat p^i \in \mathcal L(\mathcal H)$ \\
    & & $[\hat x^i,\hat p^j] = \imath \delta^{ij} \id$ \\
    \hline
    \textit{Classical field theory} & $\xrightarrow{\text{second quantisation}}$ & \textit{Quantum field theory} \\
    $\{\vec x \mapsto \psi(\vec x)\} \in \underline{\mathbb C}_{\mathbb R^3}$ & & $\{\vec x \mapsto \Psi(\vec x),\Psi^+(\vec x)\} \in \underline{\mathcal L(\mathcal F_\pm)}_{\mathbb R^3}$ \\
    & & $[\Psi(\vec x),\Psi^+(\vec x^\prime)]_\pm = \delta(\vec x-\vec x^\prime) \id $\\
    \hline
    \textit{Classical solid mechanics} & $\xrightarrow{\text{fuzzyfication}}$ & \textit{Fuzzy space theory} \\
    $M = \{\vec x \in \mathbb R^3 \text{ with } f(\vec x)=0\}$ & & $\mathfrak X = \Lin_{\mathbb R}(X^1,X^2,X^3)$ \\
    & & with $\vec X \in \mathcal L(\mathcal H)^3$, $f(\vec X) = 0$ \\
    & & and $[X^i,X^j] \not= 0$ \\
    $\frac{\partial}{\partial x^i} \in TM$ & & $-\imath [X^i,\bullet] \in \Der(\mathfrak X)$ \\
    $dx^i \in \Omega^1 M$ & & $\Der(\mathfrak X) \ni L \mapsto \dnc X^i(L) = L(X^i)$
  \end{tabular}
  \caption{\label{fuzzyfication} Comparison between first and second quantisation and fuzzyfication. $\mathcal H$ is a Hilbert space, $\mathcal F_\pm$ a fermionic or bosonic Fock space. $\Psi^+(x) = \sum_i \overline{\phi_i(x)} a_i^+$ and $\Psi(x) = \sum_i \phi_i(x) a_i$ are the field operators of creation and annihilation of a particle at point $x$ ($(\phi_i)_i$ being an orthonormal basis of the Hilbert space of a single particle and $a_i^+/a_i$ being the creation/annihilation operators of a particle on the mode $i$). $(X^i)_i$ is a set of coordinate observables of the noncommutative manifold, the superoperator $-\imath [X_i,\bullet]$ playing the role of the noncommutative tangent vector field in the $i$-direction and $\dnc X^i$ (with $\dnc$ the Koszul differential) playing the role of the noncommutative cotangent vector field (differential 1-form) in the $i$-direction.}
\end{center}
\end{table}
\\
An important notion concerning the fuzzy space concerns their quasicoherent states \cite{Schneiderbauer, Steinacker2}. These ones are named by analogy with the Perelomov coherent states of a Lie algebra \cite{Perelomov}, and are the states such that the Heisenberg uncertainties concerning the coordinate observables are minimised. Moreover these ones are the ground eigenstates of the geometry operator $\slashed D_x$ of the fuzzy space. In the BFSS matrix model \cite{Berenstein}, $\slashed D_x$ is the Dirac operator of the fermionic string, and the quasicoherent states are the ones for which the displacement energy (the ``tension energy'' of the string) is zero. In quantum information theory, the interaction Hamiltonian of the qubit has a structure similar to $\slashed D_x$, and the quasicoherent state is (in the language of ref. \cite{Viennot1}) a *-eigenvector associated with a noncommutative eigenvalue, corresponding to a qubit control with zero energy uncertainty. In a recent work \cite{Viennot2}, it is proposed that the quasicoherent picture permits to define the emergent gravity at the Planck scale in the BFSS model. Indeed, we can see the BFSS model as the quantisation of a flat spacetime in which the noncommutativity induces a quantum non-zero (Weitzenböck) torsion (see ref. \cite{Aldrovandi} for an introduction to Weitzenböck torsions). At the semi-classical thermodynamical limit (number of strings tending to infinity with constant density), gravity (spacetime curvature) emerges at the macroscopic scale from the noncommutativity at the microscopic one (more precisely, the noncommutativity relations define at the thermodynamical limit a Poisson structure which defines a spacetime effective metric). Quasicoherent states are associated with a classical ``eigenmanifold'' (as a quantum object, a fuzzy space has for ``eigenvalues'' -- measurement outputs -- a classical quantity, here a classical manifold). In ref. \cite{Viennot2} it is argued that this eigenmanifold is the emergent curved spacetime at the Planck scale, in the meaning where it is the classical manifold closest to the quantum geometry of the fuzzy space (since it is associated with the states minimising the Heisenberg uncertainties and with adiabatic transport which is the quantum dynamical regime closest to classical dynamics). This emerging spacetime is endowed with a natural metric and with a Lorentz connection (inducing curvature and torsion).\\
In the present paper, we want to analyse more precisely the proposition of ref. \cite{Viennot2} with more general and mathematical point of view of the possibility to endow a fuzzy space and its eigenmanifold with metrics and geodesic equations. Different works explore the application of the Connes' metric onto fuzzy spaces \cite{Rieffel,Dandrea,Dandrea2,Dandrea3,Dandrea4}, but in this paper we present another way by considering metrics associated with the quasicoherent picture. We will show that a fuzzy space can be endowed with two different metrics, one being a quantised metric as an infinitesimal square length (as in the Connes theory) and the other one being a quantised metric as a field of inner product of tangent vectors. The presence of two metrics is in accordance with the ``quantum fluctuations'' onto a fuzzy manifold. ``Paths'' onto a fuzzy manifold are submitted to these quantum fluctuations, and so the square root of the averaging of the first quantum metric in quasicoherent states (which is the metrics onto the eigenmanifold of ref. \cite{Viennot2}) is the length of the mean infinitesimal path, whereas the averaging of the square root of the second quantum metric is the mean length of the fluctuating infinitesimal quantum paths (the mean length of the paths being larger than the length of the mean path, as for a classical Brownian motion). This effect of the quantum uncertainty (of the quantum indeterminism) explains why the two notions of metric (infinitesimal square length and field of inner product), which are totally equivalent in classical geometry, are two distinct objects in quantum geometry. Moreover, in addition to the interpretation to quantum gravity, we want also here explore the interpretation of the quantum metrics for quantum information theory (as we will seen, they are associated with residual energy uncertainties associated with measurement misalignments, i.e. small errors in the application of adiabatic quantum control of the qubit onto the manifold of zero energy uncertainty).\\
This paper is organised as follows. Section 2 sets definitions concerning fuzzy spaces and noncommutative geometry. The role of this section is essentially to fix some notations and some definitions which can be slightly change from an author to another one. Section 3 is a review about the eigen geometry of a fuzzy space (theory of quasicoherent states and eigenmanifolds). Section 4 introduces the two quantum metrics, their relations with the geometry of the eigenmanifold and their interpretations. Section 5 studies the adiabatic transport of a probe onto the eigenmanifold of a fuzzy space. The torsion is intimately related to this question, and we introduce then the associated auto-parallel geodesics and their interpretations. Section 6 explains how generalise the present discussion to a time-dependent fuzzy space. Finally in a concluding section, we discuss the generalisation to higher dimensional fuzzy spaces (the dimension is set to be 3 in this paper for the sake of simplicity), the analogy between quantum gravity and quantum information theory provided by the common model of fuzzy space; and we draw some futur research directions. Four appendices finish this paper. The first one introduces the perturbation theory of fuzzy spaces, which is needed to treat concrete examples. The second one presents a generalisation of section 4 which focuses only on fuzzy spaces homogeneous concerning their entanglement properties. In the second appendix we relax this property. The third appendix presents some computations associated with the Lorentz connection of a fuzzy space, which are not necessary to the understanding of the main discussion of this paper. And the last one presents the relation between the fuzzy geometry and the category theory (the presence of two quantum metrics being related to the need of defining a metric for the objects and another one for the arrows - morphisms - onto a categorical manifold). The fuzzy space geometry is often illustrated in the literature by the highly symmetric examples of the fuzzy sphere, the fuzzy plane, and the fuzzy complex projective spaces. Throughout this paper, we illustrate the present results by several examples, and especially the case of fuzzy surface plots (fuzzyfication of classical surfaces defined by a cartesian equation $z=f(x,y)$). The examples are splitted in the whole of the paper in order to illustrate a notion immediately after its introduction.\\

\textit{Some notations are used throughout this paper. We adopt the Einstein notations concerning the repetition of an index at lower and upper positions which is equivalent to a summation. For a tensor $T$, we denote the symmetrisation and the antisymmetrisation of two indices by $T_{...(a...b)...} \equiv T_{...a...b...}+T_{...b...a...}$ and $T_{...[a...b]...} \equiv T_{...a...b...}-T_{...b...a...}$. In a tensor, latin indices run from 1, whereas greek indices run from 0. For a linear set of operators (or matrices) $\mathfrak X$, $\Env(\mathfrak X)$ denotes its enveloping $C^*$-algebra and $\Der(\mathfrak X)$ its linear space of inner derivatives (Lie derivatives $-\imath [X,\bullet]$). For a set of vectors $V$ in a vector space $E$ over $\mathbb K$, $\Lin_{\mathbb K}(V)$ denotes the vector subspace of $E$ generated by $V$. For a manifold $M$, $T_xM$ denotes the set of tangent vectors at $x$, $\Omega^n M$ the set of differential $n$-forms and $\Diff M$ its set of diffeomorphisms. $\simeq$ between two manifolds stands for ``diffeomorphic to''. $\underline{E}_M$ denotes the $E$-valued differentiable functions on $M$. For a category $\mathscr C$, $\Obj \mathscr C$ denotes its set of objects, $\Morph \mathscr C$ denotes its set of arrows, $s$, $t$ and $\id$ denote their source, target and identity maps; the composition of arrows being denoted by $\circ$.}\\

\textit{For the applications, we use the unit system such that $\hbar = c = G = 1$ ($\ell_P = m_P = t_P = 1$ Planck units) or $\hbar = \frac{e^2}{4\pi \epsilon_0} = 1$ (atomic units).}

\section{Fuzzy spaces}
In this section we present the general concept of fuzzy spaces \cite{Barrett} and some related notions about noncommutative geometry \cite{Connes}. We present no new result, but this section permits to fix some notations and some definitions for the sequel of this paper.

\begin{defi}[Fuzzy space]
  A (3D) fuzzy space is a noncommutative manifold defined by a spectral triple $\mathfrak M = (\Env(\mathfrak X), \mathbb C^2 \otimes \mathcal H, \slashed D_x)$ where
  \begin{itemize}
    \item $\mathcal H$ is a separable Hilbert space.
    \item $\mathfrak X = \Lin_{\mathbb R}(X^1,X^2,X^3,\id)$ is a space generated by three self-adjoint linear operators $X^i$ of the Hilbert space $\mathcal H$ (and the identity operator on $\mathcal H$) and $\Env(\mathfrak X)$ is the $C^*$-enveloping algebra of $\mathfrak X$.
    \item $\slashed D_x = \sigma_i \otimes (X^i -x^i)$ is the Dirac operator of the noncommutative manifold where $(\sigma_i)$ are the Pauli matrices and $x \in \mathbb R^3$ is a classical parameter.
  \end{itemize}
\end{defi}
Note that the case where one $X^i$ be the zero operator is not excluded. It can be useful to introduce the non-self-adjoint coordinate observable $Z=X^1+\imath X^2$ associated with the complex parameter $z=x^1+\imath x^2$ to write $\slashed D_x = \left(\begin{array}{cc} X^3-x^3 & Z^\dagger-\bar z \\ Z-z & -X^3+x^3 \end{array} \right)$.\\

As usual in noncommutative geometry, the non-abelian $C^*$-algebra $\Env(\mathfrak X)$ plays the role of the space of functions of $\mathfrak M$: $\mathcal C^\infty_{n.c.}(\mathfrak M) \sim \Env(\mathfrak X)$, and $\mathbb C^2 \otimes \mathcal H$ plays the role of a spinor field space on $\mathfrak M$. Let $\mathcal Z(\Env(\mathfrak X))$ be the center of $\Env(\mathfrak X)$. The space of derivatives $\Der(\mathfrak X) = \Lin_{\mathcal Z(\Env(\mathfrak X))}(L_{X^1},L_{X^2},L_{X^3})$ (with $L_X(Y) = -\imath[X,Y]$, $\forall X \in \mathfrak X, Y \in \Env(\mathfrak X)$) plays the role of the space of tangent vector fields on $\mathfrak M$: $T_{n.c.} \mathfrak M \sim \Der(\mathfrak X)$. The set of noncommutative differential $n$-forms $\Omega^n_\Der(\mathfrak X)$ is the set of $\mathcal Z(\Env(\mathfrak X))$-multilinear antisymmetric maps from $\Der (\mathfrak X)^n$ to $\Env(\mathfrak X)$.\\
The differential $\dnc: \Omega^n_\Der (\mathfrak X) \to \Omega^{n+1}_\Der (\mathfrak X)$ is defined by the Koszul formula:
\begin{eqnarray}
  & & \dnc \xi(L_1,...,L_{n+1})  =  \sum_{i=1}^{n+1} (-1)^{i+1} L_i\xi(L_1,...,\check L_i,...,L_{n+1}) \nonumber \\
  & \quad & + \sum_{1\leq i<j\leq n+1} (-1)^{i+j} \xi([L_i,L_j],L_1,...,\check L_i,...,\check L_j,...,L_{n+1})
\end{eqnarray}
where $\check L_i$ means ``deprive of $L_i$''.\\
In contrast with a commutative manifold, the duality relation between the derivative with respect to $X^i$ and the differential of $X^j$ fails:
\begin{equation}
  \langle \dnc X^i , L_{X^j} \rangle_{\mathfrak X} = -\imath [X^j,X^i] \not= \delta^{ij}
\end{equation}
where $\langle \bullet, \bullet \rangle_{\mathfrak X}$ stands for the duality bracket between $\Omega^1_{\Der}(\mathfrak X)$ and $\Der(\mathfrak X)$. $\Theta^{ij} = -\imath [X^i,X^j]$ defines a Poisson structure on $\mathfrak M$ with the following Poisson bracket $\Theta:\mathfrak X^2 \to \Env(\mathfrak X)$: $\Theta(f_i X^i,g_j X^j) = -\imath [f_i X^i,g_j X^j] = \Theta^{ij} f_i g_j$ (with $f_i,g_j \in \mathcal Z(\Env(\mathfrak X))$ or often in $\mathbb R$). Let $\theta_i \in \Omega^1_{\Der}(\mathfrak X)$ be the 1-forms defined by duality with $L_{X^j}$:
\begin{equation}
  \langle \theta_i,L_{X^j} \rangle_{\mathfrak X} = \delta^j_i
\end{equation}
By construction we have $\dnc X^i = \Theta^{ji} \theta_j$.\\

\begin{defi}[States of a fuzzy space]
  We call normal states of a fuzzy space $\mathfrak M$ the normal states \cite{Bratteli} of the $C^*$-algebra $\Env(\mathfrak X)$:
  $$ \mathcal E(\mathfrak M) = \{\omega: \Env(\mathfrak X) \to \mathbb R, \exists \rho \in \mathcal L(\mathcal H), \rho^\dagger=\rho, \rho\geq 0, \tr \rho=1, \text{ with } \omega(X) = \tr(\rho X)\} $$
  We call pure states of $\mathfrak M$ the pure states \cite{Bratteli} of the $C^*$-algebra $\mathcal L(\mathbb C^2) \otimes \mathcal B(\mathcal H)$:
  $$ \mathcal P(\mathfrak M) = \{\omega: \Env(\mathfrak X) \to \mathbb R, \exists P \in \mathcal L(\mathbb C^2 \otimes \mathcal H), P^\dagger=P, \tr P=1, P^2=P \text{ with } \omega(X) = \tr(P X)\} $$
\end{defi}
The density matrices $\rho$ play the role of the density functions (or distributions) on $\mathfrak M$ and the states play the role of the integration on $\mathfrak M$:
\begin{equation}
  \int_{\mathfrak M}^{n.c.} Y \rho \dnc X^1 \dnc X^2\dnc X^3 \sim \tr(\rho Y) \qquad \forall Y \in \Env(\mathfrak X)
\end{equation}
The choice to define pure and normal states with respect to two different $C^*$-algebras can be surprising. But we want to consider pure states of the bipartite system (spin+coordinate degrees of freedom). The normal states, the density matrices, onto the coordinate degrees of freedom must result from partial trace of pure states onto the spin degree of freedom. And so the character pure or mixed of the density matrices is related to the character separable or entangled of the pure states of the bipartite system.\\
The pure states replace the notion of points on $\mathfrak M$, in the meaning that $\omega_P(Y)=\tr(PY)$ is equivalent to $P(f)=f(P)$ for $P$ a point in a commutative manifold and $f$ a function onto this manifold; a point $P$ on a commutative manifold defining a Dirac distribution density centred on itself. We do not have a notion of points on a noncommutative manifold, since two different pure states are not necessary separated $P_1P_2 \not=0$ even if $P_1 \not= P_2$. Note that we define the pure states of $\mathfrak M$ by projectors of $\mathbb C^2 \otimes \mathcal H$ and not of $\mathcal H$. We have then $\omega_P(X) = \tr(PX) = \tr(\rho_P X)$ where $\rho_P = \tr_{\mathbb C^2} P$ (with $\tr_{\mathbb C^2}$ the partial trace onto $\mathbb C^2$, note that in $\tr(PX)$ $\tr$ denotes the trace onto $\mathbb C^2 \otimes \mathcal H$ whereas in $\tr(\rho_P X)$ it denotes the trace onto $\mathcal H$). So a pure state of $\mathfrak M$ defines a normal state $\tr(\rho_P \bullet)$ (which is not pure for the $C^*$-algebra $\Env(\mathfrak X)$). $P$ defines also the (mean value) vector $\vec n(P) = (\tr(P \sigma_1), \tr(P \sigma_2),\tr(P \sigma_3)) \in \mathbb R^3$. At this stage we can think $\vec n(P)$ as the mean local orientation of $\mathfrak M$ at the pure state $\omega_P = \tr(P \bullet)$. This is the possible quantum entanglement between the local orientation degree of freedom ($\mathbb C^2$) and the coordinate degrees of freedom ($\mathcal H$) which needs to define pure and normal states of $\mathfrak M$ with respect to two different $C^*$-algebras.\\

It is important not to confuse with the notion of fuzzy manifolds \cite{Chang, Ferraro} which is related to the notion of fuzzy sets which are ``sets'' ``containing'' elements without certainty (each element has a classical probability to belong to the set). The two notions are close but fuzzy manifolds deal with classical probabilities whereas fuzzy spaces deal with quantum probabilities (noncommutative probabilities or free probabilities in the language of the mathematicians). To be more precise and avoid any confusion, it could be more convenient to call \textit{topological fuzzy manifolds} the case associated with classical probabilities and \textit{quantum fuzzy manifolds} or \textit{noncommutative fuzzy manifolds} the case associated with quantum probabilities. But in the whole of this paper, the term \textit{fuzzy spaces} refers only to this last case.\\

The operators $(X^i)$ play the role of noncommutative coordinate observables on $\mathfrak M$. In other words, for $\omega \in \mathcal P(\mathfrak M)$, $\omega(X^i)$ is the mean value of the ``$i$-th coordinate'' of the pure state $\omega$ on $\mathfrak M$, and $\Delta_\omega X^i = \sqrt{\omega({X^i}^2) - \omega(X^i)^2}$ is the quantum uncertainty onto the $i$-th coordinate of $\omega$ on $\mathfrak M$. Due to the noncommutativity of the coordinate observable we have Heisenberg uncertainty relations: $\forall \omega \in \mathcal P(\mathfrak M)$
\begin{equation}
  \Delta_\omega X^i \Delta_\omega X^j \geq \frac{1}{2} |\omega([X^i,X^j])|
\end{equation}
$\mathfrak M$ is said ``fuzzy'' since the coordinates of its pure states are ``delocalised'' as a quantum particle in the usual space. This is the existence of the quantum coordinate observables which is the main particularity of a fuzzy space with respect to any noncommutative manifold.\\

The choice of $\slashed D_x$ is made to correspond to the string theory BFSS matrix model (the spacetime being reduced to 3+1 dimensions by truncation with a supersymmetric orbifold \cite{Berenstein}). In this one, $\mathfrak M$ is a noncommutative D2-brane and the classical parameter $x$ is the classical coordinates of a probe D0-brane; a fermionic string (of spin state space $\mathbb C^2$) links the D2-brane and the probe D0-brane. $\slashed D_x$ is the displacement energy observable (the ``tension energy'' of the fermionic string). If the D0-brane is ``far away'' from the D2-brane the tension energy of the fermionic string increases. From the point of view of the fuzzy geometry, $\slashed D_x$ represents the minimal coupling between the spin (local orientation) observables ($\sigma_i$) and the coordinate observables ($X^i$), responsible of the quantum entanglement. $x \in \mathbb R^3$ is the position of the probe of the classical observer (this one has $\mathbb R^3$ as the physical space in her mind). This probe is needed to define the observation of the fuzzy space by the observer (the observer realises experiments ``somewhere'' in the classical space of her mind). As in usual quantum mechanics, we cannot completely separate the ``properties'' of the quantum system from their observations. Here this does not imply that the Hilbert space is $L^2(\mathbb R^3,dx)$ (not a quantum particle) but this implies the presence of $x$ in $\slashed D_x$ (the geometry observable of $\mathfrak X$). In contrast with usual quantum mechanics where the quantum system and the observer share a common spacetime background, here the quantum system (the quantum spacetime itself) and the classical observer have not a common background. The geometry observable $\slashed D_x$ (which is related both to the quantum system and to the observer) depends on the space observables $(X^i)$ but also on the classical space generated by $(x^i)$ in which the observer places the measurement outcomes. The space generated by $(x^i)$ is not physical for the point of view of $\mathfrak M$ but is a parameter space (in the mind of the observer) permitting for example to define the change of the place where the measures are performed (by varying the values of $x^i$). The spacetime $\mathbb R^3 \times \mathbb R$ permits to the observer to place the events corresponding to the measurement outcomes, permitting to easily endow this set of events with a causal structure \cite{Penrose}. $\mathbb R^3 \times \mathbb R$ is not physical (it is the spacetime in the observer's mind), but the causal structure is. We can also understand this by comparison with commutative geometry in the context of general relativity. To observe the geometry of the spacetime, an observer needs to measure the geodesic moves of a test particle. In a same way, to observe the fuzzy geometry of $\mathfrak M$, an observer needs to make measurement on a test classical particle (of position $x$ in the classical space $\mathbb R^3$ of the measurement outcomes of position).\\

The choice of $\slashed D_x$ corresponds also to a problem of quantum information theory consisting to study a qubit (assimilated to a $1/2$-spin system) manipulated by a magnetic field $\vec B$ and in contact with an environment inducing a decoherence phenomenon \cite{Viennot3}. In the interaction picture, the Hamiltonian of the qubit is $\slashed D_x$ where $x^i = \frac{\mu}{2} B^i$ are the parameters of the magnetic field controlling the qubit to realise single-qubit logical gates ($\mu$ is the magnetic moment magnitude), and $X^i = U_{\mathcal E}^\dagger(t) V^i U_{\mathcal E}(t)$ where $U_{\mathcal E}(t)$ is the evolution operator of the environment and $V = \sigma_i \otimes V^i$ is the interaction operator between the qubit and its environment. $X^i$ are then the environmental noise observables. In a first step, we treat the problem with time-independent observables $X^i$. The time-dependent case will be treated section \ref{TDFS}, where we will see (property \ref{dynFuzzy}) that for the quantum information models, the time-dependent behaviour is deduced from the stationary case. \\

Some authors prefer to consider fuzzy spaces with $\Box_x = \delta_{ij} (X^i -x^i)(X^j-x^j)$ a noncommutative equivalent of a Laplace-Beltrami operator or of a D'Alembertian operator as fundamental geometry observable in place of $\slashed D_x$ (noncommutative equivalent of a Dirac operator), see for example \cite{Schneiderbauer}. We have $\slashed D_x^2 = \id \otimes \Box_x + \frac{\imath}{2} {\varepsilon_{ij}}^k \sigma_k \otimes[X^i,X^j]$, and so $\Box_x = \tr_{\mathbb C^2} \slashed D_x^2$. It is clear that the geometric informations encoded in $\Box_x$ are also in $\slashed D_x$, in contrast $\slashed D_x$ includes informations concerning the local orientation which are lost with $\Box_x$ (these ones are clearly erased by the partial trace onto $\mathbb C^2$). In particular, possible entanglements between states of local orientation and states of location of $\mathfrak M$ is \textit{a priori} ignored if we consider firstly $\Box_x$ in place of $\slashed D_x$. The choice of $\slashed D_x$ as fundamental geometry observable is then more general.\\

\begin{example}{1}{Fuzzy sphere}
  $\mathfrak X$ is generated by $X^i = rJ^i$ with $r\in \mathbb R^{+*}$ and $[J^i,J^j] = \imath {\varepsilon^{ij}}_k J^k$ ($\Lie(X^1,X^2,X^2)$ is the $\mathfrak{su}(2)$ algebra), $\mathcal H = \mathscr H_j$ is the space of the unitary irreducible representation of $\mathfrak{su}(2)$ of dimension $2j+1$ ($j \in \frac{1}{2} \mathbb N$). This Fuzzy space is called Fuzzy sphere since the coordinate observables satisfy the following equation
  \begin{equation}
    (X^1)^2+(X^2)^2+(X^3)^2 = r^2j(j+1)\id
  \end{equation}
  mimicking the equation of a classical sphere of radius $r\sqrt{j(j+1)}$. In quantum gravity matrix model, the thermalisation of a Fuzzy sphere has been proposed as a model of event horizon of a quantum black hole \cite{Dolan, Iizuka}. In quantum information theory, consider a qubit (assimilated to a $\frac{1}{2}$-spin system $\vec S$) controlled by a magnetic field $\vec B$ and interacting with $N$ other qubits $(\vec S_a)_{a=1,...,N}$ forming its environment (the quantum computer) \cite{Viennot3}. The interaction Hamiltonian of the qubit is then
  \begin{eqnarray}
    \slashed D_x & = & - \mu \vec S \cdot \vec B + \mathcal J \vec S \cdot \sum_{a=1}^N \vec S_a \\
    & = & \sigma_i \otimes (X^i-x^i)
  \end{eqnarray}
  ($\mu$ is the magnetic moment magnitude of the spin system and $\mathcal J$ is the exchange integral) with $\vec x = \frac{\mu}{2} \vec B$ and
  \begin{equation}
    X^i = \frac{\mathcal J}{4} \bigoplus_{j=0 \text{ or } 1/2}^{N/2} J^i_{(j)}
  \end{equation}
  where $(J^i_{(j)})_i$ are the $(2j+1)$ dimensional unitary irreducible representations of the $\mathfrak{su}(2)$ generators onto $\mathscr H_j$ ($\mathcal H = \bigoplus_{j=0\text{ or }1/2}^{N/2} \mathscr H_j$). The system can be viewed as $\lceil N/2 \rceil$ concentric Fuzzy spheres of radii $\frac{\mathcal J}{4} \sqrt{j(j+1)}$. 
\end{example}

\begin{example}{2}{Fuzzy surface plots}
  \noindent \textbf{Fuzzy plane:} Let $(a,a^+,\id)$ be the generator of the CCR algebra $[a,a^+]=\id$ (annihilation and creation operators) onto the Fock space $\mathscr F$ (of canonical basis $(|n\rangle)_{n\in \mathbb N}$). The Fuzzy space defined by $\mathcal H = \mathfrak F$, $Z=a$ and $X^3=0$ is a Fuzzy plane. In quantum gravity matrix model, it is the fundamental example of flat spacetime slice \cite{Karczmarek}. The geometry observable can be written in the canonical basis of $\mathbb C^2$ as
  \begin{equation}
    \slashed D_\alpha = \left(\begin{array}{cc} -x^3 & a^+ - \bar \alpha \\ a-\alpha & x^3 \end{array} \right)
  \end{equation}
  with $\alpha = x^1 + \imath x^2$. This operator can be viewed as the Hamiltonian of a qubit considered as an atomic two-level system controlled by a strong classical electric field $\vec E$ in the rotating wave approximation \cite{Puri} ($\alpha=\frac{1}{2 \Omega} \vec \mu \cdot \vec {\mathcal E}$ where $\vec \mu$ is the atomic dipolar moment and $\vec {\mathcal E}$ is the complex representation of the electric field, $\Omega x^3$ being the detuning between the atomic transition energy gap and the field frequency). The atom is also in interaction with a reservoir of single-mode bosons (described by $a$ and $a^+$) constituting its environment ($\Omega$ is the coupling strength between the atom and the bosons). These bosons can be for example photons of another electromagnetic field or phonons of a solid system in which the atom is included.\\
 
  Since $a$ and $a^+$ are unbounded operators, $\mathfrak X$ is not included in its $C^*$-enveloping algebra $\Env(\mathfrak X)$ which is generated by the elements of the form $\mathcal D(\alpha) = e^{\alpha a^+ - \bar \alpha a}$ (for $\alpha \in \mathbb C$). But if (by definition) $\Env(\mathfrak X)$ is closed for the norm topology, it is not for the strong topology. By the Stone theorem we have: $\lim_{s \to 0} \frac{1}{s}(e^{s(\alpha a^+ - \bar \alpha a)}-\id) \psi = (\alpha a^+ - \bar \alpha a) \psi$ ($\forall \psi \in \mathfrak F$), and so $\mathfrak X$ is included into the topological closure of $\Env(\mathfrak X)$ for the strong topology.\\
  The fuzzy plane can be used to define the ``fuzzyfication'' of surface plots $x^3=f(\alpha,\bar \alpha)$ for $f$ a polynomial function or depending only on $|\alpha|$.\\

  \noindent \textbf{Fuzzy elliptic paraboloid:} $Z=a$ and $X^3 = \epsilon(a^+a+\frac{1}{2}\id)$ (with $\epsilon \in \mathbb R^*$). The coordinate observables satisfy $X^3 = \epsilon ((X^1)^2+(X^2)^2)$ mimicking the equation of a classical paraboloid. For the physical point of view, $\slashed D_x$ is an effective Hamiltonian \cite{Puri} where $X^3$ adds a perturbative nonlinear interaction of the two-level atom with its environment (boson scattering by the two-level atom).\\

  \noindent \textbf{Fuzzy hyperbolic paraboloid:} $Z=a$ and $X^3 = \frac{\epsilon}{2} (a^2+(a^+)^2)$, the coordinate observables satisfying $X^3 =\epsilon ((X^1)^2-(X^2)^2)$, another possible perturbative nonlinear interaction (absorption and emission of two bosons by the two-level atom).\\

  \noindent \textbf{Fuzzy hyperboloid:} $Z=a$ and $X^3 = \epsilon \sqrt{(r^2+1/2)\id+a^+a} = \epsilon \sum_{n=0}^{+\infty} \sqrt{r^2+1/2+n}|n\rangle \langle n|$ with $r \in\mathbb R^*$. The coordinate observables satisfy then $(X^3)^2-\epsilon^2(X^1)^2-\epsilon^2(X^2)^2 = \epsilon^2 r^2$.\\
  Note that another way to define a Fuzzy hyperboloid consists to set $X^i =r K^i$ where $K^i$ are unitary irreducible representations of the generators of the $\mathfrak{su}(1,1)$ algebra, obtains a Fuzzy space of the same family than the Fuzzy sphere.\\

  \noindent \textbf{Fuzzy Flamm's paraboloid:} $Z=a$ and $X^3= 2\sqrt{r_S}\sqrt{\sqrt{a^+a}-r_S} = 2\sqrt{r_S} \sum_{n=0}^{+\infty} \sqrt{\sqrt{n}-r_S}|n\rangle \langle n|$ with $r_S \in \mathbb R^+$. This system can be viewed in quantum gravity matrix model as a Fuzzy slice of a Schwarzschild spacetime. It has been numerically studied in a previous work as a toy model of quantum black hole \cite{Viennot4}.
\end{example}

$\mathbb C^2 \otimes \mathcal H$ can be viewed as a right $\mathcal L(\mathbb C^2)$-$C^*$-module \cite{Viennot1} with the following inner product:
\begin{equation}
  \forall \Psi,\Phi \in \mathbb C^2 \otimes \mathcal H, \quad \langle \Psi|\Phi \rangle_* = \tr_{\mathcal H}|\Phi \rrangle \llangle \Psi| \in \mathcal L(\mathbb C^2)
\end{equation}
where $\llangle \bullet | \bullet \rrangle$ stands for the inner product induced by the tensor product of $\mathbb C^2 \otimes \mathcal H$ viewed as a Hilbert space. By construction the square $C^*$-norm of $\Psi \in \mathbb C^2 \otimes \mathcal H$, $\langle \Psi|\Psi\rangle_* = \tr_{\mathcal H} |\Psi \rrangle \llangle \Psi|$ is a density matrix onto $\mathbb C^2$. The reason of the consideration of this structure is the following. The choice of the local orientation of $\mathfrak M$ is arbitrary. It is a gauge choice. Let $u \in SU(2) \subset \mathcal L(\mathbb C^2)$ be a change of local orientation of $\mathfrak M$. Let $\omega_P \in \mathcal E(\mathfrak M)$. The state $\omega_{uPu^{-1}}$ has then the same physical meaning than the state $\omega_P$ (in the same way than in commutative geometry, the change of the local orientation at the neighbourhood of a point on a classical manifold does not change the point). So $u$ for the fuzzy space is equivalent for a quantum particle to a phase change. If the phases for the states of the fuzzy space are elements of $U(2)$, $\mathbb C$ must be replaced by $\mathcal L(\mathbb C^2)$ as algebra of ``scalars''. Note that $\langle \bullet | \bullet \rangle_*$ is not invariant under $SU(2)$-phase changes but is equivariant:
\begin{equation}
  \langle u\Psi|u\Phi\rangle_* = u \langle \Psi|\Phi \rangle_* u^{-1}
\end{equation}
(in a commutative algebra as $\mathbb C$, equivariance and invariance are the same thing).\\
For the BFFS matrix model, $u \in SU(2)$ is a rotation of the fermionic string spin. For the point of view of the quantum information model, the $C^*$-norm of a state is the density matrix of the qubit under the effects of the environment and the action of $u$, $\|u\Psi\|^2_* = u\|\Psi\|^2_*u^{-1}$ is a (local) unitary operation onto the qubit. These density matrices define then probability laws onto $\mathbb C^2$:
\begin{defi}[Almost surely properties]
  Let $\rho = \|\Psi\|_*^2$ be a density matrix of $\mathbb C^2$. Let $p_F$ be a property satisfied by an observable $A \in \mathcal L(\mathbb C^2)$ if $F(A)=0$ for $F: \mathcal L(\mathbb C^2) \to \mathcal L(\mathbb C^2)$ (not necessarily linear). We say that $A$ satisfies $p_F$ almost surely with respect to $\rho$ ($\rho$-a.s.), if $\tr(\rho F(A))=0$. If $\mathcal F = \{A \in \mathcal L(\mathbb C^2), \text{ such that } F(A)=0\}$ we denote by $\mathcal F_{a.s.} = \{A \in \mathcal L(\mathbb C^2), \text{ such that } \tr(\rho F(A))=0\}$ the set of observables satisfying $p_F$ almost surely ($\mathcal F \subset \mathcal F_{a.s.}$).
\end{defi}
For example, with $F(A)=A^\dagger-A$, $A$ is said almost surely self-adjoint if $\tr(\rho(A^\dagger-A))=0$.\\ 

The dynamics of $\Psi \in \mathbb C^2 \otimes \mathcal H$ is governed by the Schr\"odinger-like equation:
\begin{equation}\label{SchroEq}
  \imath \dot \Psi = \slashed D_{x(t)} \Psi
\end{equation}
Note that $x$ can be time dependent if the observer moves its probe. This one is the Dirac equation of a massless fermionic string in the BFSS model in Weyl representation \cite{Berenstein}. The inner dynamics of $\mathfrak M$ is then submitted to the following evolution of their pure states:
\begin{equation}
  \omega_P \circ \Ad_{U_{\slashed D_x}(t)} = \omega_{U_{\slashed D_x}(t) P U_{\slashed D_x}^{-1}(t)}
\end{equation}
(with $\Ad_u(X) = u^{-1}Xu$) where $U_{\slashed D_x}(t) = \Teg^{-\imath \int_0^t \slashed D_{x(t)} dt} \in \mathcal U(\mathbb C^2\otimes \mathcal H)$ is the evolution operator ($\Teg$ stands for the time ordered exponential - the Dyson series -, i.e. $\imath \partial_t U_{\slashed D_x} = \slashed D_{x(t)} U_{\slashed D_x}$ with $U_{\slashed D_x}(0)=\id$). Even if we consider a state without entanglement between the local orientation and the coordinate states, $U_{\slashed D_x}(t)$ being not separable (in general) it induces the growing of the entanglement with the time.\\
We are interested by the adiabatic regimes for this equation, because they are similar to semi-classical approximations for the dynamics (for the point of view of the evolution operator $\Teg^{-\imath \frac{\hbar}{T} \int_0^s \slashed D_{x(s)}ds}$ -- by restoring the writing of the fundamental constants -- the semi-classical limit $\hbar \to 0$ and the adiabatic limit $T \to +\infty$ are equivalent, with $s=t/T$, $T$ being the total duration of the dynamics). The adiabatic regimes are then associated with the situations closest to classical dynamics. Suppose that $|\Lambda(x)\rrangle$ is an instantaneous eigenvector of $\slashed D_x$, the dynamics is said:
\begin{itemize}
\item strongly adiabatic if \cite{Teufel}
  \begin{equation}
    |\Psi(t)\rrangle \simeq e^{\imath \varphi(t)} |\Lambda(x(t))\rrangle, \qquad e^{\imath \varphi(t)} \in U(1) \label{StrAdiabTransp}
  \end{equation}
\item weakly adiabatic if \cite{Viennot3}
  \begin{equation}
    |\Psi(t)\rrangle \simeq u(t) |\Lambda(x(t))\rrangle, \qquad u(t) \in U(2)_{a.s.} \label{WkAdiabTransp}
  \end{equation}
\end{itemize}
($u$ is almost surely unitary with respect to $\rho_\Lambda = \tr_{\mathcal H} |\Lambda \rrangle \llangle \Lambda|$). So, the regime is strongly adiabatic if it is adiabatic for whole bipartite system (the state remains on the instantaneous eigenvector up to a phase change), and weakly adiabatic if it is adiabatic for the subsystem described by $(X^i)$ but not for the one described by $(\sigma^i)$ (the state remains on the instantaneous eigenvector up to a local orientation change). For the point of view of quantum information theory, in the strong adiabatic cases, the control $t \mapsto x(t)$ onto the qubit $(\sigma_i)$ is sufficiently slow not to  induce transition of its instantaneous state nor of the environment state. But in weak adiabatic cases, the control induces transition of the instantaneous qubit state, but the environment $(X^i)$ is sufficiently large to the evolution of the qubit does not induce transition of its instantaneous state via the couplings $\sigma_i \otimes X^i$. For the point of view of the quantum gravity, the strong adiabatic regime corresponds to move the probe sufficiently slowly not to induce transition of the instantaneous quantum spacetime state. In the weak adiabatic regime, the local orientation defined by the probe spin $(\sigma_i$) is rotated by gravitational effects, but the move is sufficiently slow not to deform spacetime by transition of the instantaneous location state (associated with $(X^i)$).  

\section{The eigen geometry}
\subsection{The eigenmanifold and the quasicoherent states}
$\mathfrak M$ is a quantum object, but the outcomes of the measurements are classical quantities. These outcomes are eigenvalues of pertinent observables. We can think that the spectral properties of the geometry observable $\slashed D_x$ define a classical manifold close to $\mathfrak M$.
\begin{defi}[Eigenmanifold]
  We call eigenmanifold $M_\Lambda$ of a fuzzy space $\mathfrak M$ the subset of $\mathbb R^3$ defined by
  $$ M_\Lambda = \{x \in \mathbb R^3, \ker(\slashed D_x) \not= \{0\} \} $$
\end{defi}
Note that $\dim M_\Lambda \leq 2$ and $M_\Lambda$ is not necessary connected.

\begin{defi}[Quasicoherent state]
  Let $x \in M_\Lambda$ a point on the eigenmanifold of a fuzzy space $\mathfrak M$. We call quasicoherent state of $\mathfrak M$ at $x$ a normalised vector $|\Lambda(x)\rrangle \in \mathbb C^2 \otimes \mathcal H$ such that
  \begin{equation}
    \slashed D_x |\Lambda(x)\rrangle = 0
  \end{equation}
\end{defi}
The reason of these definitions is the following. Let $\omega_x = \tr(|\Lambda(x)\rrangle \llangle \Lambda(x)|\bullet) \in \mathcal P(\mathfrak M)$ be the pure state associated with $|\Lambda(x)\rrangle$. We have $\omega_x(X^i) = x^i$ (because $\frac{1}{2}\{\sigma^i,\slashed D_x\}= X^i-x^i$) and we can prove that $\omega_x$ minimises the Heisenberg uncertainty relation $\Delta_{\omega_x} \vec X^2 = \omega_x(\vec X^2)-\omega_x(\vec X)^2$ \cite{Schneiderbauer} as the Perelomov coherent states of an usual quantum system \cite{Perelomov}. Coherent states being the quantum states closest to classical ones (since the quantum delocalisation is minimised), the quasicoherent states $\omega_x$ are the pure states of $\mathfrak M$ closest to classical pure states of a classical (commutative) manifold. So, $\omega_x$ are the pure states of $\mathfrak M$ closest to the classical notion of points. We can then think $\omega_x$ as a ``quantum point'' of $\mathfrak M$, which is naturally labelled by a point $x$ of the classical space of the observer. Nevertheless, in general $\llangle \Lambda(y)|\Lambda(x)\rrangle \not= \delta(x-y)$, the pseudo-points $\omega_x$ and $\omega_y$ are not separated (if the system is in the state $|\Lambda(x)\rrangle$ the probability that the outcome of the location measured by the observer be $y$ is not zero: $|\omega_x(|\Lambda(y)\rrangle\llangle \Lambda(y)|)|^2=|\llangle \Lambda(y)|\Lambda(x)\rrangle|^2$). From the viewpoint of string theory, $|\Lambda(x)\rrangle$ being the eigenvector of $\slashed D_x$ associated with the zero eigenvalue, it is the state which minimises the displacement energy (the ``tension energy'' of the fermionic string). The fermionic string having no tension, its two ends are ``at the same place''. So the probe D0-brane is closest as possible to the D2-brane. These three arguments (minimal quantum delocalisation, pure states closest to points, minimal ``distance'' between the probe and the noncommutative manifold) show that the eigenmanifold $M_\Lambda$ is the classical (commutative) manifold closest to $\mathfrak M$ (the geometry defined by $M_\Lambda$ is the classical geometry which looks the most the quantum geometry of the fuzzy space). $M_\Lambda$ is the slice of space which is the more representative of the geometry of $\mathfrak M$.\\
By construction, in quantum information theory, restricting $x$ to be on $M_\Lambda$ consists to adiabatically control the qubit by keeping constant at $0$ its energy dressed by the environment with zero energy uncertainty. If we denote by $H_{int}$ any interaction Hamiltonian, $\slashed D_x = H_{int} - \tr_{\mathbb C^2} H_{int}$ is the part which induces the state transitions and which is leave invariant by changes of the potential energy origin. The eigenstate of zero eigenvalue of $\slashed D_x$ is then the state minimising the energy exchanges between the qubit and the environment. Moreover, in contrast with the other eigenstates of $\slashed D_x$, the quasicoherent state minimises the quantum uncertainties $\Delta X^i$ and so minimises the noises coming from the environment and perturbing the qubit (and responsible of the decoherence of qubit mixed state $\rho$). Adiabatically controlling the qubit on $M_\Lambda$ permits then to minimise the effects of the environment noises onto the qubit.\\

\begin{example}{1}{Fuzzy sphere}
  The eigenmanifold of a Fuzzy sphere is the classical sphere $M_\Lambda = \{\vec x \in \mathbb R^3, \|\vec x\| = rj\}$ and its quasicoherent state is \cite{Viennot2}
  \begin{equation}
    |\Lambda(x)\rrangle = |\zeta\rangle_{1/2} \otimes |\zeta \rangle_j
  \end{equation}
  with $\zeta = e^{\imath \varphi} \tan \frac{\theta}{2}$ for $x=rj(\sin\theta \cos \varphi,\sin \theta \sin \varphi,\cos \theta)$. $|\zeta\rangle_j$ are the Perelomov $SU(2)$ coherent states \cite{Perelomov} for the unitary irreducible representation of dimension $2j+1$. The eigenmanifold is plotted fig.\ref{FuzSphPlot}.
  \begin{figure}
    \center
    \includegraphics[width=7cm]{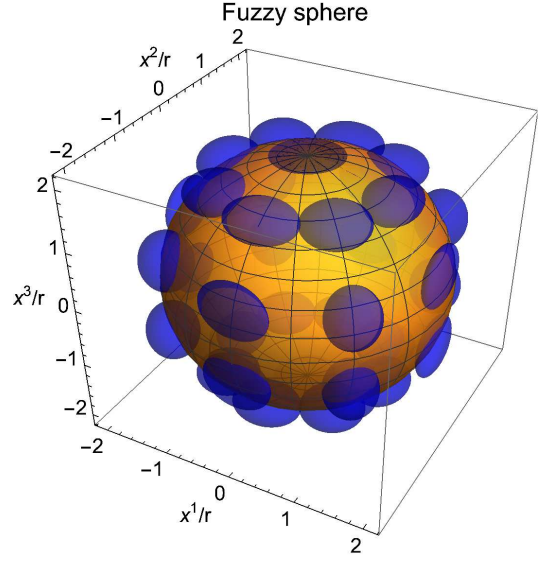}\\
    \caption{\label{FuzSphPlot} Eigenmanifold $M_\Lambda : \zeta\mapsto \llangle \Lambda(\zeta)|\vec X|\Lambda(\zeta)\rrangle$ (yellow surface) for the Fuzzy sphere (with $j=2$) with the quantum uncertainty clouds $\frac{1}{2}\Delta X^i = \frac{1}{2}\sqrt{\llangle \Lambda|(X^i)^2|\Lambda \rrangle - (\llangle \Lambda|X^i|\Lambda \rrangle)^2}$ (blue ellipsoids) of some points, in the target space $\mathbb R^3$ of the probe.}
  \end{figure}
\end{example}

We see with this example that the quasicoherent states of the Fuzzy spaces are strongly related to Perelomov coherent states of Lie algebras \cite{Perelomov}. This is confirmed by the following other examples.

\begin{example}{2}{Fuzzy surface plots}
  \noindent \textbf{Fuzzy plane:} The eigenmanifold of the Fuzzy plane is just the complex plane $M_\Lambda = \{(\Re(\alpha),\Im(\alpha),0\}_{\alpha \in \mathbb C}$ with the quasicoherent state in the canonical basis of $\mathbb C^2$ \cite{Viennot2}:
  \begin{equation}
    |\Lambda(\alpha) \rrangle = \left(\begin{array}{c} |\alpha \rangle \\ 0 \end{array} \right)
  \end{equation}
  where $|\alpha \rangle$ is the Perelomov coherent state of the quantum harmonic oscillator algebra. The quantum uncertainties are $\Delta X^1 = \Delta X^2 = 1/2$ ($\Delta X^3=0$). To treat the other examples of fuzzy surface plots, it is interesting to consider all eigenvectors of $\slashed D_\alpha$ \cite{Viennot2, Viennot4}:
  \begin{equation}
    \slashed D_\alpha |\lambda_{n\pm}(\alpha)\rrangle = \pm \sqrt{n} |\lambda_{n\pm}(\alpha)\rrangle
  \end{equation}
  with
  \begin{equation}
    |\lambda_{n\pm} (\alpha) \rrangle = \frac{1}{\sqrt 2} \left(\begin{array}{c} |n\rangle_\alpha \\ \pm |n-1\rangle_\alpha \end{array} \right)
  \end{equation}
  where $|0\rangle_\alpha \equiv |\alpha \rangle$ and
  \begin{equation}
    |n\rangle_\alpha = \frac{(a^+-\bar \alpha)^n}{\sqrt{n!}}|0\rangle_\alpha
  \end{equation}
  $(|n\rangle_\alpha)_{n\in \mathbb N}$ form a ``translated'' basis of $\mathscr F$: $a_\alpha|n\rangle_\alpha = \sqrt n |n-1\rangle_\alpha$ ($a_\alpha|0\rangle_\alpha = 0$) and $a_\alpha^+|n\rangle_\alpha = \sqrt{n+1}|n+1\rangle_\alpha$ with $a_\alpha \equiv a - \alpha$ and $a_\alpha^+ \equiv a^+-\bar \alpha$. In the sequel, the quasicoherent state of the Fuzzy plane will be denoted by $|\lambda_0\rrangle$. \\
  Let $U=e^{\imath \vartheta a^+a}$ be the rotation operator in the complex plane: $Ua^+U^\dagger = e^{\imath \vartheta} a^+$, $U|\alpha\rangle = |e^{\imath \vartheta} \alpha\rangle$. Since $U(a^+-\bar \alpha)U^\dagger = e^{\imath \vartheta}(a^+-e^{-\imath \vartheta} \bar \alpha)$, we have $U|n\rangle_{\alpha} = e^{\imath n \vartheta} |n\rangle_{e^{\imath \vartheta} \alpha}$, and then $|\llangle \lambda_{n\pm}(\alpha)|U|\lambda_{n\pm}(\alpha) \rrangle|^2 = \frac{1+\cos(\vartheta)}{2}$. For $n>0$, the eigenvectors of the fuzzy plane are not invariant by rotation around $0$. So the perturbed quasicoherent states of the fuzzy plane are not invariant by rotation around $0$ even if the perturbation operator is.\\

  \noindent \textbf{Fuzzy elliptic paraboloid:} For $\epsilon \ll 1$ and $|\alpha|^2$ not too large (in order to $\epsilon|\alpha|^2 \ll 1$) we can treat the system by using the perturbation theory \ref{perturbation}. The vertical deformation of the plane is then
  \begin{eqnarray}
    \delta x^3(\alpha) & = & \langle \alpha|X^3|\alpha \rangle \\
    & = & \epsilon (|\alpha|^2+1/2)
  \end{eqnarray}
  and so $M_\Lambda = \{(\Re(\alpha),\Im(\alpha),\epsilon (|\alpha|^2+1/2))\}_{\alpha \in \mathbb C}$. Since ${_\alpha}\langle n|(a^+a-|\alpha|^2)|0\rangle_\alpha = \alpha \delta_{n,1}$ we have
  \begin{eqnarray}
    |\Lambda(\alpha) \rrangle & = & |\lambda_0\rrangle - \sum_{n>0,\varsigma\in\{\pm\}} \frac{\llangle \lambda_{n\varsigma}|\sigma_3 \otimes(X^3-\delta x^3)|\lambda_0\rrangle}{\varsigma \sqrt{n}}|\lambda_{n\varsigma}\rrangle + \mathcal O(\epsilon^2) \\
    & = & \left(\begin{array}{c} 1 \\ -\epsilon \alpha \end{array} \right) \otimes |0\rangle_\alpha + \mathcal O(\epsilon^2)
  \end{eqnarray}
  The quantum uncertainties are $\Delta X^1=\Delta X^2=1/2$ and $\Delta X^3 = \epsilon |\alpha|$.\\

  \noindent \textbf{Fuzzy hyperbolic paraboloid:} for the same conditions ($\epsilon|\alpha|^2 \ll 1$), we have
  \begin{equation}
    \delta x^3(\alpha) = \epsilon \Re(\alpha^2)
  \end{equation}
  The eigenmanifold is then $M_\Lambda = \{(\Re(\alpha),\Im(\alpha), \epsilon (\Re(\alpha)^2-\Im(\alpha)^2))\}_{\alpha \in \mathbb C}$. Since ${_\alpha}\langle n|\frac{a^2+(a^+)^2}{2}|0\rangle_\alpha = \bar \alpha \delta_{n,1} + \frac{\sqrt 2}{2} \delta_{n,2}$ the quasi-coherent state is
  \begin{equation}
    |\Lambda(\alpha)\rrangle = \left(\begin{array}{c} |0\rangle_\alpha \\ -\epsilon \bar \alpha |0\rangle_\alpha - \frac{\epsilon}{2}|1\rangle_\alpha \end{array} \right) + \mathcal O(\epsilon^2)
  \end{equation}
  In contrast with the previous examples, the quasicoherent state is entangled. The model being as simple as the previous ones, this entanglement results probably from the hyperbolicity of the geometry. The quantum uncertainties are $\Delta X^1=\Delta X^2=1/2$ and $\Delta X^3 = \epsilon \sqrt{|\alpha|^2+1/2}$.\\

  \noindent \textbf{Fuzzy hyperboloid:} we have no a simple expression of the vertical deformation of the plane, but (for $\epsilon|\alpha|\ll 1$) we have
  \begin{equation}
    \delta x^3(\alpha) = \epsilon e^{-|\alpha|^2} \sum_{n=0}^{+\infty} \frac{|\alpha|^{2n}}{n!} \sqrt{r^2+1/2+n}
  \end{equation}
  $\delta x^3$ is in fact very close to the hyperboloid $\alpha \mapsto \sqrt{r^2+1/2+|\alpha|^2}$ as we can see this fig.\ref{CompaHyper}.
  \begin{figure}
    \center
    \includegraphics[width=7cm]{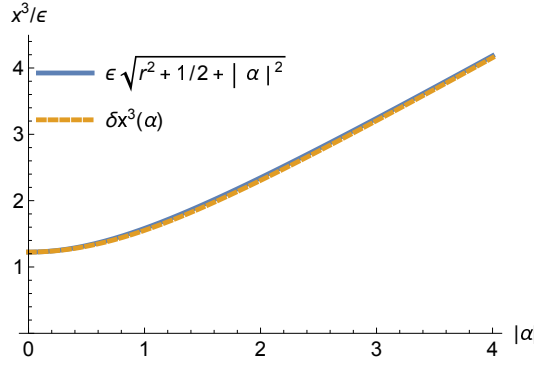}\\
    \caption{\label{CompaHyper} Comparison between $\alpha \mapsto \delta x^3(\alpha) = \llangle \Lambda(\alpha)|X^3|\Lambda(\alpha)\rrangle = \epsilon e^{-|\alpha|^2} \sum_{n=0}^{+\infty} \frac{|\alpha|^{2n}}{n!} \sqrt{r^2+1/2+n}$ for the fuzzy hyperboloid with the classical hyperboloid $\alpha \mapsto \sqrt{r^2+1/2+|\alpha|^2}$ with $r=1$.}
  \end{figure}
  The quasi-coherent state is
  \begin{equation}
    |\Lambda(\alpha)\rrangle = \left(\begin{array}{c} |0\rangle_\alpha \\ 0 \end{array}\right)-\sum_{n=1}^{\infty} \frac{{_\alpha}\langle n|X^3|0\rangle_\alpha}{\sqrt n} \left(\begin{array}{c} 0 \\ |n-1\rangle_\alpha \end{array} \right) + \mathcal O(\epsilon^2)
  \end{equation}
  with
  \begin{equation} \scriptstyle
    {_\alpha}\langle n|X^3|0\rangle_\alpha = \epsilon e^{-|\alpha|^2} \sum_{q=0}^{+\infty} \sum_{p=0}^{\max(q,n)} \frac{(-1)^{n-p}\alpha^{n-p+q} \bar \alpha^{q-p}}{(q-p)!} \sqrt{\frac{n!}{(n-p)!p!}} \sqrt{r^2+1/2+q}
  \end{equation}
  The quantum uncertainties are $\Delta X^1=\Delta X^2=1/2$ and $\Delta X^3= \sqrt{\epsilon^2(r^2+1/2+|\alpha|^2)-(\delta x^3(\alpha))^2}$.\\

  The different eigenmanifolds are plotted fig.\ref{FuzSurfPlot}.
    \begin{figure}
    \center
    \includegraphics[width=7cm]{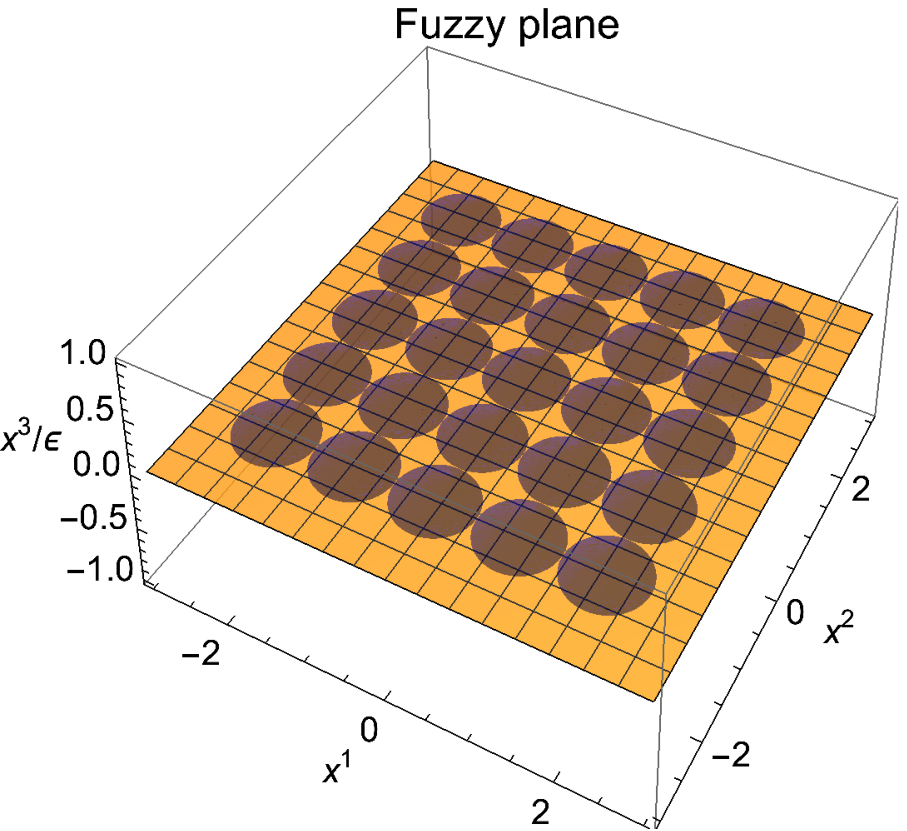}\includegraphics[width=7cm]{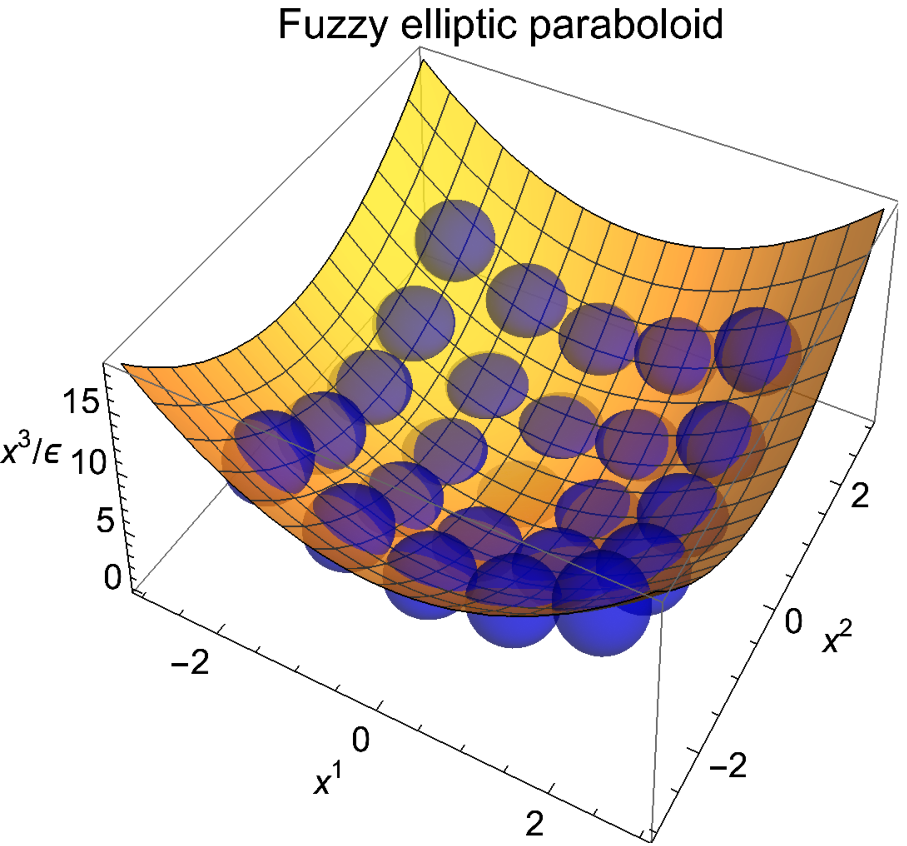}\\
    \includegraphics[width=7cm]{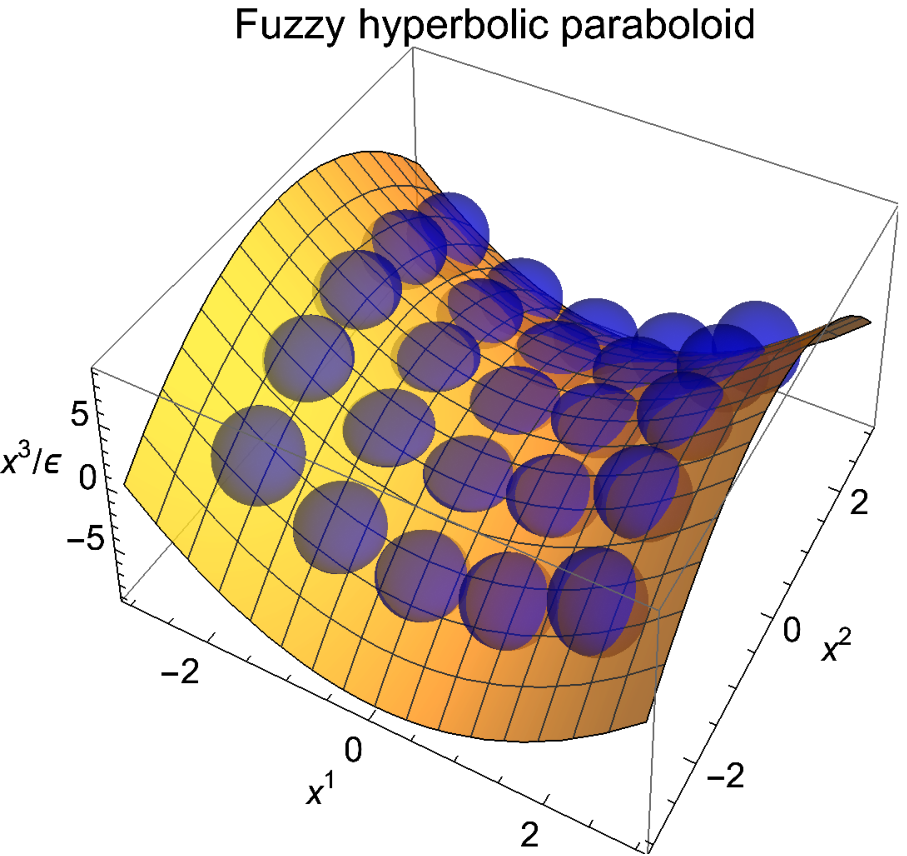}\includegraphics[width=7cm]{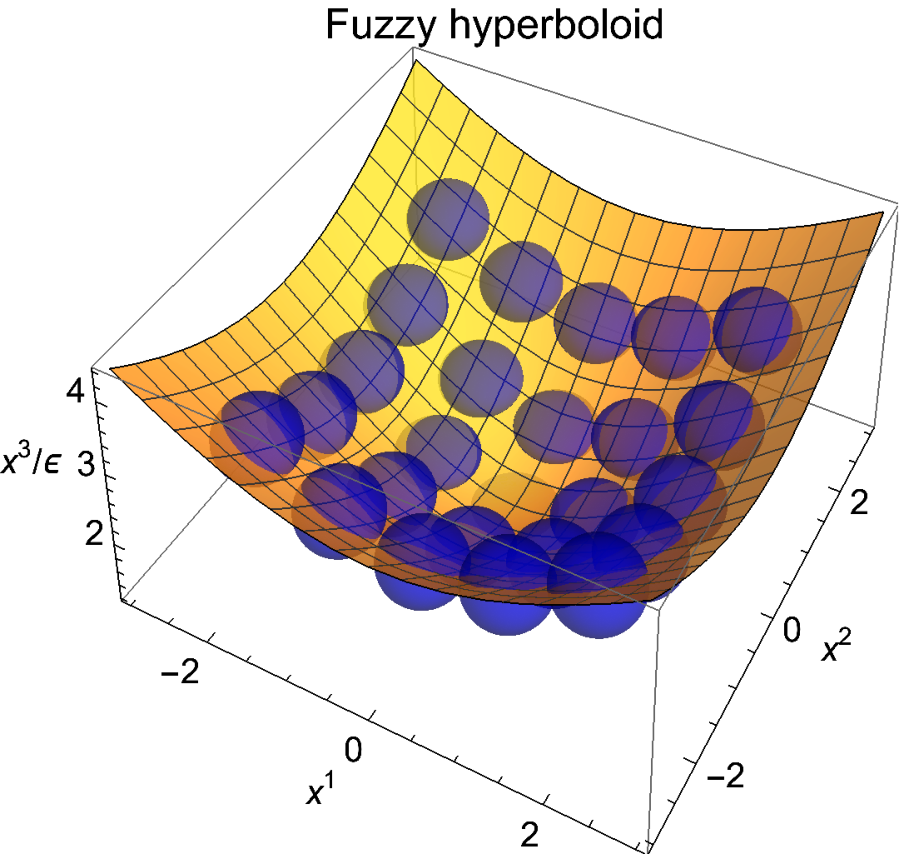}
    \caption{\label{FuzSurfPlot} Eigenmanifolds $M_\Lambda : \alpha\mapsto \llangle \Lambda(\alpha)|\vec X|\Lambda(\alpha)\rrangle$ (yellow surfaces) for the Fuzzy surface plots with the quantum uncertainty clouds $\Delta X^i = \sqrt{\llangle \Lambda|(X^i)^2|\Lambda \rrangle - (\llangle \Lambda|X^i|\Lambda \rrangle)^2}$ (blue ellipsoids) of some points, in the target space $\mathbb R^3$ of the probe. $r=1$ for the Fuzzy hyperboloid.}
  \end{figure}
  
\end{example}

\begin{prop}
  $\vec n_x = \tr(|\Lambda(x)\rrangle\llangle \Lambda(x)| \vec \sigma)$ is a normal vector at $x$ of $M_\Lambda$ in $\mathbb R^3$
\end{prop}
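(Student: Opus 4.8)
The plan is to differentiate the defining relation $\slashed D_x|\Lambda(x)\rrangle=0$ along $M_\Lambda$, using the elementary fact that $\frac{\partial}{\partial x^i}\slashed D_x=-\,\sigma_i\otimes\id$. Fix a point $x_0\in M_\Lambda$ at which $M_\Lambda$ is a genuine two-dimensional submanifold. There the zero eigenvalue of the self-adjoint operator $\slashed D_{x_0}$ is isolated and simple, so (by the perturbation theory of appendix \ref{perturbation}) there is a smooth branch $x\mapsto\lambda(x)$ of eigenvalues of $\slashed D_x$ with $\lambda(x_0)=0$ and an associated smooth family of normalised eigenvectors $x\mapsto|\Lambda(x)\rrangle$ (smooth up to phase, which is irrelevant for $\vec n_x$) defined near $x_0$, with $M_\Lambda$ coinciding locally with the level set $\{\lambda=0\}$.

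First I would invoke the Hellmann--Feynman theorem for this branch: differentiating $\llangle\Lambda(x)|\slashed D_x|\Lambda(x)\rrangle=\lambda(x)$ and using $\slashed D_x|\Lambda(x)\rrangle=\lambda(x)|\Lambda(x)\rrangle$ together with $\llangle\Lambda(x)|\Lambda(x)\rrangle=1$, the contributions of $\partial_i|\Lambda\rrangle$ cancel, leaving
\begin{equation}
  \frac{\partial\lambda}{\partial x^i}(x)=\llangle\Lambda(x)|\frac{\partial\slashed D_x}{\partial x^i}|\Lambda(x)\rrangle=-\llangle\Lambda(x)|\sigma_i\otimes\id|\Lambda(x)\rrangle=-(\vec n_x)_i .
\end{equation}
Hence $\nabla_x\lambda=-\vec n_x$. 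Since $M_\Lambda$ is locally the zero set of $\lambda$ and $0$ is (by the submanifold hypothesis) a regular value, $\nabla_x\lambda$ is non-zero and orthogonal to $T_xM_\Lambda$; therefore $\vec n_x$ is a non-zero normal vector of $M_\Lambda$ at $x$, which is the assertion (recall $\llangle\Lambda(x)|\vec X|\Lambda(x)\rrangle=x$, so $M_\Lambda$ is embedded in $\mathbb R^3$ via the identity).

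For a reader who prefers to avoid appealing to perturbation theory, the same conclusion can be obtained directly: given any $C^1$ curve $t\mapsto x(t)\in M_\Lambda$ with $x(0)=x$ and a $C^1$ lift $t\mapsto|\Lambda(x(t))\rrangle$ of quasicoherent states, differentiating $\slashed D_{x(t)}|\Lambda(x(t))\rrangle=0$ at $t=0$ yields $-\dot x^i(\sigma_i\otimes\id)|\Lambda(x)\rrangle+\slashed D_x|\partial_t\Lambda\rrangle=0$; pairing on the left with $\llangle\Lambda(x)|$ and using the self-adjointness of $\slashed D_x$ and $\slashed D_x|\Lambda(x)\rrangle=0$ to annihilate the second term, one is left with $\dot x^i(\vec n_x)_i=0$, i.e. $\vec n_x\perp\dot x$ for every tangent vector $\dot x\in T_xM_\Lambda$.

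The only delicate point, and the one I would treat with care, is the regularity input. At exceptional points of $M_\Lambda$ the kernel of $\slashed D_x$ may degenerate or $M_\Lambda$ may fail to be a manifold, and there $\vec n_x$ can vanish; equivalently, $\vec n_x\neq0$ holds exactly where $0$ is a regular value of the eigenvalue branch $\lambda$, i.e. where $M_\Lambda$ is a smooth surface. The statement is therefore to be read on the smooth locus of $M_\Lambda$, and the simplicity of the relevant eigenvalue needed to run the argument above is precisely what the perturbation-theoretic analysis of appendix \ref{perturbation} supplies.
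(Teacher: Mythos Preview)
Your proposal is correct, and your second argument (differentiate $\slashed D_{x(t)}|\Lambda(x(t))\rrangle=0$ along a curve in $M_\Lambda$, pair with $\llangle\Lambda|$, use self-adjointness of $\slashed D_x$) is exactly the paper's proof, just phrased with a curve rather than the differential $d$ on $M_\Lambda$. Your first argument via Hellmann--Feynman is a mild but genuine variant: by extending the eigenvalue branch $\lambda$ off $M_\Lambda$ you identify $\vec n_x=-\nabla_x\lambda$, which not only shows normality but also tells you that $\vec n_x$ points in the direction of decreasing displacement energy and that $\vec n_x\neq0$ precisely where $0$ is a regular value of $\lambda$; the paper does not extract this extra information and instead discusses the possible vanishing of $\vec n_x$ separately through the entanglement interpretation $\|\vec n_x\|=s_+-s_-$.
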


\begin{proof}
  We denote by $d$ the differential on $M_\Lambda$. $\slashed D_x |\Lambda(x)\rrangle = 0 \Rightarrow \sigma_i dx^i |\Lambda(x) \rrangle = \slashed D_x d|\Lambda(x)\rrangle$. So $\llangle \Lambda(x)|\sigma_i|\Lambda(x)\rrangle dx^i = 0$. $dx^i = \frac{\partial x^i}{\partial s^a} ds^a$ (for $(s^1,s^2$) a local curvilinear coordinate system on $M_\Lambda$). So $\vec n_x \cdot \frac{\partial \vec x}{\partial s^a} = 0$ ($\forall a$). $\vec n_x$ is then normal to any tangent vector of $M_\Lambda$ at $x$.
\end{proof}

As normal vector to the surface $M_\Lambda$, $\vec n_x$ defines a local orientation of this manifold by the right hand rule. This is the reason why we assimilate the spin degree of freedom ($\mathbb C^2$) as a quantum local orientation on $\mathfrak M$.\\

Now we want to interpret the norm of $\vec n_x$. Let $(|0\rangle,|1\rangle)$ be the canonical basis of $\mathbb C^2$: $|\Lambda(x) \rrangle = |0\rangle \otimes |\Lambda_0(x)\rangle + |1\rangle \otimes |\Lambda_1(x)\rangle$. By construction the normal vector is
\begin{equation}
  \vec n_x = \llangle \Lambda(x)|\vec \sigma|\Lambda(x) \rrangle = \left(\begin{array}{c} 2\Re \langle \Lambda_0|\Lambda_1\rangle \\ 2\Im \langle \Lambda_0|\Lambda_1\rangle \\ \langle \Lambda_0|\Lambda_0\rangle - \langle \Lambda_1 | \Lambda_1 \rangle) \end{array} \right)
\end{equation}
Moreover we have in the same basis $(|0\rangle,|1\rangle)$:
\begin{equation}
  \rho_\Lambda(x) = \tr_{\mathcal H}|\Lambda(x)\rrangle\llangle \Lambda(x)| = \left(\begin{array}{cc} \langle \Lambda_0|\Lambda_0\rangle & \langle \Lambda_1|\Lambda_0\rangle \\ \langle \Lambda_0|\Lambda_1\rangle & \langle \Lambda_1|\Lambda_1 \rangle \end{array} \right)
\end{equation}
$\langle \Lambda_0|\Lambda_0\rangle \equiv p_0$ and $\langle \Lambda_1|\Lambda_1\rangle \equiv p_1$ are the the populations of the states $(|0\rangle,|1\rangle)$ in the eigengeometry and $\langle \Lambda_0|\Lambda_1\rangle \equiv c$ is its quantum coherence. The normal vector is then
\begin{equation}
  \vec n_x = \left(\begin{array}{c} 2 \Re c \\ 2 \Im c \\ p_0-p_1 \end{array} \right)
\end{equation}
Finally we have
\begin{eqnarray}
  \|\vec n_x\| & = & \sqrt{(p_0-p_1)^2+4|c|^2} \\
  & = & \sqrt{(\tr \rho_\Lambda)^2 - 4 \det \rho_\Lambda} \\
  & = & s_+(x)-s_-(x)
\end{eqnarray}
where $s_\pm = \frac{\tr \rho_\Lambda \pm \sqrt{(\tr \rho_\Lambda)^2 - 4 \det \rho_\Lambda}}{2}$ are the eigenvalues of $\rho_\Lambda$. $\|\vec n_x\|$ is then a measure of the purity of $\rho_\Lambda$, since $\rho_\Lambda$ is pure state if $s_+=1$ and $s_-=0$ ($\|\vec n_x\|=1$) and is maximally mixed if $s_+=s_-=\frac{1}{2}$ ($\|\vec n_x\| = 0$, microcanonical distribution). In other words, $1-\|\vec n_x\|$ is a measure of the entanglement of $|\Lambda \rrangle$, this one is separable if $\|\vec n_x\| = 1$ and is maximally entangled if $\vec n_x = \vec 0$. In that case, in the eigenbasis of $\rho_\Lambda$ we have $|\Lambda(x) \rrangle = |s_+(x)\rangle \otimes |\Lambda_+(x)\rangle + |s_-(x)\rangle \otimes |\Lambda_-(x)\rangle$ with $\langle \Lambda_\alpha|\Lambda_\beta\rangle = \frac{1}{2} \delta_{\alpha \beta}$ ($|\Lambda \rrangle$ is a Bell state).\\
For these reason we can refer to $\vec n_x$ as being the ``purity normal vector'' at $M_\Lambda$. 

\subsection{Degeneracy of the eigenmanifold}
We say that $M_\Lambda$ is strongly non-degenerate at $x$ if $\dim \ker(\slashed D_x) = 1$. This is the usual definition of the non-degeneracy of an eigenvalue in the Hilbert space $\mathbb C^2 \otimes \mathcal H$. But it can be more natural to consider this space with its $C^*$-module structure. So if there exists two quasicoherent states at $x$ related by a $SU(2)$-phase change, we can consider that 0 is not degenerate in the $C^*$-module. Note that if $|\Lambda(x)\rrangle$ and $u|\Lambda(x)\rrangle$ are two orthogonal quasicoherent states of $\ker(\slashed D_x)$, $(\alpha + \beta u)|\Lambda(x)\rrangle$ (with $|\alpha|^2+|\beta|^2=1$) is also a quasicoherent state, but in general $(\alpha+\beta u)$ is not unitary: $(\alpha+\beta u)^\dagger (\alpha+\beta u) = |\alpha|^2+|\beta|^2u^\dagger u + \bar \alpha \beta u + \alpha \bar \beta u^\dagger = 1 +  \bar \alpha \beta u + \alpha \bar \beta u^\dagger$. But $\llangle \Lambda(x)|u|\Lambda(x)\rrangle = 0 \iff \tr(\rho_\Lambda(x)u)=0$ with $\rho_\Lambda = \tr_{\mathcal H}|\Lambda \rrangle \llangle \Lambda|$ the eigen density matrix. We have then $\tr(\rho_\Lambda(\alpha+\beta u)^\dagger (\alpha+\beta u))=0$ and so $(\alpha + \beta u) \in U(2)_{a.s.}$.
\begin{defi}[Weakly non-degenerate fuzzy space]
  A fuzzy space $\mathfrak M$ is said weakly non-degenerate at $x \in M_\Lambda$ if $\exists |\Lambda(x)\rrangle \in \mathbb C^2 \otimes \mathcal H$ and $\exists u_i \in U(2)_{a.s.}$ such that $\ker(\slashed D_x) = \Lin(|\Lambda(x)\rrangle, u_1|\Lambda(x)\rrangle,...,u_n|\Lambda(x)\rrangle)$. Moreover:
  \begin{itemize}
  \item $\mathfrak M$ is said strongly non-degenerate at $x$ if $\dim \ker(\slashed D_x) = 1$.
  \item if $\mathfrak M$ is not strongly non-degenerate at $x$ and if $|\Lambda(x)\rrangle$ is separable, then $\dim \ker(\slashed D_x) = 2$ and $\ker(\slashed D_x) = SU(2)\cdot \Lin(|\Lambda(x)\rrangle)$.
  \item if $\mathfrak M$ is not strongly non-degenerate at $x$ and if $|\Lambda(x)\rrangle$ is entangled, then $\dim \ker(\slashed D_x) \leq 4$.
  \end{itemize}
\end{defi}

\begin{proof}
  We suppose that the quasicoherent state is separable $|\Lambda(x)\rrangle = |O_x \rangle \otimes |\Omega_x\rangle$ and that $u|\Lambda(x)\rrangle = u|O_x\rangle \otimes |\Omega_x \rangle$ is another quasi-coherent state orthogonal to the first one ($u \in U(2)_{a.s.}$). We have then $\langle I_x|O_x\rangle = 0$ with $|I_x\rangle = u|O_x\rangle$. So $(|O_x\rangle,|I_x\rangle)$ is an orthonormal basis of $\mathbb C^2$, and finally $\ker (\slashed D_x) = \{(\alpha|O_x\rangle + \beta|I_x\rangle)\otimes |\Omega_x \rangle, \alpha,\beta \in \mathbb C\} = U(2)\cdot|\Lambda(x)\rrangle$ (we cannot find a third quasi-coherent state $u'|\Lambda(x)\rrangle$ orthogonal to the first two ones, since $u'|O_x\rangle$ cannot be orthogonal to $|O_x\rangle$ and $|I_x\rangle$ except if it is zero).\\
  We suppose  that the quasicoherent state is entangled $|\Lambda(x) \rrangle = |O_x \rangle \otimes |\Lambda^0_x\rangle + |I_x \rangle \otimes |\Lambda^1_x\rangle$ for some orthonormal basis $(|O_x\rangle,|I_x\rangle)$ of $\mathbb C^2$ ($|\Lambda^0_x\rangle \not= |\Lambda^1_x\rangle$). Let $u|\Lambda(x)\rrangle = |O_x' \rangle \otimes |\Lambda^0_x\rangle + |I_x' \rangle \otimes |\Lambda^1_x\rangle$ be another quasi-coherent state orthogonal to the first one ($u \in U(2)_{a.s.}$), with $|O'_x\rangle = u|O_x\rangle$ and $|I'_x\rangle = u|I_x\rangle$.
  \begin{eqnarray}
    \llangle \Lambda(x)|u|\Lambda(x)\rrangle = 0 & \iff & \langle O_x|O'_x\rangle \langle \Lambda^0_x|\Lambda^0_x \rangle + \langle O_x|I'_x\rangle \langle \Lambda^0_x|\Lambda^1_x \rangle \nonumber \\
    & & \quad + \langle I_x|O'_x\rangle \langle \Lambda^1_x|\Lambda^0_x \rangle + \langle I_x|I'_x\rangle \langle \Lambda^1_x|\Lambda^1_x \rangle = 0 \\
    & \iff & \left(\begin{array}{cc} \langle O_x| & \langle I_x| \end{array} \right) \rho^T_\Lambda \otimes \id \left(\begin{array}{c} |O'_x \rangle \\ |I'_x \rangle \end{array} \right) = 0
  \end{eqnarray}
  with $\rho_\Lambda = \tr_{\mathcal H}|\Lambda \rrangle \llangle \Lambda|$. $\rho_\Lambda^\dagger = \rho_\Lambda$, $\rho_\Lambda \geq 0$ and $\det \rho_\Lambda \not=0$ (otherwise $|\Lambda \rrangle$ would be separable), $\langle \bullet | \rho^T \otimes \id \bullet \rangle_{\mathbb C^4}$ is then a scalar product of $\mathbb C^4$ (a definite positive sesquilinear form).  Two couples $(|O_x\rangle, |I_x\rangle)$ and $(|O'_x\rangle, |I_x'\rangle)$ (viewed as vectors of $\mathbb C^4$) define two orthogonal quasicoherent states if they are orthogonal with respect to $\langle \bullet | \rho^T \otimes \id \bullet \rangle_{\mathbb C^4}$. We can then find at most four independent quasicoherent states.
\end{proof}

In the following we consider for the whole of this paper only the case of eigenmanifolds at least weakly non-degenerate at all points (the case of a strongly non-degenerate eigenmanifold being not excluded).\\

If $\mathfrak M$ is not strongly non-degenerate at $x$ with separable quasienergy states, then these ones have the form $|\zeta \rangle \otimes |\Omega_x\rangle$ for some $|\Omega_x\rangle \in \mathcal H$ and for any $|\zeta \rangle \in \mathbb C^2$. It follows that any vector $\langle \zeta|\vec \sigma|\zeta \rangle$ is normal to $M_\Lambda$ at $x$. $x$ is then an isolated point ($\dim M_\Lambda = 0$ at $x$). Moreover $\slashed D_x|\zeta \rangle \otimes |\Omega_x\rangle=0$ for any $|\zeta\rangle$ implies that $|\Omega_x\rangle$ is a common eigenvector of $X^1$, $X^2$ and $X^3$. $M_\Lambda$ is then a discrete set of points if $\mathfrak X$ is abelian.\\

\begin{example}{3}{``Fuzzy'' circle}
  Let $\mathcal H = \mathbb C^N$ be endowed with an orthonormal basis $(|n\rangle)_{n=0,...,N-1}$, $Z = \sum_{n=0}^{N-1} |n\rangle \langle n+1|$ with by convention $|N\rangle \equiv |0\rangle$) and $X^3=0$. We can view this model as the Fuzzy unit circle since $Z^\dagger Z = (X^1)^2+(X^2)^2 = \id$. $[Z^\dagger,Z]=0$, $\mathfrak X$ is then abelian.
  \begin{equation}
    \left(\begin{array}{cc} 0 & Z^\dagger - \bar z \\ Z-z & 0 \end{array}\right)\left(\begin{array}{c} |\Lambda^0\rangle \\ |\Lambda^1\rangle \end{array} \right) \iff \left\{ \begin{array}{c} Z|\Lambda^1\rangle = z|\Lambda^1 \rangle \\ Z^\dagger|\Lambda^0\rangle = \bar z|\Lambda^0 \rangle \end{array} \right.
  \end{equation}
  Since $Z^N=\id$, $z$ must be a $N$-th root of the unity, and then $M_\Lambda = \{(\cos(\frac{2\pi q}{N}),\sin(\frac{2\pi q}{N}),0)\}_{q \in \{0,...,N-1\}}$, the quasicoherent states being weakly degenerate of the form
  \begin{equation}
    |\Lambda_\psi(e^{\imath \frac{2\pi q}{N}})\rrangle = |\psi \rangle \otimes \frac{1}{\sqrt N} \sum_{n=0}^{N-1} e^{\imath \frac{2\pi q}{N} n}|n\rangle
  \end{equation}
  for all $|\psi\rangle \in \mathbb C^2$. For finite values of $N$, the model is a Fuzzy regular cyclic $N$-gone with the circumscribed unit circle. For $N \to +\infty$, $M_\Lambda$ becomes the unit circle $S^1$ and the quasicoherent states tend in the weak topology to the singular distributions $|\Lambda_\psi(e^{\imath \varphi}) \rrangle = |\psi \rangle \otimes |\delta_\varphi \rangle$ with $\langle \theta|\delta_\varphi \rangle = \delta(\theta-\varphi)$ (Dirac distribution), the Hilbert space being assimilated to $\mathcal H = L^2(S^1,\frac{d\theta}{2\pi})$. Due to the abelianity of $\mathfrak X$ and to the zero quantum uncertainty, the system is not really fuzzy, it is more a matrix embedding of the classical circle.
\end{example}

If $\mathfrak M$ is not strongly non-degenerate at $x$ with entangled quasicoherent states, several situations can occur depending on the number of independent non-trivial purity normal vectors in $\Lin(\vec n_x^\alpha)_{\alpha =0,...,n \leq 3}$ with $\vec n_x^\alpha = \llangle \Lambda(x)| u_\alpha^\dagger \vec \sigma u_\alpha |\Lambda(x) \rrangle$ ($u_\alpha \in U(2)_{a.s.}$). This number can be smaller than $n = \dim \ker(\slashed D_x)$. Indeed, some quasicoherent states can be maximally entangled (Bell states), and their associated purity normal vectors are $\vec 0$. If $u_\alpha$ belongs to the isotropy group of $\rho_\Lambda$: $u^\dagger_\alpha \rho_\Lambda u_\alpha = \rho_\Lambda$, then $\vec n_x^\alpha = \vec n_x^0$. In a same way, if $u_\alpha$ belongs to the isotropy group of $u_\beta^\dagger \rho_\Lambda u_\beta$, then $\vec n_x^\alpha = \vec n_x^\beta$. We then have four possible situations:
\begin{itemize}
\item All purity normal vectors at $x$ are zero ($\ker \slashed D_x$ is a subspace of maximally entangled states). In this case we cannot conclude anything about the geometry of $M_\Lambda$ at $x$.
\item We have only one independent non-trivial purity normal vector, then this one provides the normal direction to $M_\Lambda$ at $x$ as for the strongly non-degenerate case.
\item We have only two independent non-trivial purity normal vectors, $M_\Lambda$ has then a normal plane at $x$, then $x$ is a pinch point of $M_\Lambda$ ($\dim M_\Lambda = 1$ at $x$).
\item We have at most three independent non-trivial purity normal vectors, the situation is similar to the case of a separable not strongly non-degenerate quasi-coherent state, $x$ is an isolated point ($\dim M_\Lambda=0$ at $x$).
\end{itemize} 

\subsection{Internal and external gauge choices}
The quasicoherent states are defined up to local $C^*$-phase changes: $\forall x \in M_\Lambda$
\begin{equation}
  |\tilde \Lambda(x) \rrangle = h(x) u(x) |\Lambda(x) \rrangle \qquad h(x)\in U(1),\ u(x)\in U_x
\end{equation}
$U_x$ being the subgroup of $SU(2)_{a.s.}$ such that $U_x \cdot |\Lambda(x)\rrangle = \ker(\slashed D_x)$ (the quasicoherent state being supposed weakly nondegenerate). Note that even if the quasicoherent state is strongly nondegenerate ($\dim \ker \slashed D_x = 1$), $U_x$ can be nontrivial as being the isotropy subgroup of $|\Lambda(x)\rrangle$ in $SU(2)$. $h$ corresponds to an usual phase change (it is the single phase change existing for a strongly nondegenerate quasienergy state), whereas $u$ is a local ``orientation'' change (a noncommutative phase change). We call this phase choice the internal gauge choice. We have also an external gauge choice associated with the coordinate changes (the change of the basis generating $\mathfrak X$):
\begin{equation}
  Y^i-y^i = {J^i}_j (X^j-x^j) \qquad J \in SO(3)
\end{equation}
$(Y^i)$ (resp. $(y^i)$) constitute new quantum (resp. classical) coordinate observables of $\mathfrak X$ (resp. $\mathbb R^3$). The coordinate changes must leave invariant the geometry operator by inducing a spin rotation:
\begin{eqnarray}
  \hat \slashed D_y & = & \hat \sigma_i \otimes (Y^i-y^i) \\
  & = & {J^i}_j \hat \sigma_i \otimes (X^j-x^j) \\
  & = & u_J^{-1} \hat \sigma_j u_J \otimes (X^j-x^j) \\
  & = & \sigma_j \otimes (X^j-x^j) \\
  & = & \slashed D_x
\end{eqnarray}
$\hat \sigma_j = u_J \sigma_j u_J^{-1}$ being the Pauli matrices after the spin rotation $u_J \in SU(2)$ associated with the 3D-vector rotation $J$. But consider $\check \slashed D_y = \sigma_i \otimes (Y^i-y^i) = u_J^{-1} \slashed D_{J^{-1}y} u_J$  the geometry observable in the new coordinates without spin rotation and $|\hat \Lambda(y) \rrangle$ an associated quasicoherent states ($\check \slashed D_y|\hat \Lambda(y)\rrangle = 0$). $|\hat \Lambda(y)\rrangle$ is related to $|\Lambda(x)\rrangle$ by
\begin{equation}\label{extchangeLambda}
  |\hat \Lambda(y) \rrangle = u_J^{-1} |\Lambda(J^{-1}y) \rrangle
\end{equation}

From the point of view of quantum gravity matrix model, $J$ constitutes a frame change of the observer (it is the ``observer'' which is rotated by $J^{-1}$ and not the quantum spacetime!). This is the reason for which we consider this gauge change as being ``external''.

\section{Metrics on a Fuzzy space}
To translate the notions of the usual geometry into the fuzzy space theory, we want to endow fuzzy spaces with metrics compatible with the quasicoherent geometry. For applications to quantum gravity, the goal of introducing these metrics is obvious (general relativity dealing with manifolds endowed with metrics). But we want also understand these metrics in the abstract geometric context and find their interpretations in the applications to quantum information theory. In this section we are interested with metrics associated with the quasicoherent picture which is different from the original Connes metric (for this one applied on fuzzy spaces we can see ref. \cite{Rieffel,Dandrea,Dandrea2,Dandrea3,Dandrea4}).

\subsection{The natural metrics of $\mathfrak M$ and of $M_\Lambda$}
$M_\Lambda$ is naturally endowed with the metric induced by its immersion in $\mathbb R^3$:
\begin{equation}
  \gamma_{ab} = \delta_{ij} \frac{\partial x^i}{\partial s^a} \frac{\partial x^j}{\partial s^b}
\end{equation}
where $(s^1,s^2)$ is a local curvilinear coordinate system on $M_\Lambda$.
\begin{prop}
The natural metric of $M_\Lambda$ can be written as $\forall x \in M_\Lambda$:
\begin{equation}
  \gamma_{ab}(x) = \frac{1}{2} \llangle \partial_{(a} \Lambda(x)|\slashed D^2_x|\partial_{b)}\Lambda(x) \rrangle
\end{equation}
\end{prop}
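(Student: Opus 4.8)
The plan is to differentiate the defining equation $\slashed D_x|\Lambda(x)\rrangle = 0$ along the eigenmanifold and exploit the fact that $\slashed D_x$ is self-adjoint with $|\Lambda(x)\rrangle$ in its kernel. Writing $d$ for the differential on $M_\Lambda$ and using local coordinates $(s^1,s^2)$ with $\partial_a = \partial/\partial s^a$, differentiating gives $(\partial_a \slashed D_x)|\Lambda\rrangle + \slashed D_x|\partial_a \Lambda\rrangle = 0$. Since $\slashed D_x = \sigma_i\otimes(X^i - x^i)$ and only the $x^i$ depend on the point, $\partial_a \slashed D_x = -\sigma_i \frac{\partial x^i}{\partial s^a}$, so $\slashed D_x|\partial_a\Lambda\rrangle = \sigma_i\frac{\partial x^i}{\partial s^a}|\Lambda\rrangle$. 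This is exactly the relation already used in the proof that $\vec n_x$ is a normal vector.

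First I would compute $\frac12\llangle \partial_a\Lambda|\slashed D_x^2|\partial_b\Lambda\rrangle$ by inserting $\slashed D_x^2 = \slashed D_x\cdot\slashed D_x$ and moving one factor of $\slashed D_x$ to each side (legitimate since $\slashed D_x$ is self-adjoint): this equals $\frac12\llangle \slashed D_x\partial_a\Lambda|\slashed D_x\partial_b\Lambda\rrangle = \frac12\big(\sigma_i\tfrac{\partial x^i}{\partial s^a}|\Lambda\rrangle, \sigma_j\tfrac{\partial x^j}{\partial s^b}|\Lambda\rrangle\big) = \frac12\frac{\partial x^i}{\partial s^a}\frac{\partial x^j}{\partial s^b}\llangle\Lambda|\sigma_i\sigma_j|\Lambda\rrangle$. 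Then I would symmetrize in $(a,b)$ as the statement demands — the notation $\partial_{(a}\Lambda\cdots\partial_{b)}\Lambda$ means adding the $a\leftrightarrow b$ term — which replaces $\sigma_i\sigma_j$ by $\frac12\{\sigma_i,\sigma_j\} = \delta_{ij}\id$. Using $\llangle\Lambda|\Lambda\rrangle = 1$ this collapses to $\delta_{ij}\frac{\partial x^i}{\partial s^a}\frac{\partial x^j}{\partial s^b} = \gamma_{ab}$, after accounting for the factor $2$ coming from the symmetrization against the $\frac12$.

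The one subtlety I expect to need care with is bookkeeping of the symmetrization convention and the factors of $\frac12$: the paper's convention $T_{(ab)} = T_{ab}+T_{ba}$ (no $\frac12$), combined with the explicit $\frac12$ in the formula and the $\frac12$ from $\frac12\{\sigma_i,\sigma_j\}=\delta_{ij}\id$, must conspire to give exactly $\gamma_{ab}$ with coefficient $1$; one checks $\frac12\cdot 2\cdot\frac12\cdot$ wait — more precisely, the antisymmetric part $\varepsilon_{ijk}\sigma_k$ of $\sigma_i\sigma_j$ contracts the symmetric object $\frac{\partial x^i}{\partial s^{(a}}\frac{\partial x^j}{\partial s^{b)}}$ to zero automatically, so even before symmetrizing the $\mathbb C^2$-indices one may symmetrize the $\partial_a\Lambda$'s and the cross term drops. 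That is really the only nontrivial point; everything else is the direct differentiation above, and no result beyond the already-established identity $\slashed D_x|\partial_a\Lambda\rrangle = \sigma_i\frac{\partial x^i}{\partial s^a}|\Lambda\rrangle$ and self-adjointness of $\slashed D_x$ is needed.
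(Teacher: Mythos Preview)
Your proposal is correct and follows essentially the same route as the paper: differentiate $\slashed D_x|\Lambda\rrangle=0$ to obtain $\slashed D_x|\partial_a\Lambda\rrangle=\sigma_i\frac{\partial x^i}{\partial s^a}|\Lambda\rrangle$, use self-adjointness of $\slashed D_x$ to land on $\frac{\partial x^i}{\partial s^a}\frac{\partial x^j}{\partial s^b}\llangle\Lambda|\sigma_i\sigma_j|\Lambda\rrangle$, and then kill the antisymmetric $\varepsilon_{ij}{}^k\sigma_k$ piece by symmetry in $(a,b)$. The only cosmetic difference is that the paper contracts with the symmetric object $ds^a ds^b$ instead of invoking the explicit $(a,b)$-symmetrization, and your bookkeeping of the $\tfrac12$'s is fine once you note $\tfrac12\cdot\{\sigma_i,\sigma_j\}=\delta_{ij}$ against the convention $T_{(ab)}=T_{ab}+T_{ba}$.
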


\begin{proof}
  $\slashed D_x|\Lambda \rrangle = 0 \Rightarrow -\sigma_i \frac{\partial x^i}{\partial s^a} |\Lambda \rrangle + \slashed D_x \partial_a|\Lambda \rrangle=0$. It follows that
  \begin{eqnarray}
    \llangle \partial_a \Lambda|\slashed D_x^2|\partial_b \Lambda \rrangle ds^a ds^b & = & \llangle \Lambda|\sigma_i \sigma_j|\Lambda \rrangle \frac{\partial x^i}{\partial s^a} \frac{\partial x^j}{\partial s^b}ds^a ds^b \\
    & = & \delta_{ij} \frac{\partial x^i}{\partial s^a} \frac{\partial x^j}{\partial s^b}ds^a ds^b \nonumber \\
    & & \quad + \imath \underbrace{{\varepsilon_{ij}}^k \llangle \Lambda|\sigma_k|\Lambda \rrangle\frac{\partial x^i}{\partial s^a} \frac{\partial x^j}{\partial s^b}ds^a ds^b}_{=0} \\
    & = & \gamma_{ab} ds^a ds^b
  \end{eqnarray}
\end{proof}

The triads of $M_\Lambda$ $(e^i_a)$ are $e^i_a = \frac{\partial x^i}{\partial s^a}$ in order to have $\gamma_{ab} = \delta_{ij} e^i_a e^i_b$.\\
To be consistent with the quasicoherent metric, the metric of $\mathfrak M$ as a square length observable is defined by $\pmb{d\ell^2}_x = \slashed D_x^2$ (and not $\slashed D_x^{-2}$ as in usual noncommutative geometry \cite{Connes}, this is consistent with the fact that $\slashed D_x$ and $-\imath [\slashed D,\bullet] = \sigma_i \otimes L_{X^i}$ are homogeneous to a length and not to the inverse of a length). The definition of $\gamma$ can be then rewritten as
\begin{equation}
  {d\ell^2}_s = \gamma_{ab}(x(s))ds^ads^b = \omega_{x(s+ds)}(\pmb{d\ell^2}_{x(s)})
\end{equation}
We recall that the geometric operator $\slashed D_x$ can be interpreted as an energy observable (displacement energy in string theory, or energy of the qubit dressed by the environment in quantum information theory). By construction, the mean energy in a quasi-coherent state is zero with no uncertainty: $\omega_x(\slashed D_x)= 0$ and $\Delta_x \slashed D_x^2 = \omega_x(\slashed D_x^2) - \omega_x(\slashed D_x)^2=0$. But consider a ``slipping'' on $M_\Lambda$, for which the state of the quantum system is $\omega_{x(s+ds)}$ but the energy measurement is preformed with $\slashed D_{x(s)}$ (the measurement is performed with an infinitesimal ``misalignment'' $ds$). The mean energy is then $\omega_{x(s+ds)}(\slashed D_{x(s)}) = 0$ (at the first order of approximation), but the square deviation is $\Delta_{x(s+ds)} \slashed D_{x(s)}^2 = \omega_{x(s+ds)}(\slashed D_{x(s)}^2) - \omega_{x(s+ds)}(\slashed D_{x(s)})^2 = {d\ell^2}_s$. ${d\ell}_s$ is then the energy uncertainty. $\ell(\mathscr C) = \int_{\mathscr C} d\ell$ is the cumulated energy uncertainty during a slipping along the path $\mathscr C$, and so the geodesics of $\gamma$ are the path minimising the energy uncertainty due to the infinitesimal misalignment of the probe. \\
$\pmb{d\ell^2}_x$ as square length observable induces a quantisation of the measure of the infinitesimal lengths. In contrast with a classical manifold for which we have a continuum of infinitesimal lengths, onto a fuzzy space these ones belong to $\Sp(\sqrt{\pmb{d\ell^2}_x}) = \Sp(|\slashed D_x|)$. For example, for the fuzzy plane $\Sp(|\slashed D_x|) = \{\sqrt n, n \in \mathbb N\}$ which can be interpreted in quantum gravity by a quantisation of the infinitesimal lengths which are multiples of the Planck length (with factors being square roots of integers).\\ 

To continue the analysis of the relation between the geometry of $\mathfrak M$ and the one of $M_\Lambda$, we need maps permitting to transform the (co)tangent vectors of one into those of the other. Let $\omega_{x*}$ and $\omega_x^*$ be the push-forward (tangent map) and the pull-back (cotangent map) of $\omega_x$ (for $x \in M_\Lambda$)
\begin{equation}
  \begin{array}{rcl} \omega_{x*} : \Der(\mathfrak X) & \to & T_x\mathbb R^3 = T_x M_\Lambda \oplus N_x M_\Lambda \\ L & \mapsto & \omega_x(L(X^i)) \partial_i \end{array}
\end{equation}
$T_x M_\Lambda$ and $N_x M_\Lambda$ being the tangent and the normal vector spaces at $x$ on $M_\Lambda$.
\begin{equation}
  \begin{array}{rcl} \omega_x^* : \Omega^1_x \mathbb R^3 & \to & \Omega^1_{\Der} \mathfrak X \\ \eta_i(x) dx^i & \mapsto & \eta_i(x) \dnc X^i \end{array}
\end{equation}
These maps are defined to satisfy the following duality consistency condition: $\langle \eta(x),\omega_{x*}(L) \rangle_{\mathbb R^3} = \omega_x(\langle \omega_x^* \eta(x), L \rangle_{\mathfrak X})$ ($\langle \bullet , \bullet \rangle_{\mathbb R^3}$ standing for the duality bracket between the tangent vectors and the differential 1-forms of $\mathbb R^3$). Moreover let $\pi_x$ be the orthogonal projection onto $T_xM_\Lambda$:
\begin{equation}
  \begin{array}{rcl} \pi_x: T_x\mathbb R^3 & \to & T_x M_\Lambda \\ v^i \partial_i & \mapsto & v^i \delta_{ij} \gamma^{ab} \frac{\partial x^j}{\partial s^b} \partial_a \end{array}
\end{equation}
and $\pi_x^*$ be its dual map ($\langle \pi_x^* \bullet,\bullet\rangle_{\mathbb R^3} = \langle \bullet, \pi_x \bullet\rangle_{\mathbb R^3}$):
\begin{equation}
  \begin{array}{rcl} \pi_x^* : \Omega^1_x M_\Lambda & \to & \Omega^1_x \mathbb R^3 \\ \eta_a ds^a & \mapsto & \eta_a \gamma^{ab} \delta_{ij} \frac{\partial x^j}{\partial s^b} dx^i \end{array}
\end{equation}
$\omega_x^* \pi_x^* (\gamma_{ab}(x)ds^a ds^b) \in (\Omega^1_{\Der} \mathfrak X)^{\otimes 2}$ defines then a metric on $\mathfrak M$ as a symmetric bicovector.
\begin{prop}
  The metric of a fuzzy space $\mathfrak M$ as a bicovector is
  \begin{equation}
    \pmb \gamma = \omega_x^* \pi_x^* (\gamma_{ab}(x)ds^a ds^b) = \delta_{ij} \dnc X^i \otimes \dnc X^j
  \end{equation}
\end{prop}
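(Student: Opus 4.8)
The plan is to prove the identity by unwinding the three maps appearing in the definition of $\pmb\gamma$ and then identifying the resulting bicovector, using the normal-vector property established above.

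First I would compute $\pi_x^*\otimes\pi_x^*$ applied to $\gamma_{ab}\,ds^a\otimes ds^b$. Writing $e^i_a=\partial x^i/\partial s^a$ for the triads, so that $\gamma_{ab}=\delta_{ij}e^i_ae^j_b$ and $\gamma^{ab}\gamma_{bc}=\delta^a_c$, the defining formula for $\pi_x^*$ gives $\pi_x^*(ds^a)=\gamma^{ab}\delta_{ij}e^j_b\,dx^i$, hence
\begin{equation}
  (\pi_x^*\otimes\pi_x^*)(\gamma_{ab}\,ds^a\,ds^b)=\gamma_{ab}\gamma^{ac}\gamma^{bd}\,\delta_{ik}e^k_c\,\delta_{jl}e^l_d\,dx^i\otimes dx^j=P_{ij}\,dx^i\otimes dx^j,
\end{equation}
where $P_{ij}:=\gamma^{cd}\,\delta_{ik}e^k_c\,\delta_{jl}e^l_d$ is exactly the matrix of the orthogonal projector of $\mathbb R^3$ onto $T_xM_\Lambda$ (one checks $P_{ij}e^j_a=\delta_{ik}e^k_a$ using $\gamma^{cd}\gamma_{da}=\delta^c_a$, so $P$ restricts to the identity on tangent directions and annihilates the normal line). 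Applying now $\omega_x^*\otimes\omega_x^*$, which merely substitutes $dx^i\mapsto\dnc X^i$, yields $\omega_x^*\pi_x^*(\gamma)=P_{ij}\,\dnc X^i\otimes\dnc X^j$.

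It remains to replace $P_{ij}$ by $\delta_{ij}$, and this is where the actual content lies. Since $M_\Lambda$ is a surface, $\delta_{ij}-P_{ij}=\hat n_i\hat n_j$ with $\hat n_x=\vec n_x/\|\vec n_x\|$ the unit purity normal, so the discrepancy between the two sides is $(\hat n_i\dnc X^i)\otimes(\hat n_j\dnc X^j)$, and the claim reduces to the vanishing of the normal component $\hat n_i\dnc X^i=\dnc(\hat n_iX^i)$ of the cotangent structure. Here I would reuse the computation behind the normal-vector Property: differentiating $\slashed D_x|\Lambda\rrangle=0$ along $M_\Lambda$ gives $\llangle\Lambda|\sigma_i|\Lambda\rrangle\,dx^i=0$, i.e.\ $n_i\,dx^i=0$ on $M_\Lambda$, so the normal covector carries no variation of the classical coordinates and is not produced by $\omega_x^*$ along $M_\Lambda$; combined with $\dnc X^i=\Theta^{ji}\theta_j$ this is what makes $\hat n_i\dnc X^i$ drop out, leaving $\delta_{ij}\dnc X^i\otimes\dnc X^j$.

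I expect this last step to be the main obstacle: one must argue that the normal piece is genuinely absent in $\Omega^1_\Der\mathfrak X$ and not merely orthogonal to the tangential data, which amounts to controlling $\hat n_i\Theta^{ji}$ on $M_\Lambda$. The cleanest sanity check is the fuzzy plane, where $X^3=0$ forces $\dnc X^3\equiv0$ and the normal direction disappears automatically, so that $P_{ij}\dnc X^i\otimes\dnc X^j$ and $\delta_{ij}\dnc X^i\otimes\dnc X^j$ coincide term by term; the general case should follow the same pattern once the constraint defining $M_\Lambda$ is transported to the cotangent side.
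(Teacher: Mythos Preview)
Your computation mirrors the paper's exactly through the penultimate line: both unwind $\omega_x^*\pi_x^*$ and arrive at $P_{ij}\,\dnc X^i\otimes\dnc X^j$ with $P_{ij}=\gamma^{cd}\delta_{ik}e^k_c\,\delta_{jl}e^l_d$ the tangent projector. The paper then passes to $\delta_{ij}\,\dnc X^i\otimes\dnc X^j$ in one line, offering only the remark ``because $\gamma_{bd}=\delta_{kl}e^k_b e^l_d$'' --- which, as you correctly observe, gives back $P_{ij}$ and not $\delta_{ij}$ (the triads $e^i_a$ form a $3\times2$ matrix, so $\gamma^{bd}e^k_b e^l_d\neq\delta^{kl}$). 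The step you flag as the main obstacle is thus precisely the step the paper takes without further argument; you are not missing a trick present in the paper's proof.

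Your attempt to close the gap, however, does not go through. The relation $n_i\,dx^i=0$ is a statement about the commutative differential restricted to $M_\Lambda$ and has no direct translation to $\Omega^1_{\Der}\mathfrak X$: the noncommutative 1-form $\hat n_i\,\dnc X^i$ acts on derivations by $L_{X^j}\mapsto \hat n_i\Theta^{ji}$, and this is generically a nonzero operator. On the fuzzy sphere, $\Theta^{ji}=r\,{\varepsilon^{ji}}_k X^k$ and $\hat n\propto\vec x$, so $\hat n_i\,\dnc X^i(L_{X^j})\propto(\vec x\times\vec X)^j\neq 0$ in $\Env(\mathfrak X)$. Your fuzzy-plane check succeeds only because $X^3=0$ kills $\dnc X^3$ identically, which is a special feature of that example and not the general mechanism you suggest. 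So the equality $P_{ij}\,\dnc X^i\otimes\dnc X^j=\delta_{ij}\,\dnc X^i\otimes\dnc X^j$, read as a strict identity of noncommutative bicovectors, does not follow from what either you or the paper write down; the paper appears to treat it as a formal identification (and then immediately uses the $x$-independence of the right-hand side as a \emph{consequence}) rather than as a derived equality.
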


\begin{proof}
  \begin{eqnarray}
    \omega_x^* \pi_x^* (\gamma_{ab}(x)ds^a ds^b) & = & \gamma_{ab} \gamma^{ac}\gamma^{bd} \delta_{ik} \delta_{jl} \frac{\partial x^k}{\partial s^c} \frac{\partial x^l}{\partial s^d} \dnc X^i \otimes \dnc X^j \\ & = &  \gamma^{bd} \delta_{ik} \delta_{jl} \frac{\partial x^k}{\partial s^b} \frac{\partial x^l}{\partial s^d} \dnc X^i \otimes \dnc X^j \\
    & = & \delta_{ij} \dnc X^i \otimes \dnc X^j
  \end{eqnarray}
because $\gamma_{bd} = \frac{\partial x^k}{\partial s^b} \frac{\partial x^l}{\partial s^d} \delta_{kl}$. 
\end{proof}
We have also $\pmb \gamma = \delta_{ij} \Theta^{ki} \Theta^{lj} \theta_k \otimes \theta_l$. $\pmb \gamma$ is well defined as metric of $\mathfrak M$ because it is independent of $x$ (of the probe).
\begin{prop}
  $\pmb \gamma: \Der(\mathfrak X)^2 \to \Env(\mathfrak X)$ is hermitian and positive (and non-degenerate if $\mathcal Z(\mathfrak X)$ is trivial).
\end{prop}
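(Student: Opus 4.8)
The plan is to unpack the bicovector $\pmb\gamma=\delta_{ij}\dnc X^i\otimes\dnc X^j$ into its value on a pair of derivations, $\pmb\gamma(L_1,L_2)=\delta_{ij}(\dnc X^i)(L_1)(\dnc X^j)(L_2)=\sum_i L_1(X^i)L_2(X^i)$, using $\dnc X^i(L)=L(X^i)$, and then to read off the two properties from the single structural fact that $L(X^i)=-\imath[X,X^i]$ is \emph{self-adjoint} in $\Env(\mathfrak X)$ whenever $L=-\imath[X,\bullet]$ with $X\in\mathfrak X$ (indeed $[X,X^i]^\dagger=[X^i,X]=-[X,X^i]$ since $X$ and the $X^i$ are self-adjoint). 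Working on the real derivation module spanned by the $L_{X^k}$ over real central elements (the one relevant for a Riemannian metric), it is enough to check everything on the generators, where $L_{X^k}(X^i)=\Theta^{ki}=(\Theta^{ki})^\dagger$, and extend by $\mathcal Z(\Env(\mathfrak X))$-bilinearity; note that positivity genuinely fails on complex combinations, so the restriction to the real structure is essential and not cosmetic.

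Hermiticity is then a one-line computation: $\pmb\gamma(L_1,L_2)^\dagger=\big(\sum_i L_1(X^i)L_2(X^i)\big)^\dagger=\sum_i L_2(X^i)^\dagger L_1(X^i)^\dagger=\sum_i L_2(X^i)L_1(X^i)=\pmb\gamma(L_2,L_1)$, where the second equality uses that $\dagger$ is an anti-homomorphism and the third uses self-adjointness of the $L_k(X^i)$. Positivity follows by writing $\pmb\gamma(L,L)=\sum_i L(X^i)^2=\sum_i L(X^i)^\dagger L(X^i)$: this is a finite sum of elements of the form $a^\dagger a$ in the $C^*$-algebra $\Env(\mathfrak X)$, hence a positive element. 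The rewriting $L(X^i)^2=L(X^i)^\dagger L(X^i)$ is precisely where self-adjointness enters (for a skew-adjoint operator the square is negative), which is why this step is the heart of the positivity assertion.

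For non-degeneracy under the hypothesis that $\mathcal Z(\mathfrak X)$ is trivial, I would argue: if $\pmb\gamma(L,L)=0$ then $\sum_i L(X^i)^\dagger L(X^i)=0$, and in a $C^*$-algebra a sum of positive elements vanishes only if each summand does, so $L(X^i)=0$ for all $i$. Writing $L=\sum_k f_k L_{X^k}$ this means $[\sum_k f_k X^k,X^i]=0$ for every $i$, i.e. $Y:=\sum_k f_k X^k$ commutes with all generators of $\Env(\mathfrak X)$ and is therefore central; by hypothesis $Y\in\mathbb C\,\id$, and since $X^1,X^2,X^3,\id$ are linearly independent we get $f_k=0$, hence $L=0$. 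Equivalently, $L$ annihilates the algebra generators, so it is the zero derivation and $L\mapsto\pmb\gamma(L,\cdot)$ has trivial kernel. I expect this last step to be the crux: when the center is nontrivial one can have $Y=\sum_k f_k X^k$ central and nonzero with the $f_k$ not all vanishing, so $\pmb\gamma$ may acquire a kernel — exactly what the hypothesis rules out — and one must also be careful to state precisely in which sense ``non-degenerate'' is meant and to handle the passage from the generators $L_{X^k}$ (which need not be free over the center) to an arbitrary element of $\Der(\mathfrak X)$.
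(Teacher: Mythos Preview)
Your argument is correct and follows essentially the same route as the paper's proof: both reduce to the self-adjointness of $L_V(X^i)=-\imath[V,X^i]$ for $V\in\mathfrak X$, use it for hermiticity via $(AB)^\dagger=B^\dagger A^\dagger$, and for positivity rewrite $\pmb\gamma(L,L)$ as a sum of non-negative pieces. The only cosmetic difference is that the paper checks positivity by evaluating $\langle\psi|\pmb\gamma(L_V,L_V)|\psi\rangle=\sum_i\|[V,X^i]\psi\|^2\geq 0$ on vectors of $\mathcal H$, whereas you stay in the $C^*$-algebra and write $\pmb\gamma(L,L)=\sum_i L(X^i)^\dagger L(X^i)$ directly; your version is arguably cleaner but carries the same content. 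For non-degeneracy the paper is just as informal as you are (it also stops at ``$V$ commutes with all $X^i$, impossible if the center is trivial''), so your caveats about the precise meaning of the center hypothesis and the passage from generators to general $L\in\Der(\mathfrak X)$ are well placed rather than signs of a gap.
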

\begin{proof}
  Let $L_V = v_i L_{X^i}, L_W=w_i L_{X^i} \in \Der(\mathfrak X)$ be two tangent vectors of $\mathfrak M$ (with $v_i,w_i \in \mathcal Z(\Env(\mathfrak X))$ selfadjoint).
  \begin{eqnarray}
    \pmb \gamma(L_V,L_W) & = & \delta_{ij} \Theta^{ki} \Theta^{lj} v_k w_l \\
    & = & -\delta_{ij} [V,X^i][W,X^j] \\
    & = & -\delta_{ij} ([X^j,W][X^i,V])^\dagger \\
    & = & -\delta_{ij} ([W,X^j][V,X^i])^\dagger \\
    & = & \pmb \gamma(L_W,L_V)^\dagger
  \end{eqnarray}
  In particular $\pmb \gamma(L_V,L_V)$ is selfadjoint and has then a real spectrum. Let $\psi \in \mathcal H$.
  \begin{eqnarray}
    \langle \psi|\pmb \gamma(L_V,L_V)|\psi \rangle & = & - \delta_{ij} \langle \psi|[V,X^i][V,X^j]\psi \rangle \\
    & = & - \delta_{ij} \langle [X^i,V] \psi | [V,X^j] \psi \rangle \\
    & = & \delta_{ij} \langle [V,X^i] \psi | [V,X^j] \psi \rangle \\
    & = & \sum_i \|[V,X^i]\psi\|^2 \\
    & \geq & 0
  \end{eqnarray}
    Moreover $\langle \psi|\pmb \gamma(L_V,L_V)|\psi \rangle = 0$ for all $\psi$ if and only if $V=0$ except if $V$ commutes with all $X^i$. If the center of $\mathfrak X$ is trivial ($\mathcal Z(\mathfrak X)=\Lin(\id)$), this cannot arise.
\end{proof}

At the semi-classical limit where $\Theta^{ij} = -\imath [X^i,X^j] \sim \vartheta^{ij}$ defines a Poisson structure of $\mathbb R^3$ ($\{f,g\} = \vartheta^{ij} \partial_i f \partial_j g$, $\forall f,g \in \underline{\mathbb R}_{\mathbb R^3}$), $\pmb \gamma$ defines an effective metric $G^{kl} = \delta_{ij} \vartheta^{ki} \vartheta^{lj}$ which plays the role in quantum gravity matrix models of the metric of the emergent curved spacetime \cite{Klammer, Steinacker2} (the spacetime is flat and noncommutative at the microscopic scale but it emerges a curved commutative spacetime at the macroscopic scale with the semi-classical limit). In this meaning, $\pmb \gamma$ can be then viewed as the quantum metric of a quantum spacetime.\\

Since $\pmb \gamma_{kl} = \delta_{ij} {\Theta_k}^i {\Theta_l}^j$ (with respect to the representation with $\theta_i$), we can set $E^i_k = {\Theta_k}^i = [X_k,X^i]$ as triads of $\mathfrak M$. The analogue of the Weitzenb\"ock connection \cite{Aldrovandi} of $\mathfrak M$ is then
\begin{equation}
  \mathring \Gamma^i_{jk} =  L_{X^k}(E^i_j) = \imath [X_k,[X^i,X_j]]
\end{equation}
which defines an analogue of the Weitzenb\"ock torsion:
\begin{eqnarray}
  \mathring T^k_{ij} & = & L_{X^j}(E^k_i) - L_{X^i}(E^k_j) \\
  & = & \imath [X_j,[X^k,X_i]] - \imath [X_i,[X^k,X_j]] \\
  & = & \imath [X^k,[X_i,X_j]]
\end{eqnarray}
(where we have used the Jacobi identity).\\

Remark : $E^j_i$ induces the following dual diads on $M_\Lambda$:
\begin{eqnarray}
  \mathring e^a_i & = & \omega_x({[\pi_x]^a}_j E^j_i) \\
  & = & -\imath \gamma^{ab} \frac{\partial x^j}{\partial s^b} \llangle \Lambda(x)|[X_i,X_j]|\Lambda(x) \rrangle
\end{eqnarray}
the triads being by construction $e^i_a = \frac{\partial x^i}{\partial s^a}$.\\

In BFFS matrix model, $M_\Lambda$ is the classical space closest as possible to the physics of the quantum space $\mathfrak M$. $\mathfrak M$ is the quantisation of a flat space (as we can see it with its metric $\pmb \gamma = \delta_{ij} \dnc X^i \otimes \dnc X^j$). In this kind of models, the gravitational effects emerge at the macroscopic level from the noncommutativity at the microscopic level (with the thermodynamical limit where the number of strings tends to infinity with constant density). At the microscopic level, the gravitational effects emerge at the quasicoherent picture (which is the classical geometry closest to the quantum geometry) \cite{Viennot2}. These gravitational effects are induced, in accordance with general relativity, by the curvature of $M_\Lambda$ with its metric $\gamma_{ab}$.\\
For the applications to quantum information, $M_\Lambda$ is the constrained control parameter manifold of the qubit. With adiabatic control in the whole of $\mathbb R^3$, the adiabatic transport induces in addition to the geometric phase a dynamical phase which, in contrast with the geometric phase, cannot be fixed only by the control design (the shape of paths $\mathscr C$ in $M_\Lambda$ \cite{Shapere, Bohm}). In strong adiabatic transport, these dynamical phases modify the interferences, and in weak adiabatic transport, these operator valued dynamical phases modify the reached density matrix as a kind of effective Hamiltonian evolution in addition to the geometric effects \cite{Viennot1, Viennot3}. To have a pure geometric control, the adiabatic control must be constrained onto $M_\Lambda$. The geometry of this manifold is then characteristic of the controllability of the qubit in presence of the environment.

\subsection{Use of the noncommutative metric}
To use in practice the metric $\pmb \gamma$ we need to define the equivalents of paths on $\mathfrak M$. Due to the non-separability of the quasicoherent states and of  Heisenberg uncertainties concerning the coordinate observables, the usual notion of path cannot be obviously generalised. To solve this problem, we introduce (in accordance with the Perelomov coherent state theory \cite{Perelomov}) the notion of displacement operator. 

\subsubsection{The displacement operator}
\begin{defi}[Displacement operator]
  Let $M_\Lambda$ be the eigenmanifold of a fuzzy space $\mathfrak M$, supposed weakly non-degenerate. Let $x,y \in M_\Lambda$ be two distinct points. We call displacement operator $\Dis(y,x) \in e^{\imath \mathfrak X} \subset \Env(\mathfrak X)$ from $x$ to $y$, the invertible operator, if it exists, such that
  \begin{equation}
    \exists u_{yx} \in U(2)_{a.s.}, \quad u_{yx} \otimes \Dis(y,x)|\Lambda(x)\rrangle = |\Lambda(y) \rrangle
  \end{equation}
  and then
  \begin{equation}
    \omega_y = \omega_x \circ \Ad_{\Dis(y,x)}
  \end{equation}
  If $\Dis(y,x)$ exists, $x$ and $y$ are said linkable.
\end{defi}
Up to a local orientation and local phase changes, $\Dis(y,x)$ induces the move from the pseudo-point $\omega_x$ to the pseudo-point $\omega_y$. A similar definition of the displacement operator exists for the Perelomov coherent states (by definition the coherent states are the orbit of the vacuum state by the action of the group of displacement operators \cite{Perelomov, Puri}).\\
In the context of quantum information theory, the displacement operator models a ``control jump'', i.e. a local operation quantum protocol on the qubit ($u_{yx}$) and on its environment ($\Dis(y,x)$) permitting to abruptly change the control parameters from $x$ to $y$.\\

The displacement operator associated with $|\hat \Lambda \rrangle$ after an external gauge change eq. (\ref{extchangeLambda}) is related to the one associated with $|\Lambda \rrangle$ by
\begin{eqnarray}
    \hat \Dis(y_2,y_1) & = & \Dis(J^{-1}y_2,J^{-1}y_1) \\
    & = & \Dis(y_2,y_1) \mathbf{G}_J(y_2,y_1)
\end{eqnarray}
with $\mathbf G_J(y_2,y_1) = \Dis(y_2,y_1)^{-1}\Dis(J^{-1}y_2,J^{-1}y_1) \in \Env(\mathfrak X)$.\\

\begin{example}{2}{Fuzzy surface plots}
  The Perelomov coherent states of the harmonic oscillator algebra are defined by \cite{Perelomov}
  \begin{equation}
    \mathcal D(\alpha)|0 \rangle = |\alpha \rangle \qquad \text{with }\mathcal D(\alpha)=e^{\alpha a^+-\bar \alpha a} = e^{-|\alpha|^2/2} e^{\alpha a^+} e^{-\bar \alpha a}
  \end{equation}
  and since $\mathcal D(\alpha)a\mathcal D(\alpha)^\dagger = a-\alpha$, we have also $|n\rangle_\alpha = \mathcal D(\alpha)|n\rangle$. Moreover we have
  \begin{equation}
    \mathcal D(\alpha) \mathcal D(\beta) = e^{\imath \Im(\alpha \bar \beta)} \mathcal D(\alpha+\beta)
  \end{equation}
  and then
  \begin{equation}
    \mathcal D(\beta-\alpha)|\alpha \rangle = e^{-\imath \Im (\alpha \bar\beta)} |\beta \rangle
  \end{equation}
  This defines a displacement operator for the fuzzy spaces of the fuzzy plane family by
  \begin{equation}
    \Dis(\beta,\alpha) = \mathcal D(\beta-\alpha) = e^{(\beta-\alpha) a^+-(\bar \beta-\bar \alpha) a}
  \end{equation}
  The local gauge change associated with the displacement operator, $u_{\beta\alpha} \otimes \Dis(\beta,\alpha)$, are
  \begin{equation}
    u_{\beta \alpha} = \left\{ \begin{array}{ll}
      e^{\imath  \Im (\alpha \bar\beta)} & \text{\tiny for the fuzzy plane} \\
      e^{\imath  \Im (\alpha \bar\beta)} \left(\begin{array}{cc} 1 & 0 \\ \epsilon(\alpha-\beta) & 1 \end{array} \right) & \text{\tiny for the fuzzy elliptic paraboloid} \\
      e^{\imath  \Im (\alpha \bar\beta)} \left(\begin{array}{cc} 1 & 0 \\ \epsilon(\bar \alpha-\bar \beta) & 1 \end{array} \right) & \text{\tiny for the fuzzy hyperbolic paraboloid}
    \end{array} \right.
  \end{equation}  

  The fuzzy hyperboloid is totally unlinkable and no displacement operator exists. Nevertheless, the fuzzy hyperboloid is ``paralinkable'' and it is possible to generalise the discussion for it, see \ref{paralinkable}.
\end{example}

\begin{prop}
  If $(\mathfrak X,\mathcal H)$ is a selfadjoint irreducible representation $\iota$ of a Lie algebra $\mathfrak g$ (involving that $\mathfrak X$ is stable by the commutator $[\bullet,\bullet]$), and if $|\Lambda(x)\rrangle$ and $|\Lambda(y)\rrangle$ are separable, then it exists a displacement operator $\Dis(y,x)$.
\end{prop}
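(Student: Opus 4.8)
The plan is to reduce the statement to the Hilbert space factor $\mathcal H$ and then invoke the Perelomov coherent state picture. First I would use the separability hypothesis to write $|\Lambda(x)\rrangle = |O_x\rangle \otimes |\Omega_x\rangle$ and $|\Lambda(y)\rrangle = |O_y\rangle \otimes |\Omega_y\rangle$ with $|O_x\rangle,|O_y\rangle$ unit vectors of $\mathbb C^2$ and $|\Omega_x\rangle,|\Omega_y\rangle$ unit vectors of $\mathcal H$. Since an operator of $\Env(\mathfrak X)$ acts on the $\mathcal H$-factor only, the defining equation $u_{yx}\otimes\Dis(y,x)|\Lambda(x)\rrangle = |\Lambda(y)\rrangle$ becomes $(u_{yx}|O_x\rangle)\otimes(\Dis(y,x)|\Omega_x\rangle) = |O_y\rangle\otimes|\Omega_y\rangle$. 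As every genuine unitary of $\mathbb C^2$ satisfies $u^\dagger u = \id$ and hence lies in $U(2)_{a.s.}$, it suffices to produce an invertible $\Dis(y,x)\in e^{\imath\mathfrak X}\subset\Env(\mathfrak X)$ with $\Dis(y,x)|\Omega_x\rangle = c\,|\Omega_y\rangle$ for some $c\in U(1)$; one then picks any unitary $u_{yx}$ with $u_{yx}|O_x\rangle = \bar c\,|O_y\rangle$, and $\omega_y = \omega_x\circ\Ad_{\Dis(y,x)}$ follows at once from $\omega_x(\Ad_{\Dis(y,x)}A) = \langle\Omega_x|\Dis(y,x)^{-1} A\,\Dis(y,x)|\Omega_x\rangle = \langle\Omega_y|A|\Omega_y\rangle$ (recall $\Dis(y,x)$ will be unitary). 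The problem is thus reduced to finding $\Dis(y,x)$ on the factor $\mathcal H$.

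Next I would extract the structure of $|\Omega_x\rangle$ from $\slashed D_x|\Lambda(x)\rrangle = 0$. Projecting this identity onto $|O_x\rangle$ and onto the orthogonal vector $|\bar O_x\rangle$ of $\mathbb C^2$, and writing $\sigma_i|O_x\rangle = n_x^i|O_x\rangle + \mu^i|\bar O_x\rangle$ with $\vec n_x = \langle O_x|\vec\sigma|O_x\rangle$ the (unit) normal vector considered above and $\vec\mu = \langle\bar O_x|\vec\sigma|O_x\rangle$ a null vector with $\vec\mu\cdot\vec n_x = 0$ (a standard property of the spin-$1/2$ matrix elements), one gets $(\vec n_x\cdot\vec X)|\Omega_x\rangle = (\vec n_x\cdot\vec x)|\Omega_x\rangle$ together with $(\vec\mu\cdot\vec X)|\Omega_x\rangle = (\vec\mu\cdot\vec x)|\Omega_x\rangle$. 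These are exactly the equations expressing that $|\Omega_x\rangle$ is an extremal vector of the irreducible representation $\iota$ for the polarization adapted to the orthonormal frame determined by $\vec n_x$ --- a highest/lowest weight vector when $\mathfrak g$ is semisimple, a vacuum for an oscillator/Heisenberg algebra --- which is also the vector minimising the Heisenberg uncertainty $\Delta_{\omega_x}\vec X^2$ recalled above \cite{Schneiderbauer}. By the Perelomov construction \cite{Perelomov}, for an irreducible selfadjoint representation of a Lie algebra these extremal vectors form, up to phase, a single orbit $G\cdot|\Omega_0\rangle \simeq G/H$ of the connected group $G \subset \mathcal U(\mathcal H)$ generated by the unitaries $e^{\imath\xi}$, $\xi\in\mathfrak X$ ($\mathfrak X$ being a real Lie algebra for the bracket $-\imath[\bullet,\bullet]$, isomorphic to $\mathfrak g$ via $\iota$). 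Hence $|\Omega_x\rangle$ and $|\Omega_y\rangle$ lie on one and the same $G$-orbit.

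Finally I would pick $g\in G$ with $g|\Omega_x\rangle = c\,|\Omega_y\rangle$, set $\Dis(y,x) := g \in G\subset\Env(\mathfrak X)$, which is invertible, and choose $u_{yx}$ as above; this produces the displacement operator. When the exponential map of $\mathfrak g$ is onto the relevant group elements --- as it is for $\mathfrak g = \mathfrak{su}(2)$ (fuzzy sphere) and for the Heisenberg algebra (fuzzy plane and its polynomial deformations) --- $\Dis(y,x)$ is literally of the asserted form $e^{\imath\xi}$, $\xi\in\mathfrak X$; otherwise $e^{\imath\mathfrak X}$ is to be read as the group $G$ it generates. The main obstacle, and the only place where both hypotheses (irreducibility and the Lie structure of $\mathfrak X$) are genuinely needed, is the identification of $\{|\Omega_x\rangle\}_{x\in M_\Lambda}$ with a single $G$-orbit: a pair of generic rays of a high-dimensional irreducible representation need not be related by the group, and it is precisely the quasicoherent (minimal uncertainty) constraint that confines $|\Omega_x\rangle$ to the extremal orbit on which $G$ acts transitively. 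This also explains why irreducibility cannot be dropped: for reducible $\mathcal H$ --- e.g. the concentric fuzzy spheres of Example~1 --- $M_\Lambda$ breaks into several $G$-orbits and points in different components fail to be linkable.
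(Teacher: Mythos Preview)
Your proof takes a genuinely different route from the paper's. The paper invokes a short lemma claiming that for any irreducible unitary representation of $G$ on $\mathcal H$, every pair of unit vectors is related by some $\iota(g)$, arguing that the set $\iota(G)\Lin|\phi\rangle$ is $G$-stable and hence, by irreducibility, all of $\mathcal H$. You instead derive from $\slashed D_x|\Lambda(x)\rrangle=0$ the eigenvector/annihilation equations $(\vec n_x\cdot\vec X)|\Omega_x\rangle=(\vec n_x\cdot\vec x)|\Omega_x\rangle$ and $(\vec\mu\cdot\vec X)|\Omega_x\rangle=(\vec\mu\cdot\vec x)|\Omega_x\rangle$, and use them to place every $|\Omega_x\rangle$ on the single Perelomov coherent-state orbit of the representation. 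Your route is in fact the more careful one: the paper's lemma, taken at face value, overreaches, since $\iota(G)\Lin|\phi\rangle$ is a union of rays rather than a linear subspace, so Schur-type irreducibility does not force it to equal $\mathcal H$ (and for a compact $G$ on an irrep of dimension larger than two the unit sphere is never a single $G$-orbit by dimension counting). By contrast, your argument correctly isolates the actual content of the proposition in the identification of the separable quasicoherent $\mathcal H$-factors with extremal vectors. The trade-off is that your key step --- transitivity of $G$ on these extremal vectors --- is invoked from \cite{Perelomov} rather than proved in situ; it holds in the cases driving the paper ($\mathfrak{su}(2)$ for the fuzzy sphere, the Heisenberg algebra for the fuzzy surface plots), and your closing paragraph rightly flags it as the genuine hypothesis for a general $\mathfrak g$.
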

\begin{proof}
  By an abusive notation we denote also by $\iota$ the induced unitary irreducible representation of the Lie group $G$ of $\mathfrak g$ onto $\mathcal H$.
  \begin{lemma}
    Let $\iota$ be an unitary irreducible representation of the a Lie group $G$ onto $\mathcal H$. $\forall \psi,\phi \in \mathcal H$, $\|\psi\|=\|\phi\|=1$, $\exists g \in G$ such that $|\psi \rangle = \iota(g)|\phi \rangle$
  \end{lemma}
  By \textit{reductio ad absurdum} we suppose that $\exists \psi,\phi \in \mathcal H \setminus\{0\}$ with $|\psi \rangle \not\in \iota(G)\Lin|\phi \rangle$. By construction $\iota(G)\Lin|\phi \rangle$ is stable by $G$. $\iota$ being irreducible, this implies that $\iota(G)\Lin|\phi \rangle = \{0\}$ or $\iota(G)\Lin|\phi \rangle = \mathcal H$. But $\iota(G)\Lin|\phi \rangle = \mathcal H$ is in contradiction with $|\psi \rangle \not\in \iota(G)\Lin|\phi \rangle$, and $\iota(G)\Lin|\phi \rangle = \{0\}$ implies that $\phi = 0$ in contradiction with $\phi$ normalisable.\\

  $|\Lambda(x) \rrangle$ being separable it can be written as $|\Lambda(x) \rrangle = |O(x) \rangle \otimes |\Omega(x)\rangle$ with $|O(x)\rangle \in \mathbb C^2$ and $|\Omega(x)\rangle \in \mathcal H$. Two normalised vectors of $\mathbb C^2$ can be related by an unitary transformation: $|O(y)\rangle = u_{yx}|O(x)\rangle$ ($\mathbb C^2$ is an unitary irreducible representation of $U(2)$). Since $(\mathfrak X,\mathcal H)$ is an unitary irreducible representation of $G$, $\exists g \in G$ such that $|\Omega(y)\rangle = \iota(g)|\Omega(x)\rangle$. Let $(\mathcal X^\alpha, \id)$ be a set of generators of $\mathfrak g$, and let $\xi^i_\alpha \in \mathbb R$ be such that $X^i = \xi^i_\alpha \iota(\mathcal X^\alpha)$ (and $\iota(\id)=\id$). We have then $\exists \mu_\alpha \in \mathbb R$ such that $\iota(g) = e^{\imath \varphi} e^{\imath \mu_\alpha \iota(\mathcal X^\alpha)} =  e^{\imath \varphi} e^{\imath \mu_\alpha \xi^\alpha_i X^i} \equiv \Dis(y,x) \in e^{\imath \mathfrak X}$.
\end{proof}

\begin{example}{1}{Fuzzy sphere}
  The Fuzzy sphere being defined by the $(2j+1)$-dimensional unitary irreducible representation of $\mathfrak{su}(2)$ and its quasicoherent states being separable, it satisfies this property. More precisely let $\mathcal D_j(x) = e^{\imath \theta \vec m \cdot \vec J}$ with $x=rj(\sin \theta \cos \varphi,\sin \theta \sin \varphi,\cos \theta)$ and $\vec m = (\sin \varphi,-\cos \varphi,0)$. The quasicoherent state of the Fuzzy sphere is then 
  \begin{equation}
    |\zeta \rangle_{1/2}\otimes|\zeta\rangle_j = \mathcal D_{1/2}(x) \otimes \mathcal D_j(x) |1/2,1/2\rangle \otimes |j,j\rangle
  \end{equation}
$\zeta = e^{\imath \varphi} \tan \frac{\theta}{2}$. We have \cite{Perelomov}
\begin{equation}
  \mathcal D_j(x') \mathcal D_j(x) = \mathcal D_j(x'') e^{\imath \Phi(x',x) J^3}
\end{equation}
with $x''=rj(\sin (\theta+\theta') \cos (\varphi+\varphi'),\sin (\theta+\theta') \sin (\varphi+\varphi'),\cos (\theta+\theta'))$ and $\Phi(x',x)$ is the solid angle of the geodesic triangle (at the surface of the sphere) linking the north pole to $x$ and $x'$. It follows that we have the following displacement operator for the Fuzzy sphere:
\begin{eqnarray}
  u_{x'x} \otimes \Dis(x',x) & = & \mathcal D_{1/2}(x')e^{-\frac{\imath}{2} \Phi(\tilde x,x)\sigma_3} \mathcal D_{1/2}(x')^\dagger \mathcal D_{1/2}(\tilde x) \nonumber \\
  & & \quad \otimes \mathcal D_j(x')e^{-\imath \Phi(\tilde x,x)J^3} \mathcal D_j(x')^\dagger \mathcal D_j(\tilde x)
  \end{eqnarray}
with $\tilde x = rj(\sin (\theta'-\theta) \cos (\varphi'-\varphi),\sin (\theta'-\theta) \sin (\varphi'-\varphi),\cos (\theta'-\theta))$.
\end{example}

Note that in general situations, the existence of a displacement operator is not ensured.\\

We can see the displacement operators as replacing the notion of smooth paths on a classical manifold (due to the quantum inseparability of the pures states, the usual notion of path has no meaning). More precisely, the path category of a classical manifold $\mathscr PM$ ($\Obj \mathscr PM = M$ and $\Morph \mathscr PM$ is the set of directed smooth paths on $M$ with the source and the target maps returning the end points) is replaced by the category $\mathscr P \mathfrak X$ with $\Obj \mathscr P \mathfrak X = \mathcal P(\mathfrak M)$, $\Morph \mathscr P \mathfrak X = \InnAut\Env(\mathfrak X) \times \mathcal P(\mathfrak M)$ ($\InnAut\Env(\mathfrak X)$ denoting the set of inner automorphisms of $\Env(\mathfrak X)$) with source, target and identity maps defined by: $s(a,\omega) = \omega$, $t(a,\omega) = \omega \circ a$, $\id_\omega = (\id,\omega)$ and arrow composition defined by $(a',\omega \circ a) \circ (a,\omega) = (a \circ a',\omega)$.\\

Suppose that we proceed to a measurement with the probe at $x$, permitting to prepare $\mathfrak M$ in the quasicoherent state $|\Lambda(x)\rrangle$. After that, we abruptly proceed to a second measurement with the probe at $y \not= x$ (in contrast with classical geometry, we cannot assume in quantum physics a smooth continuous measurement from $x$ to $y$). This second measurement has a probability $|\llangle \Lambda(y)|\Lambda(x)\rrangle|^2 \not= 0$ to project $\mathfrak M$ in $|\Lambda(y) \rrangle$. In this case, the evolution induced by the two measurements is defined by $u_{yx} \otimes \Dis(y,x)$. We can then well view $\Dis(y,x)$ as a ``quantum path'' from $\omega_x$ to $\omega_y$.\\

Moreover, the group of the automorphisms of the $C^*$-algebra $\Env(\mathfrak X)$ plays the role of the diffeomorphism group: $\Diff_{n.c.}\mathfrak M \sim \Aut\Env(\mathfrak X)$. This suggests a relation between the displacement operators and the diffeomorphisms of $M_\Lambda$. Let $\mathscr M_\Lambda$ be the category such that $\Obj(\mathscr M_\Lambda) = M_\Lambda$, $\Morph(\mathscr M_\Lambda) = \Diff M_\Lambda \times M_\Lambda$, $s(\varphi,x) = x$, $t(\varphi,x)=\varphi(x)$, $\id_x = (\id,x)$ and $(\varphi',\varphi(x)) \circ (\varphi,x) = (\varphi' \circ \varphi, x)$. We call $\mathscr M_\Lambda$ the eigen category of $\mathfrak M$. $\mathscr M_\Lambda$ and $\mathscr P\mathfrak X$ are related by the functor $\pmb \omega$:
\begin{equation}
  \left\{\begin{array}{rcl} \pmb \omega (x) & = & \omega_x \\ \pmb \omega(\varphi,x) & = & (\Ad_{\Dis(\varphi(x),x)},\omega_x) \end{array} \right.
\end{equation}

The categorical aspects related to the noncommutative geometry are studied in details \ref{category}, especially \ref{NCvsCat} where we compare the categorical and the noncommutative geometries.\\

For the sake of the simplicity, we treat only the case of totally linkable fuzzy spaces. This assumption can be considered as too restrictive, but it is pertinent since it implies that the magnitude of the quantum entanglement be constant on $M_\Lambda$ (and so corresponds to fuzzy spaces homogeneous from the point of view of the quantum information properties). In \ref{paralinkable} we generalise the concept of displacement operator in order to treat fuzzy spaces without this assumption (but with a weaker assumption of homogeneity consisting to have a single class of entanglement (entangled or separable) on the whole of $M_\Lambda$ but not necessary with a constant magnitude of the entanglement).\\
Linkability is a noncommutative equivalent of the classical notion of connectivity. A classical manifold $M_\Lambda$ is connected if $\forall x,y \in M_\Lambda$, it exists a smooth continuous path from $x$ to $y$. A fuzzy space $\mathfrak M$ is totally linkable if $\forall x,y \in M_\Lambda$ (eigenmanifold of $\mathfrak M$), it exists a ``quantum path'' $\Dis(y,x)$ from $\omega_x$ to $\omega_y$. But note that $\mathfrak M$ linkable $\nLeftrightarrow$ $M_\Lambda$ connected as we will see in the next with examples.

\subsubsection{The linking vector observable}
By analogy with the commutative case, we can define the length of a path $(a,\omega_P) \in \InnAut\Env(\mathfrak X) \times \mathcal P(\mathfrak X)$ if $a=e^{v_i L_{X^i}} = \Ad_{e^{-\imath v_i X^i}}$ by
\begin{eqnarray}
  \pmb \ell(\Ad_{e^{-\imath V}},\omega_P) & = & \omega_P(\sqrt{\pmb \gamma(L_V,L_V)}) \\
  & = & \tr\left(P \sqrt{-\delta_{ij}[V,X^i][V,X^j]}\right)
\end{eqnarray}
$V \in \Der(\mathfrak X)$ being the tangent vector at $(Ad_{e^{-\imath V}},\omega_P)$, the formula is a noncommutative equivalent to $\ell(\mathscr C) = \int_{\mathscr C} \sqrt{\gamma(\partial_s,\partial_s)}ds$ with $\partial_s = \frac{\partial x^i}{\partial s} \partial_i \in T_{x(s)}\mathscr C$.\\

If $\Dis(y,x) = e^{\imath \mu_i(y,x) X^i} \in e^{\imath \mathfrak X}$, $\pmb \gamma$ induces the following distance between two quasicoherent states:
\begin{eqnarray}
  \dist_{\pmb \gamma}(\omega_x,\omega_y) &  = & \pmb \ell(\Ad_{\Dis(y,x)},\omega_x) \\
  & = & \llangle \Lambda(x)|\sqrt{-\delta_{ij} [\mu_kX^k,X^i][\mu_l X^l,X^j]}|\Lambda(x)\rrangle
\end{eqnarray}

To interpret $\pmb \gamma$, consider two linkable infinitely close points $\vec x$ and $\vec y=\vec x+\delta \vec x$. The probe is moved from $x$ to $y$, measurements being performed at these two end points but not between them. In contrast with classical geometry where we can assume a continuously performed measurement along a smooth path, in a quantum context, due to the Born projection rule, this cannot be possible. We must then only consider two end points where measurements are performed, the ``path'' between them being not physically defined (since no measure occurs). This is the reason for which the equivalent of a smooth path in $\mathfrak M$ is an operator depending only on two end points (or is a succession of this kind of operators). The distance between the two points in $\mathbb R^3$ is then $\|\delta \vec x\|$ and along $M_\Lambda$ is $\dist_\gamma(x,y)=\gamma_{ab}(x)\delta x^a \delta x^b$. But $\vec x = \omega_x(\vec X) = \llangle \Lambda(x)|\vec X|\Lambda(x)\rrangle$ is only the mean value of the starting point, which is subject to standard deviation $\Delta_x X^i = \sqrt{\omega_x([X^i]^2)-[\omega_x(X^i)]^2}$. In the reality the measurement of the starting point obeys to a random process, and the coordinates of the starting point are random variables described by $\vec X$ for a probability law described by $\omega_x$. In the same manner, the coordinates of the final point are  random variables described also by $\vec X$ with the probability law $\omega_y$. Since $|\Lambda(y)\rrangle = u_{yx} \otimes \Dis(y,x)|\Lambda(x)\rrangle$ and $\omega_y(\vec X) = \omega_x(\Dis(y,x)^\dagger \vec X \Dis(y,x))$ we can state that the coordinates of the final point are random variables described by $\Dis(y,x)^\dagger\vec X \Dis(y,x)$ with the probability law $\omega_x$. The vector linking the two end points is then a random variable of probability law $\omega_x$ described by the following observable:
\begin{eqnarray}
  \pmb{\delta \vec \ell}_{yx} & = & \Dis(y,x)^\dagger \vec X \Dis(y,x) - \vec X \label{linkvect}\\
  & = & -\imath [\delta \Pi_x, \vec X] + \mathcal O(\|\delta \vec x\|^2)
\end{eqnarray}
where we have set $\Dis(y,x) = \Dis(x+\delta x,x) = e^{\imath \delta \Pi_x}$. And so, the distance between the two measured end points is a random variable of probability low $\omega_x$ described by the following observable:
\begin{eqnarray}
  |\pmb{\delta \ell}_{yx}| & = & \sqrt{(\pmb{\delta \vec  \ell}_{yx})^2} \\
  & = & \sqrt{-\delta_{ij}[\delta \Pi_x,X^i][\delta \Pi_x,X^j]} \\
  & = & \sqrt{\pmb \gamma(L_{\delta \Pi_x},L_{\delta \Pi_x})}
\end{eqnarray}
$\omega_x(|\pmb{\delta \ell}_{yx}|) = \dist_{\pmb \gamma}(\omega_x,\omega_y)$ is the mean value of the distance observable.\\

We can note that $\Dis(y,x)^\dagger \slashed D_x \Dis(y,x) - \slashed D_x = \sigma_i \otimes \pmb{\delta \ell}^i_{yx}$. It follows that
\begin{equation}
  (\pmb{\delta \vec \ell}_{yx})^2 = \frac{1}{2} \tr_{\mathbb C^2} (\Dis(y,x)^\dagger \slashed D_x \Dis(y,x) - \slashed D_x)^2
\end{equation}
Let $\Delta\slashed D_{yx} = \Dis(y,x)^\dagger \slashed D_x \Dis(y,x) - \slashed D_x$ be the energy gap observable between $x$ and $y$:
\begin{equation}
  \omega_x(\Delta\slashed D_{x+\delta x,x}) = \omega_{x+\delta x}(\slashed D_x)-\omega_x(\slashed D_x) = \omega_{x+\delta x}(\slashed D_x)
\end{equation}
we have
\begin{equation}
  \dist_{\pmb \gamma}(\omega_x,\omega_{x+\delta x}) = \omega_x\left(\sqrt{\frac{1}{2} \tr_{\mathbb C^2} (\Delta\slashed D^2_{x+\delta x,x})}\right)
\end{equation}
$\frac{1}{2} \tr_{\mathbb C^2} (\Delta\slashed D^2_{x+\delta x,x})$ is the average square energy gap observable for the microcanonical distribution $\frac{1}{2}\id$ of the spin degree of freedom. Since $\tr_{\mathbb C^2} \Delta\slashed D_{x+\delta x,x}=0$, $\sqrt{\frac{1}{2} \tr_{\mathbb C^2} (\Delta\slashed D^2_{x+\delta x,x})}$ is the uncertainty of the energy gap for the microcanonical distribution, and so $\dist_{\pmb \gamma}(\omega_x,\omega_{x+\delta x})$ is the mean value of the  energy gap ``isotropic'' uncertainty (``isotropic'' since the uncertainty is computed for the equiprobable direction state distribution). A discrete path $\mathscr C = \{x_0,x_1,...,x_N\}$ on $M_\Lambda$ such that $\omega_{x_n}(|\pmb{\delta \ell}_{x_{n+1},x_n}|)$ is a small constant value minimises then the quantity $\sum_{n=0}^{N-1} \omega_{x_n}\left(\sqrt{\frac{1}{2} \tr_{\mathbb C^2} (\Delta\slashed D^2_{x_{n+1},x_n})}\right)$.\\

\begin{example}{3}{``Fuzzy'' circle}
  The definition of the noncommutative metric can be extended to fuzzy spaces with disconnected eigenmanifold $M_\Lambda$. We can consider for example the case of the perturbed fuzzy circle with $Z = \sum_{n=0}^{N-1} |n\rangle \langle n+1|$ and $X^3 = \epsilon \hat N$ with $\hat N = \sum_{n=0}^{N-1} n|n\rangle \langle n|$ and $\epsilon \simeq 0$. By applying the perturbation theory for a separable weakly degenerate quasicoherent state (\ref{perturbWD}), we have for the vertical deformation:
  \begin{eqnarray}
    \delta x^3(q) & = & \epsilon \llangle \Lambda_\psi(e^{\imath \frac{2\pi q}{N}})|\hat N|\Lambda_\psi(e^{\imath \frac{2\pi q}{N}}) \rrangle \\
    & = & \frac{\epsilon}{N} \sum_{n=0}^{N-1} n \\
    & = & \epsilon \frac{N-1}{2}
  \end{eqnarray}
  The perturbed fuzzy circle is just translated in the $x^3$-direction. Since $|\Lambda_\psi(e^{\imath \frac{2\pi q}{N}})\rrangle = |\psi \rangle \otimes \frac{1}{\sqrt N} \sum_{n=0}^{N-1} e^{\imath \frac{2\pi q}{N} n}|n\rangle$ we have (at the zero order of perturbation)
  \begin{equation}
    |\Lambda_\psi(e^{\imath \frac{2\pi (q+p)}{N}})\rrangle = \Dis^p |\Lambda_\psi(e^{\imath \frac{2\pi q}{N}})\rrangle
  \end{equation}
  with
  \begin{equation}
    \Dis = \sum_{n=0}^{N-1} e^{\imath \frac{2\pi}{N} n} |n\rangle \langle n| = e^{\imath \frac{2\pi}{N} \hat N}
  \end{equation}
  The points labelled $q$ and $q+p$ are then linkable ($\forall p \in \mathbb N$), with a displacement operator $\Dis(q+p,q) = \Dis^p$. $\delta \Pi  = \frac{2\pi}{N} \hat N$ and the linking vector observable is
  \begin{eqnarray}
    \pmb{\delta \vec \ell}_{q+1,q} & = & -\imath \frac{2\pi}{N} [\hat N, \vec X] \\
    & = & -\imath \frac{2\pi}{N} \left( \frac{Z^\dagger-Z}{2} \partial_1 - \frac{Z+Z^\dagger}{2\imath} \partial_2 \right)
  \end{eqnarray}
  since $[\hat N,Z] = -Z$ and $[\hat N,Z^\dagger] = Z^\dagger$. It follows that
  \begin{equation}
    \dist_{\pmb \gamma}(\omega_{q+1},\omega_q) = \sqrt{\pmb \gamma(L_{\delta \Pi},L_{\delta \Pi})} = \frac{2\pi}{N}
  \end{equation}
  The noncommutative distance between the points labelled by $q$ and $q+1$ is then the distance along the unit circle between the two points. While $M_\Lambda$ is disconnected, it is totally linkable permitting to compute the noncommutative distance (in contrast with the fuzzy hyperboloid for which $M_\Lambda$ is connected permitting to compute $\gamma$ but is totally unlinkable).
\end{example}

The linking vector observable can also be interpreted as a noncommutative gauge field onto $\mathfrak M$ as we can see this in the following section.

\subsubsection{The noncommutative gauge potential}
The displacement operator defines a noncommutative gauge potential on $\mathfrak M$, $\forall \Dis \in \Env(\mathfrak X)$:
\begin{equation}
  \mathbf A_{\Dis} = \imath \Dis^\dagger \dnc \Dis \in \Omega^1_{\Der}(\mathfrak X)
\end{equation}
which is directly associated with the displacement by
\begin{eqnarray}
  \omega_x(\mathbf A_{\Dis(y,x)}(L_{X^i})) & = & \llangle \Lambda(x)|(\Dis(y,x)^\dagger X^i \Dis(y,x)-X^i)|\Lambda(x)\rrangle \\
  & = & y^i-x^i
\end{eqnarray}
$ \omega_x(\mathbf A_{\Dis(y,x)}(L_{X^i}))$ is then the distance between $x$ and $y$ in the $i$-direction of $\mathbb R^3$. In other words, the linking vector observable previously introduced eq.(\ref{linkvect}) is
\begin{equation}
  \pmb{\delta \vec \ell}_{yx} = \mathbf A_{\Dis(y,x)} (L_{\vec X})
\end{equation}
and so
\begin{equation}
  \mathbf A_{\Dis(x+\delta x,x)} = \dnc \delta \Pi_x + \mathcal O(\|\delta \vec x\|^2)
\end{equation}
where $y=x+\delta x$ and $\Dis(y,x) = e^{\imath \delta \Pi_x}$. Since by construction $\dnc \Dis = -\imath \Dis \mathbf A_\Dis$, we can see $\Dis(y,x)$ as the noncommutative path-ordered exponential of $\mathbf A_\Dis$ along the noncommutative path $(\Ad_{\Dis(y,x)},\omega_x)$: $\forall \varphi \in \Diff M_\Lambda$ and $\forall x \in M_\Lambda$:
\begin{equation}
  \Dis(\varphi(x),x) \equiv \Ped_{n.c.}^{-\imath \int_{\pmb \omega(\varphi,x)} \mathbf A_\Dis}
\end{equation}
$\Dis(\varphi(x),x)$ can be then viewed as the noncommutative geometric phase accumulated along the arrow $(\varphi,x) \in \Morph(\mathscr M_\Lambda)$.\\

Under an external gauge change $J \in SO(3)$, the noncommutative gauge potential becomes 
\begin{equation}
  \hat \mathbf A_{\Dis} = \mathbf G_J^\dagger \mathbf A_{\Dis} \mathbf G_J + \imath \mathbf G_J^\dagger \dnc \mathbf G_J
\end{equation}
Indeed $\Dis(J^{-1}x_2,J^{-1}x_1) = \Dis(x_2,x_1) \mathbf G_J(x_2,x_1) \Rightarrow \dnc \Dis(J^{-1}x_2,J^{-1}x_1) = (\dnc \Dis(x_2,x_1)) \mathbf G_J(x_2,x_1) + \Dis(x_2,x_1) \dnc \mathbf G_J(x_2,x_1)$. Then $\imath \hat \Dis^\dagger \dnc \hat \Dis =  \imath \mathbf G_J^{-1} (\Dis^{-1}\dnc \Dis) \mathbf G_J +\imath \mathbf G_J^{-1}\dnc \mathbf G_J$, proving the formula for $\hat \mathbf A_\Dis$.

\begin{prop}
  The noncommutative curvature is zero.
\end{prop}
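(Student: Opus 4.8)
The plan is to recognise $\mathbf A_{\Dis}$ as a \emph{pure gauge} (a noncommutative Maurer--Cartan form) and deduce that its field strength vanishes, exactly as a flat connection does in ordinary gauge theory. First I would fix the definition of the noncommutative curvature attached to a gauge potential $\mathbf A \in \Omega^1_{\Der}(\mathfrak X)$, namely the $2$-form $\mathbf F = \dnc \mathbf A - \imath\, \mathbf A \wedge \mathbf A \in \Omega^2_{\Der}(\mathfrak X)$; this is the second-order expression selected by the transformation law $\hat{\mathbf A}_{\Dis} = \mathbf G_J^\dagger \mathbf A_{\Dis} \mathbf G_J + \imath\, \mathbf G_J^\dagger \dnc \mathbf G_J$ established just above, under which $\mathbf F$ transforms covariantly, $\hat{\mathbf F}_{\Dis} = \mathbf G_J^\dagger \mathbf F_{\Dis} \mathbf G_J$ (a one-line check from $\dnc^2 = 0$ and $\dnc \mathbf G_J^\dagger = -\mathbf G_J^\dagger(\dnc \mathbf G_J)\mathbf G_J^\dagger$, the latter coming from $\dnc(\mathbf G_J^\dagger \mathbf G_J)=0$). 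With this in hand the statement is already morally clear: $\mathbf A_{\Dis} = \imath\, \Dis^\dagger \dnc \Dis$ is precisely the image of the zero potential under the gauge transformation by $\Dis$, which is unitary since $\Dis \in e^{\imath \mathfrak X}$, so covariance forces $\mathbf F_{\Dis} = \Dis^\dagger \cdot 0 \cdot \Dis = 0$.

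For a self-contained argument I would instead compute directly. Differentiating $\Dis^\dagger \Dis = \id$ with the Koszul differential gives $(\dnc \Dis^\dagger)\Dis + \Dis^\dagger \dnc \Dis = 0$, hence $\dnc \Dis^\dagger = -\Dis^\dagger (\dnc \Dis)\Dis^\dagger$. Using the graded Leibniz rule together with $\dnc^2 \Dis = 0$, the computation runs as
\begin{equation}
  \dnc \mathbf A_{\Dis} = \imath\, \dnc\big(\Dis^\dagger \dnc \Dis\big) = \imath\,(\dnc \Dis^\dagger)\wedge(\dnc \Dis) = -\imath\, \Dis^\dagger\big((\dnc \Dis)\Dis^\dagger\big)\wedge (\dnc \Dis),
\end{equation}
while, pulling the leading $0$-form $\Dis^\dagger$ out of the wedge to the left,
\begin{equation}
  -\imath\, \mathbf A_{\Dis}\wedge \mathbf A_{\Dis} = \imath\,(\Dis^\dagger \dnc \Dis)\wedge(\Dis^\dagger \dnc \Dis) = \imath\, \Dis^\dagger\big((\dnc \Dis)\Dis^\dagger\big)\wedge(\dnc \Dis).
\end{equation}
Adding the two lines gives $\mathbf F_{\Dis} = \dnc \mathbf A_{\Dis} - \imath\, \mathbf A_{\Dis}\wedge \mathbf A_{\Dis} = 0$, and since nothing in the argument used a particular $\Dis$, the curvature of every displacement gauge potential vanishes.

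The one point that needs genuine care — and the step I expect to be the main obstacle — is the bookkeeping of the bimodule structure of $\Omega_{\Der}(\mathfrak X)$: because $\Env(\mathfrak X)$ is noncommutative, a $0$-form may not be moved freely across a $1$-form inside a wedge product, so the exact placement of the two copies of $\Dis^\dagger$ in the displays above must be tracked, and the sign in the graded Leibniz rule for $\dnc$ on a product of forms must be the one consistent with the Koszul formula of Section~2. Once those conventions are pinned down the cancellation is forced. I would close by noting that this flatness is the algebraic counterpart of the geometric facts already emphasised: $\mathfrak M$ is the quantisation of a flat space (its metric being $\pmb \gamma = \delta_{ij}\,\dnc X^i \otimes \dnc X^j$), and $\Dis(\varphi(x),x)$ is a genuine holonomy — a path-independent, purely geometric phase along the arrow $(\varphi,x)\in\Morph(\mathscr M_\Lambda)$.
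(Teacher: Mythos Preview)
Your proof is correct and takes a genuinely different route from the paper. The paper evaluates $\mathbf F_\Dis(L_X,L_Y)$ directly on a pair of derivations: it expands $\dnc \mathbf A_\Dis(L_X,L_Y)$ via the Koszul formula, writes each $\mathbf A_\Dis(L_X)=\Dis^\dagger[X,\Dis]$, and then cancels everything by brute-force commutator algebra using $\Dis^\dagger X\Dis$ and the Jacobi identity. Your argument instead stays at the level of forms and recognises $\mathbf A_\Dis=\imath\,\Dis^\dagger\dnc\Dis$ as a pure gauge: from $\dnc(\Dis^\dagger\Dis)=0$, $\dnc^2=0$, and the graded Leibniz rule you obtain the Maurer--Cartan identity $\dnc\mathbf A_\Dis=\imath\,\mathbf A_\Dis\wedge\mathbf A_\Dis$ in one stroke. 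The trade-off is clear: the paper's computation is entirely elementary (only commutators), whereas your approach is shorter and more conceptual but requires the reader to accept (or check) that the derivation-based calculus really satisfies $\dnc^2=0$ and the bimodule-compatible Leibniz rule --- exactly the point you rightly flag as the main obstacle. One small remark: the manipulation you call ``pulling the leading $0$-form $\Dis^\dagger$ out of the wedge to the left'' is really two bimodule identities at once, $(f\omega)\wedge\eta=f(\omega\wedge\eta)$ and $(\omega g)\wedge\eta=\omega\wedge(g\eta)$; both hold in $\Omega_\Der(\mathfrak X)$ because the wedge is defined pointwise on derivations, but it would strengthen the write-up to state them explicitly rather than leave them implicit in the parenthesisation.
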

\begin{proof}
  $\forall X,Y \in \mathfrak X$,
  \begin{eqnarray}
    \mathbf F_\Dis(L_X,L_Y) & = & \dnc \mathbf A_\Dis(L_X,L_Y) - \imath[\mathbf A_\Dis(L_X),\mathbf A_\Dis(L_Y)] \\
    & = & [X,\mathbf A_\Dis(L_Y)]-[Y,\mathbf A_\Dis(L_X)] - \mathbf A_\Dis(L_{[X,Y]}) \nonumber \\
    & & \qquad \qquad - \imath[\mathbf A_\Dis(L_X),\mathbf A_\Dis(L_Y)] \\
    & = & \imath[X,\Dis^\dagger[Y,\Dis]]-\imath[Y,\Dis^\dagger[X,\Dis]]-\imath\Dis^\dagger[[X,Y],\Dis] \nonumber \\
    & & \qquad \qquad +\imath[\Dis^\dagger[X,\Dis],\Dis^\dagger[Y,\Dis]] \\
    & = & \imath [X,\Dis^\dagger Y\Dis] - \imath[X,Y] - \imath[Y,\Dis^\dagger X\Dis]+\imath [Y,X] \nonumber \\
    & & \quad -\imath \Dis^\dagger[X,Y]\Dis + \imath[X,Y] \nonumber \\
    & & \quad +\imath \underbrace{[\Dis^\dagger X\Dis-X,\Dis^\dagger Y \Dis-Y]}_{\Dis^\dagger[X,Y]\Dis-[\Dis^\dagger X\Dis,Y]-[X,\Dis^\dagger Y \Dis]+[X,Y]} \\
    & = & 0
  \end{eqnarray}
\end{proof}

\subsection{About the two distances between quasicoherent states\\}
Consider a path $\mathscr C$ on $M_\Lambda$ followed by the probe and suppose that along this path measurements are performed on a sequence of infinitely close points $x_0,x_1,...,x_N$. We can make an analogy between the quantum random processes with classical stochastic processes by considering that the position of $n$-th point of the sequence is a classical random variable of gaussian law with mean value $\vec x_n$ and standard deviation $\Delta_{x_n} \vec X$ (this is just an analogy, the quantum process cannot be identified as this because of the noncommutativity of the observables $X^i$ implying that the result of measurements depends on the ordering of the different measures, and so the measurement of the three coordinates is a counterfactual definiteness). In this analogy, the observed trajectory is a Brownian motion. By repeating the experiment, we have then a beam of random paths of average path $\mathscr C$. But the length of the average path is not the average length of the random paths. In this analogy, $\ell(\mathscr C) = \int_{\mathscr C} \sqrt{\gamma(\partial_s,\partial_s)} ds$ is the length of the average path whereas $\pmb \ell ((\omega_{x_n})_n) = \sum_{n=0}^{N-1} |\pmb{\delta \ell}_{x_{n+1}x_n}|= \sum_{n=0}^{N-1} \sqrt{\pmb \gamma(L_{\delta \Pi_{x_n}},L_{\delta \Pi_{x_n}})}$ is the average length of the random paths. The classical metric $\gamma$ (which results from $\pmb{d\ell^2}_x = \slashed D_x^2$) is then the square length measure onto the average manifold $M_\Lambda$ whereas the quantum metric $\pmb \gamma$ is the average square length measure onto ``quantum fluctuations'' of the fuzzy space. $\dist_\gamma(x+\delta x,x) = \sqrt{\gamma_{ab}(x)\delta x^a \delta x^b}$ is the distance between the average positions of the states $\omega_x$ and $\omega_{x+\delta x}$ (which are the normal states closest to the notion of point in $\mathfrak M$), whereas $\dist_{\pmb \gamma}(\omega_{x+\delta x},\omega_x) = \omega_x(|\pmb{\delta \ell}_{x+\delta x,x}|)$ is the mean value in the state $\omega_x$ of the distance observable between the two probe positions $x$ and $x+\delta x$.\\
In another viewpoint, suppose that $\delta \Pi_x$ commutes with two coordinate observables but not with $X^1$. In this case, $\sqrt{\pmb \gamma(L_{\delta \Pi_x},L_{\delta \Pi_x})} = -\imath [\delta \Pi_x,X^1]$ and then we have the following Heisenberg uncertainty relation $\Delta_x X^1 \Delta_x \delta \Pi_x \geq \frac{1}{2} |\omega_x(\sqrt{\pmb \gamma(L_{\delta \Pi_x},L_{\delta \Pi_x})})| = \frac{1}{2} \dist_{\pmb \gamma}(\omega_{x+\delta x},\omega_x)$. $L_{\delta \Pi_x}$ being the generator of the displacement operator it can be viewed as an infinitesimal translation operator, and so as a linear momentum observable (conjugated to $X^1$ here) in the fuzzy space. $\dist_{\pmb \gamma}(\omega_{x+\delta x},\omega_x)$ is then position-momentum uncertainty product in the fuzzy space (we recall that the quasi-coherent states minimising the quantum uncertainties). But in the fuzzy space, the position observables are not independent $[X^i,X^j] \not=0$, and so in general a linear momentum observable $\delta \Pi_x$ does not commute with any position observable. So, $\sqrt{\pmb \gamma(L_{\delta \Pi_x},L_{\delta \Pi_x})}$ is a kind of average commutator of the linear momentum observable with the position observables. We could then define $\Delta_x X_x \equiv \frac{\sqrt{\pmb \gamma(L_{\delta \Pi_x},L_{\delta \Pi_x})}}{2 \Delta_x V_x}$ as the quantum uncertainty of the location in the direction defined by $L_{\delta \Pi_x}$ (even if we cannot define a position observable $X_x$ conjugated to $L_{\delta \Pi_x}$: in general $\nexists X_x$ such that $[\delta \Pi_x,X_x]^2 = \delta_{ij}[\delta \Pi_x,X^i][\delta \Pi_x,X^j]$).\\
We can also relate the double definition of the metric with the category structure of $\mathscr M_\Lambda$. For $x,x+dx \in \Obj(\mathscr M_\Lambda)=M_\Lambda$, $\dist_\gamma(x,x+dx) = \sqrt{\gamma_{ab}(x)dx^adx^b}$ is the distance between two objects. For $(\varphi,x) \in \Morph(\mathscr M_\Lambda) = \Diff(M_\Lambda) \times M_\Lambda$ with $\varphi(x)-x \equiv \delta x$ (infinitesimal diffeomorphism), $\dist_{\pmb \gamma}(\omega_x,\omega_{\varphi(x)}) = \omega_x(\sqrt{\pmb \gamma(L_{\delta \Pi_x},L_{\delta \Pi_x})})$ is the length of the arrow $\pmb \omega(\varphi,x) = (\Ad_{\Dis(\varphi(x),x)},\omega_x)$, whereas $\dist_\gamma(x,\varphi(x))$ is the distance between its source and target. \ref{NCvsCat} summarises the relations between the different metrics.\\

\begin{example}{2}{Fuzzy surface plots}
 From the displacement operator we find that the linear momentum observable is
  \begin{equation}
    \delta \Pi_\alpha = -\imath(\delta \alpha a^+-\delta \bar \alpha a)
  \end{equation}
  In physical models, we can see $\delta \alpha$ as the complex representation of an electric field, if we interpret $a$ and $a^+$ as creation/annihilation operator of photons, the linear momentum observable $\delta \Pi_\alpha$ can be identified with the second quantised electric field operator.  
  \begin{eqnarray}
    \left[\delta \Pi_\alpha,X^1\right] & = & \imath \Re(\delta \alpha) \\
    \left[\delta \Pi_\alpha,X^2\right] & = & \imath \Im(\delta \alpha)
  \end{eqnarray}
  It follows that for the fuzzy plane $\pmb \gamma(L_{\delta \Pi_\alpha},L_{\delta \Pi_\alpha}) = |\delta \alpha|^2$ and then
  \begin{equation}
    \dist_\gamma(\alpha,\alpha+\delta \alpha) = \dist_{\pmb \gamma}(\omega_\alpha,\omega_{\alpha+\delta \alpha}) = |\delta \alpha|
  \end{equation}
  For the fuzzy plane, the length of the average path is equal to the average length of the random paths.\\
  For the fuzzy elliptic paraboloid, we have
  \begin{eqnarray}
    \left[\delta \Pi_\alpha,X^3\right] & = & -\imath \epsilon[\delta \alpha a^+-\delta \bar \alpha a,a^+a] \\
    & = & \imath \epsilon (\delta \alpha a^+ + \delta \bar \alpha a)
  \end{eqnarray}
  It follows that $\pmb \gamma(L_{\delta \Pi_\alpha},L_{\delta \Pi_\alpha}) = |\delta \alpha|^2+\epsilon^2(\delta \alpha a^++\delta \bar \alpha a)^2$ and so
  \begin{equation}
    \sqrt{\pmb \gamma(L_{\delta \Pi_\alpha},L_{\delta \Pi_\alpha})} = |\delta \alpha|+\frac{\epsilon^2}{2}|\delta \alpha|\left(\frac{\delta \alpha}{\delta \bar \alpha} (a^+)^2+ \frac{\delta \bar \alpha}{\delta \alpha}a^2+2a^+a+1\right) + \mathcal O(\epsilon^3)
  \end{equation}
  and then
  \begin{equation}
    \dist_{\pmb \gamma}(\omega_\alpha,\omega_{\alpha+\delta \alpha}) = |\delta \alpha|\left(1+\epsilon^2\left(\Re\left(\alpha^2\frac{\delta \bar \alpha}{\delta \alpha}\right)+|\alpha|^2+1/2\right)\right) + \mathcal O(\epsilon^3)
  \end{equation}
  whereas
  \begin{eqnarray}
    \dist_\gamma(\alpha,\alpha+\delta \alpha) & = & (1+4\epsilon^2 \Re(\alpha)^2)\delta \Re(\alpha)^2+(1+4\epsilon^2 \Im(\alpha)^2)\delta \Im(\alpha)^2 \nonumber \\
    & & \quad +8\epsilon^2 \Re(\alpha)\Im(\alpha) \delta \Re(\alpha) \delta \Im(\alpha)
  \end{eqnarray}
  To help the comparison, we can consider the case $\alpha=x \in \mathbb R$ and $\delta \alpha=\delta x \in \mathbb R$, for which
  \begin{eqnarray}
    \dist_{\pmb \gamma}(\omega_x,\omega_{x+\delta x}) & = & \delta x(1+2\epsilon^2x^2+\epsilon^2/2) + \mathcal O(\epsilon^3)\\
    \dist_{\gamma}(x,x+\delta x) & = & \delta x(1+2\epsilon^2x^2)
  \end{eqnarray}
  As expected, the average length of the random paths is larger than the length of the average path.\\
  For the fuzzy hyperbolic paraboloid, we have the same thing:
  \begin{eqnarray}
    \left[\delta \Pi_\alpha,X^3\right] & = & -\frac{\imath}{2} \epsilon[\delta \alpha a^+-\delta \bar \alpha a,a^2+(a^+)^2] \\
    & = & \imath \epsilon (\delta \alpha a + \delta \bar \alpha a^+)
  \end{eqnarray}
  and then
    \begin{equation}
    \sqrt{\pmb \gamma(L_{\delta \Pi_\alpha},L_{\delta \Pi_\alpha})} = |\delta \alpha|+\frac{\epsilon^2}{2}|\delta \alpha|\left(\frac{\delta \alpha}{\delta \bar \alpha} a^2+ \frac{\delta \bar \alpha}{\delta \alpha}(a^+)^2+2a^+a+1\right) + \mathcal O(\epsilon^3)
  \end{equation}
    Finally we have (by taken into account the contribution of the entanglement of $|\Lambda(\alpha)\rrangle$):
    \begin{equation}
      \dist_{\pmb \gamma}(\omega_\alpha,\omega_{\alpha+\delta \alpha}) = |\delta \alpha|\left(1+\epsilon^2\left(\Re\left(\alpha^2\frac{\delta \alpha}{\delta \bar \alpha}\right)+2|\alpha|^2+3/4\right)\right) + \mathcal O(\epsilon^3)
    \end{equation}
    whereas
    \begin{eqnarray}
      \dist_\gamma(\alpha,\alpha+\delta \alpha) & = & (1+4\epsilon^2 \Re(\alpha)^2)\delta \Re(\alpha)^2+(1+4\epsilon^2 \Im(\alpha)^2)\delta \Im(\alpha)^2 \nonumber \\
    & & \quad -8\epsilon^2 \Re(\alpha)\Im(\alpha) \delta \Re(\alpha) \delta \Im(\alpha)
    \end{eqnarray}
    Anew for $\alpha=x \in \mathbb R$ and $\delta \alpha=\delta x \in \mathbb R$ we have
    \begin{eqnarray}
      \dist_{\pmb \gamma}(\omega_x,\omega_{x+\delta x}) & = & \delta x(1+3\epsilon^2x^2+\frac{3}{4}\epsilon^2) + \mathcal O(\epsilon^3)\\
      \dist_{\gamma}(x,x+\delta x) & = & \delta x(1+2\epsilon^2x^2)
    \end{eqnarray}
    or for $\alpha=(1+\imath)x$ and $\delta \alpha(1+\imath)\delta x$,
    \begin{eqnarray}
      \dist_{\pmb \gamma}(\omega_{(1+\imath)x},\omega_{(1+\imath)(x+\delta x)}) & = & \sqrt 2 \delta x(1+2\epsilon^2x^2+\frac{3}{4}\epsilon^2) + \mathcal O(\epsilon^3)\\
      \dist_{\gamma}((1+\imath)x,(1+\imath)(x+\delta x)) & = & \sqrt 2 \delta x
    \end{eqnarray}

\end{example}

Table \ref{dynbeh} summarises the interpretations concerning the energy of the paths chosen on $M_\Lambda$.
\begin{table} \small
  \begin{tabular}{l|c|c}
    & energy mean value & energy uncertainty \\
    \hline
    \begin{minipage}{3cm} strong adiabatic transport along any path \end{minipage} & $\llangle \slashed D_{x(t)} \rrangle_{\Psi(t)} = 0$ & $\Delta_{\Psi(t)} \slashed D_{x(t)}^2 = 0$ \\
    \hline
    \begin{minipage}{3cm} strong adiabatic transport along a minimising geodesic path \end{minipage} & $\llangle \slashed D_{x(t)} \rrangle_{\Psi(t+\Delta t)} = 0$ & $\Delta_{\Psi(t+\Delta t)} \slashed D_{x(t)}^2$ minimised \\
    \hline
    \begin{minipage}{3cm} strong adiabatic transport along a path such that $ \omega_{x(t)}(|\pmb{\delta \ell}_{x(t+\Delta t),x(t)}|)$ is a small constant \end{minipage} & $\llangle \slashed D_{x(t)} \rrangle_{\Psi(t+\Delta t)} = 0$ & $\llangle \sqrt{\frac{1}{2} \tr_{\mathbb C^2} \Delta\slashed D^2_{x(t+\Delta t),x(t)}} \rrangle_{\Psi(t)}$ minimised \\
    \hline
  \end{tabular}
  \caption{\label{dynbeh} Behaviours of the energy with respect to the different paths for $\Delta t$ a fixed small value. $\Psi(t)$ being defined eq.(\ref{StrAdiabTransp})}
\end{table}

\subsection{The quantum geometric tensor}
To characterise the geometry of quantum state spaces, Provost and Valle introduced the concept of quantum geometric tensor $\mathcal Q_{ab} = \gamma_{ab} + \imath \mathscr S_{ab}$ in \cite{Provost} (see also \cite{Shapere, Cheng}) which is defined such that $\mathcal Q_{ab}$ is invariant by phase change, the symmetric part $\gamma_{ab}$ is the component of a relevant metric of the manifold of quantum states, and the antisymmetric part $\mathscr S_{ab}$ is the component of a relevant symplectic form of the manifold of quantum states. For example, in adiabatic approximations where the manifold is the set of parameters of a continuous eigenvector, or when the manifold is the projected Hilbert space of all quantum states, $\mathcal Q_{ab} = \langle \partial_a \psi| (1-|\psi\rangle\langle \psi|)|\partial_b \psi \rangle$ (with $\|\psi\|=1$). $\gamma$ is the Fubini-Study metric (see \cite{Shapere}) measuring the distance between two quantum states by $|\langle \psi_1|\psi_2\rangle|^2$  and $\mathscr S$ is the Berry curvature measuring the non-adiabatic factors.\\

In accordance with the definition of the metric of $M_\Lambda$, the quantum geometric tensor of the quasicoherent geometry is defined by
\begin{eqnarray}
  \mathcal Q_{ab} & = & \llangle \partial_a \Lambda|\slashed D_x^2|\partial_b \Lambda \rrangle \\
  & = & \gamma_{ab} + \imath \mathscr S_{ab}
\end{eqnarray}
where $\gamma_{ab}$ is the metric of $M_\Lambda$ and where
\begin{eqnarray}
  \mathscr S_{ab} & = & {\varepsilon_{ij}}^k \llangle \Lambda|\sigma_k|\Lambda \rrangle \frac{\partial x^i}{\partial s^a} \frac{\partial x^j}{\partial s^b} \\
  & = & \frac{1}{2} {\varepsilon_{ij}}^k \llangle \Lambda|\sigma_k|\Lambda \rrangle \frac{\partial x^i}{\partial s^{[a}} \frac{\partial x^j}{\partial s^{b]}} 
\end{eqnarray}
The symplectic 2-form is then
\begin{eqnarray}
  \mathscr S & = & {\varepsilon_{ij}}^k \llangle \Lambda|\sigma_k|\Lambda \rrangle \frac{\partial x^i}{\partial s^1} \frac{\partial x^j}{\partial s^2} ds^1 \wedge ds^2 \\
  & = & \vec n_x \cdot (\vec t_{x1} \times \vec t_{x2}) ds^1 \wedge ds^2 \\
  & = & \|\vec n_x\| ds^1 \wedge ds^2
\end{eqnarray}
where $\vec t_{xa} = \frac{\partial x^i}{\partial s^a} \partial_i$ are two orthonormal tangent vectors of $M_\Lambda$ at $x$ (with $\{s^1,s^2\}$ local curvilinear coordinates on $M_\Lambda$) and so $\vec t_{x1} \times \vec t_{x2}$ is the unitary normal vector at $x$. $\mathscr S$ inherits then its interpretation from $\|\vec n_x\|$. It measures the quantum purity of the quasicoherent geometry at $x$ (inverse measure of the entanglement of the quasicoherent state). For this reason, we call $\mathscr S$ the purity form of $M_\Lambda$. Note that in accordance with this interpretation, $\mathscr S$ can be degenerate at $x$ ($\mathscr S(x) =0$) if $|\Lambda(x)\rrangle$ is maximally entangled.\\ 

To finish, it is necessary to verify that $\mathcal Q_{ab}$ is independant of the gauge choice:
\begin{prop}
  $\mathcal Q_{ab}$ is invariant by internal gauge changes. 
\end{prop}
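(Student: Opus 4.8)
\emph{Proof proposal.} The plan is to reduce the statement to a short computation by exploiting that an internal gauge change produces again a quasicoherent state. Write the transformed vector as $|\tilde\Lambda(x)\rrangle = h(x)u(x)|\Lambda(x)\rrangle$ with $h(x)\in U(1)$ and $u(x)\in U_x$. Since $h(x)$ is a scalar and $u(x)|\Lambda(x)\rrangle\in\ker(\slashed D_x)$ by definition of $U_x$, one has $\slashed D_x|\tilde\Lambda(x)\rrangle = 0$ for every $x\in M_\Lambda$, so $|\tilde\Lambda\rrangle$ is itself a (weakly non-degenerate) family of quasicoherent states. Consequently every manipulation used to define and compute $\mathcal Q_{ab}$ for $|\Lambda\rrangle$ applies verbatim to $|\tilde\Lambda\rrangle$, and it suffices to check that the two resulting tensors agree term by term.

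First I would differentiate $\slashed D_x|\tilde\Lambda(x)\rrangle = 0$ along $M_\Lambda$; using $\partial_a\slashed D_x = -\sigma_i\frac{\partial x^i}{\partial s^a}$ (the $X^i$ being fixed) this yields, exactly as in the proof that $\gamma_{ab}=\frac12\llangle\partial_{(a}\Lambda|\slashed D_x^2|\partial_{b)}\Lambda\rrangle$, the identity $\slashed D_x|\partial_a\tilde\Lambda\rrangle = \sigma_i\frac{\partial x^i}{\partial s^a}|\tilde\Lambda\rrangle$. Since $\slashed D_x$ is self-adjoint, $\tilde{\mathcal Q}_{ab}=\llangle\partial_a\tilde\Lambda|\slashed D_x^2|\partial_b\tilde\Lambda\rrangle = \frac{\partial x^i}{\partial s^a}\frac{\partial x^j}{\partial s^b}\llangle\tilde\Lambda|\sigma_i\sigma_j|\tilde\Lambda\rrangle$. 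Writing $\sigma_i\sigma_j = \delta_{ij}\id + \imath{\varepsilon_{ij}}^k\sigma_k$ and using $|h|=1$ together with $u^\dagger u=\id$, the $\delta_{ij}$ piece reproduces $\gamma_{ab}$, which is manifestly gauge-independent, and the antisymmetric piece is $\imath\tilde{\mathscr S}_{ab}$ with $\tilde{\mathscr S}_{ab}={\varepsilon_{ij}}^k\tilde n_x^k\frac{\partial x^i}{\partial s^a}\frac{\partial x^j}{\partial s^b}$ and $\tilde{\vec n}_x=\llangle\tilde\Lambda|\vec\sigma|\tilde\Lambda\rrangle$. So the whole proposition comes down to proving $\tilde{\vec n}_x = \vec n_x$, since then $\tilde{\mathscr S}_{ab}={\varepsilon_{ij}}^k(\vec t_a\times\vec t_b)^k\,\tilde n_x^k = \mathscr S_{ab}$.

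For the normal-vector step I would argue as follows. Since $|h|=1$, $\tilde\rho_\Lambda := \tr_{\mathcal H}|\tilde\Lambda\rrangle\llangle\tilde\Lambda| = u\rho_\Lambda u^\dagger$, so $\tilde{\vec n}_x$ is the Bloch vector of $\rho_\Lambda$ rotated by the $SO(3)$ image of $u$; in particular $\|\tilde{\vec n}_x\| = \|\vec n_x\|$. By the property that the purity normal vector of a quasicoherent state is normal to $M_\Lambda$ (applied to $|\tilde\Lambda\rrangle$), $\tilde{\vec n}_x\perp T_xM_\Lambda$, hence $\tilde{\vec n}_x$ is colinear with $\vec n_x$ at any point where $\dim M_\Lambda = 2$; with equal norms this forces $\tilde{\vec n}_x = \pm\vec n_x$, and the sign is fixed using the degeneracy hypothesis: in the strongly non-degenerate case $\ker(\slashed D_x)=\mathbb C|\Lambda(x)\rrangle$, so $u|\Lambda\rrangle = e^{\imath\phi}|\Lambda\rrangle$ and $\tilde\rho_\Lambda = \rho_\Lambda$ exactly; in the weakly non-degenerate case at a two-dimensional point, the classification of the preceding subsection says the purity normal vectors attached to $\ker(\slashed D_x)$ span a single line and, having common norm, coincide. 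At a maximally entangled point $\vec n_x = \tilde{\vec n}_x = \vec 0$ and $\mathscr S$ vanishes on both sides, and at pinch or isolated points there is no two-dimensional tangent plane so nothing is to be proved. The main obstacle is precisely this control of the \emph{direction} of the normal vector under the genuinely noncommutative ($u\neq$ phase) part of the gauge group; everything else is the computation already carried out for $\gamma_{ab}$, plus the elementary remark that the $U(1)$ factor $h$ only contributes gradient terms proportional to $|\Lambda\rrangle$, which are annihilated by $\slashed D_x^2$.
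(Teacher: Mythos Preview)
Your overall strategy coincides with the paper's: differentiate $\slashed D_x|\tilde\Lambda\rrangle=0$, obtain $\slashed D_x|\partial_a\tilde\Lambda\rrangle=\sigma_i\frac{\partial x^i}{\partial s^a}|\tilde\Lambda\rrangle$, expand $\sigma_i\sigma_j$, and reduce the question to comparing the purity normal vectors $\tilde{\vec n}_x$ and $\vec n_x$, followed by a case analysis on the degeneracy type. The $U(1)$ part and the symmetric $\gamma_{ab}$ part are handled exactly as in the paper.

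There is, however, a genuine gap in your treatment of the $u$-part. You write ``together with $u^\dagger u=\id$'' and conclude that $\tilde{\vec n}_x$ is the Bloch vector of $\rho_\Lambda$ ``rotated by the $SO(3)$ image of $u$'', hence $\|\tilde{\vec n}_x\|=\|\vec n_x\|$. But $u\in U_x\subset SU(2)_{a.s.}$ is only \emph{almost surely} unitary, meaning $\tr(\rho_\Lambda(u^\dagger u-\id))=0$, not $u^\dagger u=\id$. This is enough to keep $|\tilde\Lambda\rrangle$ normalised (so your $\gamma_{ab}$ computation survives), but it does \emph{not} guarantee that $u\rho_\Lambda u^\dagger$ has the same eigenvalues as $\rho_\Lambda$, and there is no $SO(3)$ image of a merely a.s.\ unitary $u$. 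Consequently the equal-norm step, and with it your $\pm$ argument, is unjustified precisely in the entangled weakly-degenerate case where it is needed.

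The paper closes this gap differently: instead of comparing norms, it invokes the degeneracy classification to conclude that, at a point where $\dim M_\Lambda=2$ and $\mathfrak M$ is only weakly non-degenerate, any $u\in U_x$ either lies in the isotropy group of $\rho_\Lambda$ (so $u^\dagger\rho_\Lambda u=\rho_\Lambda$ and hence $\llangle\Lambda|u^\dagger\vec\sigma u|\Lambda\rrangle=\llangle\Lambda|\vec\sigma|\Lambda\rrangle$ exactly), or all purity normal vectors vanish. This sidesteps any need for genuine unitarity of $u$. Your argument becomes correct if you replace the ``$u^\dagger u=\id$ / equal norm / $SO(3)$ rotation'' step by this isotropy-group observation.
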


\begin{proof}
  Firstly, let $|\tilde \Lambda \rrangle = h|\Lambda \rrangle$ with $h \in U(1)$. We have then $|\partial_a \tilde \Lambda \rrangle = \frac{\partial h}{\partial s^a} |\Lambda \rrangle + h |\partial_a \Lambda \rrangle$ and then $\slashed D_x |\partial_a \tilde \Lambda \rrangle = h \slashed D_x |\partial_a \Lambda \rrangle$. It follows that $\tilde {\mathcal Q}_{ab} = \llangle \partial_a \tilde \Lambda | \slashed D_x |\partial_b \tilde \Lambda \rrangle = \mathcal Q_{ab}$.\\
  Now let $|\tilde \Lambda \rrangle = u |\Lambda \rrangle$ with $u \in U_x \subset SU(2)_{a.s.}$. $\slashed D_x u|\Lambda\rrangle = 0 \Rightarrow \slashed D_x |\partial_a \tilde \Lambda \rrangle = \sigma_i |\tilde \Lambda \rrangle \frac{\partial x^i}{\partial s^a}$. It follows that $\tilde {\mathcal Q}_{ab} = \gamma_{ab} + \imath {\varepsilon_{ij}}^k \llangle \Lambda|u^\dagger \sigma_k u|\Lambda \rrangle \frac{\partial x^i}{\partial s^a} \frac{\partial x^j}{\partial s^b}$.
  \begin{itemize}
  \item If $\mathfrak M$ is strongly non-degenerate at $x$, then $u=\id$ and $\tilde {\mathcal Q}_{ab} = \mathcal Q_{ab}$.
  \item If $\mathfrak M$ is not strongly non-degenerate and presents at least two independent non-trivial normal vectors at $x$, then we can set $\tilde {\mathscr S} = \mathscr S = 0$ since $\dim M_\Lambda \leq 1$ at $x$ (and then $ds^1\wedge ds^2=0$).
  \item In the other cases, $\llangle \Lambda |u^\dagger \vec \sigma u|\Lambda \rrangle = \llangle \Lambda |\vec \sigma |\Lambda \rrangle$ since $u$ belongs to the isotropy group of $\rho_\Lambda$ (or the all purity normal vectors are zero), and then $\tilde {\mathcal Q}_{ab} = \mathcal Q_{ab}$.
  \end{itemize}
\end{proof}

We can remark that $\omega_x^* \pi_x^* \mathscr S = \frac{1}{2} {\varepsilon_{ij}}^k \llangle \Lambda|\sigma_k|\Lambda \rrangle \dnc X^i \wedge \dnc X^j$ (with $\dnc X^i \wedge \dnc X^j = \dnc X^i \otimes \dnc X^j - \dnc X^j \otimes \dnc X^i$). We can then set the spin noncommutative 2-form $\pmb {\mathscr S} = \frac{1}{2} {\varepsilon_{ij}}^k \sigma_k \otimes \dnc X^i \wedge \dnc X^j$ to replace in the noncommutative geometry of $\mathfrak M$ the symplectic form (the dimension being odd no symplectic form can be defined); and since $\pmb \gamma = \delta_{ij} \dnc X^i \otimes \dnc X^j$, we set as quantum noncommutative geometric tensor:
\begin{equation}
  \pmb{\mathcal Q}_{ij} = \delta_{ij} + \imath {\varepsilon_{ij}}^k \sigma_k
\end{equation}
which satisfies
\begin{eqnarray}
  \pmb{\mathcal Q}_{ij} \otimes (X^i-x^i)(X^j-x^j) & = & \|\vec X - \vec x\|^2 + \imath {\varepsilon_{ij}}^k \sigma_k \otimes X^i X^j \\
  & = & \slashed D_x^2 \\
  & = & \pmb{d\ell^2}_x
\end{eqnarray}

The meaning of $\pmb{\mathscr S}$ can be understand by considering two displacement operators $\Dis(y,x) = e^{\imath \delta \Pi_x^1}$ and $\Dis(z,x) = e^{\imath \delta \Pi_x^2}$ with $y=x+\delta x^1$ and $z=x+\delta x^2$ ($\delta x^1 \not= \delta x^2$).
\begin{eqnarray}
  \pmb{\mathscr S}(L_{\delta \Pi_x^1},L_{\delta \Pi_x^2}) & = & - {\varepsilon_{ij}}^k \sigma_k \otimes [\delta \Pi_x^1,X^i][\delta \Pi_x^2,X^j] \\
  & = & {\varepsilon_{ij}}^k \sigma_k \otimes \pmb{\delta \ell}_{yx}^i \pmb{\delta \ell}_{zx}^j \\
  & = & \vec \sigma \odot \pmb{\delta \vec \ell}_{yx} \times \pmb{\delta \vec \ell}_{zx}
\end{eqnarray}
$\pmb{\delta \vec \ell}_{yx}$ and $\pmb{\delta \vec \ell}_{zx}$ being two linking vector observables at $M_\Lambda$ and $\vec \sigma$ being the normal vector observable at $M_\Lambda$, $\pmb{\mathscr S}(L_{\delta \Pi_x^1},L_{\delta \Pi_x^2})$ is the volume observable of the noncommutative parallelepiped defined by $(\vec \sigma,\pmb{\delta \vec \ell}_{yx},\pmb{\delta \vec \ell}_{zx})$. We can then interpete $\pmb{\mathscr S}$ as the non-commutative volume form of $\mathfrak M$. The quantum noncommutative geometric tensor $\pmb{\mathcal Q}_{ij}$ provides then the bicovector metric $\pmb{\gamma}$ and the volume form $\pmb{\mathscr S}$ of the fuzzy space $\mathfrak M$ (and also by contraction with $X^i-x^i$ the square lenght observable metric $\pmb{d\ell^2}_x$).

\section{Adiabatic regimes and geodesics}
The quasicoherent geometry is related to the eigenvector $|\Lambda(x)\rrangle$ of $\slashed D_x$. We can then consider for slow moves of the probe $t \mapsto x(t)$ onto $M_\Lambda$ the adiabatic regimes supported by $|\Lambda \rrangle$. This could permit to define the classical spacetime closest to $\mathfrak M$, in the meaning that the space part is related to quasicoherent states (states minimising the Heisenberg uncertainties) and the time part is related to the adiabatic approximation (quantum dynamics closest to a semi-classical approximation).

\subsection{The strong adiabatic regime}
\subsubsection{Berry gauge potential and Berry curvature}
The inner product $\llangle \bullet | \bullet \rrangle$ defines a natural gauge potential on $M_\Lambda$:
\begin{equation}
  A  =  -\imath \llangle \Lambda|d|\Lambda \rrangle \in \Omega^1(M_\Lambda,\mathbb R)
\end{equation}

$A$ is the generator of the geometric phase accumulated during a strong adiabatic transport of $|\Lambda \rrangle$ along a path $\mathscr C$ on $M_\Lambda$ \cite{Shapere, Bohm}. If $|\Lambda \rrangle$ is strongly nondegenerate along the path $\mathscr C$ linking $x$ and $y$, by using the standard adiabatic theorem \cite{Teufel} (strong adiabatic approximation), we have
\begin{equation}
  |\Psi(T) \rrangle = e^{-\imath \int_{\mathscr C} A} |\Lambda(y)\rrangle + \mathcal O(1/T) \qquad \text{if } |\Psi(0)\rrangle = |\Lambda(x)\rrangle
\end{equation}
$T$ being the transport duration along $\mathscr C$ and $|\Psi\rrangle$ being the solution of the Schr\"odinger equation eq.(\ref{SchroEq}). The condition of parallel transport associated with the strong adiabatic approximation is just the usual one introduced by Simon \cite{Shapere}:
\begin{equation}
  \llangle \Psi | \dot \Psi \rrangle = 0
\end{equation}

\begin{prop}\label{gauge1}
    Under an inner gauge transformation $(h,u) \in U(1) \times U_x \subset U(1) \times SU(2)_{a.s.}$, the gauge potential $A$ becomes:
  \begin{equation}
    \tilde A  =  A - \imath d\ln h -\imath \omega_x(u^{-1}du)
  \end{equation}
    Under a local external gauge transformation $J \in \underline{SO(3)}_{M_\Lambda}$, the gauge potential $A$ becomes:
  \begin{equation}
    \hat A = J^* A + \imath \omega_{J^{-1}x}(du_{J} u_J^{-1})
  \end{equation}
    where $u_J \in SU(2)$ is the transformations associated with $J$ viewed as a linear map of $\mathbb R^3$: $(Jy)^i = {J^i}_j y^j$. $J^* : \Omega^1\mathbb R^3 \to \Omega^1\mathbb R^3$ is defined by $(J^* k)(x) = k_a(J^{-1}x) ({(J^{-1})^i}_j dx^j - {(J^{-1}dJJ^{-1})^i}_j x^j)$, $\forall k \in \Omega^1\mathbb R^3$.
\end{prop}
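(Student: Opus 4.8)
The plan is to obtain both transformation laws by a direct Leibniz-rule computation starting from the definition $A = -\imath\llangle\Lambda|d|\Lambda\rrangle$, using nothing beyond the product rule for $d$ and the normalisation $\llangle\Lambda|\Lambda\rrangle=1$ of the quasicoherent states.

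For the internal transformation $|\tilde\Lambda\rrangle = h\,u\,|\Lambda\rrangle$, I would first write $d|\tilde\Lambda\rrangle = (dh)\,u|\Lambda\rrangle + h\,(du)|\Lambda\rrangle + h\,u\,d|\Lambda\rrangle$ and contract with $\llangle\tilde\Lambda| = \bar h\,\llangle\Lambda|u^\dagger$. This produces three terms. The first, $\bar h\,(dh)\,\llangle\Lambda|u^\dagger u|\Lambda\rrangle$, reduces to $\bar h\,dh = d\ln h$ because $\llangle\Lambda|u^\dagger u|\Lambda\rrangle = \llangle\tilde\Lambda|\tilde\Lambda\rrangle = 1$ (this is exactly the $\rho_\Lambda$-almost-sure unitarity of $u$ evaluated at $x$). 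The second term is $\llangle\Lambda|u^\dagger(du)|\Lambda\rrangle = \tr_{\mathbb C^2}(\rho_\Lambda\,u^{-1}du) = \omega_x(u^{-1}du)$, where $u^{-1}$ stands for $u^\dagger$ (valid $\rho_\Lambda$-a.s. for $u\in U_x$). The third term is $\llangle\Lambda|u^\dagger u\,d|\Lambda\rrangle$, which I want to identify with $\llangle\Lambda|d\Lambda\rrangle$. Multiplying by $-\imath$ then gives $\tilde A = A - \imath\,d\ln h - \imath\,\omega_x(u^{-1}du)$.

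The delicate point --- and the main obstacle --- is precisely this third term, together with the $u^{-1}=u^\dagger$ step: when $u$ is only $\rho_\Lambda$-almost-surely unitary, $u^\dagger u\neq\id$ and there is no immediate reason for $\llangle\Lambda|(u^\dagger u - \id)\,|\partial_a\Lambda\rrangle$ to vanish. I would dispatch this by a case analysis mirroring the proof of the invariance of $\mathcal Q_{ab}$ above: if $\mathfrak M$ is strongly non-degenerate at $x$ then $U_x$ is the isotropy subgroup of $|\Lambda(x)\rrangle$ inside $SU(2)$, so $u$ is a genuine unitary, $u^\dagger u=\id$, and the identity is immediate; if $\mathfrak M$ is weakly but not strongly non-degenerate with $|\Lambda\rrangle$ separable, the degeneracy classification of the displacement section lets one choose $u\in SU(2)$ genuinely; and in the entangled degenerate case one uses, as in that earlier proof, either that the residual $u$ lies in the isotropy group of $\rho_\Lambda$ (so that $\llangle\Lambda|u^\dagger\,\bullet\,u|\Lambda\rrangle = \llangle\Lambda|\,\bullet\,|\Lambda\rrangle$ on the operators that appear) or that $x$ is isolated/pinched in $M_\Lambda$ so the ambiguous components are inert. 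The same remark simultaneously shows that $A$ is well defined in the weakly non-degenerate setting.

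For the external transformation I would start from eq.(\ref{extchangeLambda}) in the local form $|\hat\Lambda(x)\rrangle = u_J(x)^{-1}|\Lambda(g(x))\rrangle$ with $g(x)=J(x)^{-1}x$ and $u_J\in SU(2)$ a genuine unitary (so no almost-sure subtlety enters here). Applying Leibniz to both factors gives two contributions. From the factor $u_J^{-1}$, using $d(u_J^{-1}) = -u_J^{-1}(du_J)u_J^{-1}$, one gets $-\llangle\Lambda(g)|(du_J)u_J^{-1}|\Lambda(g)\rrangle = -\omega_{J^{-1}x}(du_J\,u_J^{-1})$, i.e. $+\imath\,\omega_{J^{-1}x}(du_J\,u_J^{-1})$ after the overall $-\imath$. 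From the composite $|\Lambda(g(x))\rrangle$ the chain rule gives $d|\Lambda(g)\rrangle = |\partial_i\Lambda(g)\rrangle\,dg^i$ with $dg^i = {(J^{-1})^i}_j\,dx^j - {(J^{-1}(dJ)J^{-1})^i}_j\,x^j$ (using $dJ^{-1}=-J^{-1}(dJ)J^{-1}$); contracting with $-\imath\llangle\Lambda(g)|$ reproduces exactly the twisted pullback $J^*A$ as defined in the statement. Summing the two pieces yields $\hat A = J^*A + \imath\,\omega_{J^{-1}x}(du_J\,u_J^{-1})$. Here the only genuine work is the bookkeeping for the $x$-dependence of $J$ --- tracking which derivatives hit $u_J$, which hit the $J^{-1}$ inside $g$, and matching the resulting affine-type one-form with the definition of $J^*$ --- while the $u_J$-part is routine since $u_J$ is honestly unitary.
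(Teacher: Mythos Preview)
Your approach is essentially the paper's: apply the Leibniz rule to $d|\tilde\Lambda\rrangle$ (resp.\ $d|\hat\Lambda\rrangle$), contract with the bra, and read off the three (resp.\ two) contributions. The paper does this by forming the rank-one operator $-\imath|d\tilde\Lambda\rrangle\llangle\tilde\Lambda|$ and taking the total trace, which by cyclicity gives exactly your contracted expressions; the external case is handled identically via $|\hat\Lambda(x)\rrangle=u_J^{-1}|\Lambda(J^{-1}x)\rrangle$ and the chain rule, with $J^*$ absorbing the $dg^i$ you wrote.

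The one point where you are \emph{more} careful than the paper is the ``delicate'' third term $\llangle\Lambda|u^\dagger u\,d|\Lambda\rrangle$. The paper simply writes $u^{-1}$ in place of $u^\dagger$ throughout (so that $u^{-1}u=\id$ trivially) without commenting on the almost-sure unitarity issue; your proposed case analysis, mirroring the invariance proof for $\mathcal Q_{ab}$, is the honest way to justify that identification and is a genuine improvement over the paper's terse treatment.
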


\begin{proof}
  $|\tilde \Lambda \rrangle = hu|\Lambda \rrangle \Rightarrow d|\tilde \Lambda \rrangle = hud|\Lambda \rrangle + dh \times u|\Lambda \rrangle + du \times h|\Lambda \rrangle$. So $-\imath |d\tilde \Lambda \rrangle \llangle \tilde \Lambda| = u (-\imath |d\Lambda\rrangle \llangle \Lambda|) u^{-1} -\imath h^{-1}dh \times u|\Lambda\rrangle\llangle \Lambda|u^{-1} -\imath du u^{-1} \times u|\Lambda\rrangle \llangle \Lambda|u^{-1}$. By taking the total trace of this last equation we find the formula for $\tilde A$.\\
   We recall that we have $|\hat \Lambda(x) \rrangle = u_J^{-1} |\Lambda(J^{-1}x)\rrangle$. Then $-\imath |d\hat \Lambda \rrangle \llangle \hat \Lambda| = \imath u_J^{-1} du_J u_J^{-1} |\Lambda \rrangle \llangle \Lambda|u_J -\imath u_J^{-1} J^* |d\Lambda \rrangle \llangle \Lambda|u_J$ since by construction $df(J^{-1}x) = (J^* df)(x)$ for any function $f$. By taking the total trace of this last equation we find the formula for $\hat A$
\end{proof}

The Berry gauge potential defines the Berry curvature $F = dA \in \Omega^2(M_\Lambda,\mathbb R)$ \cite{Shapere, Bohm}. In accordance with usual interpretations of the Berry phases, the gauge fields can be assimilated to virtual electromagnetic fields, as we can see this on the example of the fuzzy surface plots:

\begin{example}{2}{Fuzzy surface plots}
  By using the following derivative rules:
  \begin{eqnarray}
    \frac{\partial}{\partial \alpha} |n\rangle_\alpha & = & \frac{\bar \alpha}{2}|n\rangle_\alpha + \sqrt{n+1}|n+1\rangle_\alpha \\
    \frac{\partial}{\partial \bar \alpha} |n\rangle_\alpha & = & - \frac{\alpha}{2}|n\rangle_\alpha - \sqrt{n}|n-1\rangle_\alpha
  \end{eqnarray}
  ($\partial_{\bar \alpha}|0\rangle_\alpha = - \frac{\alpha}{2}|n\rangle_\alpha$), we have for the fuzzy plane and the fuzzy paraboloids, the following Berry potential:
  \begin{eqnarray}
    A & = & -\imath \llangle \Lambda|d|\Lambda \rrangle \\
    & = & -\imath \frac{\bar \alpha d\alpha - \alpha d\bar \alpha}{2} + \mathcal O(\epsilon^2) \\
    & = & x^1 dx^2 - x^2dx^1 + \mathcal O(\epsilon^2) \\
    & = & r^2 d\theta + \mathcal O(\epsilon^2) \\
  \end{eqnarray}
  with $r=|\alpha| = \sqrt{(x^1)^2+(x^2)^2}$ and $\theta = \arg \alpha = \arctan \frac{x^2}{x^1}$. The rest $\mathcal O(\epsilon^2)$ is exactly $0$ for the fuzzy plane. The associated Berry curvature is then
  \begin{eqnarray}
    F & = & dA \\
    & = & \imath d\alpha \wedge d\bar \alpha + \mathcal O(\epsilon^2) \\
    & = & 2 dx^1 \wedge dx^2 + \mathcal O(\epsilon^2) \\
    & = & 2r dr \wedge d\theta + \mathcal O(\epsilon^2)
  \end{eqnarray}
  Following the usual interpretation of the Berry curvature, $A \in \Omega^1 (M_\Lambda,\mathbb R)$ and $F \in \Omega^2 (M_\Lambda,\mathbb R)$ are similar to a classical magnetic potential and to a classical magnetic field of a magnetic monopole at $\alpha = 0$, the magnetic field being normal to $M_\Lambda$ (even for the curved cases, this is due to the perturbative analysis at first order). Since $\Dis(\beta,\alpha) =e^{(\beta-\alpha)a^+-(\bar \beta - \bar \alpha)a}$ for the three models, we have
  \begin{eqnarray}
    \delta \Pi_\alpha & = & -\imath (\delta \alpha a^+ - \delta \bar \alpha a) \\
    & = & 2(\delta x^2 X^1 - \delta x^1 X^2)
  \end{eqnarray}
  It follows that the noncommutative gauge potential is
  \begin{eqnarray}
    \mathbf A_{\Dis(\alpha+d\alpha,\alpha)} & = & \dnc \delta \Pi_\alpha + \mathcal O(|\delta \alpha|^2)\\
    & = & -\imath (\delta \alpha \dnc a^+ - \delta \bar \alpha \dnc a) + \mathcal O(|\delta \alpha|^2) \\
    & = & 2(\delta x^2 \dnc X^1 - \delta x^1 \dnc X^2) + \mathcal O(\|\delta \vec x\|^2)
  \end{eqnarray}
  Since $\delta \alpha$ can be assimilated to the complex representation of an electric field, and if we interpret $a$ and $a^+$ as creation/annihilation operators of photons, we can see $\mathbf A_{\Dis(\alpha+d\alpha,\alpha)} \in \Omega^1_\Der(\mathfrak X)$ as similar to a second quantised electric field in $\mathfrak M$. The zero noncommutative curvature $\mathbf F_\Dis = 0$ is the noncommutative equivalent to the classical zero curl of the electric field (in the stationary regime). The electric field like appears as a quantised field whereas the magnetic field like appears as a classical field.
\end{example}

The interpretation of the gauge fields as virtual electromagnetic fields in $M_\Lambda/\mathfrak M$ is just a useful analogy.

\subsubsection{The foliated eigen spacetime of $\mathfrak M$}\label{foliation}
By invoking the adiabatic approximation in the previous section, we have considered the Schr\"odinger equation $\imath|\dot \Psi \rrangle = \slashed D_{x(t)} |\Psi \rrangle$. The time $t$ here is the time corresponding to slow variations of the probe $x$ onto $M_\Lambda$. It is then the time of the clock of the observer which studies the fuzzy space $\mathfrak M$ by measurements on the probe. In BFSS matrix models, $t$ is the proper time of the classical probe D0-brane, or in other words, the proper time of an observer comoving with the test particle revealing the gravitational effects. In quantum information theory, $t$ is the control time which runs during the realisation of a logical gate. It is then natural to think the emergent classical geometry as a spacetime $\mathcal M_\Lambda = \bigsqcup_{t\in \mathbb R}(t,M_\Lambda^{(t)})$ defined as a time foliation of leafs $M_\Lambda^{(t)}$ diffeomorphic to $M_\Lambda$. There is an infinite number of different foliations of $\mathcal M_\Lambda$ and there is no reason to think that the pertinent foliation is the trivial one (with null shift vector). We argue that the shift vector consistent with the geometry of $\mathfrak M$ is $A^a = \gamma^{ab} A_b$ where $A \in \Omega^1(M_\Lambda,\mathbb R)$ is the $U(1)$-gauge potential (providing the geometric interpretation of $A$). Indeed, suppose that the leaf $M^{(t)}_\Lambda$ is associated with $|\Lambda(x)\rrangle$, then $M_\Lambda^{(t+dt)}$ is associated with $e^{-\imath A_a \dot s^a dt} |\Lambda(x)\rrangle$ by applying the strong adiabatic assumption. Starting at $|\Lambda(x)\rrangle$ at $t$, we arrive at $e^{-\imath A_a \dot s_a dt} |\Lambda(x)\rrangle$ at $t+dt$. $A$ (by definition of a gauge potential on a principal bundle) defines then the shift between the two states. Since the quasicoherent states are the quantum states closest to the notion of points of a classical manifold, $A$ is then equivalent to the shift between the two pseudo-points. This corresponds to the definition of a shift vector of a foliated classical manifold \cite{Nakamura}: let $x$ be the point on $M_\Lambda^{(t)}$ of curvilinear coordinates $s$, $x'$ be the point at the intersection of the normal vector to $M_\Lambda^{(t)}$ in $\mathcal M_\Lambda$ at $x$ and $M_\Lambda^{(t+dt)}$, and $x''$ be the point of $M_\Lambda^{(t+dt)}$ of curvilinear coordinates $s$; then the shift vector of the foliation is $\overrightarrow{x'x''} = \vec A dt$. This choice of shift vector is moreover consistent with the interpretation in the adiabatic picture of $\slashed D_x$ as the Dirac operator of a fermionic string in the BFSS model \cite{Viennot2}.\\

The eigen spacetime $\mathcal M_\Lambda$ of $\mathfrak M$ is then endowed with the following metric:
\begin{equation}
  d\tau^2 = dt^2 - (\gamma^{ac}A_c dt+ds^a)(\gamma^{bd} A_d dt + ds^b) \gamma_{ab}
\end{equation}
(the laps function being supposed equal to 1). 
In BFSS matrix model, $\tau$ is the proper time of the fermionic string end moving onto the noncommutative D2-brane defined by $\mathfrak M$.\\
Note that $d\tau^2 = dt^2 - \omega_{s+ds}\left(\slashed D_{x(s^a-\gamma^{ab}A_bdt)}^2\right)$ (by writing that $\slashed D_{x(s^a-\gamma^{ab}A_bdt)} = \slashed D_x + \gamma^{ab}A_b \frac{\partial x^i}{\partial s^b} \sigma_i dt + \mathcal O(dt^2)$). $\sqrt{\omega_{s+ds}\left(\slashed D_{x(s^a-\gamma^{ab}A_bdt)}^2\right)}$ is the energy uncertainty with a double ``slipping'', the measurement is realised with a misalignment $ds$ onto the trajectory of the shifted point $x(s^a-\gamma^{ab}A_bdt)$ corresponding to the leaf defined with a time lag $dt$. For quantum information theory, it is then $\sqrt{dt^2-d\tau^2} = \sqrt{\omega_{s+ds}\left(\slashed D_{x(s^a-\gamma^{ab}A_bdt)}^2\right)}$ which is physically meaningful (and not directly $d\tau$).

\begin{prop}
  Let $h \in U(1)$ be an abelian inner gauge change. $d\tau^2$ is invariant under the gauge transformation $\tilde A = A - \imath d\ln h$ if and only if this one goes with $\varphi_t \in \Diff M_\Lambda$ such that $\varphi_t(x(s)) = x(\tilde s)$ with $\tilde s = f(s,t)$ ($f \in \mathcal C^\infty(\mathbb R^2\times \mathbb R,\mathbb R^2))$ satisfying the following equation
  \begin{equation}\label{diffgaugeEq}
    \partial_0 f^{\tilde a} = \imath \gamma^{ac} \frac{\partial f^{\tilde a}}{\partial s^a} \partial_c \ln h
  \end{equation}
  $\partial_0$ standing for derivative with respect to $t$.
\end{prop}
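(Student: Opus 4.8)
The plan is to recognise the eigen spacetime metric as an ADM (lapse--shift) decomposition and then track how its two pieces of data $(A,\gamma_{ab})$ respond to the two operations at play: the $U(1)$ gauge change of the Berry connection and a time-dependent reparametrisation of the leaves. First I would rewrite
\[
  d\tau^2 = dt^2 - \gamma_{ab}\,(ds^a - N^a dt)(ds^b - N^b dt), \qquad N^a = -\gamma^{ab}A_b ,
\]
so that $\mathcal M_\Lambda$ appears as a foliation with lapse $1$, spatial metric $\gamma_{ab}$ and shift $N^a=-\gamma^{ab}A_b$ (the very shift identified in section \ref{foliation}). Since $h\in U(1)$, $\imath\,d\ln h$ is real, and the gauge change $\tilde A = A - \imath\,d\ln h$ of Property \ref{gauge1} touches the metric only through the shift,
\[
  \tilde N^a = N^a + W^a, \qquad W^a := \imath\,\gamma^{ab}\partial_b\ln h ,
\]
leaving the lapse and $\gamma_{ab}$ unchanged.

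Next I would compute the effect on $d\tau^2$ of the foliation- and time-preserving diffeomorphism $\Phi(t,x)=(t,\varphi_t(x))$, i.e.\ of the coordinate change $\tilde s^{\tilde a}=f^{\tilde a}(s,t)$. Because $\varphi_t$ is time dependent,
\[
  d\tilde s^{\tilde a} = \frac{\partial f^{\tilde a}}{\partial s^b}\,ds^b + \partial_0 f^{\tilde a}\,dt
\]
acquires a $dt$-component, so after completing the square one reads off: the spatial metric pulls back, $\gamma_{ab}\mapsto(\varphi_t^*\gamma)_{ab}$; the lapse stays $1$; and the shift transforms \emph{inhomogeneously}, $N^a\mapsto \frac{\partial f^{\tilde a}}{\partial s^b}N^b - \partial_0 f^{\tilde a}$, exactly the gauge-transformation law of a connection of the $\Diff M_\Lambda$-bundle over the time axis. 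Demanding that the gauge change, \emph{accompanied by} this reparametrisation, leave $d\tau^2$ in its standard form with the original $A$ and $\gamma$ then amounts to matching the $ds\,ds$, $ds\,dt$ and $dt^2$ blocks: the purely spatial block forces $\varphi_t$ to be an isometry of $(M_\Lambda,\gamma)$ (and the remaining block to be consistent with the claimed equation, one must also have $\varphi_t^*A=A$) — the natural compatibility requirement for the reparametrisation to respect the leaf geometry; and then, using the isometry to raise and lower indices with $\gamma$, the $ds\,dt$ block says precisely that the inhomogeneous term must cancel $W$ transported to the new chart, i.e.
\[
  \partial_0 f^{\tilde a} = \imath\,\gamma^{ac}\frac{\partial f^{\tilde a}}{\partial s^a}\,\partial_c\ln h ,
\]
which is eq.(\ref{diffgaugeEq}); equivalently this characterises $\varphi_t$ as the flow of the fixed vector field $W$.

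For the converse (``if''), I would substitute a solution $f$ of eq.(\ref{diffgaugeEq}) directly into $\Phi^*(\widetilde{d\tau^2})$ and check component by component that the $dt^2$, $ds\,dt$ and $ds\,ds$ pieces reproduce $d\tau^2$: the $ds\,dt$ piece by the equation itself, the $ds\,ds$ piece because the flow of $W$ is a $\gamma$-isometry, and the $dt^2$ piece then automatically from the completed square. The step I expect to be the main obstacle is the careful bookkeeping of the inhomogeneous shift transformation — keeping the $\partial_0 f$ term, the Jacobian $\partial f/\partial s$ and the raisings/lowerings with $\gamma_{ab}$ all consistently placed, and pinning down the overall sign by fixing the orientation $\Phi$ versus $\Phi^{-1}$ — together with making explicit the compatibility condition that $\varphi_t$ preserve $\gamma$ (and $A$), which is what upgrades eq.(\ref{diffgaugeEq}) from merely necessary to sufficient.
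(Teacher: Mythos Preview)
Your ADM reading of $d\tau^2$ and your identification of how the gauge change deforms only the shift are exactly the paper's computation in different language, and the matching of the $ds\,dt$ block is precisely where eq.~(\ref{diffgaugeEq}) appears. Where you diverge from the paper is in the extra conditions you impose: you require $\varphi_t$ to be a $\gamma$-isometry and to satisfy $\varphi_t^*A=A$. The paper does \emph{not} need these. In its proof the quantities $\tilde\gamma_{\tilde a\tilde b}$ and $A_{\tilde c}$ are simply the components of the \emph{same} tensors $\gamma$ and $A$ in the new coordinates $\tilde s=f(s,t)$; the identities $\tilde\gamma_{\tilde a\tilde b}\,\partial_a f^{\tilde a}\partial_b f^{\tilde b}=\gamma_{ab}$ and $A_{\tilde c}\,\partial_c f^{\tilde c}=A_c$ are therefore the tensor transformation laws, not constraints on $\varphi_t$. ``Invariance'' in the proposition means $d\tilde\tau^2=d\tau^2$ as tensors on $\mathcal M_\Lambda$, i.e.\ the metric built from $(\tilde A,\gamma)$ in the $\tilde s$-chart coincides with the metric built from $(A,\gamma)$ in the $s$-chart; it does not mean that the component functions of $\gamma$ or $A$ are preserved as functions of the coordinate labels.

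Concretely: once you substitute $d\tilde s^{\tilde a}=\partial_a f^{\tilde a}\,ds^a+\partial_0 f^{\tilde a}\,dt$ and use only the tensorial transformation of $\gamma$ and $A$, every block reduces automatically except the single residual term $\partial_0 f^{\tilde a}-\imath\,\gamma^{ac}\partial_a f^{\tilde a}\,\partial_c\ln h$, whose vanishing is exactly eq.~(\ref{diffgaugeEq}). So drop the isometry/$A$-preservation hypotheses; with them your ``only if'' direction would overconstrain $\varphi_t$ and your ``if'' direction would be stated only for a restricted class of diffeomorphisms, whereas the proposition holds for arbitrary $\varphi_t\in\Diff M_\Lambda$.
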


\begin{proof}
  \begin{eqnarray}
    d\tilde \tau^2 & = & dt^2-(\tilde \gamma^{\tilde a \tilde c} \tilde A_{\tilde c} dt+d\tilde s^{\tilde a})(\blacksquare^{\tilde b}) \tilde \gamma_{\tilde a \tilde b} \\
    & = & dt^2 - (\tilde \gamma^{\tilde a \tilde c}(A_{\tilde c}-\imath\partial_{\tilde c} \ln h)dt+d\tilde s^{\tilde a}) (\blacksquare^{\tilde b}) \tilde \gamma_{\tilde a \tilde b}
  \end{eqnarray}
  where $\blacksquare^{\tilde b}$ means the ``same expression with $\tilde b$ in place of $\tilde a$''. $\tilde s^{\tilde a} = f^{\tilde a}(s,t) \Rightarrow d\tilde s^{\tilde a} = \frac{\partial f^{\tilde a}}{\partial s^a} ds^a + \partial_0 f^{\tilde a} dt$.
  \begin{eqnarray}
    d\tilde \tau^2 & = & dt^2-\left(\gamma^{ac}\frac{\partial f^{\tilde a}}{\partial s^a}\frac{\partial f^{\tilde c}}{\partial s^c}(A_{\tilde c}-\imath \partial_{\tilde c} \ln h)dt + \frac{\partial f^{\tilde a}}{\partial s^a} ds^a + \partial_0 f^{\tilde a}dt \right)(\blacksquare^{\tilde b}) \tilde \gamma_{\tilde a \tilde b} \\
    & = & dt^2 - \left(\frac{\partial f^{\tilde a}}{ds^a}(\gamma^{ac}A_cdt+ds^a) +\underbrace{(\partial_0f^{\tilde a}-\imath \gamma^{ac} \frac{\partial f^{\tilde a}}{\partial s^a} \partial_c \ln h)}_{\equiv 0} \right) (\blacksquare^{\tilde b}) \tilde \gamma_{\tilde a \tilde b} \\
    & = & dt^2 - (\gamma^{ac} A_c dt+ds^a)(\blacksquare^b) \gamma_{ab} \\
    & = & d\tau^2
  \end{eqnarray}
\end{proof}

$\Diff \mathcal M_\Lambda$ must be restricted to diffeomorphisms for which it exists $h$ such that eq.(\ref{diffgaugeEq}) holds. This includes $\Diff M_\Lambda$ (time independent diffeomorphisms) which are associated with $h=1$. Moreover the $U(1)$-gauge changes must be restricted to the ones for which eq.(\ref{diffgaugeEq}) has solutions.\\
The couple $(h,\varphi_t) \in U(1) \times \Diff \mathcal M_\Lambda$ constitutes a change of foliation of $\mathcal M_\Lambda$ (the leaf at time $t$ is modified by $\varphi_t$ and the shift vector is modified by $-\imath d\ln h$). The $U(1)$-symmetry in the quantum space $\mathfrak M$ is then associated with the foliation changes of its eigen spacetime.\\

\begin{example}{4}{Fuzzy plane and fuzzy Painlev\'e-Gullstrand spacetime}
  For the fuzzy plane, the spacetime metric associated with the quasicoherent state $|\Lambda(\alpha)\rrangle = |0\rangle \otimes |\alpha \rangle$ is
  \begin{eqnarray}
    d\tau^2 & = & dt^2 - (A^a dt + ds^a)(A^b dt + ds^b) \delta_{ab} \\
    & = & dt^2 - (A^r dt + dr)^2 - r^2(A^\theta dt + d\theta)^2
  \end{eqnarray}
  in polar coordinates ($\alpha = re^{\imath \theta}$), with $A_r = 0$ and $A_\theta =r^2$. So $d\tau^2 = dt^2 - dr^2 - r^2(dt+d\theta)^2$. This is the metric of a flat spacetime with the effect of $A$ which will be discussed in the next section. But consider the following inner gauge change
  \begin{eqnarray}
    |\tilde \Lambda(\alpha) \rrangle & = & h(\alpha)|\Lambda(\alpha)\rrangle = e^{2\imath \sqrt{r_S |\alpha|}} |\Lambda(\alpha) \rrangle \\
    \tilde A & = & A - \imath d \ln h = A + \sqrt{\frac{r_S}{r}} dr
  \end{eqnarray}
  with $r_S >0$ a constant parameter. With this gauge choice, the metric becomes:
  \begin{equation}
    d\tilde \tau^2  =dt^2 - (dr + \sqrt{\frac{r_S}{r}} dt)^2 - r^2(dt+d\theta)^2
  \end{equation}
  In place of the metric a flat spacetime with the effect of the Berry potential, we have the metric of a Painlev\'e-Gullstrand spacetime (reduced of one dimension) with the effect of $A$. So the metric of a Schwarzschild black hole of Schwarzchild radius $r_S$ in the Painlev\'e-Gullstrand coordinates which are the coordinates such that $t$ is the proper time of a free-falling observer who starts from far away with zero velocity. This in accordance to the fact that, in quantum gravity matrix models, fuzzy spaces are the quantisation of the space from the point of view of an ideal Galilean observer for which $t$ is its classical clock (and so the proper time of this free-falling observer). The link between the plane and the Painlev\'e-Gullstrand spacetime is classically well-known, the space slices of the Painlev\'e-Gullstrand spacetime are flat. At the quantum level, the difference between the flat spacetime and the Painlev\'e-Gullstrand black-hole spacetime is just the gauge choice for $|\Lambda \rrangle$. But note that the two gauge choices are not physically equivalent because the gauge change $h$ is realised without its associated diffeomorphism. Let $\tilde r = f(r,t)$ be the generator of the diffeomorphism associated with $h$:
  \begin{equation}
    \frac{\partial f}{\partial t} = \imath \gamma^{rr} \frac{\partial f}{\partial r} \frac{\partial \ln h}{\partial r} \iff \frac{\partial f}{\partial t} = - \sqrt{\frac{r_S}{r}} \frac{\partial f}{\partial r}
  \end{equation}
  \begin{equation}
    \tilde r = f(r,t) = \left(r^{3/2} - \frac{3}{2} \sqrt{r_S} t \right)^{2/3}
  \end{equation}
  (with the initial condition $\tilde r = r$ at $t=0$). With this change of foliation in addition to the gauge change $h$, $d\tau^2$ is invariant and the flat spacetime remains the flat spacetime with just another coordinate system.\\
  Clearly, the fuzzyfication (of a 2D slice) of the Schwarzschild black hole is not the same for the Schwarzschild coordinates and for the Painlev\'e-Gullstrand coordinates: 
  \begin{center} \scriptsize
    \begin{tabular}{ccccc}
      \boxed{\begin{minipage}{3cm} classical Schwarzschild spacetime, Schwarzschild coordinates $(t',r,\theta,\varphi)$ \end{minipage}}  & $\xrightarrow{t=g(t',r)}$  & \boxed{\begin{minipage}{3cm} classical Schwarzschild spacetime, Painlev\'e-Gullstrand coordinates $(t,r,\theta,\varphi)$ \end{minipage}}  & $\times$  & \boxed{\begin{minipage}{3cm} classical Minkowski spacetime, polar coordinates $(t,\tilde r,\theta,\varphi)$ \end{minipage}} \\
    $\downarrow$ space slice & & $\downarrow$ space slice & & $\downarrow$ space slice \\
      \boxed{\begin{minipage}{3cm} Flamm's paraboloid, $t'$: clock of an observer at infinity comoving with the black hole \end{minipage}} & & \boxed{\begin{minipage}{3cm} plane, $t$: clock of a free falling observer \end{minipage}} & & \boxed{\begin{minipage}{3cm} plane, $t$: clock of a Galilean observer \end{minipage}} \\
      $\downarrow$ fuzzyfication & & $\downarrow$ fuzzyfication & & $\downarrow$ fuzzyfication \\
      \boxed{\begin{minipage}{3cm} fuzzy Flamm's paraboloid, $|\Lambda'(\alpha)\rrangle = \sum_{n,\varsigma} c_{n\varsigma}(\alpha)|\lambda_{n\varsigma}(\alpha)\rrangle$ \cite{Viennot4}, $t'$: clock of an observer at infinity comoving with the black hole \end{minipage}} & $\times$ & \boxed{\begin{minipage}{3cm} fuzzy plane, $|\Lambda(\alpha)\rrangle = e^{2\imath \sqrt{r_Sr}}|0\rangle \otimes |re^{\imath \theta} \rangle$, $t$: clock of a free falling observer \end{minipage}} & $\xrightarrow{\tilde r=f(r,t)}$ & \boxed{\begin{minipage}{3cm} fuzzy plane, $|\Lambda(\alpha)\rrangle = |0\rangle \otimes |\tilde re^{\imath \theta}\rangle$, $t$: clock of a Galilean observer \end{minipage}}
  \end{tabular}
  \end{center}
  with $g(t',r)=t'-r_S(-2\sqrt{r/r_S} + \ln\frac{\sqrt{r/r_S}+1}{\sqrt{r/r_S}-1})$. This is due to the fact that the only possible changes of coordinates in $\mathfrak M$ are the external gauge changes $\tilde X^i = {J^i}_j X^j$ to have a transformation inner to $\mathfrak X$.
\end{example}

\subsubsection{Geodesics}
The geodesics for $\gamma$ ($\ddot s^a + \breve \Gamma^a_{bc} \dot s^b \dot s^s = 0$, with $\breve \Gamma^a_{bc} = e^a_i \frac{\partial^2 x^i}{\partial s^b \partial s^c}$), i.e. the geodesics minimising the length on $M_\Lambda$ measured with $\gamma$, are the path minimising the cumulated energy uncertainty since $d\ell^2_s = \omega_{x(s+ds)}(\pmb{d\ell^2}_{x(s)})$ is the instantaneous energy uncertainty when the measure presents a small misalignment $ds$ with respect to the probe. The strongly adiabatic dynamics following minimising geodesics on $M_\Lambda$ is then such that the energy uncertainty due to the slipping is minimal.\\

Now we want consider the geodesics of the eigen spacetime $\mathcal M_\Lambda$. The tetrads of $\mathcal M_\Lambda$ are
\begin{eqnarray}
  e^i_a & = & \frac{\partial x^i}{\partial s^a} \label{tetrad1} \\
  e^i_0 & = & \gamma^{ab}A_a \frac{\partial x^i}{\partial s^b} \\
  e^0_a & = & 0 \\
  e^0_0 & = & 1 \label{tetrad4}
\end{eqnarray}
(the inverse tetrads being $e^a_i = \gamma^{ad} \delta_{ij} \frac{\partial x^j}{\partial s^d}$, $e^0_i=0$, $e^0_0=1$ and $e^a_0 = - \gamma^{ab} A_b$). This defines the following Christoffel symbols:
\begin{eqnarray}
  \Gamma^\alpha_{\beta \gamma} & = & 0 \\
  \Gamma^a_{b0} & = & e^a_i \partial_b e^i_0 \\
  \Gamma^0_{bc} & = & 0 \\
  \Gamma^a_{bc} & = & e^a_i \partial_b e^i_c = \breve \Gamma^a_{bc}
\end{eqnarray}
The geometry of $\mathcal M_\Lambda$ is not torsion free due to $\Gamma^a_{b0} \not= \Gamma^a_{0b} = 0$. Let $\breve \Gamma^a_{bc} = \frac{\partial^2 x^i}{\partial s^b \partial s^c} \delta_{ij}\gamma^{ad}\frac{\partial x^j}{\partial s^d}$ and $\breve \Gamma^0_{b0}= \breve \Gamma^a_{b0} = \breve \Gamma^0_{bc} = \breve \Gamma^0_{00} = 0$ be the Christoffel symbols induced by $dt^2-\gamma_{ab}ds^ads^b$ (the geometry without the geometric phase generator effects). The contorsion of $\mathcal M_\Lambda$ is $\kappa^a_{b0}  =  \Gamma^a_{b0} - \breve \Gamma^a_{b0} =  e^a_i \partial_b e^i_0$. The autoparallel geodesics are then defined by
\begin{eqnarray}
  & & \ddot s^{\alpha} + \Gamma^\alpha_{\beta \gamma} \dot s^\beta \dot s^\gamma = 0 \\
  & \iff & \ddot s^\alpha + \breve \Gamma^\alpha_{\beta\gamma} \dot s^\beta \dot s^\gamma = -\kappa^\alpha_{\beta\gamma} \dot s^\beta \dot s^\gamma
\end{eqnarray}
We can then say that the probe following an autoparallel geodesic is deviated from the extremal geodesics by a contorsional force. $A = \delta_{ij} e^i_0 dx^i$ (with $dx^i = \frac{\partial x^i}{\partial s^a} ds^a$, we recall that $d$ stands for the derivative of $M_\Lambda$) and then $F = \delta_{ij} de^i_0 \wedge dx^j$, more precisely $F_{ab} = \delta_{ij} \partial_{[a} e^i_0 \frac{\partial x^j}{\partial s^{b]}}$. But $\gamma_{ac} \kappa^c_{b0} = \delta_{ij} \partial_b e^i_0 \frac{\partial x^j}{\partial s^a}$. It follows that:
  \begin{equation}
    F_{ab} = \gamma_{[bc} \kappa^c_{a]0}
  \end{equation}
  The usual interpretation of the strong adiabatic transport is that this one is equivalent to the transport of a charged particle in $M_\Lambda$ where lives a magnetic field $F$ of potential $A$ \cite{Shapere, Bohm}. The strong adiabatic transport formula being $|\Psi \rrangle = e^{-\imath \int_{\mathscr C} A} |\Lambda \rrangle = e^{\imath S_A}|\Lambda \rrangle$ where $S_A = -\int_0^t A_a \cdot \frac{ds^a}{dt}$ is the classical action of the interaction of a particle (of negative unit charge) with a magnetic field, the adiabatic approximation can be seen as a kind of a semi-classical approximation (by analogy with the WKB ansatz for a wave function $\psi = e^{\imath S} R$). We can then interpret $F$ as a magnetic field in $M_\Lambda$, and so the contorsional force $f^a_\kappa = - \kappa^a_{b0} \dot s^b \dot t$ is a Laplace-like force acting on the virtual particle representing the probe. The contorsion deviates the geodesics as the magnetic Laplace force deviates the charged particle trajectories.

\begin{prop}
  If $t\mapsto x(s(t)) \in M_\Lambda$ is a geodesic for the strong adiabatic transport on $\mathcal M_\Lambda$ ($\ddot s^a + \breve \Gamma^a_{bc} \dot s^b \dot s^c + q \kappa^a_{b0} \dot s^b=0$), then the linear momentum $p = (\dot x^i + q \gamma^{ab} \frac{\partial x^i}{\partial s^a} A_b) \partial_i \in T_xM_\Lambda$ is conserved (with $q = \frac{dt}{d\tau}$ a constant).
\end{prop}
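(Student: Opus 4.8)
The plan is to recast the statement as a covariant-constancy (parallel-transport) claim on $M_\Lambda$. Write $A^a \equiv \gamma^{ac}A_c$ and $P^a \equiv \dot s^a + qA^a$, so that $p = P^a\,\frac{\partial x^i}{\partial s^a}\partial_i$, and ``$p$ is conserved'' is to be read in the only sense available for a vector field constrained to the moving fibre $T_{x(\tau)}M_\Lambda$, namely $\frac{DP^a}{d\tau} \equiv \dot P^a + \breve\Gamma^a_{bc}\dot s^b P^c = 0$ for the Levi--Civita connection $\breve\Gamma$ of $\gamma$; equivalently, pushing forward by the triads $e^i_a = \frac{\partial x^i}{\partial s^a}$, the vector $\dot p^i$ is purely normal to $M_\Lambda$ (i.e. $\pi_x(\dot p) = 0$).

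The first step is to rewrite the contorsion coefficient $\kappa^a_{b0} = e^a_i\partial_b e^i_0$ in terms of $A^a$. Using $e^i_0 = A^d\frac{\partial x^i}{\partial s^d}$, the inverse triad $e^a_i = \gamma^{ad}\delta_{ij}\frac{\partial x^j}{\partial s^d}$, the orthonormality $\delta_{ij}\frac{\partial x^i}{\partial s^c}\frac{\partial x^j}{\partial s^d} = \gamma_{cd}$, and the embedding formula $\breve\Gamma^a_{bc} = \gamma^{ad}\delta_{ij}\frac{\partial x^j}{\partial s^d}\frac{\partial^2 x^i}{\partial s^b\partial s^c}$, the expansion of $\kappa^a_{b0}$ separates into the piece $\partial_b A^a$ (the derivative hitting $A^d$) and the piece $\breve\Gamma^a_{bc}A^c$ (the derivative hitting $\partial x^i/\partial s^d$), so that $\kappa^a_{b0} = \partial_b A^a + \breve\Gamma^a_{bc}A^c = \nabla_b A^a$.

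The second step is a one-line differentiation. Since $q = dt/d\tau$ is constant — this is the $\alpha=0$ component of the autoparallel equation, $\ddot t + \Gamma^0_{\beta\gamma}\dot s^\beta\dot s^\gamma = 0$ with every $\Gamma^0_{\beta\gamma}=0$ — we have $\dot P^a = \ddot s^a + q(\partial_c A^a)\dot s^c$. Substituting the spatial autoparallel equation $\ddot s^a = -\breve\Gamma^a_{bc}\dot s^b\dot s^c - q\kappa^a_{b0}\dot s^b$ and the identity from the first step, the two $\partial A$ terms cancel and one obtains $\dot P^a = -\breve\Gamma^a_{bc}\dot s^b(\dot s^c + qA^c) = -\breve\Gamma^a_{bc}\dot s^b P^c$, i.e. $\frac{DP^a}{d\tau}=0$. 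Pushing forward then gives $\dot p^i = ({\delta^i}_k - {[\pi_x]^i}_k)\frac{\partial^2 x^k}{\partial s^b\partial s^c}\dot s^b P^c$, which lies in $N_xM_\Lambda$, completing the argument.

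The only point requiring care — the ``hard part'', such as it is — is the bookkeeping of the first step: one must recognise that the term generated by $\partial_b(\partial x^i/\partial s^d)$ reassembles precisely into $\breve\Gamma^a_{bc}A^c$ rather than into some other contraction, and this is exactly where the embedding formula for $\breve\Gamma$ and the orthonormality of the triads are both needed. I would close with the interpretive comment already suggested in the text: $A$ plays the role of a vector potential on $M_\Lambda$, the contorsion $\kappa$ its Lorentz-like force, and $P^a = \dot s^a + qA^a$ is the associated conserved ``kinetic-plus-gauge'' momentum, parallel-transported along the autoparallel.
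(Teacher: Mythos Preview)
Your proof is correct. It is the same result as the paper's---that the tangential part of $\dot p$ vanishes---but the organisation differs. The paper works extrinsically: it computes $\ddot x^i$, contracts with the inverse triad to recover $\ddot s^a + \breve\Gamma^a_{bc}\dot s^b\dot s^c$, and then rewrites the autoparallel equation directly as $\pi_{xi}\,\dot p^i = 0$; it also checks $p\in T_xM_\Lambda$ via $\vec n_x\cdot\vec p = 0$ using $\llangle\Lambda|\partial_b\slashed D_x|\Lambda\rrangle=0$, which in your intrinsic set-up is automatic since $p^i = P^a e^i_a$ is tangent by construction. Your route isolates the single identity $\kappa^a_{b0} = \nabla_b A^a$ and then reduces everything to the one-line parallel-transport computation $\frac{DP^a}{d\tau}=0$; this makes the ``magnetic'' interpretation ($A$ as vector potential, $P^a=\dot s^a+qA^a$ as canonical momentum) completely transparent, at the cost of hiding the embedding. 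The two arguments are equivalent in content; yours is the cleaner packaging.
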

\begin{proof}
  The geodesic equation for $t$ being $\ddot t = 0$, we can set $t=q \tau$.
  \begin{equation}
    \ddot x^i = \frac{d}{d\tau}\left(\frac{\partial x^i}{\partial s^b} \dot s^b\right) = \frac{\partial x^i}{\partial s^b} \ddot s^b + \frac{\partial^2 x^i}{\partial s^b \partial s^c} \dot s^b \dot s^c
  \end{equation}
  It follows that
  \begin{eqnarray}
    \delta_{ij} \gamma^{ad} \frac{\partial x^j}{\partial s^d} \ddot x^i & = & \delta_{ij}\gamma^{ad}\frac{\partial x^i}{\partial s^b}\frac{\partial x^j}{\partial s^d} \ddot s^b + \delta_{ij} \gamma^{ad} \frac{\partial x^j}{\partial s^d} \frac{\partial^2 x^i}{\partial s^b \partial s^c} \dot s^b \dot s^c \\
      & = & \ddot s^a + \breve \Gamma^a_{bc} \dot s^b \dot s^c
  \end{eqnarray}
  We have then
  \begin{eqnarray}
    & & \ddot s^a + \breve \Gamma^a_{bc} \dot s^b \dot s^c + \kappa^a_{b0} \dot s^b q = 0 \\
    & \iff & \delta_{ij} \gamma^{ad} \frac{\partial x^i}{\partial s^d} \left(\ddot x^i + \frac{\partial}{\partial s^b}\left(\gamma^{ce} A_c \frac{\partial x^i}{\partial s^e} \right) \dot s^b q\right) = 0 \\
    & \iff & \pi_{xi} \frac{d}{d\tau}\left(\dot x^i + q \gamma^{ab}A_a \frac{\partial x^i}{\partial s^b}\right) = 0
  \end{eqnarray}
  Let $p^i = \dot x^i + q \gamma^{ab}A_a \frac{\partial x^i}{\partial s^b}$, we have $- p^i\sigma_i = \frac{d\slashed D_x}{dt} + q \gamma^{ab}A_a \frac{\partial \slashed D_x}{\partial s^b}$. But $\slashed D_x|\Lambda\rrangle = 0 \Rightarrow (\partial_b \slashed D_x)|\Lambda \rrangle = \slashed D_x \partial_b|\Lambda \rrangle = 0$ and then $\llangle \Lambda|(\partial_b \slashed D_x)|\Lambda \rrangle=0$. It follows that $p^i \llangle \Lambda|\sigma_i|\Lambda \rrangle = 0 \iff \vec p \cdot \vec n_x = 0$, and so $p^i \partial_i \in T_xM_\Lambda$, inducing than $\pi_x p = p$ ($\pi_x$ being the orthogonal projection onto $T_xM_\Lambda$).
\end{proof}
The conservation of $p^i = \dot x^i +qA^i$ is in accordance with the interpretation of the contorsional force as a Laplace-like force. It is equivalent to the conservation of $\frac{d\slashed D_x}{d\tau} + q \gamma^{ab}A_a \frac{\partial \slashed D_x}{\partial s^b}$.

\subsection{The weak adiabatic regime}
\subsubsection{Lorentz connection}
The inner product $\langle \bullet|\bullet \rangle_*$ defines another natural gauge potential on $M_\Lambda$:
\begin{equation}
  \mathfrak A \rho_\Lambda =  -\imath \langle \Lambda|d|\Lambda \rangle_* = -\imath \tr_{\mathcal H} |d\Lambda\rrangle\llangle \Lambda| \label{nonabAeq}
\end{equation}
$\rho_\Lambda = \tr_{\mathcal H}|\Lambda\rrangle \llangle \Lambda|$ being the $C^*$-norm of the quasicoherent state, and $d$ denoting the differential of $M_\Lambda$. Note that $\tr(\rho_\Lambda \mathfrak A) = A$, $A$ is the statistical mean value of $\mathfrak A$ in the mixed state $\rho_\Lambda$. We can then see $\mathfrak A$ as a gauge potential random variable depending on the statistical distribution of quantum states of $\mathbb C^2$ defined by the density matrix $\rho_\Lambda$.\\
$\mathfrak A$ is in general not self-adjoint but $\tr(\rho_\Lambda(\mathfrak A^\dagger-\mathfrak A))=0$, $\mathfrak A$ is then almost surely self-adjoint for the probability law defined by $\rho_\Lambda$. We write then $\mathfrak A \in \Omega^1(M_\Lambda,\mathfrak u(2)_{a.s.})$ where $\mathfrak u(2)$ denotes the set of self-adjoint operators of $\mathbb C^2$.\\

$\mathfrak A$ is the generator of the geometric phases accumulated during a weak adiabatic transport of $|\Lambda \rrangle$ along a path $\mathscr C$ on $M_\Lambda$, i.e. a transport adiabatic with respect to the quantum degree of freedom associated with $\mathcal H$ but not with the one associated with $\mathbb C^2$. If $|\Lambda \rrangle$ is weakly nondegenerate along the path $\mathscr C$ we have (see \cite{Viennot3} and appendix A in \cite{Viennot2})
\begin{eqnarray}
  |\Psi(T) \rrangle & = & e^{-\imath \int_0^T \lambda_0(t)dt} \Ped^{-\imath \int_{\mathscr C} \mathfrak A} |\Lambda(y)\rrangle + \mathcal O(\max(\epsilon,1/T)) \\
  & & \qquad \qquad \qquad \text{if } |\Psi(0)\rrangle = |\Lambda(x)\rrangle \nonumber
\end{eqnarray}
where $T$ is the total duration and $\epsilon$ is a perturbative magnitude of the transitions from the quasicoherent state to the other eigenstates induced by $\left(\Ped^{-\imath \int_{\mathscr C} \mathfrak A}\right)^{-1} \slashed D_x \Ped^{-\imath \int_{\mathscr C} \mathfrak A}$. $\Ped$ denotes the path anti-ordered exponential ($\frac{d}{ds} \Ped^{-\imath \int \mathfrak A} = -\imath \Ped^{-\imath \int \mathfrak A} \mathfrak A_a \dot s^a$ for $s$ a curvilinear coordinate along $\mathscr C$, $\dot s^a \equiv \frac{ds^a}{ds}$). The dynamical phase $\lambda_0(t) = \llangle \Lambda(x(t))|\left(\Ted^{-\imath \int_0^t \mathfrak A_a \dot s^a dt}\right)^{-1} \slashed D_{x(t)} \Ted^{-\imath \int_0^t \mathfrak A_a \dot s^a dt} |\Lambda(x(t))\rrangle$ is zero if $|\Lambda \rrangle$ is separable \cite{Viennot2}. The condition of parallel transport associated with the weak adiabatic approximation introduced in \cite{Viennot1} can be rewritten as
\begin{equation}
  \langle \Psi | \dot \Psi \rangle_* = 0
\end{equation}

\begin{prop}
  $\Ped^{-\imath \int_{\mathscr C} \mathfrak A} \in U(2)_{a.s.}$
\end{prop}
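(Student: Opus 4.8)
The plan is to show that $\Ped^{-\imath \int_{\mathscr C} \mathfrak A}$ is almost surely unitary with respect to $\rho_\Lambda$ by establishing the appropriate differential identity along the path and integrating it. Write $U(s) = \Ped^{-\imath \int_{\mathscr C_s} \mathfrak A}$ for the anti-ordered exponential accumulated along the portion of $\mathscr C$ up to curvilinear parameter $s$, so that $\frac{d}{ds} U = -\imath U \mathfrak A_a \dot s^a$ with $U(0) = \id$. First I would compute the derivative of $U^\dagger U$, obtaining $\frac{d}{ds}(U^\dagger U) = \imath U^\dagger (\mathfrak A_a^\dagger - \mathfrak A_a) \dot s^a U$. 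In the commutative setting this would vanish because $\mathfrak A$ is self-adjoint; here $\mathfrak A$ is only \emph{almost surely} self-adjoint, $\tr(\rho_\Lambda(\mathfrak A^\dagger - \mathfrak A)) = 0$, so one cannot conclude $U^\dagger U = \id$. Instead the target is the weaker statement $\tr(\rho_\Lambda(U^\dagger U - \id)) = 0$, i.e. membership in $U(2)_{a.s.}$ with respect to $\rho_\Lambda$.

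The key observation I would exploit is that $\rho_\Lambda$ is itself transported covariantly by $U$ along the path. Concretely, from $|\Lambda(x(s))\rrangle$ and the definition (\ref{nonabAeq}) of $\mathfrak A$, the weak parallel transport condition $\langle \Psi|\dot\Psi\rangle_* = 0$ together with $\rho_\Lambda = \tr_{\mathcal H}|\Lambda\rrangle\llangle\Lambda|$ should yield an evolution equation for $\rho_{\Lambda}(x(s))$ of Lax/isospectral type, something of the form $\frac{d}{ds}\rho_\Lambda = -\imath[\mathfrak A_a \dot s^a, \rho_\Lambda]$ up to terms that vanish under the trace against the relevant quantities — this is exactly the content that makes $\mathfrak A$ play the role of a connection preserving the statistical structure $\rho_\Lambda$. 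Granting this, I would introduce $\rho_s \equiv U(s)^\dagger \rho_\Lambda(x(s)) U(s)$ (transport of the initial density matrix expressed in the moving frame) and show $\frac{d}{ds}\rho_s$ reduces to a term proportional to $\mathfrak A^\dagger - \mathfrak A$ sandwiched appropriately, whose trace against $\id$ vanishes by almost-sure self-adjointness. The upshot is that $\tr(\rho_\Lambda(x(s)) U(s) \bullet U(s)^\dagger)$ behaves as a genuine state-preserving conjugation, which is precisely the assertion $U(s) \in U(2)_{a.s.}$.

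To make this rigorous I would argue as follows. Set $g(s) = \tr\big(\rho_\Lambda(x(0))\,(U(s)^\dagger U(s) - \id)\big)$ — but here one must be careful which density matrix appears, so in fact the cleaner route is: show that for every $s$ the operator $U(s)$ satisfies $U(s)^\dagger \rho_\Lambda(x(s)) U(s) = \rho_\Lambda(x(0))$ (covariant constancy of the transported density matrix, which follows by differentiating and using the Lax-type equation above, since both sides agree at $s=0$). Then $\tr(\rho_\Lambda(x(0))(U(s)^\dagger U(s) - \id)) = \tr(\rho_\Lambda(x(s)) U(s)U(s)^\dagger - \rho_\Lambda(x(0)))$; unwinding, $U(s)^\dagger U(s) - \id$ is almost-surely zero once one also checks the analogous identity with $UU^\dagger$, and this is what $U(2)_{a.s.}$ means by definition. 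Alternatively, one discretises $\mathscr C$ into infinitesimal steps $x_n \to x_{n+1}$, for each of which $U$ is, to first order, $\id - \imath \mathfrak A_a \delta s^a$; the product of such factors stays in $U(2)_{a.s.}$ because at each step $(\id - \imath \mathfrak A_a\delta s^a)^\dagger(\id - \imath \mathfrak A_a\delta s^a) = \id + \imath(\mathfrak A^\dagger - \mathfrak A)_a\delta s^a + \mathcal O(\delta s^2)$ and $\tr(\rho_{\Lambda}(\mathfrak A^\dagger - \mathfrak A)) = 0$, with the crucial point that $\rho_\Lambda$ is conjugated along the way so the almost-sure condition is preserved step by step.

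The main obstacle is precisely the bookkeeping of \emph{which} density matrix the ``almost surely'' refers to at each point of $\mathscr C$: the property $\mathfrak A^\dagger = \mathfrak A$ holds only $\rho_\Lambda(x)$-a.s. at the \emph{current} point $x$, whereas the accumulated operator $U(s)$ acts on states prepared at the \emph{initial} point. Reconciling these requires showing that $U(s)$ intertwines $\rho_\Lambda(x(0))$ and $\rho_\Lambda(x(s))$ — i.e. that the weak adiabatic transport is compatible with the $C^*$-module inner product $\langle\bullet|\bullet\rangle_*$, which is exactly the parallel transport condition $\langle\Psi|\dot\Psi\rangle_* = 0$ quoted just before the proposition. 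Once that compatibility is in hand, the almost-sure unitarity is essentially automatic; without it, the naive estimate fails. I would therefore spend most of the proof establishing the intertwining relation, and treat the conclusion $U \in U(2)_{a.s.}$ as a short corollary.
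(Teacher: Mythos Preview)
Your approach is essentially the paper's: establish covariant constancy of $\rho_\Lambda$ under the transport by $u \equiv \Ped^{-\imath\int_{\mathscr C}\mathfrak A}$, then read off almost-sure unitarity by taking a trace. Two corrections turn your sketch into a clean argument.

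First, the evolution of $\rho_\Lambda$ is not ``commutator-type up to trace-vanishing terms'': it follows \emph{exactly} from the defining relation $\mathfrak A\rho_\Lambda = -\imath\tr_{\mathcal H}|d\Lambda\rrangle\llangle\Lambda|$ and its adjoint $\rho_\Lambda\mathfrak A^\dagger = \imath\tr_{\mathcal H}|\Lambda\rrangle\llangle d\Lambda|$, giving $\dot\rho_\Lambda = \imath\mathfrak A_a\dot s^a\,\rho_\Lambda - \imath\rho_\Lambda\,\mathfrak A_a^\dagger\dot s^a$ with no remainder. No appeal to the almost-sure self-adjointness of $\mathfrak A$ is needed anywhere.

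Second, your intertwining has the factors in the wrong order. With $\dot u = -\imath u\,\mathfrak A_a\dot s^a$ one finds that it is $u\rho_\Lambda u^\dagger$ (not $u^\dagger\rho_\Lambda u$) whose $s$-derivative is $u\bigl(-\imath\mathfrak A_a\dot s^a\rho_\Lambda + \dot\rho_\Lambda + \imath\rho_\Lambda\mathfrak A_a^\dagger\dot s^a\bigr)u^\dagger = 0$, hence $u(s)\rho_\Lambda(x(s))u(s)^\dagger = \rho_\Lambda(x(0))$. Taking the trace gives $\tr\bigl(\rho_\Lambda(x(s))\,u^\dagger u\bigr) = \tr\rho_\Lambda(x(0)) = 1$ directly, which is exactly $u\in U(2)_{a.s.}$ with respect to $\rho_\Lambda(x(s))$; there is no need to check $uu^\dagger$ separately or to discretise.
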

\begin{proof}
  Let $u \equiv \Ped^{-\imath \int_{\mathscr C} \mathfrak A}$.
  \begin{eqnarray}
    \frac{d}{ds} (u\rho_\Lambda u^\dagger) & = & \dot u \rho_\Lambda u^\dagger + u \dot \rho_\Lambda u^\dagger + u \rho_\Lambda \dot u^\dagger\\
    & = & u (-\imath \mathfrak A_a \dot s^a \rho_\Lambda + \dot \rho_\Lambda + \imath \rho_\Lambda \mathfrak A^\dagger_a \dot s^a) u^\dagger
  \end{eqnarray}
  But
  \begin{eqnarray}
    \dot \rho_\Lambda & = & \tr_{\mathcal H} |\dot \Lambda \rrangle \llangle \Lambda| + \tr_{\mathcal H} |\Lambda \rrangle \llangle \dot \Lambda| \\
    & = & \imath \mathfrak A_a \dot s^a \rho_\Lambda - \imath \rho_\Lambda \mathfrak A^\dagger_a \dot s^a
  \end{eqnarray}
  and then
  \begin{eqnarray}
    \frac{d}{ds} (u\rho_\Lambda u^\dagger) = 0 & \Rightarrow & u(s)\rho_\Lambda(s) u^\dagger(s) = \rho_\Lambda(0) \\
    & \Rightarrow & \tr \left(\rho_\Lambda(s) u^\dagger(s) u(s) \right) = 1 \\
    & \Rightarrow & u(s) \in U(2)_{a.s.}
  \end{eqnarray}
\end{proof}

\begin{prop}\label{gauge2}
  Under an inner gauge transformation $(h,u) \in U(1) \times U_x \subset U(1) \times SU(2)_{a.s.}$, the gauge potential becomes:
  \begin{equation}
    \tilde \mathfrak A  =  u \mathfrak A u^{-1} -\imath d\ln h - \imath du u^{-1} 
  \end{equation}
  Under a local external gauge transformation $J \in \underline{SO(3)}_{M_\Lambda}$, the gauge potential becomes:
  \begin{equation}
    \hat \mathfrak A  =  u_J^{-1} J^* \mathfrak A u_J + \imath u_J^{-1}du_J
  \end{equation}
  where $u_J \in SU(2)$ is the transformations associated with $J$ viewed as a linear map of $\mathbb R^3$: $(Jy)^i = {J^i}_j y^j$. $J^* : \Omega^1\mathbb R^3 \to \Omega^1\mathbb R^3$ is defined by $(J^* k)(x) = k_a(J^{-1}x) ({(J^{-1})^i}_j dx^j - {(J^{-1}dJJ^{-1})^i}_j x^j)$, $\forall k \in \Omega^1\mathbb R^3$.
\end{prop}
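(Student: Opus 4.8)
The plan is to follow the proof of Property~\ref{gauge1} almost verbatim, the essential difference being that here we must \emph{not} take the total trace over $\mathbb C^2$: the object $-\imath\tr_{\mathcal H}|d\Lambda\rrangle\llangle\Lambda|$ must be kept as an element of $\Omega^1(M_\Lambda,\mathfrak u(2)_{a.s.})$. Recall that $\mathfrak A$ is characterised by the defining relation $\mathfrak A\rho_\Lambda=-\imath\tr_{\mathcal H}|d\Lambda\rrangle\llangle\Lambda|$ and that $\rho_\Lambda=\tr_{\mathcal H}|\Lambda\rrangle\llangle\Lambda|$. Hence it suffices, for each gauge change, to compute $-\imath\tr_{\mathcal H}|d\tilde\Lambda\rrangle\llangle\tilde\Lambda|$ and $\tilde\rho_\Lambda=\tr_{\mathcal H}|\tilde\Lambda\rrangle\llangle\tilde\Lambda|$ for the transformed quasicoherent state and to read off $\tilde{\mathfrak A}$ by factoring $\tilde\rho_\Lambda$ out on the right. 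Throughout one uses that $h$, $u$ and $u_J$ act only on the $\mathbb C^2$ factor, hence commute with $\tr_{\mathcal H}$, that $|h|^2=\id$, and that $u^\dagger u=\id$ $\rho_\Lambda$-a.s.

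For the inner change $|\tilde\Lambda\rrangle=hu|\Lambda\rrangle$ with $h\in U(1)$, $u\in U_x$, I would differentiate $d|\tilde\Lambda\rrangle=(dh)u|\Lambda\rrangle+h(du)|\Lambda\rrangle+hu\,d|\Lambda\rrangle$, multiply by $\llangle\tilde\Lambda|=\bar h\llangle\Lambda|u^{-1}$, and take $\tr_{\mathcal H}$, using $\tr_{\mathcal H}|d\Lambda\rrangle\llangle\Lambda|=\imath\mathfrak A\rho_\Lambda$ and $\tilde\rho_\Lambda=u\rho_\Lambda u^{-1}$. This gives
\[
\tilde{\mathfrak A}\tilde\rho_\Lambda=u\mathfrak A\rho_\Lambda u^{-1}-\imath(d\ln h)\,\tilde\rho_\Lambda-\imath(du)\rho_\Lambda u^{-1}=\big(u\mathfrak A u^{-1}-\imath\,d\ln h-\imath\,du\,u^{-1}\big)\tilde\rho_\Lambda,
\]
after inserting $u^{-1}u=\id$ in the first and third terms to pull $\tilde\rho_\Lambda$ to the right, which yields the first formula. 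For the external change I would start from eq.~(\ref{extchangeLambda}), $|\hat\Lambda(x)\rrangle=u_J^{-1}|\Lambda(J^{-1}x)\rrangle$, and use the chain-rule identity $d\big(F(J^{-1}x)\big)=(J^*dF)(x)$ (valid for vector- or operator-valued $F$, as in the proof of Property~\ref{gauge1}) together with $d(u_J^{-1})=-u_J^{-1}(du_J)u_J^{-1}$. Then $|d\hat\Lambda\rrangle=-u_J^{-1}(du_J)u_J^{-1}|\Lambda(J^{-1}x)\rrangle+u_J^{-1}(J^*d|\Lambda\rrangle)(x)$, $\llangle\hat\Lambda|=\llangle\Lambda(J^{-1}x)|u_J$ and $\hat\rho_\Lambda=u_J^{-1}\rho_\Lambda(J^{-1}x)u_J$; since $\tr_{\mathcal H}$ commutes with $J^*$ (which touches only the form index) and with $u_J$, and since $\rho_\Lambda(J^{-1}x)$ carries no form index so $J^*(\mathfrak A\rho_\Lambda)=(J^*\mathfrak A)\,\rho_\Lambda(J^{-1}x)$, one obtains $\hat{\mathfrak A}\hat\rho_\Lambda=\big(u_J^{-1}(J^*\mathfrak A)u_J+\imath\,u_J^{-1}du_J\big)\hat\rho_\Lambda$, i.e. the second formula.

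The genuinely delicate step — and the only place this departs from Property~\ref{gauge1}, where the total trace removed all ambiguity — is the passage from an identity of the form $\tilde{\mathfrak A}\tilde\rho_\Lambda=B\tilde\rho_\Lambda$ to $\tilde{\mathfrak A}=B$. Since $\rho_\Lambda$ (hence $\tilde\rho_\Lambda$) need not be invertible on $\mathbb C^2$, this conclusion is legitimate only modulo the right-annihilator of $\rho_\Lambda$, that is, in the $\rho_\Lambda$-almost-sure sense; but this is precisely the sense in which $\mathfrak A$ is defined as an element of $\Omega^1(M_\Lambda,\mathfrak u(2)_{a.s.})$, so nothing is lost, and I would state the two formulas as equalities of a.s.-classes. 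One should also check, as a consistency remark, that the right-hand sides are again almost surely self-adjoint — using that $u\in U_x$ (resp. $u_J\in SU(2)$) leaves $\rho_\Lambda$ invariant up to a.s. equivalence in the cases where $U_x$ is nontrivial — and that applying $\tr(\tilde\rho_\Lambda\,\bullet)$ to both identities, together with $\tr(\rho_\Lambda\,du\,u^{-1})=\omega_x(u^{-1}du)$, recovers exactly Property~\ref{gauge1}. The remaining manipulations are routine and parallel those already carried out there.
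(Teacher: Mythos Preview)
Your proposal is correct and follows essentially the same route as the paper: differentiate the transformed quasicoherent state, form $-\imath|d\tilde\Lambda\rrangle\llangle\tilde\Lambda|$, take the partial trace over $\mathcal H$, and identify the result as (claimed expression)$\cdot\tilde\rho_\Lambda$. Your explicit discussion of why the cancellation of $\tilde\rho_\Lambda$ on the right is only legitimate in the $\rho_\Lambda$-a.s.\ sense is a point the paper leaves implicit (it simply says ``we find the formula for $\tilde{\mathfrak A}$ multiplied on the right by $\tilde\rho_\Lambda$''), so your version is in fact slightly more careful there.
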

\begin{proof}
  $|\tilde \Lambda \rrangle = hu|\Lambda \rrangle \Rightarrow d|\tilde \Lambda \rrangle = hud|\Lambda \rrangle + dh \times u|\Lambda \rrangle + du \times h|\Lambda \rrangle$. So $-\imath |d\tilde \Lambda \rrangle \llangle \tilde \Lambda| = u (-\imath |d\Lambda\rrangle \llangle \Lambda|) u^{-1} -\imath h^{-1}dh \times u|\Lambda\rrangle\llangle \Lambda|u^{-1} -\imath du u^{-1} \times u|\Lambda\rrangle \llangle \Lambda|u^{-1}$. By taking the partial trace over $\mathcal H$ we find the formula for $\tilde \mathfrak A$ multiplied on the right by $\tilde \rho_\Lambda = u\rho_\Lambda u^{-1} = \tr_{\mathcal H}|\tilde \Lambda \rrangle \llangle \tilde \Lambda|$.\\
  We recall that we have $|\hat \Lambda(x) \rrangle = u_J^{-1} |\Lambda(J^{-1}x)\rrangle$. Then $-\imath |d\hat \Lambda \rrangle \llangle \hat \Lambda| = \imath u_J^{-1} du_J u_J^{-1}|\Lambda \rrangle \llangle \Lambda|u_J -\imath u_J^{-1} J^* |d\Lambda \rrangle \llangle \Lambda| u_J$ since by construction $df(J^{-1}x) = (J^* df)(x)$ for any function $f$. By taking the partial trace on $\mathcal H$, we find the formula for $\hat \mathfrak A$ multiplied on right by $\hat \rho_\Lambda = u_J^{-1} \rho_\Lambda u_J$.
\end{proof}

$\mathfrak A$ defines the adiabatic fake curvature $\mathfrak F = d\mathfrak A - \imath \mathfrak A \wedge \mathfrak A \in \Omega^2(M_\Lambda,\mathfrak{u}(2)_{a.s.})$ \cite{Viennot1}.\\

\begin{example}{2}{Fuzzy surface plots}
 $A$ and $F$ are fields onto the eigenmanifold $M_\Lambda$ spanned by the mean values of the coordinate observables $X^i$. These fields onto the mean value of $\mathfrak M$ are also mean values of the quantum observables $\mathfrak A$ and $\mathfrak F$ in the eigen density matrix $\rho_\Lambda$ ($A = \tr(\rho_\Lambda \mathfrak A)$ and $F = \tr(\rho_\Lambda \mathfrak F)$). These nonabelian gauge fields are observables for the spin degree of freedom. By definition $\mathfrak A \rho_\Lambda = -\imath \tr_{\mathcal H}|d\Lambda \rrangle \llangle \Lambda|$, and so for the fuzzy plane we have:
  \begin{equation}
    \mathfrak A = A \left(\begin{array}{cc} 1 & 0 \\ 0 & 0 \end{array} \right)
  \end{equation}
  where $\rho_\Lambda = \left(\begin{array}{cc} 1 & 0 \\ 0 & 0 \end{array} \right) = |0\rangle \langle 0|$ is the projection onto the first state of the canonical basis of $\mathbb C^2$ ($\rho_\Lambda$ is a pure state). $A = -\imath \frac{\bar \alpha d\alpha - \alpha d\bar \alpha}{2}$. Note that $\mathfrak A$ is not single defined, since $\mathfrak A' = \mathfrak A + \xi$ is also solution of eq. (\ref{nonabAeq}) if $|0\rangle \in \ker \xi$. We can then think $\xi$ as a gauge change of a third kind (the two first ones being the internal and external gauge changes). But by definition $\xi$ is almost surely zero, $\tr(\rho_\Lambda \xi)=0$ and plays no fundamental role. We can then set $\xi=0$ without lost of generality. The interpretation of the observable $\mathfrak A$ is obvious, the magnetic potential is $A$ for the spin state up $|0\rangle$ (eigenvector of $\mathfrak A$) and the magnetic potential is set to zero for the spin state down $|1\rangle$, since $|0\rangle$ is the only one spin state appearing in the quasicoherent state of the fuzzy plane. For the fuzzy elliptic paraboloid we have
  \begin{equation}
    \mathfrak A \rho_\Lambda = A \left(\begin{array}{cc} 1 & -\epsilon \bar \alpha \\ -\epsilon \alpha & 0 \end{array} \right) + \imath \epsilon d\alpha \left(\begin{array}{cc} 0 & 0 \\ 1 & 0 \end{array} \right) + \mathcal O(\epsilon^2)
  \end{equation}
  where the pure eigenstate $\rho_\Lambda = \left(\begin{array}{cc} 1 & -\epsilon \bar \alpha \\ -\epsilon \alpha & 0 \end{array} \right) + \mathcal O(\epsilon^2) = |O_\alpha \rangle \langle O_\alpha|$ is the projection onto $|O_\alpha \rangle = |0\rangle -\epsilon \alpha |1\rangle$. It follows that we can set
  \begin{equation}
    \mathfrak A = A \left(\begin{array}{cc} 1 & -\epsilon \bar \alpha \\ -\epsilon \alpha & 0 \end{array} \right) -\imath \frac{d\alpha}{\alpha} \left(\begin{array}{cc} 0 & 0 \\ 0 & 1 \end{array} \right) + \mathcal O(\epsilon^2)
  \end{equation}
  Note that the second member is almost surely zero at the perturbative first order $\tr(\rho_\Lambda |1\rangle \langle 1|) = \mathcal O(\epsilon^2)$. For the fuzzy hyperbolic paraboloid we have
    \begin{equation}
    \mathfrak A \rho_\Lambda = A \left(\begin{array}{cc} 1 & -\epsilon \alpha \\ -\epsilon \bar \alpha & 0 \end{array} \right) + \imath \frac{\epsilon}{2} \left(\begin{array}{cc} 0 & d\alpha \\ d\bar \alpha & 0 \end{array} \right) + \mathcal O(\epsilon^2)
  \end{equation}
    with $\rho_\Lambda = |O_{\bar \alpha}\rangle \langle O_{\bar \alpha}|$ the projection onto $|O_{\bar \alpha} \rangle = |0\rangle -\epsilon \bar \alpha |1\rangle$. It follows that we can set
    \begin{equation}
    \mathfrak A = A \left(\begin{array}{cc} 1 & -\epsilon \alpha \\ -\epsilon \bar \alpha & 0 \end{array} \right) - \frac{\imath}{2} \left(\begin{array}{cc} \frac{d\alpha}{\alpha} & \frac{d\alpha}{\epsilon|\alpha|^2} \\ 0 & \frac{d\bar \alpha}{\bar \alpha} \end{array} \right) + \mathcal O(\epsilon^2)
  \end{equation}
    Anew the second member is almost surely zero.
\end{example}

For the geometric viewpoint, $\mathfrak A$ defines a Lorentz connection onto the classical spacetime $\mathcal M_\Lambda$:
\begin{propo}
  Let $\Omega^{\mu \nu} \in \Omega^1(M_\Lambda,\mathbb R)$ be the 1-form valued antisymmetric tensor defined by
 \begin{eqnarray}
      \Omega^{jk} & = & -{\varepsilon^{jk}}_i \Re \tr(\sigma^i \mathfrak A) \\
      \Omega^{i0} & = & \Im \tr(\sigma^i \mathfrak A)
 \end{eqnarray}
then $\Omega$ is a Lorentz connection onto $\mathbb R \times M_\Lambda$. It follows that ${R^{\mu \nu}}_{ab}$ is the Riemann curvature tensor of $\mathbb R \times M_\Lambda$ associated with $\Omega$ ($a,b$ standing for indices of local curvilinear coordinates on $M_\Lambda$), with ${R^{jk}}_{ab} = -{\varepsilon^{jk}}_i \Re\tr(\sigma^i \mathfrak F_{ab})$ and ${R^{i0}}_{ab} = \Im\tr(\sigma^i \mathfrak F_{ab})$. The Ricci curvature tensor ${R^\mu}_b = {R^{a\mu}}_{ab} = \gamma^{ac} \delta_{ik} \frac{\partial x^k}{\partial s^c} {R^{i\mu}}_{ab}$ is then ${R^j}_b = -{\varepsilon^{jk}}_i \Re\tr(\sigma_k {\mathfrak F^a}_b) \frac{\partial x^i}{\partial s^a}$ and ${R^0}_b = \Im\tr(\sigma_i {\mathfrak F^a}_b) \frac{\partial x^i}{\partial s^a}$. 
\end{propo}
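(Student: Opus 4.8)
The plan is to realize $\Omega$ as the image of $\mathfrak A$ under the real Lie algebra isomorphism $\mathfrak{sl}(2,\mathbb C)_{\mathbb R}\cong\mathfrak{so}(3,1)$, and then to read off the transformation law, the curvature, and the Ricci contraction from standard properties of this isomorphism. First I would write $\mathfrak A=\mathfrak A^0\,\id+\mathfrak A^i\sigma_i$ with complex-valued $1$-forms $\mathfrak A^0,\mathfrak A^i$ on $M_\Lambda$ (possible since $\id,\sigma_1,\sigma_2,\sigma_3$ span $\mathcal L(\mathbb C^2)$ over $\mathbb C$); from $\tr\sigma^i=0$ and $\tr(\sigma^i\sigma_j)=2\delta^i_j$ one gets $\tr(\sigma^i\mathfrak A)=2\mathfrak A^i$, hence
\begin{equation}
  \Omega^{jk} = -2\,{\varepsilon^{jk}}_i\,\Re\mathfrak A^i, \qquad \Omega^{i0} = 2\,\Im\mathfrak A^i .
\end{equation}
Since the weak adiabatic transport is governed by $\Ped^{-\imath\int\mathfrak A}$, the geometrically relevant connection $1$-form is $-\imath\mathfrak A$: its trace part $-\imath\mathfrak A^0\,\id$ is an abelian $\mathfrak u(1)$ piece (whose $\rho_\Lambda$-average is precisely the Berry form $A=\tr(\rho_\Lambda\mathfrak A)$), while its traceless part $-\imath\mathfrak A^i\sigma_i$ lies $\rho_\Lambda$-almost surely in $\mathfrak{sl}(2,\mathbb C)$. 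Under the usual identification of $\mathbb R^{3,1}$ with Hermitian $2\times2$ matrices $v^\mu\bar\sigma_\mu$ ($\bar\sigma_0=\id$), an element $\xi\in\mathfrak{sl}(2,\mathbb C)$ acts by $\underline v\mapsto\xi\underline v+\underline v\xi^\dagger$ and so maps to a Lorentz generator; decomposing $\xi=\tfrac12(\beta^i-\imath\omega^i)\sigma_i$, the $\beta^i$ become the rapidities (boost $i0$) and the $\omega^i$ the rotation angles ($\varepsilon$-dual to $jk$). The Pauli-trace formulas above are exactly this map applied to $-\imath\mathfrak A^i\sigma_i$, so $\Omega$ is (a.s.) an $\mathfrak{so}(3,1)$-valued $1$-form on $M_\Lambda$, pulled back trivially to $\mathbb R\times M_\Lambda$.

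Next I would verify the connection transformation law, using Property~\ref{gauge2}. Under an internal change $(h,u)\in U(1)\times U_x$ with $u\in SU(2)_{a.s.}$, one has $\tilde{\mathfrak A}=u\mathfrak A u^{-1}-\imath d\ln h-\imath du\,u^{-1}$: the term $u\mathfrak A u^{-1}$ is $\Ad_u$, which under the isomorphism is $\Ad$ by the rotation $\mathcal L(u)=R(u)\oplus1\in SO(3,1)$; the term $-\imath du\,u^{-1}$ is a traceless $\mathfrak{sl}(2,\mathbb C)$-valued $1$-form, supplying the inhomogeneous $d$-term $-R(u)\,dR(u)^{-1}$; and $-\imath d\ln h$ is proportional to $\id$ and drops out of $\Omega$. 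Hence $\hat\Omega=\Ad_{\mathcal L(u)}\Omega+\mathcal L(u)\,d\mathcal L(u)^{-1}$, the law of a Lorentz connection. The external change $J\in SO(3)$ is analogous: $\hat{\mathfrak A}=u_J^{-1}J^*\mathfrak A\,u_J+\imath u_J^{-1}du_J$ produces, through the isomorphism, the base pullback by $J^*$ together with the frame rotation $\mathcal L(u_J)^{-1}\in SO(3)\subset SO(3,1)$ and its $d$-term. All equalities are read $\rho_\Lambda$-almost surely, which suffices for $\Omega$ to be a well-defined Lorentz connection on the a.s. structure.

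For the curvature I would note that the Pauli-trace map intertwines the commutator on traceless $\mathcal L(\mathbb C^2)$ with the $\mathfrak{so}(3,1)$ bracket; this is the one genuine computation, and it follows from $\sigma_i\sigma_j=\delta_{ij}\id+\imath{\varepsilon_{ij}}^k\sigma_k$ after separating real and imaginary parts (rotation$\wedge$rotation and boost$\wedge$boost feed the $jk$-block, rotation$\wedge$boost feeds the $i0$-block, and the $\id$- and $\delta_{ij}$-pieces fall into the discarded abelian part). Consequently $\mathfrak F=d\mathfrak A-\imath\mathfrak A\wedge\mathfrak A$ maps under the isomorphism to $R^{\mu\nu}=d\Omega^{\mu\nu}+{\Omega^\mu}_\lambda\wedge\Omega^{\lambda\nu}$, and writing $\mathfrak F=\mathfrak F^0\id+\mathfrak F^i\sigma_i$ gives ${R^{jk}}_{ab}=-{\varepsilon^{jk}}_i\Re\tr(\sigma^i\mathfrak F_{ab})$ and ${R^{i0}}_{ab}=\Im\tr(\sigma^i\mathfrak F_{ab})$. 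The Ricci tensor is then a one-line contraction: substituting these into the stated definition ${R^\mu}_b=\gamma^{ac}\delta_{ik}\frac{\partial x^k}{\partial s^c}{R^{i\mu}}_{ab}$ (contracting one internal index with the inverse spatial triad $e^a_i=\gamma^{ac}\delta_{ik}\frac{\partial x^k}{\partial s^c}$ of eqs.~(\ref{tetrad1})--(\ref{tetrad4})) yields ${R^j}_b=-{\varepsilon^{jk}}_i\Re\tr(\sigma_k{\mathfrak F^a}_b)\frac{\partial x^i}{\partial s^a}$ and ${R^0}_b=\Im\tr(\sigma_i{\mathfrak F^a}_b)\frac{\partial x^i}{\partial s^a}$.

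I expect the main obstacle to be bookkeeping rather than anything deep: pinning down the signature, the overall sign, and the factors of $2$ and $\imath$ in the isomorphism $\mathfrak{sl}(2,\mathbb C)_{\mathbb R}\cong\mathfrak{so}(3,1)$ so that both the connection transformation law and the curvature identities come out with exactly the coefficients in the statement, together with tracking systematically that $\mathfrak A$ and $\mathfrak F$ are valued in all of $\mathcal L(\mathbb C^2)$ rather than genuinely in $\mathfrak{sl}(2,\mathbb C)\oplus\mathfrak u(1)$, so that every identity is an equality modulo $\rho_\Lambda$-almost-sure terms. Once the isomorphism is fixed, the structural content — that the traceless part of an $\mathcal L(\mathbb C^2)$-valued $1$-form is a Lorentz connection and that the fake curvature $\mathfrak F$ is its Riemann tensor — is immediate.
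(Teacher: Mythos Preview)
Your proposal is correct and takes a genuinely different, more structural route than the paper. The paper does not prove the proposition in the main text; it defers to the appendix (\ref{frameOmega}), where it verifies by direct computation that under a local external $SO(3)$ frame change $J$ the components $\hat\Omega^{jk}_m$ and $\hat\Omega^{i0}_m$ acquire the correct homogeneous $\Ad_J$ piece plus the inhomogeneous term $\tfrac12 J^{[jl}\partial_m {J^{k]}}_l$ characteristic of a connection. That computation unpacks $\tr(\sigma^i\hat{\mathfrak A})$, uses the $SO(3)$ identities ${\varepsilon_{i'j'}}^{k'}{J^{i'}}_i{J^{j'}}_j={\varepsilon_{ij}}^k{J^{k'}}_k$ and the relation $d{J^i}_j\sigma_i=[du_Ju_J^{-1},{J^i}_j\sigma_i]$ to extract the Maurer--Cartan piece, and reads off the result component by component. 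It treats only the external $SO(3)$ changes and says nothing explicit about the curvature identities, which are simply stated.

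Your approach instead recognises the Pauli-trace map as the standard real Lie algebra isomorphism $\mathfrak{sl}(2,\mathbb C)_{\mathbb R}\cong\mathfrak{so}(3,1)$ and pushes everything through functorially: the transformation law of $\mathfrak A$ (Property~\ref{gauge2}) becomes that of $\Omega$ by naturality of $\Ad$, and the curvature formula follows because the isomorphism intertwines brackets. This buys you a uniform treatment of both internal $U_x$ and external $SO(3)$ gauge changes, and it gives the Riemann and Ricci identities for free rather than leaving them as assertions. The paper's approach, on the other hand, is self-contained and does not presuppose familiarity with the spin-Lorentz isomorphism; its explicit computation also makes the role of the $SO(3)$ invariance of $\varepsilon$ and $\delta$ transparent. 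Your caveat about $\rho_\Lambda$-almost-sure equalities is appropriate but slightly overcautious: since $\Omega$ is built from $\Re$ and $\Im$ of traces it is genuinely $\mathfrak{so}(3,1)$-valued regardless, and the a.s. qualification matters only when comparing gauge-transformed potentials.
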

We can see that $\Omega$ is well a Lorentz connection by considering its transformation under frame changes (see \ref{frameOmega}). The definition of the Lorentz connection $\Omega$ by $\mathfrak A$ is consistent with the BFFS Dirac equation $\imath |\dot \Psi \rrangle = \slashed D_x |\Psi \rrangle$ at the weak adiabatic limit \cite{Viennot2}. We can show that operator valued geometric phases generated by $\mathfrak A$ induces spin rotations which are the Einstein-de Sitter spin precessions induced by $\Omega$. The definition of $\Omega$ is then consistent with the physical interpretation of $\mathfrak M$ as a string theory matrix model.\\
In quantum information theory, $\Omega$ permits to represent the qubit logical gate generated by the operator valued geometric phases induced by $\mathfrak A$ by a precession effect induced by $\Omega$. The adiabatic quantum control of the qubit in presence of the environment is then completely translated in the geometry of $M_\Lambda$.\\

As for usual gauge fields, $(A,F,\mathfrak A, \mathfrak F)$ are local data of a connection onto a complicated fibre bundle which is studied \ref{fibre}, the adiabatic regimes appearing as horizontal lifts in this one (\ref{pseudosurface}).

\subsubsection{Torsion and geodesics}
In weak adiabatic transport, the time dependent state is not proportional to $|\Lambda \rrangle$ but is $\Ped^{-\imath \int \mathfrak A} |\Lambda \rrangle$, so the energy uncertainty is not minimised during the dynamics, and the relevant geodesics are not the ones associated with $\gamma_{ab}$ (minimising geodesics) but the ones associated with the Lorentz connection $\Omega$ (autoparallel geodesics). Moreover the energy mean value $\llangle \Psi|\slashed D_x|\Psi \rrangle = \llangle \Lambda | (\Ped^{-\imath \int \mathfrak A})^\dagger \slashed D_x \Ped^{-\imath \int \mathfrak A}|\Lambda \rrangle$ is not constant except if $|\Lambda \rrangle$ is separable. The Lorentz connection with the triads define the following Christoffel symbols on $M_\Lambda$:
\begin{equation}
  \Gamma^a_{bc} = e^a_i \partial_b e^i_c + e^a_i \Omega^i_{bj} e^j_c
\end{equation}
where $e^a_i = {\Xi^a}_c \mathring e^c_i = \gamma^{ad} \delta_{ij} \frac{\partial x^j}{\partial s^d}$ where $\mathring e^c_i$ are the dual triads induced by $E^c_i$ and $\Xi$ the inverse matrix of $(\mathring e^a_i e^i_b)$. Since these Christoffel symbols do not derive directly from $\gamma_{ab}$, the geometry of $M_\Lambda$ is not torsion free. We can show after some algebra that this torsion is
\begin{eqnarray}
  T^a_{bc} & = & \frac{1}{2} \gamma^{ad} \Im \tr(\tau_d [\tau_{[c},\mathfrak A_{b]}]) \\
  & = & \gamma^{ad} {\varepsilon_{ij}}^k \Re \tr(\sigma_k \mathfrak A_{[b}) \frac{\partial x^j}{\partial s^{c]}} \frac{\partial x^i}{\partial s^d}
\end{eqnarray}
with $\tau_a \equiv \sigma_i \frac{\partial x^i}{\partial s^a}$. This connection is associated with the covariant derivatives in $\mathfrak M$, see \ref{coderiv}.\\
If $\mathfrak A$ is self-adjoint, a weak adiabatic transport along an autoparallel geodesic provides $\llangle \Lambda(x(t))|(\Ped^{-\imath \int \mathfrak A})^{-1}\sigma_i \Ped^{-\imath \int \mathfrak A}|\Lambda(x(t))\rrangle \dot x^i(0) = 0$ (see \ref{coderiv}): the rotated (by $\Ped^{-\imath \int \mathfrak A}$) normal vector remains normal to the initial tangent vector. The adiabatic transport is well autoparallel (in the general case, it is just almost surely autoparallel). Table \ref{geobeh} summarises the physical meaning of the different regimes.
\begin{table} \small
  \begin{tabular}{l|l}
    \begin{minipage}{5cm} strong adiabatic transport along any path \end{minipage} & \begin{minipage}{7cm} quantum transport minimising the uncertainties $\Delta_{\Psi(t)} X^i$ without spin precession \end{minipage} \\
    \hline
    \begin{minipage}{5cm} strong adiabatic transport along a minimising geodesic path \end{minipage} & \begin{minipage}{7cm} the length of the average path is minimised  \end{minipage}  \\
    \hline
    \begin{minipage}{5cm} strong adiabatic transport along a path such that $\omega_{x(t)}(|\pmb{\delta \ell}_{x(t+\Delta t),x(t)}|)$ is small constant \end{minipage} & \begin{minipage}{7cm} the average of the path length observable is minimised  \end{minipage} \\
    \hline
    \begin{minipage}{5cm} weak adiabatic transport along any path \end{minipage} &  \begin{minipage}{7cm} quantum transport minimising the uncertainties $\Delta_{\Psi(t)} X^i$ with spin precession \end{minipage}\\
    \hline
    \begin{minipage}{5cm} weak adiabatic transport along an autoparallel geodesic path \end{minipage} & \begin{minipage}{7cm} the spin precession is autoparallel \end{minipage}
  \end{tabular}
  \caption{\label{geobeh} Geometric behaviours with respect to the different dynamical regimes.}
\end{table}
In another viewpoint, let $\Delta_0 \slashed D = \left. \frac{d\slashed D_x}{dt} \right|_{t=0}  \Delta t$ be the energy uncertainty observable for an initial time uncertainty $\Delta t$. The autoparallel geodesics minimises the mean value of this energy uncertainty at each time (it is zero if $\mathfrak A$ is self-adjoint), see \ref{coderiv}.\\

The Lorentz connection $\Omega^{\mu \nu}_a = -{\varepsilon^{0\mu \nu}}_i \Re \tr(\sigma^i \mathfrak A_a) + (\delta^{\nu 0}\delta_i^\mu - \delta^{\mu 0}\delta_i^\nu) \Im \tr(\sigma^i \mathfrak A_a)$ both with the tetrads eq.(\ref{tetrad1}-\ref{tetrad4}) define the following Christoffel symbols on $\mathcal M_\Lambda$:
\begin{eqnarray}
  \Gamma^\alpha_{0\beta} & = & 0 \\
  \Gamma^0_{b0} & = & \Omega^0_{bj} e^j_0 \\
  \Gamma^a_{b0} & = & e^a_i \partial_b e^i_0 + e^a_0\Omega^0_{bj}e^j_0  +e^a_i \Omega^i_{b0} + e^a_i \Omega^i_{bj} e^j_0\\
  \Gamma^0_{bc} & = & \Omega^0_{bj} e^j_c \\
  \Gamma^a_{bc} & = & e^a_i \partial_b e^i_c + e^a_i \Omega^i_{bj} e^j_c + e^a_0 \Omega^0_{bj} e^j_c
\end{eqnarray}
The geometry of $\mathcal M_\Lambda$ is not torsion free, and we have
\begin{eqnarray}
  T^0_{ab} & = & \Im \tr(\sigma_i \mathfrak A_{[a}) \frac{\partial x^i}{\partial s^{b]}} \\
  T^0_{a0} & = & \Im \tr(\sigma_i \mathfrak A_a) \gamma^{bc} A_b \frac{\partial x^i}{\partial s^c} \\
  T^a_{b0} & = & \gamma^{ac} \frac{\partial x_i}{\partial s^c} (\partial_b e^i_0  + \Im \tr(\sigma^i \mathfrak A_b) + {\varepsilon^i}_{jk} \Re \tr(\sigma^k \mathfrak A_b) \gamma^{cd} A_c \frac{\partial x^j}{\partial s^d}) \nonumber \\
  & & - \gamma^{ac} \Im \tr(\sigma_j \mathfrak A_b) \gamma^{de} A_c A_d \frac{\partial x^j}{\partial s^e} \\
T^a_{bc} & = & \gamma^{ad} {\varepsilon_{ij}}^k \Re \tr(\sigma_k \mathfrak A_{[b}) \frac{\partial x^j}{\partial s^{c]}} \frac{\partial x^i}{\partial s^d} - \gamma^{ad} \Im\tr(\sigma_j \mathfrak A_{[b}) \frac{\partial x^j}{\partial s^{c]}} A_d
\end{eqnarray}

Let $\breve \Gamma^a_{bc} = \frac{\partial^2 x^i}{\partial s^b \partial s^c} \delta_{ij}\gamma^{ad}\frac{\partial x^j}{\partial s^d}$ and $\breve \Gamma^0_{b0}= \breve \Gamma^a_{b0} = \breve \Gamma^0_{bc} = \breve \Gamma^0_{00} = 0$ be the Christoffel symbols induced by $dt^2-\gamma_{ab}ds^ads^b$ (the geometry without the geometric phase generator effects), and $K^\alpha_{\beta \gamma} = \Gamma^\alpha_{\beta \gamma} - \breve \Gamma^\alpha_{\beta \gamma}$ be the contorsion of  $\mathcal M_\Lambda$. This one is
\begin{eqnarray}
  K^0_{b0} & = & \Omega^0_{bj} e^j_0 \\
  K^a_{b0} & = & e^a_i \partial_b e^i_0 + e^a_0\Omega^0_{bj}e^j_0 + e^a_i \Omega^i_{b0} + e^a_i \Omega^i_{bj} e^j_0 \\
  K^0_{bc} & = & \Omega^0_{bj} e^j_c \\
  K^a_{bc} & = & e^a_i \Omega^i_{bj} e^j_c+ e^a_0 \Omega^0_{bj} e^j_c
\end{eqnarray}

The autoparallel geodesics are defined by
\begin{eqnarray}
  & & \ddot s^{\alpha} + \Gamma^\alpha_{\beta \gamma} \dot s^\beta \dot s^\gamma = 0 \\
  & \iff & \ddot s^\alpha + \breve \Gamma^\alpha_{\beta\gamma} \dot s^\beta \dot s^\gamma = -K^\alpha_{\beta\gamma} \dot s^\beta \dot s^\gamma
\end{eqnarray}
The contorsion can be split in two terms, $\kappa^a_{b0} = e^a_i \partial_b e^i_0$ the part associated with the magnetic field $F$ and $\mathcal K^\alpha_{b \gamma} = e^\alpha_\mu \Omega^\mu_{b \nu} e^\nu_\gamma$. We recall that $\Ped^{- \imath \int_{\mathscr C} \mathfrak A^{off}} = \Ped^{ \int_0^t \Omega^{ij}_a \frac{1}{2} [\sigma_i,\sigma_j] \dot s^a dt}$ ($\mathfrak A^{off} = \mathfrak A - \tr \mathfrak A$) is responsible of a spin precession in the weak adiabatic transport:
\begin{equation}
  \rho_\Psi(t) = \Ped^{ \int_0^t \Omega^{ij}_a \frac{1}{2} [\sigma_i,\sigma_j] \dot s^a dt} \rho_\Lambda(x(t)) \left(\Ped^{ \int_0^t \Omega^{ij}_a \frac{1}{2} [\sigma_i,\sigma_j] \dot s^a dt}\right)^\dagger
\end{equation}
where $\rho_\Psi = \tr_{\mathcal H} |\Psi \rrangle \llangle \Psi|$ is the spin density matrix and $\rho_\Lambda = \tr_{\mathcal H}|\Lambda \rrangle \llangle \Lambda|$ is the eigen density matrix. The contorsional force $f_{\mathcal K}^\alpha= -\mathcal K ^\alpha_{b \gamma} \dot s^b \dot s^\gamma$ as the same interpretation than in Einstein-Cartan theory \cite{Trautman}: as the spacetime curvature and energy momentum tensor influence each other (Einstein equation), spacetime torsion and spin influence each other (Sciama-Kibble equation \cite{Trautman}). So the geometry of $\mathcal M_\Lambda$ induces a spin precession, but in return the spin modifies the geometry of $\mathcal M_\Lambda$ inducing a deviation from the extremal geodesics by the force $f_{\mathcal K}$.\\

 To summarise, onto $M_\Lambda$ we have four notions of geodesic:
  \begin{itemize}
  \item \textit{Geodesics minimising the length of the mean path (GMLM):} geodesics for the metric $\omega_{x(s+ds)}(\pmb{d\ell^2}_{x(s)})= \gamma_{ab}ds^ads^b$ for which the energy uncertainty with probe misalignment is minimised;
  \item \textit{Geodesics minimising the average length of the quantum paths (GMAL):} geodesics for the metric $\left(\omega_{x(s)}(\sqrt{\pmb{\gamma}(L_{\delta \Pi},L_{\delta \Pi})})\right)^2 = \tilde \gamma_{ab} \delta s^a \delta s^b$ (this ones being applied onto a discrete set of points along the geodesic corresponding to measures onto the quantum system), for which the average energy gap isotropic uncertainty is minimised;
  \item \textit{Strongly adiabatic autoparallel geodesics (SAAG):} geodesics including the contorsion $\kappa^a_{b0}$ corresponding to the effect of the Laplace like force induced by the Berry curvature $F$ (viewed as a magnetic field), for which energy uncertainty with probe misalignment and time lag is minimised;
  \item \textit{Weakly adiabatic autoparallel geodesics (WAAG):} geodesics including the contorsions $\kappa^a_{b0}$ and $\mathcal K^\alpha_{\beta \gamma}$ corresponding to the Einstein-Cartan spinning effect, for which the spin precession is autoparallel.
  \end{itemize}
  
\begin{example}{2}{Fuzzy surface plots}
  For the fuzzy plane, the minimising geodesics are straight lines whereas the autoparallel geodesics are circles (see \cite{Viennot2}). Under the effect of the magnetic field $F$ (normal to the plane) the geodesics are curved to induce a cyclic rotation of the charged particle. GMLM and GMAL are the same since the two distances are the same, and SAAG and WAAG are the same since $\mathfrak A = A |0\rangle \langle 0|$ (implying than $\mathcal K^\alpha_{\beta \gamma}=0$).\\
  For the fuzzy elliptic paraboloid, GMAL are slightly deviated from GMLM because the average length of quantum paths between two measures is slightly larger than the length of the mean path. SAAG are not circles but epicycles around deferents corresponding to circles of the rotational symmetry (the epicyclic moves being induced by the magnetic field $F$ and the curvature of $M_\Lambda$). WAAG are very complicated trajectories on $M_\Lambda$ difficult to interpret which seem to trap the particle in the neighbourhood of $\alpha=0$. Fig. \ref{geodesicsfig} presents numerical integrations of geodesic equations for the fuzzy elliptic paraboloid.
  \begin{figure}
    \includegraphics[width=7cm]{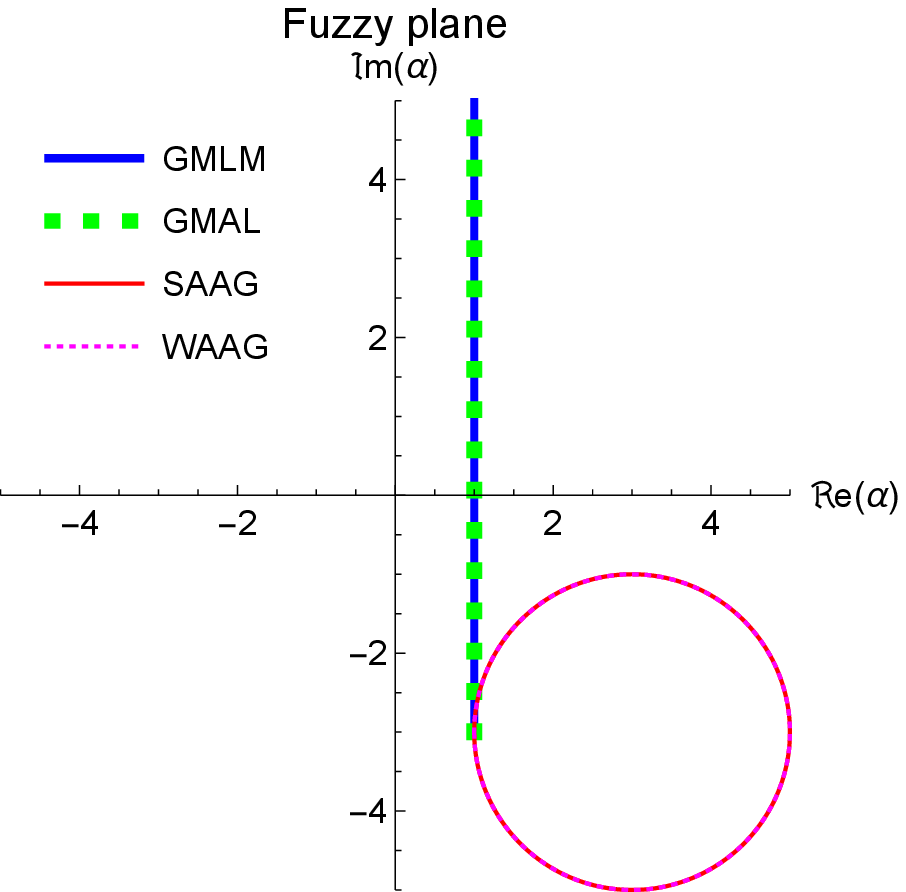} \includegraphics[width=7cm]{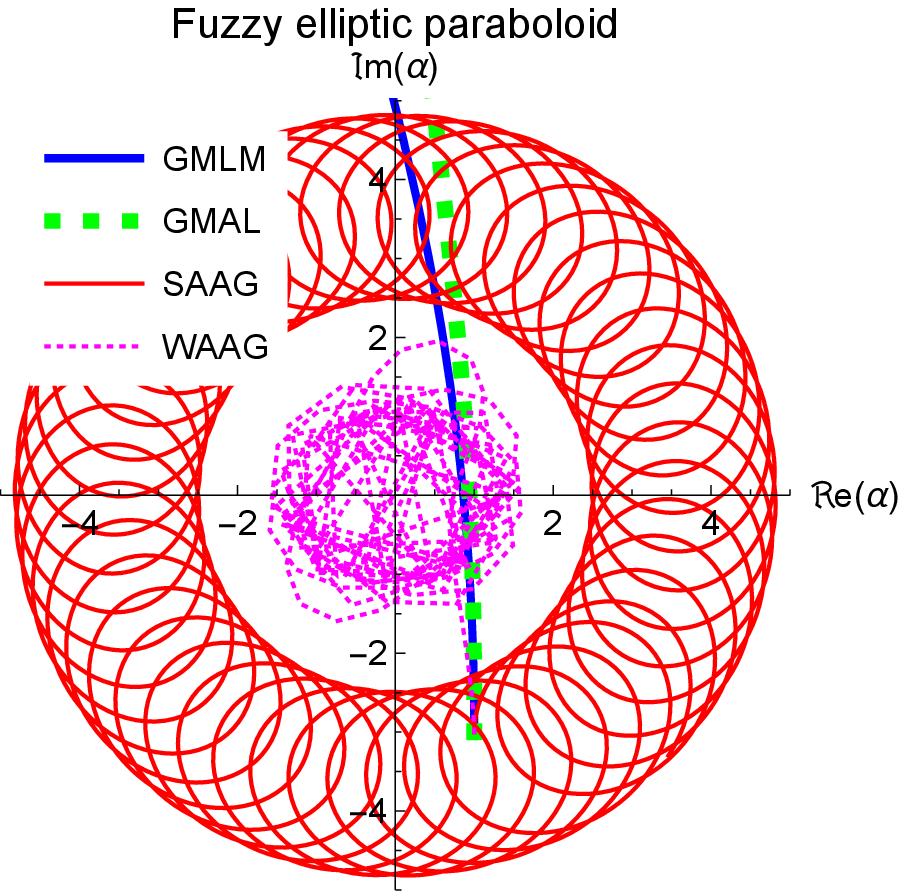} \includegraphics[width=7cm]{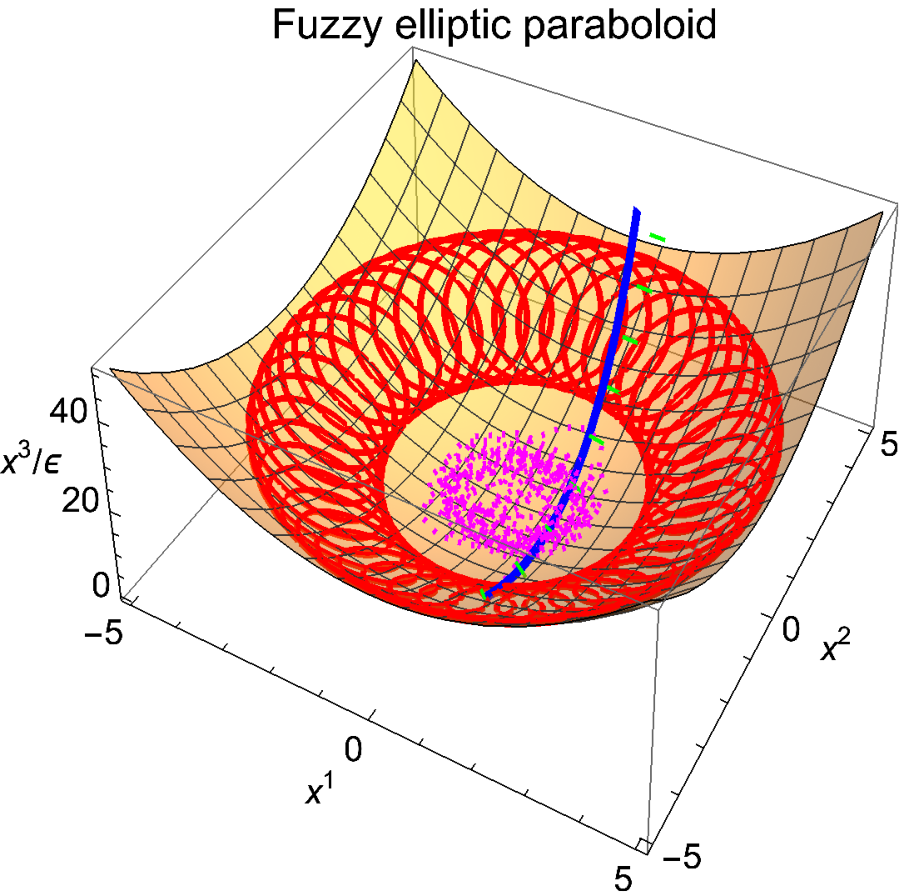}
    \caption{\label{geodesicsfig} In the parameter space spanned by $\alpha$, the geodesics for the fuzzy plane and the fuzzy elliptic paraboloid (and these ones onto $M_\Lambda$). GMLM is a geodesic minimising the length of the mean path (associated with $\gamma)$; GMAL is  geodesic minimising the average length of quantum paths between two measurements (this one is then discrete, and is associated with $\dist_{\pmb \gamma}$); SAAG is a strongly adiabatic autoparallel geodesic (including the effect of the magnetic field $F$); and WAAG is a weakly adiabatic autoparallel geodesic (including the effects of $F$ and of the torsion associated with $\Omega$). In all cases the initial conditions are $\alpha(0) = 1-3\imath$ and $\dot \alpha(0) = \imath$, with $\epsilon=0.1$.}
  \end{figure}
\end{example}

\section{Time-dependent Fuzzy spaces}\label{TDFS}
In the previous sections, we have supposed that $\mathfrak M$ was time-independent and that only the probe $x$ was moved. But the physical situations induce that the coordinate observables $X^i$ can be time-dependent. Indeed, in string theory BFSS matrix models, the D2-brane obeys to the following equation of motion (without vacuum fluctuation) \cite{BFSS}:
\begin{equation}\label{D2branemotion}
  \ddot X^i - [X_j,[X^i,X^j]] = 0
\end{equation}
with the Gauss constraint $[\dot X_i,X^i]=0$. This equation is the noncommutative version of the Einstein equation in the vacuum ($\Box_g g_{\mu \nu} + ... = 0$ in harmonic local coordinates with $\Box_g$ the spacetime Laplace-Beltrami operator -- the spacetime D'Alembertian --; $[X_j,[\bullet,X^j]]$ being the noncommutative Laplacian of $\mathfrak M$).\\
Moreover, in quantum information theory, the environmental noise operators satisfy the following Heisenberg equation:
\begin{equation}
  \dot X^i = \imath [\tilde H_{\mathcal E},X^i]
\end{equation}
where $H_{\mathcal E} \in \mathcal L(\mathcal H)$ is the Hamiltonian of the environment (and $\tilde H_{\mathcal E} = U_{\mathcal E}^\dagger H_{\mathcal E} U_{\mathcal E}$ with $\imath \dot U_{\mathcal E} = H_{\mathcal E} U_{\mathcal E}$; $\tilde H_{\mathcal E} = H_{\mathcal E}$ if it is time-independent).\\

If the time-dependence of $X^i$ is fast, we cannot apply directly the adiabatic approximations and it is needed to treat the problem with a Schr\"odinger-Floquet picture (for the periodic cases) \cite{SF} or a Schr\"odinger-Koopman picture (for the quasi-periodic and aperiodic cases) \cite{SK}. These pictures are equivalent to add a fourth coordinate observable $X^0$ (as a quantum time coordinate observable) with a probe space $\mathbb R^4$ endowed with the Minkowski metric \cite{Viennot2}. This method has been discussed in a previous paper. In this paper we want to focus on the case where $X^i$ depends slowly on the time permitting to use directly the adiabatic approximations.\\

With a time-dependent fuzzy space, the eigenequation becomes:
\begin{equation}
  \sigma_i \otimes (X^i(t)-x^i)|\Lambda(x,t)\rrangle = 0
\end{equation}
for $x^i \in M_\Lambda(t)$, the eigenmanifold being time dependent. By using the perturbation theory \ref{perturbation} with $\delta X^i = \dot X^i dt$ we have $M_\Lambda(t) = \{x(t) \in \mathbb R^3 \}$ where $t \mapsto x(t)$ are the solutions of the following differential equation:
\begin{equation}\label{eqFlow1}
  \dot x^i = \langle \Omega_{x(t),t}|\dot X^i|\Omega_{x(t),t}\rangle
\end{equation}
if $|\Lambda(x,t)\rrangle = |O_{x,t} \rangle \otimes |\Omega_{x,t} \rangle$ is separable, or else
\begin{equation}\label{eqFlow2}
 \dot x^i = \llangle \Lambda(x(t),t)|\sigma_j \otimes \dot X^j|\Lambda(x(t),t)\rrangle \frac{p_{x(t)}^i}{\vec n_{x(t)} \cdot \vec p_{x(t)}}
\end{equation}
with $\vec n_{x(t)} = \llangle \Lambda(x(t),t)|\sigma^i|\Lambda(x(t),t \rrangle \partial_i \in N_{x(t)}M_\Lambda(t)$ and $\vec p_{x(t)} = \frac{\llangle \Lambda(x(t),t)|\dot X^i|\Lambda(x(t),t)\rrangle}{\|\llangle \vec {\dot X} \rrangle(t)\|} \partial_i$.  The set of initial conditions $M_\Lambda(0)$ is such that $\ker (\sigma_i \otimes (X^i(0)-x^i(0))) \not= \{0\}$. Moreover we have $\forall x(t) \in M_\Lambda(t)$:
\begin{eqnarray}
  & & \left. \frac{\partial}{\partial t} |\Lambda(x,t)\rrangle \right|_{x=x(t)} \nonumber \\
  & = & - \sum_{n>0} \frac{\llangle \lambda_n(x(t),t)|\sigma_i \otimes (\dot X^i-\dot x^i)|\Lambda(x(t),t)\rrangle}{\lambda_n(x(t),t)} |\lambda_n(x(t),t) \rrangle
\end{eqnarray}
where $\slashed D_x(t)|\lambda_n(x,t)\rrangle = \lambda_n(x,t)|\lambda_n(x,t)\rrangle$ (with $\lambda_0(x,t)=0$).\\
Let $(t,s^1,s^1)$ be a local coordinate system onto $\mathcal M_\Lambda = \bigsqcup_t (t,M_\Lambda(t))$. The metric of the leaf $M_\Lambda(t)$ satisfies
\begin{equation}
  \dot \gamma_{ab} = \delta_{ij} \frac{\partial x^i}{s^{(a}} \frac{\partial \dot x^j}{\partial s^{b)}}
\end{equation}

The main difference with the previous sections is that now the gauge potentials are forms of $\mathcal M_\Lambda$
\begin{eqnarray}
  A & = & -\imath (\llangle \Lambda (x,t)|d|\Lambda(x,t) \rrangle \nonumber \\
  & & \quad + \llangle \Lambda(x,t)|\partial_t|\Lambda(x,t) \rrangle dt) \in \Omega^1(\mathcal M_\Lambda,\mathbb R) \\
  \mathfrak A \rho_\Lambda & = & -\imath (\langle \Lambda (x,t)|d|\Lambda(x,t) \rangle_* \nonumber \\
  & & \quad + \langle \Lambda(x,t)|\partial_t|\Lambda(x,t) \rangle_* dt) \in \Omega^1(\mathcal M_\Lambda,\mathfrak{u}(2)_{a.s.})
\end{eqnarray}
and we must redefine one tetrad by
\begin{equation}
  e^0_0 = 1 + A_0
\end{equation}
The metric of $\mathcal M_\Lambda$ is then now
\begin{equation}
  d\tau^2 = (1+A_0)^2 dt^2 - (\gamma^{ac}A_c dt+ds^a)(\gamma^{bd}A_d dt+ds^b) \gamma_{ad}
\end{equation}
where $A$ and $\gamma$ depends on $s$ and $t$. $\vec A = \gamma^{ab} A_b \partial_a$ is still the shift vector of the foliation of $\mathcal M_\Lambda$, but now $1+A_0(x,t)$ is its laps function.\\
The Lorentz connection has now new components $\Omega^{\mu \nu}_0 = {\varepsilon^{0\mu \nu}}_{i} \Re\tr(\sigma^i \mathfrak A_0) + (\delta^{\nu 0}\delta^\mu_i - \delta^{\mu0}\delta^\nu_i) \Im \tr (\sigma^i \mathfrak A_0)$. With the modification of the inverse tetrads $e^a_0 = -\frac{\gamma^{ab}A_b}{1+A_0}$ and $(e^{-1})^0_0 = (e^0_0)^{-1} = \frac{1}{1+A_0}$, this induces these new Christoffel symbols:
\begin{eqnarray}
  \Gamma^0_{b0} & = & \partial_b \ln e^0_0 + (e^0_0)^{-1} \Omega^0_{bj} e^j_0 \\
  \Gamma^a_{b0} & = & e^a_0 \partial_b e^0_0 + e^a_i \partial_b e^i_0 + e^a_0\Omega^0_{bj}e^j_0  +e^a_i \Omega^i_{b0} e^0_0 + e^a_i \Omega^i_{bj} e^j_0\\
  \Gamma^0_{bc} & = & (e^0_0)^{-1} \Omega^0_{bj} e^j_c \\
  \Gamma^a_{bc} & = & e^a_i \partial_b e^i_c + e^a_i \Omega^i_{bj} e^j_c + e^a_0 \Omega^0_{bj} e^j_c \\
  \Gamma^a_{0c} & = & e^a_i \partial_0 e^i_c + e^a_0 \Omega^0_{0j} e^j_c + e^a_i \Omega^i_{0j} e^j_c \\
  \Gamma^0_{0c} & = & (e^0_0)^{-1} \Omega^0_{0j} e^j_c \\
  \Gamma^0_{00} & = & \partial_0 \ln e^0_0 + (e^0_0)^{-1} \Omega^0_{0j} e^j_0
\end{eqnarray}

\begin{prop}\label{dynFuzzy}
  In the case where $\dot X^i = \imath[\tilde H_{\mathcal E},X^i] \iff X^i(t) = U_{\mathcal E}^\dagger (t) X^i(0) U_{\mathcal E}(t)$, we have:
  \begin{eqnarray}
    M_\Lambda(t) & = & M_\Lambda(0) \\
    |\Lambda(x,t) \rrangle & = & U_{\mathcal E}^\dagger(t)|\Lambda(x,0)\rrangle \\
    A & = & A_a(x,0) ds^a + \llangle \Lambda(x,0)|H_{\mathcal E}|\Lambda(x,0)\rrangle dt \\
    \mathfrak A & = & \mathfrak A_a(x,0) ds^a + \tr_{\mathcal H} (|\Lambda(x,0)\rrangle \llangle \Lambda(x,0)| H_{\mathcal E}) \rho_\Lambda(x,0)^{-1} dt \\
    \mathbf A_{\Dis_t} & = & U_{\mathcal E}^\dagger(t) \mathbf A_{\Dis_0} U_{\mathcal E}(t) + \imath U_{\mathcal E}^\dagger(t) \dnc U_{\mathcal E}(t) +\imath \Dis_t \dnc U_{\mathcal E}^\dagger(t) U_{\mathcal E}(t) \Dis_t
  \end{eqnarray}
  ($\rho_\Lambda^{-1}$ standing for a pseudo-inverse), where $\Dis_t = U_{\mathcal E}^\dagger(t) \Dis_0 U_{\mathcal E}(t)$. The flow $\varphi^t$ defined by eq.(\ref{eqFlow1},\ref{eqFlow2}) is just a time-dependent diffeomorphism of $M_\Lambda(0)$ ($\varphi^t \in \Diff M_\Lambda(0)$).
\end{prop}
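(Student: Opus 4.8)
The plan is to reduce everything to the single structural fact that $X^i(t)=U_{\mathcal E}^\dagger(t)X^i(0)U_{\mathcal E}(t)$ conjugates the whole Dirac operator by $\id\otimes U_{\mathcal E}(t)$ on $\mathbb C^2\otimes\mathcal H$, and then to read off each claimed identity by substituting this conjugation into the relevant definition and moving the $U_{\mathcal E}$'s around using unitarity and the cyclicity of the partial trace $\tr_{\mathcal H}$.

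First I would write $\slashed D_x(t)=(\id\otimes U_{\mathcal E}^\dagger(t))\slashed D_x(0)(\id\otimes U_{\mathcal E}(t))$, so that $\ker\slashed D_x(t)=(\id\otimes U_{\mathcal E}^\dagger(t))\ker\slashed D_x(0)$; this space is nontrivial exactly when $\ker\slashed D_x(0)$ is, which gives $M_\Lambda(t)=M_\Lambda(0)$, and fixing the internal gauge at time $t$ so as to follow the $t=0$ choice yields $|\Lambda(x,t)\rrangle=(\id\otimes U_{\mathcal E}^\dagger(t))|\Lambda(x,0)\rrangle$ (unitarity preserves the normalisation). Taking $\tr_{\mathcal H}$ of $|\Lambda(x,t)\rrangle\llangle\Lambda(x,t)|$ and using cyclicity then gives $\rho_\Lambda(x,t)=\rho_\Lambda(x,0)$, which is why all the $t$-dependence of the connections is confined to their $dt$-components.

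Next I would treat $A$ and $\mathfrak A$ together. Since $U_{\mathcal E}$ does not depend on the curvilinear coordinates $s^a$, the spatial parts $\llangle\Lambda(x,t)|d|\Lambda(x,t)\rrangle$ and $\tr_{\mathcal H}|d\Lambda(x,t)\rrangle\llangle\Lambda(x,t)|$ are unchanged (the $U_{\mathcal E}$'s cancel outright, resp.\ by cyclicity of $\tr_{\mathcal H}$), producing the $A_a(x,0)ds^a$ and $\mathfrak A_a(x,0)ds^a$ terms. For the $dt$-components I would use $\partial_t|\Lambda(x,t)\rrangle=\imath(\id\otimes U_{\mathcal E}^\dagger(t)H_{\mathcal E})|\Lambda(x,0)\rrangle$, which follows from $\imath\dot U_{\mathcal E}=H_{\mathcal E}U_{\mathcal E}$, insert it into $-\imath\llangle\Lambda|\partial_t|\Lambda\rrangle$ and $-\imath\tr_{\mathcal H}|\partial_t\Lambda\rrangle\llangle\Lambda|$, and cancel the remaining $U_{\mathcal E}$'s to get $\llangle\Lambda(x,0)|H_{\mathcal E}|\Lambda(x,0)\rrangle$ and $\tr_{\mathcal H}(|\Lambda(x,0)\rrangle\llangle\Lambda(x,0)|H_{\mathcal E})$; the pseudo-inverse $\rho_\Lambda(x,0)^{-1}$ in the $\mathfrak A$ formula is simply the usual normalisation of $\mathfrak A$, well defined up to the almost-sure ambiguity already noted for $\mathfrak A$.

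For the displacement operator I would first substitute $|\Lambda(\cdot,t)\rrangle=(\id\otimes U_{\mathcal E}^\dagger)|\Lambda(\cdot,0)\rrangle$ into the defining relation $u_{yx}\otimes\Dis_t(y,x)|\Lambda(x,t)\rrangle=|\Lambda(y,t)\rrangle$ and conjugate by $\id\otimes U_{\mathcal E}$ to get $\Dis_t(y,x)=U_{\mathcal E}^\dagger\Dis_0(y,x)U_{\mathcal E}$, then plug this into $\mathbf A_{\Dis_t}=\imath\Dis_t^\dagger\mathbf d\Dis_t$ with $\mathbf d=\dnc+dt\,\partial_t$ the differential of $\mathcal M_\Lambda$ and expand by the Leibniz rule; the three terms $U_{\mathcal E}^\dagger\mathbf A_{\Dis_0}U_{\mathcal E}$, $\imath U_{\mathcal E}^\dagger\dnc U_{\mathcal E}$ and $\imath\Dis_t\,\dnc U_{\mathcal E}^\dagger\,U_{\mathcal E}\,\Dis_t$ then appear, the last one precisely because $U_{\mathcal E}$ conjugates rather than merely multiplies $\Dis_0$. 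This is the step I expect to be the main obstacle, because $\dnc=\dnc_{(t)}$ is itself built from the time-$t$ derivations $L_{X^i(t)}$ and so does not commute with conjugation by $U_{\mathcal E}$: the care is entirely in tracking which $\dnc$ acts on which factor. Finally, for the flow, eq.(\ref{eqFlow1},\ref{eqFlow2}) define $t\mapsto x(t)$ by an ODE whose right-hand side, after inserting the formula for $|\Lambda(x,t)\rrangle$ and $\dot X^i=\imath[\tilde H_{\mathcal E},X^i]$ and cancelling the $U_{\mathcal E}$'s, becomes a vector field on the fixed manifold $M_\Lambda(0)$ built only from $|\Lambda(x,0)\rrangle$ and $H_{\mathcal E}$; since $M_\Lambda(t)=M_\Lambda(0)$ as a set, every integral curve stays in $M_\Lambda(0)$, so its time-$t$ flow $\varphi^t$ carries $M_\Lambda(0)$ into itself and is a diffeomorphism, time-dependent when $H_{\mathcal E}$ is.
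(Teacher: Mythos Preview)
Your argument is essentially the paper's: conjugate $\slashed D_x(0)$ by $\id\otimes U_{\mathcal E}$ to obtain $M_\Lambda(t)=M_\Lambda(0)$ and $|\Lambda(x,t)\rrangle=U_{\mathcal E}^\dagger|\Lambda(x,0)\rrangle$, differentiate in $t$ (via $\imath\dot U_{\mathcal E}=H_{\mathcal E}U_{\mathcal E}$) for the $dt$-components of $A$ and $\mathfrak A$, conjugate $\Dis_0$ to get $\Dis_t$, and expand $\dnc\Dis_t$ by the Leibniz rule. One small correction: $\mathbf A_{\Dis}=\imath\Dis^\dagger\dnc\Dis$ lives in $\Omega^1_{\Der}(\mathfrak X)$, not in $\Omega^1\mathcal M_\Lambda$, so there is no $dt\,\partial_t$ piece in the relevant differential---only the Koszul $\dnc$ enters, and the three displayed terms come directly from Leibniz applied to $\dnc(U_{\mathcal E}^\dagger\Dis_0 U_{\mathcal E})$, exactly as in the paper's proof.
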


\begin{proof}
  $\slashed D_x(t) = U_{\mathcal E}^\dagger(t) \slashed D_x(0) U_{\mathcal E}(t) \Rightarrow \slashed D_x(0) U_{\mathcal E}(t) |\Lambda(x,t) \rrangle = 0 \Rightarrow |\Lambda(x,0) \rrangle = U_{\mathcal E}(t)|\Lambda(x,t) \rrangle$. It follows that $\partial_t|\Lambda(x,t)\rrangle = -U_{\mathcal E}^\dagger(t) \dot U_{\mathcal E}(t) U_{\mathcal E}^\dagger(t)|\Lambda(x,0) \rrangle = - U_{\mathcal E}^\dagger(t) H_{\mathcal E}|\Lambda(x,0)\rrangle$.\\
  $u_{yx}\otimes \Dis_0(y,x) |\Lambda(x,0)\rrangle = |\Lambda(y,0)\rrangle \Rightarrow u_{yx} \otimes U_{\mathcal E}^\dagger \Dis_0(y,x) U_{\mathcal E}|\Lambda(x,t)\rrangle = |\Lambda(y,t)\rrangle$. It follows $\Dis_t(y,x) = U_{\mathcal E}^\dagger \Dis_0(y,x) U_{\mathcal E}$ and then $\dnc \Dis_t = \dnc U_{\mathcal E}^\dagger \Dis_0 U_{\mathcal E} + U_{\mathcal E}^\dagger \dnc \Dis_0 U_{\mathcal E} + U_{\mathcal E}^\dagger D_0 \dnc U_{\mathcal E}$.
\end{proof}

In the case where the quasicoherent state is separable, the flow is similar to the Ehrenfest theorem: $\frac{d}{dt} \langle \Omega_{x(t),t}|X^i|\Omega_{x(t),t}\rangle= \imath \langle \Omega_{x(t),t}|[\tilde H_{\mathcal E},X^i]|\Omega_{x(t),t}\rangle$.\\

Another simple time-dependent case arises with time-dependent external gauge changes:
\begin{prop}
  In the case where $X^i(t) = {J(t)^i}_j X^j(0)$ with $J \in SO(3)$ (independent of $x$), we have:
  \begin{eqnarray}
    M_\Lambda(t) & = & \{{J(t)^i}_j x^j(0), x(0) \in M_\Lambda(0)\} \\
    |\Lambda(x,t) \rrangle & = & u_J(t)^{-1} |\Lambda(J(t)^{-1}x,0) \rrangle \\
    A & = & A_i(J^{-1}(t)x,0){J^{-1}(t)^i}_j \frac{\partial x^j}{\partial s^a} ds^a \nonumber \\
    & & \quad + \imath \omega_{J^{-1}x}(\dot u_J(t)u_J(t)^{-1}) dt \\
    \mathfrak A & = & u_J(t)^{-1} \mathfrak A_i(J^{-1}(t)x,0)u_J(t){J^{-1}(t)^i}_j \frac{\partial x^j}{\partial s^a} ds^a \nonumber \\
    & & \quad + \imath u_J(t)^{-1} \dot u_J(t) dt  \\
    \mathbf A_{\Dis(y,x)}(t) & = & \mathbf G_J(y,x,t)^\dagger \mathbf A_{\Dis(y,x)}(0) \mathbf G_J(y,x,t) \nonumber \\
    & & \quad +\imath \mathbf G_J(y,x,t)^\dagger \dnc G_J(y,x,t)
  \end{eqnarray}
  where $u_J \in SU(2)$ is such that $u_J^{-1} \sigma_j u_J = {J^i}_j \sigma_i$ and $\mathbf G_J(y,x,t) = \Dis(y,x)^{-1} \Dis(J(t)^{-1}y,J(t)^{-1}x)$.\\
  The flow $\varphi^t$ defined by eq.(\ref{eqFlow1},\ref{eqFlow2}) is just the rotation of $M_\Lambda(0)$ in $\mathbb R^3$ defined by $J(t)$.
\end{prop}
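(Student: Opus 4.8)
The plan is to reduce the whole statement to the static external-gauge-change formulas already established --- eq.~(\ref{extchangeLambda}), the computation of $\hat{\Dis}$ and $\hat{\mathbf A}_{\Dis}$ preceding Property~\ref{gauge1}, and Properties~\ref{gauge1}--\ref{gauge2} --- applied pointwise in $t$ with the static rotation taken equal to $J(t)$. First I would record the operator identity
\begin{equation}
  \slashed D_x(t) = \sigma_i \otimes ({J(t)^i}_j X^j(0)-x^i) = (u_J(t)^{-1} \otimes \id)\,\slashed D_{J(t)^{-1}x}(0)\,(u_J(t) \otimes \id),
\end{equation}
which follows from $u_J^{-1}\sigma_j u_J = {J^i}_j \sigma_i$ and the rewriting $x^i = {J^i}_j (J^{-1}x)^j$. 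From it one reads off at once that $\ker \slashed D_x(t) \neq \{0\}$ iff $J(t)^{-1}x \in M_\Lambda(0)$, hence $M_\Lambda(t) = J(t)M_\Lambda(0)$, and (up to the internal gauge freedom) $|\Lambda(x,t)\rrangle = u_J(t)^{-1}|\Lambda(J(t)^{-1}x,0)\rrangle$, the factor $u_J^{-1}$ acting on the $\mathbb C^2$ component. Substituting this into the defining relation $u_{yx}(t) \otimes \Dis_t(y,x)|\Lambda(x,t)\rrangle = |\Lambda(y,t)\rrangle$ and using uniqueness up to weak non-degeneracy gives $\Dis_t(y,x) = \Dis_0(J(t)^{-1}y,J(t)^{-1}x) = \Dis_0(y,x)\mathbf G_J(y,x,t)$, so the transformation law for $\mathbf A_{\Dis}$ follows verbatim from the derivation of $\hat{\mathbf A}_{\Dis}$ before Property~\ref{gauge1}, using that $\Dis_0$ and $\mathbf G_J$ are unitary (elements of $e^{\imath\mathfrak X}$ with $\mathfrak X$ self-adjoint).

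Next I would compute $A = -\imath(\llangle \Lambda|d|\Lambda\rrangle + \llangle \Lambda|\partial_t|\Lambda\rrangle\,dt)$ and $\mathfrak A \rho_\Lambda = -\imath(\langle \Lambda|d|\Lambda\rangle_* + \langle \Lambda|\partial_t|\Lambda\rangle_*\,dt)$ on $\mathcal M_\Lambda = \bigsqcup_t (t,M_\Lambda(t))$. The one point that needs care --- and which I expect to be the main obstacle --- is the choice of the curvilinear coordinates $(t,s^1,s^2)$ on $\mathcal M_\Lambda$: one must take the leaves to be rigidly carried along, $x(s,t) = J(t)x_0(s)$ for a fixed parametrisation $x_0(\cdot)$ of $M_\Lambda(0)$, so that at fixed $s$ the state $|\Lambda(x(s,t),t)\rrangle = u_J(t)^{-1}|\Lambda(x_0(s),0)\rrangle$ depends on $t$ only through $u_J(t)^{-1}$ and ${J^{-1}{}^i}_j \frac{\partial x^j}{\partial s^a} = \frac{\partial x_0^i}{\partial s^a}$. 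With this choice the spatial parts reduce, exactly as in Properties~\ref{gauge1}--\ref{gauge2} with $J$ constant in space (all $dJ$-terms dropping out), to $A_i(J^{-1}x,0){J^{-1}{}^i}_j \frac{\partial x^j}{\partial s^a} ds^a$ and $u_J^{-1}\mathfrak A_i(J^{-1}x,0)u_J {J^{-1}{}^i}_j \frac{\partial x^j}{\partial s^a} ds^a$ (the latter multiplied on the right by $\rho_\Lambda(x,t) = u_J^{-1}\rho_\Lambda(J^{-1}x,0)u_J$), while the $dt$-parts, with $\partial_t(u_J^{-1}) = -u_J^{-1}\dot u_J u_J^{-1}$, become $-\imath\llangle \Lambda|u_J\partial_t(u_J^{-1})|\Lambda\rrangle = \imath\,\omega_{J^{-1}x}(\dot u_J u_J^{-1})$ and $-\imath\tr_{\mathcal H}(\partial_t(u_J^{-1})|\Lambda\rrangle\llangle \Lambda|u_J) = \imath u_J^{-1}\dot u_J\,\rho_\Lambda$, i.e. $\imath u_J^{-1}\dot u_J$ after right-multiplication by $\rho_\Lambda^{-1}$. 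I would also fix $u_J(t)$ by continuity from $u_J(0)=\id$ to remove the $\pm 1$ ambiguity of $SO(3) \to SU(2)$.

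For the flow $\varphi^t$ I would plug $\dot X^i(t) = {\dot J(t)^i}_j X^j(0) = (\dot J J^{-1})^i{}_k X^k(t)$ into eq.~(\ref{eqFlow1})--(\ref{eqFlow2}). In the separable case eq.~(\ref{eqFlow1}) becomes $\dot x^i = (\dot J J^{-1})^i{}_k \langle \Omega_{x,t}|X^k(t)|\Omega_{x,t}\rangle = (\dot J J^{-1})^i{}_k x^k$, since $\langle \Omega_{x,t}|X^k(t)|\Omega_{x,t}\rangle = x^k$ for a quasicoherent state; with $x(0) \in M_\Lambda(0)$ this integrates to $x(t) = J(t)x(0)$, i.e. $\varphi^t$ is the rotation by $J(t)$, consistently with $M_\Lambda(t) = J(t)M_\Lambda(0)$. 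In the entangled case eq.~(\ref{eqFlow2}) must give the same: there $\vec p_{x}$ is proportional to $\llangle \Lambda|\dot{\vec X}|\Lambda\rrangle = (\dot J J^{-1})\vec x$, so $\dot x$ is automatically a multiple of $(\dot J J^{-1})\vec x$, and the scalar factor comes out right precisely when $\llangle \Lambda|\sigma_j \otimes \dot X^j|\Lambda\rrangle = \vec n_x \cdot (\dot J J^{-1})\vec x$, equivalently $(\dot J J^{-1})^{jk}\llangle \Lambda|\sigma_j \otimes (X^k-x^k)|\Lambda\rrangle = 0$. This last identity follows from the antisymmetry of $\dot J J^{-1}$ together with the symmetry of $\llangle \Lambda|\sigma_j \otimes (X^k-x^k)|\Lambda\rrangle$ in $(j,k)$, which one obtains from $\id \otimes (X^k-x^k) = \frac{1}{2}\{\slashed D_x,\sigma_k \otimes \id\}$, $\slashed D_x|\Lambda\rrangle = 0$, and the Pauli identity $\sigma_j\sigma_l\sigma_k - \sigma_k\sigma_l\sigma_j = -2\imath{\varepsilon_{jk}}^l\,\id$.

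To summarise, the core of the argument is the single conjugation identity for $\slashed D_x(t)$ in the first paragraph, from which $M_\Lambda(t)$, $|\Lambda(x,t)\rrangle$, $\Dis_t$ and $\mathbf A_{\Dis}$ drop out by transcription of the static external-gauge formulas; the delicate part is setting up the foliation coordinates on $\mathcal M_\Lambda$ so that the claimed $dt$-components of $A$ and $\mathfrak A$ come out right, and the only genuinely new small computation is the Pauli-algebra lemma closing the entangled sub-case of the flow.
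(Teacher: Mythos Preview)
Your approach is correct and is essentially the same as the paper's, which proves this proposition in a single sentence by invoking Properties~\ref{gauge1} and~\ref{gauge2} with a time-dependent $J$ in place of an $x$-dependent one. You supply considerably more detail than the paper does --- in particular the careful choice of foliation coordinates $x(s,t)=J(t)x_0(s)$ and the Pauli-algebra verification of the entangled branch of the flow equations, neither of which the paper spells out --- but the underlying reduction to the static external-gauge formulas is identical.
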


\begin{proof}
  We apply simply the properties \ref{gauge1} and \ref{gauge2} concerning the external gauge changes with a time-dependent $J$ in place of a $x$-dependent $J$.
\end{proof}

\begin{example}{2}{Fuzzy surface plots}
  We consider fuzzy surface plots governed by the environmental Hamiltonian $H_{\mathcal E} = \omega a^+a$ ($\mathfrak X$ being the set of observables of a harmonic oscillator or of a single-mode boson field, $\omega$ is the quantum of energy of the environment). $U_{\mathcal E}(t)  =e^{-\imath \omega t a^+a}$ and
  \begin{equation}
    U_{\mathcal E}^\dagger (t) a U_{\mathcal E}(t) = e^{-\imath \omega t} a
  \end{equation}
  It follows that
  \begin{equation}
    |\Lambda(\alpha,t)\rrangle = U_{\mathcal E}^\dagger(t) |\Lambda(\alpha,0)\rrangle = |\Lambda(e^{\imath \omega t}\alpha)\rrangle_0
  \end{equation}
  where $|\Lambda(\alpha)\rrangle_0$ is the quasicoherent state of the time-independent case ($U_{\mathcal E}^\dagger(t)|\alpha\rangle = |e^{\imath \omega t}\alpha\rangle$ for the Perelomov coherent states of the CCR algebra \cite{Perelomov, Puri}). We have then for the fuzzy plane and the fuzzy paraboloids
  \begin{equation}
    A = -\imath \frac{\bar \alpha d\alpha-\alpha d\bar \alpha}{2} + \omega |\alpha|^2 dt + \mathcal O(\epsilon^2)
  \end{equation}
  The flow onto $M_\Lambda$ is then $\varphi^t(\alpha_0) = e^{-\imath \omega t} \alpha_0$ which consists to a rotation on the surface of $M_\Lambda$ around $\alpha=0$ at the frequency $\omega$ as for example fig. \ref{parabolFlow}. 
  \begin{figure}
    \center
    \includegraphics[width=7cm]{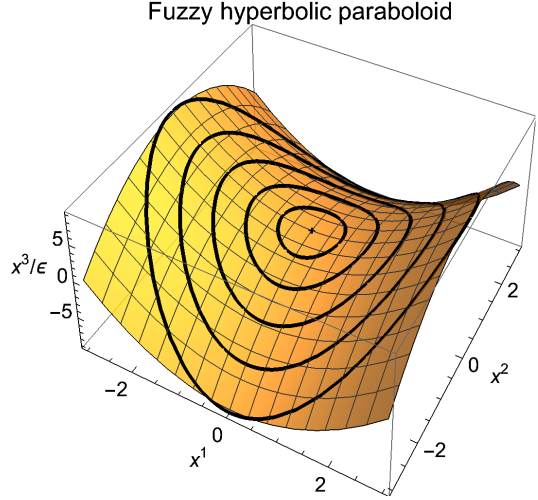}\\
    \caption{\label{parabolFlow} The eigenmanifold $M_\Lambda$ of the fuzzy hyperbolic paraboloid with some orbits of the flow $\varphi^t$ defined by eq.(\ref{eqFlow2}) for a fuzzy space dynamics governed by the Hamiltonian $H_{\mathcal E} = \omega a^+a$}.
  \end{figure}

  In contrast, if we consider a fuzzy surface plot governed by the D2-brane equation of motion eq.(\ref{D2branemotion}) with $X^i(t) = u(t)X^i_0$ for $u(t) \in \mathbb R^+$ such that $u(0)=1$ ($X^i_0$ being the coordinate observables of the time-independent case), we have $\ddot u = 0 \iff u(t)= v t +1$ (for some $v \in \mathbb R$). We have then for the fuzzy plane and the fuzzy paraboloids:
  \begin{eqnarray}
    & & \sigma_i \otimes (X^i(t)-x^i)|\Lambda(x,t)\rrangle = 0 \\
    & \iff & \sigma_i \otimes (u(t)X^i_0 - x^i)|\Lambda(x,t) \rrangle = 0 \\
    & \iff & \sigma_i \otimes (X^i_0 - \frac{x^i}{u(t)}) |\Lambda(x,t) \rrangle = 0 \\
    & \Rightarrow & \left\{ \begin{array}{c} x^i(t) = u(t)x^i_0 \\ |\Lambda(x,t) \rrangle = |\Lambda(\frac{x}{u(t)}) \rrangle_0 \end{array} \right.
  \end{eqnarray}
  where $|\Lambda(\alpha)\rrangle_0$ is the quasi-coherent state of the time-independent case. $M_\Lambda(t)$ grows with an expansion factor $u(t) = vt+1$. We have then
  \begin{equation}
    A = -\imath \frac{\bar \alpha d\alpha - \alpha d\bar \alpha}{2 u(t)^2} + \mathcal O(\epsilon^2)
  \end{equation}

  Now we consider the time-dependent SO(3) transformation $J = \left(\begin{array}{ccc} \cos(\omega t) & -\sin(\omega t) & 0 \\ \sin(\omega t) & \cos (\omega t) & 0 \\ 0 & 0 & 1 \end{array} \right)$ corresponding to a rotation around the third axis at angular velocity $\omega$. We have then $u_J = e^{\frac{\imath}{2} \omega t \sigma_3}$. Now $M_\Lambda(t)$ is just the rotation around the third axis of the eigenmanifold of the time-independent case and
  \begin{equation}
    |\Lambda(\alpha,t)\rrangle = e^{-\frac{\imath}{2} \omega t \sigma_3}|\Lambda(e^{-\imath \omega t}\alpha)\rrangle_0
  \end{equation}
  and
  \begin{equation}
    A = - \imath \frac{\bar \alpha  d\alpha - \alpha d\bar\alpha}{2} + \omega\left(\frac{1}{2}{_0}\llangle \Lambda(\alpha)| \sigma_3|\Lambda(\alpha)\rrangle_0 - |\alpha|^2\right) dt + \mathcal O(\epsilon^2)
  \end{equation}
  
  Note the difference between the dynamics induced by $J$ and $H_{\mathcal E}$. The two ones induces a rotation around $0$ in the base complex plane, but $J$ as rotation in $\mathbb R^3$ around the third axis does not affect the $z$-perturbation in contrast with $H_{\mathcal E}$. $J$ induces a rotation of $\mathfrak M/M_\Lambda$ in $\mathbb R^3$ as a solid whereas $H_{\mathcal E}$ induces a rotation inside $\mathfrak M/M_{\Lambda}$ as a fluid traversed by a current of flow $\varphi^t$.
  
\end{example}

\section{Conclusion}
A fuzzy space $\mathfrak M$ can be endowed with two quantum metrics $\pmb{d\ell^2}_x$ (square length observable) and $\pmb{\gamma}$ (noncommutative symmetric quantum bicovector). The square root of the quantum average of the first one ($\sqrt{\gamma} = \sqrt{\omega_{x(s+ds)}(\pmb{d\ell^2}_{x(s)})}$) measures the length of mean path onto $\mathfrak M$ (the average being defined with respect to the quantum fluctuations of the paths induced by the Heisenberg uncertainties). The quantum average of the square root of the second one ($\dist_{\pmb \gamma} = \omega_x(\sqrt{\pmb{\gamma}})$) measures the mean length of the fluctuating quantum paths. The two associated (minimising) geodesics correspond to minimise the energy uncertainty (for the first one, for the quasicoherent state with a misalignment of the probe; for the second one, for the gap between two quasicoherent states with imposing the microcanonical state as spin state).\\
Gauge theories onto $\mathfrak M$ induces torsions. The Berry phase generator $A$ of the strong adiabatic regime is associated with a contorsional force similar to a Laplace force generated by the magnetic field $F=dA$. The operator valued geometric phase generator $\mathfrak A$ of the weak adiabatic regime is associated with a contorsional force generating spin precessions as in Einstein-Cartan gravity theory. The two associated (autoparallel) geodesics are in accordance with these interpretations.\\
The presence of two metrics is related with the two possible evolutions: smooth moves of the probe and transition from $|\Lambda(x)\rrangle$ to $|\Lambda(y)\rrangle$ after a jump of the probe and induced by the projection after a measurement (which can be modelled by the quantum path $\Dis(y,x)$). If the language of the noncommutative geometry is the more natural to describe fuzzy spaces, these ones are also related to category structures as studied in details \ref{category} because of this double evolution.\\

For the sake of the simplicity we have considered only 3D fuzzy spaces in this paper. However the discussion can be generalised to higher dimensions, with possible interesting applications:

\subsection{Higher dimension}
We can consider a $N$D fuzzy space $(\mathfrak X,\mathbb C^N \otimes \mathcal H, \slashed D_x)$ ($x \in \mathbb R^N$) with (for the Euclidean case) $\mathfrak X = \Lin_{\mathbb R} (X^1,...,X^N,\id)$ and $\slashed D_x = \Sigma_i \otimes (X^i-x^i)$ such that $\tr(\Sigma_i\Sigma_j)=2\delta_{ij}$, or (for the Lorentzian case)  $\mathfrak X = \Lin_{\mathbb R} (X^0,...,X^{N-1},\id)$ and $\slashed D_x = \gamma_\mu \otimes (X^\mu-x^\mu)$ such that $\tr(\gamma_\mu\gamma_\nu)=2\eta_{\mu\nu}$ (Minkowski metric). In these cases $\dim M_\Lambda = N-1$. We have the following applications:
\begin{itemize}
\item $N=3$ and $\Sigma_i = \sigma_i$ (Pauli matrices): string theory BFFS matrix models with orbifoldisation and control of a qubit submitted to environmental noises.
\item $N=4$ and $\gamma_\mu$ the Dirac matrices: string theory IKKT matrix models with orbifoldisation.
\item $N=9$ or $N=10$ and $\gamma_\mu$ generalisations of Dirac matrices: string theory BFSS or IKKT matrix models without orbifoldisation.
\item $N=8$ and $\Sigma_i$ the Gell-Mann matrices: control of a qutrit submitted to environmental noises.
\item $N=9$ and $\Sigma_{(ij)} = \frac{1}{\sqrt 2} \sigma_i \otimes \sigma_j$ : control of two qubits submitted to environmental noises.
\end{itemize}
By ``orbifoldisation'' we mean a dimension reduction as explained in ref. \cite{Berenstein}.\\

It is interesting to note that the fuzzy space models permit to picture an analogy between the two domains of application:

\subsection{Analogy quantum gravity - quantum information}

\rotatebox{90}{\begin{tabular}{c||c||c||c} \scriptsize
  & \textbf{\textit{Noncommutative geometry}} & \textbf{\textit{String theory}}  & \textbf{\textit{Quantum information}} \\
  \hhline{====}
  $\mathbb C^2$ & orientation state space & fermionic string spin state space  & qubit state space \\
  \hline
  $\vec \sigma$ & orientation observables & fermionic string spin observables & qubit observables \\
  \hhline{====}
  $\mathcal H$ & quantum coordinate state space & state space of the fermionic string end & environment state space \\
  \hline
  $\vec X$ & quantum coordinate observables & field operators of D2-brane & environmental noise observables \\
  \hline
  $\Env(\mathfrak X)$ & set of quantum spatial observables & set of D2-brane observables & set of environment observables \\
  \hline
  $\Der(\mathfrak X)$ & quantum tangent vectors & set of D2-brane superoperators & set of environment superoperators\\
  \hline
  $\mathfrak M \times \mathbb R$ & quantum space-time & D2-brane & environment \\
  \hhline{====}
  $\vec x$ & probe (test particle) & D0-brane & qubit control \\
  \hline
  $\mathbb R^3$ & class. space of a Galilean observer & embedding space of strings & control parametric space \\
  \hline
  $t \mapsto x(t)$ & probe move & D0-brane dynamics & quantum computation \\
  \hline
  $|\Lambda \rrangle$ & quasicoherent state of the geometry & ferm. string state of zero displ. energy & bipartite state of zero int. energy\\
  \hline
  $M_\Lambda$ & classical space slice closest to $\mathfrak M$ & D0-brane locations of zero displ. energy  & control manifold of zero int. energy\\
  \hhline{====}
  $\slashed D_x$ & & fermionic string Dirac operator & interaction picture Hamiltonian \\ 
  $\pmb{d\ell^2}_x=\slashed D_x^2$ & square length observable & & \\
  \hline
  $\pmb{\gamma}$ & quantum metric & inducing space metric at thermo. limit &  \\
  \hline
  $\sqrt{\gamma_{ab}ds^ads^b}$ & length of the average path & displ. ener. uncert. meas. misalign. & energy uncert. with meas. misalign. \\
  \hline
  $\dist_{\pmb \gamma}(\omega_{x+\delta x},\omega_x)$ & average length of quantum paths & average displ. ener. gap isotr. uncert. & average energy gap isotropic uncert. \\
  \hline
  $d\tau$ & probe proper time & D0-brane proper time & \\
  $\sqrt{dt^2-d\tau^2}$ & & & ener. uncert. misalign. + time lag \\
  \hhline{====}
  $A$  & shift vector & $U(1)$-gauge potential of the D0-brane & \\
  $e^{-\imath \int A}$  &  & & Berry phase induced by the control \\
  \hline
  $\mathfrak A$ & spacetime Lorentz connection & fermionic string Lorentz connection & \\
  $\Ped^{-\imath \int \mathfrak A}$ & & & logical gate induced by the control \\
  \hline
  $\mathbf A_{\Dis}(L_{\vec X})$ & linking vector observable & quantised gauge field of the D2-brane & \\
  $\Dis(y,x)$ & & & control jump \\
\end{tabular}}

The noncommutative manifold can be seen as a quantised extended solid body or as a direct noncanonical quantisation of spacetime. The string theory model consists to a fermionic string linking a bosonic D2-brane and a D0-brane. The thermodynamical limit consists to a number of bosonic strings tending to infinity with constant density (semi-classical limit to the macroscopic scale). The displacement energy is the ``energy tension'' of the fermionic string. The two branes are submitted to gauge fields (electromagnetic interaction). The quantum information model consists to a qubit controlled by classical fields and interacting with a large environment. For this case, the energy corresponds to the total energy interaction (qubit-control + qubit-environment). The variation of the control fields is assimilated to a quantum computation (by assimilating the resulting evolution operator to a logical gate).\\
The measure misalignment means that the measurement is performed with a small location error; the time lag means that the measurement slightly lags behind the dynamics.\\

This analogy between quantum gravity (described by noncommutative geometry or string theory) and quantum information theory can be related to other puzzling arguments (analogy between the classifications of qubit entanglements and charged string black holes \cite{Borsten}, ER=EPR conjecture \cite{Maldacena}, AdS/CFT correspondance \cite{Ydri}, string gauge structures arising in semi-classical light-matter interaction \cite{SF}) suggesting that the ``spacetime Reality'' is like a quantum computer (the ``It for qubit'' proposal \cite{Zizzi}).

\subsection{Future research directions}
An important ingredient of the present study is the displacement operator $\Dis(y,x)$. But a lot of the fuzzy spaces are not linkable (in the meaning of this paper). In \ref{paralinkable} we have generalised to ``paralinkable'' fuzzy spaces for which it exists a generalised displacement operator. But no argument permits to state that all fuzzy spaces are totally paralinkable (especially, it is not the case if the quasicoherent state is separable at some points but is entangled at some others). It needs then to generalise the present study to unlikable fuzzy spaces. Moreover, the quantum local orientation (the spin degree of freedom) playing an important role, it could be interesting to study the possibility of defining non-orientable fuzzy spaces (at least fuzzy spaces having non-orientable eigenmanifolds).

\appendix
\section{Perturbation theory of fuzzy spaces}\label{perturbation}
Let $\mathfrak M= (\mathfrak X,\mathbb C^2 \otimes \mathcal H,\slashed D_x)$ be a fuzzy space, $(\delta X^i)$ be a set of perturbations of the coordinate observables, $\hat \mathfrak M= (\hat \mathfrak X,\mathbb C^2 \otimes \mathcal H,\hat {\slashed D_x})$ be the perturbed fuzzy space: $\hat X^i = X^i + \delta X^i$. In the whole of this appendix, ``$\simeq$'' stands for equal up to negligible element of magnitude $\mathcal O(\|\delta_{ij}\delta X^i\delta X^j\|)$, where $\|\bullet\|$ is the operator norm induced by $\llangle \bullet | \bullet \rrangle$.\\

The perturbed quasicoherent state is solution of
\begin{equation}
  (\sigma_i \otimes (X^i -x^i) + \sigma_i \otimes (\delta X^i -\delta x^i))|\hat \Lambda \rrangle = 0
\end{equation}
with $x \in M_\Lambda$ (eigen manifold of $\mathfrak M$) and $x + \delta x \in \hat M_\Lambda$ (eigen manifold of $\hat \mathfrak M$).

\subsection{$\mathfrak M$ strongly non-degenerate at $x$}
The perturbation vector $\llangle \delta \vec X \rrangle = \llangle \hat \Lambda |\delta \vec X|\hat \Lambda \rrangle \simeq \llangle \Lambda |\delta \vec X|\Lambda \rrangle$ defines the perturbation direction $\vec p_x = \frac{\llangle \delta \vec X \rrangle }{\|\llangle \delta \vec X \rrangle\|}$. We set then $\delta x^i = \|\delta \vec x\| p^i_x$. At the first order of the perturbation theory, the eigenvalue $\hat \lambda_0$ issued from the zero eigenvalue of $\slashed D_x$ is
\begin{equation}
  \hat \lambda_0(x,\delta x) \simeq \llangle \Lambda(x)|\sigma_i \otimes(\delta X^i-\delta x^i)|\Lambda(x) \rrangle
\end{equation}
The condition that $x+\delta x \in \hat M_\Lambda$ implies that:
\begin{eqnarray}
  \hat \lambda_0 \simeq 0 & \Rightarrow & n_i(x) \delta x^i = \llangle \Lambda(x)|\sigma_i \otimes \delta X^i |\Lambda(x) \rrangle \\
  & \Rightarrow &  \|\delta \vec x\| \vec p_x \cdot \vec n_x  = \llangle \Lambda(x)|\sigma_i \otimes \delta X^i |\Lambda(x) \rrangle \\
  & \Rightarrow & \delta \vec x = \frac{\llangle \Lambda(x)|\sigma_i \otimes \delta X^i |\Lambda(x) \rrangle}{\vec p_x \cdot \vec n_x} \vec p_x
\end{eqnarray}
where $\vec n_x = \llangle \Lambda(x)|\vec \sigma|\Lambda(x)\rrangle$ is the normal vector to $M_\Lambda$ at $x$.\\
If the quasicoherent state is separable, $|\Lambda(x)\rrangle = |O_x \rangle \otimes |\Omega_x \rangle$, then $\llangle \delta \vec X \rrangle = \langle \Omega_x|\delta \vec X|\Omega_x\rangle$, $\llangle \Lambda(x)|\sigma_i \otimes \delta X^i|\Lambda(x)\rrangle = \langle O_x|\vec \sigma|O_x\rangle \cdot \langle \Omega_x|\delta \vec X|\Omega_x\rangle$ and $\vec n_x = \langle O_x|\vec \sigma|O_x\rangle$. It follows that $\delta \vec x = \langle \Omega_x|\delta \vec X|\Omega_x \rangle$.\\
The metric of $\hat M_\Lambda$ is
\begin{equation}
  \hat \gamma_{ab} \simeq \gamma_{ab} + \delta_{ij}\left(\frac{\partial x^i}{\partial s^a} \frac{\partial \delta x^j}{\partial s^b} + \frac{\partial x^i}{\partial s^b} \frac{\partial \delta x^j}{\partial s^a} + \frac{\partial \delta x^i}{\partial s^a}\frac{\partial \delta x^j}{\partial s^b}\right)
\end{equation}

The perturbed quasicoherent state is then
\begin{equation}
  |\hat \Lambda(x+\delta x)\rrangle \simeq |\Lambda(x)\rrangle -\sum_{n>0} \frac{\llangle \lambda_n(x)|\sigma_i\otimes(\delta X^i-\delta x^i)|\Lambda(x)\rrangle}{\lambda_n(x)}|\lambda_n(x)\rrangle
\end{equation}
where $\slashed D_x|\lambda_n(x)\rrangle = \lambda_n(x)|\lambda_n(x)\rrangle$, $\forall x \in M_\Lambda$ ($\lambda_0 = 0$).

\subsection{$\mathfrak M$ only weakly non-degenerate at $x$}\label{perturbWD}
We consider the case of a strictly weakly non-degenerate separable quasicoherent state. In this case, all state $|O_{\Theta}\rangle \otimes |\Omega_x\rangle$ is quasicoherent state with $|O_\Theta \rangle = \cos \theta |0\rangle + e^{\imath \varphi} \sin \theta |1\rangle$ ($\Theta=(\theta,\varphi)$, $(|0\rangle,|1\rangle)$ is the canonical basis of $\mathbb C^2$). Let $|I_\Theta \rangle = - e^{-\imath \varphi} \sin \theta|0\rangle + \cos \theta |1\rangle$ be an orthogonal vector to $|O_\Theta\rangle$. The perturbation direction does not depend on the choice of $\Theta$ since $\llangle \Lambda(x,\Theta)|\delta \vec X|\Lambda(x,\Theta)\rrangle = \langle \Omega_x|\delta \vec X|\Omega_x \rangle$.\\
The eigenvalue matrix is then
\begin{equation}
  [\hat \lambda_0] = \left(\begin{array}{cc} \langle O_\Theta|\sigma_i|O_\Theta\rangle & \langle O_\Theta|\sigma_i|I_\Theta\rangle \\ \langle I_\Theta|\sigma_i|O_\Theta\rangle & \langle I_\Theta|\sigma_i|I_\Theta\rangle \end{array} \right) \langle \Omega_x|(\delta X^i-p^i_x\|\delta \vec x\|)|\Omega_x \rangle
\end{equation}
$[\hat \lambda_0] = 0$ if $\delta \vec x = \langle \Omega_x|\delta \vec X|\Omega_x \rangle$ (the perturbed strictly weakly degenerate quasicoherent state is still strictly weakly degenerate). 
The perturbed quasicoherent states are then
\begin{eqnarray}
  & & |\hat \Lambda(x,\Theta)\rrangle \simeq  |O_\Theta \rangle \otimes|\Omega_x\rangle \nonumber \\
  & & \quad - \sum_{n>0} \frac{\llangle \lambda_n(x)|\sigma_i\otimes(\delta X^i-\delta x^i)|O_\Theta \rangle \otimes |\Omega_x\rangle}{\lambda_n(x)} |\lambda_n(x)\rrangle
\end{eqnarray}

For an entangled quasicoherent state, we have several perturbation directions $\vec p_{x,u} = \llangle \Lambda(x)|u^\dagger u \delta \vec X|\Lambda(x)\rrangle$ where $u \in \Lin(\id,u_1,u_2,u_3)$ with $u_i$ such that $\tr(\rho_\Lambda u_i)=0$ ($u \in U(2)_{a.s.}$, one or two $u_i$ are possibly zero). For each perturbation direction, we have
\begin{equation} [\hat \lambda_0] = \left(\begin{array}{cc} \llangle \Lambda_x| u^\dagger \sigma_i u \otimes (\delta X^i - p^i_{x,u}\|\delta \vec x\|)|\Lambda_x\rrangle  &  ... \\ \vdots & \ddots \end{array} \right)
\end{equation}
in the basis $(u|\Lambda\rrangle, uu_1|\Lambda \rrangle, uu_2|\Lambda \rrangle, uu_3|\Lambda \rrangle)$. The perturbation theory needs to diagonalise $[\hat \lambda_0]$, but in this case the eigenbasis $(\tilde u_0|\Lambda\rrangle, \tilde u_1|\Lambda \rrangle, \tilde u_2|\Lambda \rrangle, \tilde u_3|\Lambda \rrangle)$ is not consistent with the direction $\vec p_{x,u}$. We need to find $u_*$ such that $[\hat \lambda_0]$ is diagonal (or at least is block-diagonal of shape $(1\times 1) \oplus (3 \times 3)$), fixing a single consistent perturbation direction $\vec p_{x,u_*}$. We have then $\delta \vec x = \frac{\llangle \Lambda(x)|u_*^\dagger \sigma_i u_* \otimes \delta X^i|\Lambda(x)\rrangle}{\vec p_{x,u_*} \cdot \vec n_{x,u_*}} \vec p_{x,u_*}$ with $\vec n_{x,u_*} = \llangle \Lambda(x)|u_*^\dagger \vec \sigma u_*|\Lambda(x)\rrangle$.

\section{Paralinkable fuzzy spaces and entangled quantum paths}\label{paralinkable}
\subsection{Generalised displacement operators}
For the sake of the simplicity, in the main text we have treated only the case of totally linkable fuzzy spaces, i.e. fuzzy spaces for which two quasicoherent states can be linked by a transformation $u_{yx} \otimes \Dis(y,x) \in SU(2)_{a.s.} \otimes e^{\imath \mathfrak X}$. This assumption can be considered as too restrictive, but it is pertinent since it implies that the magnitude of the quantum entanglement be constant on $M_\Lambda$ (and so corresponds to fuzzy space homogeneous from the point of view of the quantum information properties). More precisely, if we consider the eigen density matrices: $\rho_\Lambda(y) = u_{yx} \rho_\Lambda(x) u_{yx}^\dagger$, the von Neumann entropy $S_\Lambda = - \tr\left(\rho_\Lambda \ln \rho_\Lambda\right)$ is constant (this one can be considered as a measure of the entanglement since $S_\Lambda=0$ for a pure state $\iff$ a separable quasicoherent state, and $S_\Lambda = \ln 2 = \sup S$ for the microcanonical state $\frac{1}{2} \id$ $\iff$ a maximally entangled quasicoherent state).\\

In this appendix we want to generalise the discussion by choosing weaker assumptions. For a fuzzy space $\mathfrak M$ supposed weakly non-degenerate, we said that two points $x,y \in M_\Lambda$ are paralinkable if it exists an orthonormal basis of $\mathbb  C^2$, $\mathcal B_{x} = (|O_{x}\rangle,I_{x}\rangle)$, two operators $\Dis^\alpha(y,x) \in e^{\imath \Env(\mathfrak X)}$ (with $\alpha \in \{0,1\}$) and $u_{yx} \in U(2)$ such that
\begin{equation}
  |\Lambda(y) \rrangle = u_{yx} P_\alpha(x) \otimes \Dis^\alpha(y,x) |\Lambda(x) \rrangle
\end{equation}
with $P_0(x) = |O_{x}\rangle \langle O_{x}|$ and $P_1(x) = |I_{x}\rangle \langle I_{x}|$. We can write
\begin{eqnarray}
  |\Lambda(x)\rrangle & = & |O_{x}\rangle \otimes |\Lambda^0(x)\rangle + |I_{x}\rangle \otimes |\Lambda^1(x)\rangle \\
  |\Lambda(y)\rrangle & = & |O_{y}\rangle \otimes |\Lambda^0(y)\rangle + |I_{y}\rangle \otimes |\Lambda^1(y)\rangle
\end{eqnarray}
with $u_{yx}$ the matrix of basis change from $\mathcal B_x$ to $\mathcal B_y$. $p_\alpha(x) = \langle \Lambda^\alpha(x)|\Lambda^\alpha(x)\rangle$ is the probability to the spin state be $P_\alpha(x)$ if the quasicoherent state is $|\Lambda(x)\rrangle$ ($p_1+p_0=1$). By construction $|\Lambda^\alpha(y)\rangle = \Dis^\alpha(y,x)|\Lambda^\alpha(x)\rangle$ (no summation onto the indices $\alpha$ which are at the same level). The eigen density matrix in $\mathcal B_{x}$ is $\rho_\Lambda(x) = \left(\begin{array}{cc} p_0(x) & c(x) \\ \overline{c(x)} & p_1(x) \end{array} \right)$ with the coherence defined by $c(x) = \langle \Lambda^1(x)|\Lambda^0(x)\rangle$. Two properties are relaxed with respect to the definition of linkable points, firstly the generalised displacement operator $P_\alpha(x) \otimes \Dis^\alpha(y,x)$ is not separable, and secondly $\Dis^\alpha \in e^{\imath \Env(\mathfrak X)} \varsupsetneq e^{\imath \mathfrak X}$. Since elements of $\Env(\mathfrak X)$ are not self-adjoint, $\Dis^\alpha$ are not necessary unitary, but the normalisation condition implies that
\begin{equation}
  \sum_\alpha \langle \Lambda^\alpha(x)|\Dis^\alpha(y,x)^\dagger \Dis^\alpha(y,x)|\Lambda^\alpha(x)\rangle = 1
\end{equation}
but $p_\alpha(y) = \langle \Lambda^\alpha(x)|\Dis^\alpha(y,x)^\dagger \Dis^\alpha(y,x)|\Lambda^\alpha(x)\rangle \not= \langle \Lambda^\alpha(x)|\Lambda^\alpha(x)\rangle = p_\alpha(x)$. The relaxation of the two assumptions permits to have a change of the occupation probabilities $(p_\alpha)$ (and a change of coherence $c$) between $x$ and $y$, and so a change of the entanglement magnitude: $S_\Lambda(y) \not= S_\Lambda(x)$. But the entanglement class cannot be changed, we cannot link a separable state to an entangled state (as for example with $|\Lambda^1(x)\rangle = 0$ and $|\Lambda^1(y) \rangle \not=0$). So a fuzzy space $\mathfrak M$ totally paralinkable keeps a certain degree of homogeneity since all its quasicoherent states are in the same entanglement class (but with various entropy values).\\

Let $\varrho^\alpha_\Lambda = |\Lambda^\alpha \rangle \langle \Lambda^\alpha|$ and $P^\alpha_\Lambda = \frac{\varrho^\alpha_\Lambda}{\tr \varrho^\alpha_\Lambda}$ (i.e. $\varrho^\alpha_\Lambda = p^\alpha P^\alpha_\Lambda$, and $\tr(P^1P^0) = |c|^2 \not=0$). We have still $P_\Lambda = |\Lambda \rrangle \llangle \Lambda|$. We can introduce the following pure states of $\Env(\mathfrak X)$: $\omega_x = \tr(P_\Lambda(x) \bullet)$ and $\omega^\alpha_x = \tr(P^\alpha_\Lambda(x) \bullet)$, and after some algebra using the trace cyclicality we can show that $\omega_x$ is a convex combination of $(\omega_x^\alpha)$:
\begin{equation}
  \omega_x = p_\alpha(x) \omega^\alpha_x
\end{equation}
and by construction, we have
\begin{equation}
  \omega^\alpha_y = \frac{p_\alpha(y)}{p_\alpha(x)} \omega^\alpha_x \circ \Ad_{\Dis^\alpha(y,x)}
\end{equation}
(without summation of the indices $\alpha$).\\

The generalised displacement operator $P_\alpha \otimes \Dis^\alpha$ can be seen as a quantum entanglement between quantum paths on $\mathfrak M$ and local orientation (spin) of $\mathfrak M$. From $\omega_x$ to $\omega_y$, the quantum path $\Dis^0(y,x)$ is followed if the spin is initially in the pure state $P^0(x)$ whereas the quantum path $\Dis^1(y,x)$ is followed if the spin is initially in the pure state $P^1(x)$. So with respect to the discussion in main sections of this paper, in addition to an entanglement between quantum location and quantum spin (``local'' entanglement of the state $|\Lambda(x)\rrangle$) we can have also entanglement between quantum paths and quantum spin (``nonlocal'' entanglement of displacement operator $P_\alpha(x) \otimes \Dis^\alpha(y,x)$, nonlocal since it is defined onto two points $x$ and $y$ in the $\mathbb R^3$ the classical space of the observer). From $\omega_x$ to $\omega_y$ the system follows ``in parallel'' the two quantum paths $\Dis^\alpha(y,x)$ with probabilities $p_\alpha(x)$ (this is the reason for which we said that the two points are ``paralinkable'').\\

We can use the quantum metric $\pmb \gamma$ with generalised displacement operators but only with its representation on $(\dnc X^i)$ and not with the one on $(\theta_i)$ (since the generators of $\Ad_{\Dis^\alpha}$ are not in $L_{\mathfrak X}$ but in $ L_{\Env(\mathfrak X)}$ -- we recall that $(\theta_i)$ is the dual basis of $(L_{X^i})$ --). With $\Dis^\alpha(y,x) = e^{\imath \delta \Pi^\alpha}$ ($\delta \Pi^\alpha \in \Env(\mathfrak X)$), we have $\dnc X^i(L_{\delta \Pi^\alpha}) = -\imath [\delta \Pi^\alpha,X^i]$ by extension of noncommutative differential calculus from $\Der(\mathfrak X)$ to $\Der(\Env(\mathfrak X))$. So the distance between $\omega_x$ and $\omega_y$ is
\begin{equation}
  \dist_{\pmb \gamma}(\omega_x,\omega_y) = \omega_x\left(\sqrt{\left(\begin{array}{cc} \pmb \gamma(L_{\delta \Pi^0},L_{\delta \Pi^0}) & \pmb \gamma(L_{\delta \Pi^0},L_{\delta \Pi^1}) \\ \pmb \gamma(L_{\delta \Pi^1},L_{\delta \Pi^0}) & \pmb \gamma(L_{\delta \Pi^1},L_{\delta \Pi^1}) \end{array} \right)} \right)
\end{equation}
where the matrices are written in the basis $\mathcal B_x$. The distance between $\omega_x$ and $\omega_y$ is set to be the mean value (in the state $\omega_x$) of the average length of the two quantum paths $\Dis^\alpha(y,x)$ (``average length'' in the meaning of the averaging onto the quantum fluctuations associated with the quantum uncertainties $\Delta X^i$).\\
We can also define a distance without taking into account of the correlations between the two quantum paths:
\begin{eqnarray}
  & & \dist_{\pmb \gamma}^{off}(\omega_x,\omega_y) \nonumber \\
  & & \quad = \omega_x\left(\left(\begin{array}{cc} \sqrt{\pmb \gamma(L_{\delta \Pi^0},L_{\delta \Pi^0})} & 0 \\ 0 & \sqrt{\pmb \gamma(L_{\delta \Pi^1},L_{\delta \Pi^1})} \end{array} \right) \right) \\
  & & \quad = p_\alpha(x) \omega_x^\alpha(\sqrt{\pmb \gamma(L_{\delta \Pi^\alpha},L_{\delta \Pi^\alpha})})
\end{eqnarray}

We can also introduce spinors of linking vector observables:
\begin{equation}
 \pmb{\delta \vec \ell}_{yx} = \left(\begin{array}{cc} \pmb{\delta \vec \ell}_{yx}^0 & \pmb{\delta \vec C}_{yx}^{01} \\ \pmb{\delta \vec C}_{yx}^{10} & \pmb{\delta \vec \ell}_{yx}^1 \end{array}\right) \quad,\quad  \pmb{\delta \vec \ell}_{yx}^{off} = \left(\begin{array}{cc} \pmb{\delta \vec \ell}_{yx}^0 & \vec  0 \\ \vec 0 & \pmb{\delta \vec \ell}_{yx}^1 \end{array}\right)
\end{equation}
where the linking vector observables are $\pmb{\delta \vec \ell}_{yx}^\alpha =  \Dis^\alpha(y,x)^\dagger [\vec X,\Dis^\alpha(y,x)] = -\imath [\delta \Pi^\alpha,\vec X] + \mathcal O(\|\delta \vec x\|^2)$ and the correlators are $\pmb{\delta \vec C}_{yx}^{01} = \Dis^0(y,x)^\dagger [\vec X,\Dis^1(y,x)] = -\imath [\delta \Pi^1,\vec X] + \mathcal O(\|\delta \vec x\|^2)$. We can see that
\begin{equation}
  \pmb{\delta \vec \ell}_{yx}^{off} \cdot \pmb{\delta \vec \ell}_{yx} = \left(\begin{array}{cc} \pmb \gamma(L_{\delta \Pi^0},L_{\delta \Pi^0}) & \pmb \gamma(L_{\delta \Pi^0},L_{\delta \Pi^1}) \\ \pmb \gamma(L_{\delta \Pi^1},L_{\delta \Pi^0}) & \pmb \gamma(L_{\delta \Pi^1},L_{\delta \Pi^1}) \end{array} \right) + \mathcal O(\|\delta \vec x\|^2)
\end{equation}
and then $\dist_{\pmb \gamma}(\omega_x,\omega_y) = \omega_x(\sqrt{\pmb{\delta \vec \ell}_{yx}^{off} \cdot \pmb{\delta \vec \ell}_{yx}})$ and $\dist_{\pmb \gamma}^{off}(\omega_x,\omega_y) = \omega_x(|\pmb{\delta \ell}^{off}_{yx}|) = p_\alpha(x) \omega^\alpha_x(|\pmb{\delta \ell}^\alpha_{yx}|)$.\\

\begin{example}{2}{Fuzzy hyperboloid}
  We can rewrite the quasicoherent state of the fuzzy hyperboloid as:
  \begin{equation}
    |\Lambda(\alpha)\rrangle = \left(1-\sqrt{2}\sigma_1 \otimes (a-\alpha)\hat N_\alpha^{-1} X^3\right) \mathcal D(\alpha) \left(\begin{array}{c} |0\rangle \\ 0 \end{array} \right) + \mathcal O(\epsilon^2)
  \end{equation}
  where $\hat N_\alpha = (a^\dagger-\bar \alpha)(a-\alpha) = \sum_{n=0}^{+\infty} n|n\rangle_\alpha{_\alpha}\langle n|$, $\hat N_\alpha^{-1} = \sum_{n=1}^{+\infty} \frac{1}{n} |n\rangle_\alpha{_\alpha}\langle n|$ being its pseudo-inverse, and where $\mathcal D(\alpha)$ is the displacement operator of the CCR algebra coherent states \cite{Perelomov}. It follows that
  \begin{eqnarray}
    |\Lambda(\beta) \rrangle & = &  \left(1-\sqrt{2}\sigma_1 \otimes (a-\beta)\hat N_\beta^{-1} X^3\right) \mathcal D(\beta) \nonumber \\
    & & \quad \times  \mathcal D(\alpha)^{-1} \left(1+\sqrt{2}\sigma_1 \otimes (a-\alpha)\hat N_\alpha^{-1} X^3\right) |\Lambda(\alpha)\rrangle + \mathcal O(\epsilon)
  \end{eqnarray}
  and then after some algebra
  \begin{eqnarray}
    |\Lambda(\beta) \rrangle & = & e^{\imath \Im(\alpha \bar \beta)} (1+\sqrt{2}\sigma_1 \otimes (a-\beta)\hat N_\beta^{-1} (\mathcal D(\beta-\alpha)X^3\mathcal D(\beta-\alpha)^{-1}-X^3)) \nonumber \\
    & & \quad \times \mathcal D(\beta-\alpha) |\Lambda(\alpha) \rrangle + \mathcal O(\epsilon^2)
  \end{eqnarray}
  Let $W_{\beta \alpha} \equiv \sqrt{2} (a-\beta)\hat N_\beta^{-1} (\mathcal D(\beta-\alpha)X^3\mathcal D(\beta-\alpha)^{-1}-X^3)$, in the basis $(|0\rangle,|1\rangle)$ we have then
  \begin{equation}
    |\Lambda(\beta)\rrangle = e^{\imath \Im(\alpha \bar \beta)} \left(\begin{array}{cc} \mathcal D(\beta-\alpha) & W_{\beta \alpha} \mathcal D(\beta-\alpha) \\ W_{\beta \alpha} \mathcal D(\beta-\alpha) & \mathcal D(\beta-\alpha) \end{array} \right)|\Lambda(\alpha) \rrangle + \mathcal O(\epsilon^2)
  \end{equation}
  in the new basis defined by $|\pm\rangle = \frac{|0\rangle\pm|1\rangle}{\sqrt 2}$  we have
   \begin{eqnarray}
     |\Lambda(\beta)\rrangle & = & e^{\imath \Im(\alpha \bar \beta)} \left(\begin{array}{cc} (1+W_{\beta \alpha}) \mathcal D(\beta-\alpha) & 0 \\ 0 & (1-W_{\beta \alpha}) \mathcal D(\beta-\alpha) \end{array} \right)|\Lambda(\alpha) \rrangle \nonumber \\
     & & \qquad + \mathcal O(\epsilon^2) \\
     & = & e^{\imath \Im(\alpha \bar \beta)} \left(\begin{array}{cc} e^{W_{\beta \alpha}} \mathcal D(\beta-\alpha) & 0 \\ 0 & e^{-W_{\beta \alpha}} \mathcal D(\beta-\alpha) \end{array} \right)|\Lambda(\alpha) \rrangle \nonumber \\
     & & \qquad + \mathcal O(\epsilon^2) \\
     & = & P_\alpha \otimes \Dis^\alpha(\beta,\alpha) |\Lambda(\alpha)\rrangle + \mathcal O(\epsilon^2)
   \end{eqnarray}
   with $P_\pm = |\pm\rangle\langle \pm|$, and $\Dis^\pm(\beta,\alpha) = e^{\imath \Im(\alpha \bar \beta)} e^{\pm W_{\beta \alpha}} \mathcal D(\beta-\alpha)$. The fuzzy hyperboloid is then paralinkable (at the first order of the perturbation theory) except at $\alpha=0$.  $\delta \Pi^\pm_\alpha = -\imath (\delta \alpha a^+ - \delta \bar \alpha a) \mp \imath \delta W_{\alpha} + \mathcal O(\max(\epsilon,|\delta \alpha|)^2)$, with $\delta W_\alpha = \sqrt{2}(a-\alpha)\hat N^{-1}_\alpha [\delta \alpha a^+-\delta \bar \alpha a,X^3]$.\\

   The fuzzy hyperboloid model corresponds to a two-sheet hyperboloid, for which we have considered only the upper sheet. The second sheet is obtained by $\epsilon \to -\epsilon$ ($X^3 \to - X^3$). The quasicoherent states of the two sheets are related by a simple internal global gauge change:
   \begin{equation}
     |\Lambda_{-\epsilon}(\alpha) \rrangle = \sigma_3 |\Lambda_{+\epsilon}(\alpha)\rrangle
   \end{equation}
   ($\sigma_3 = |0\rangle\langle0|-|1\rangle\langle 1|=|-\rangle\langle+|+|+\rangle\langle -| = \sigma_1^{\mathcal B}$). $|\Lambda_{\pm \epsilon}(\alpha) \rrangle$ are then linkable (with a trivial displacement operator, their distance by $\pmb \gamma$ is then zero). By definition the two sheets $M_{\Lambda,\pm\epsilon}$ are each connected but they are not connected each other. In contrast, the two sheets are each not linkable (they are only paralinkable except at $\alpha=0$) but they are linked each other (more precisely, each point is linkable with its symmetric point by the reflection $x^3 \to - x^3$).\\
   
   In the basis $\mathcal B = (|\pm\rangle)$ the quasicoherent state is
   \begin{equation}
     |\Lambda(\alpha) \rrangle = \frac{1}{\sqrt 2} \left(\begin{array}{c} |0\rangle_\alpha \\ |0\rangle_\alpha \end{array}\right) - \sum_{n=1}^{+\infty} \frac{{_\alpha}\langle n|X^3|0\rangle_\alpha}{\sqrt{2n}} \left(\begin{array}{c} |n-1\rangle_\alpha \\ - |n-1\rangle_\alpha \end{array} \right) + \mathcal O(\epsilon^2)
   \end{equation}
   It follows that the probabilities to follows the quantum paths $\Dis^\pm$ are
   \begin{eqnarray}
     p_\pm(\alpha) & = & \frac{1}{2} \mp \Re\left({_\alpha}\langle 1|X^3|0\rangle_\alpha\right) + \mathcal O(\epsilon^2) \\
     & = & \frac{1}{2} \mp\epsilon e^{-|\alpha|^2} \sum_{n=0}^{+\infty} \frac{\sqrt{r^2+3/2+n}-\sqrt{r^2+1/2+n}}{n!} |\alpha|^{2n} \Re(\alpha) \nonumber \\
     & & \qquad + \mathcal O(\epsilon^2)
   \end{eqnarray}
   and the coherence is
   \begin{eqnarray}
     c(\alpha) & = &  \frac{1}{2} - \Im\left({_\alpha}\langle 1|X^3|0\rangle_\alpha\right) + \mathcal O(\epsilon^2) \\
     & = & \frac{1}{2} - \epsilon e^{-|\alpha|^2} \sum_{n=0}^{+\infty} \frac{\sqrt{r^2+3/2+n}-\sqrt{r^2+1/2+n}}{n!} |\alpha|^{2n} \Im(\alpha) \nonumber \\
     & & \qquad + \mathcal O(\epsilon^2)
   \end{eqnarray}
   The linear entropy of the quasicoherent state is then
   \begin{eqnarray}
     S_\Lambda^{lin}(\alpha) & = & 1- \tr(\rho_\Lambda^2) \\
     & = & 2 \Im\left({_\alpha}\langle 1|X^3|0\rangle_\alpha\right) + \mathcal O(\epsilon^2) \\
     & = & 2 \epsilon e^{-|\alpha|^2} \sum_{n=0}^{+\infty} \frac{\sqrt{r^2+3/2+n}-\sqrt{r^2+1/2+n}}{n!} |\alpha|^{2n} \Im(\alpha) \nonumber \\
     & & \qquad + \mathcal O(\epsilon^2)
   \end{eqnarray}
   (the linear entropy is the first order approximation of the von Neumann entropy). 
\end{example}

\subsection{Noncommutative gauge potentials}
We can introduce a new noncommutative ``nonabelian'' gauge potential:
\begin{equation}
  \pmb{\mathfrak A}_\Dis = \imath \left(\begin{array}{cc} \Dis^{0\dagger} \dnc \Dis^0 & \Dis^{0\dagger} \dnc \Dis^1 \\ \Dis^{1\dagger} \dnc \Dis^0 & \Dis^{1\dagger} \dnc \Dis^1 \end{array} \right) \in \Omega^1_{\Der}(\Env(\mathfrak X),\mathfrak{gl}(2,\mathbb C))
\end{equation}
for which we have $\pmb{\mathfrak A}_{\Dis(y,x)}(L_{\vec X}) = \pmb{\delta \vec \ell}_{yx}$.
We have then
\begin{equation}
  \tr\left(\varrho_\Lambda(x) \pmb{\mathfrak A}_{\Dis(y,x)}^{off}(L_{X^i})\right) = y^i - x^i
\end{equation}
with $\pmb{\mathfrak A}_{\Dis(y,x)}^{off}$ the block diagonal matrix of blocks $(\imath \Dis^{\alpha\dagger} \dnc \Dis^\alpha)$. The noncommutative ``nonabelian'' curvature $\pmb{\mathfrak F}_\Dis = \dnc \pmb{\mathfrak A}_\Dis - \imath [\pmb{\mathfrak A}_\Dis,\pmb{\mathfrak A}_\Dis] \in \Omega^2_\Der(\Env(\mathfrak X),\mathfrak{gl}(2,\mathbb C))$ is not zero.\\

$\pmb{\mathfrak A}_\Dis$ is a noncommutative equivalent of $\mathfrak A$ which has $A = \tr(\rho_\Lambda \mathfrak A)$ as ``abelianisation''. In a same way, we can define a noncommutative ``abelianisation'' of $\pmb{\mathfrak A}_\Dis$:
\begin{eqnarray}
  \mathbf A_\Dis & = & \tr_{\mathbb C^2} \left(\rho_\Lambda \pmb{\mathfrak A}_\Dis \right) \\
  & = & p_\alpha \mathbf A^\alpha_\Dis + \mathbf C_\Dis \in \Omega^1_\Der(\Env(\mathfrak X))
\end{eqnarray}
with $\mathbf A^\alpha_\Dis = \imath \Dis^{\alpha \dagger} \dnc \Dis^\alpha$ and $\mathbf C_\Dis = \imath c \Dis^{1\dagger} \dnc \Dis^0 + \imath \bar c \Dis^{0\dagger} \dnc \Dis^1 $.\\
The noncommutative ``abelian'' curvature is also nonzero in this case (if $p_\alpha \not=0$):
\begin{eqnarray}
  \mathbf F_\Dis[L_X,L_Y] & = & \dnc \mathbf A_\Dis(L_X,L_Y) -\imath [\mathbf A_\Dis(L_X),\mathbf A_\Dis(L_Y)] \\
  & = & \imath p_0p_1 \varsigma_{\alpha \beta} [\mathbf A^\alpha_\Dis(L_X),\mathbf A^\beta_\Dis(L_Y)] \nonumber \\
  & & \quad + \dnc \mathbf C_\Dis(L_X,L_Y) -\imath [\mathbf C_\Dis(L_X),\mathbf C_\Dis(L_Y)] \nonumber \\
  & & \quad -\imath p_\alpha [\mathbf A_\Dis^\alpha(L_{[X}),\mathbf C_\Dis(L_{Y]})] 
\end{eqnarray}
where $\varsigma_{\alpha \beta} \equiv (-1)^{\alpha -\beta}$. We have also $\mathbf A^{off}_\Dis = \tr_{\mathbb C^2} \left(\rho_\Lambda \pmb{\mathfrak A}_\Dis^{off} \right) = p_\alpha \mathbf A^\alpha_\Dis$ and $\mathbf F_\Dis^{off}(L_X,L_Y) = \imath p_0p_1 \varsigma_{\alpha \beta} [\mathbf A^\alpha_\Dis(L_X),\mathbf A^\beta_\Dis(L_Y)]$.

\section{The Lorentz connection}
\subsection{Frame changes on $\Omega$}\label{frameOmega}
\begin{prop}
  Under a local external gauge transformation $J \in \underline{SO(3)}_{\mathbb R^3}$, $\Omega$ becomes:
  \begin{eqnarray}
    \hat \Omega^{jk}_m & = & {J_{j'}}^j {J_{k'}}^k \Omega^{j'k'}_{m'} ({J_m}^{m'}+\partial_m {J_p}^{m'}x^p) + \frac{1}{2} J^{[jl} \partial_m {J^{k]}}_l \\
    \hat \Omega^{i0}_m & = & {J_{i'}}^i \Omega^{i'0}_{m'} ({J_m}^{m'}+\partial_m {J_p}^{m'}x^p)
  \end{eqnarray}
  with ${J_{j'}}^j = {[J^t]^j}_{j'} = {[J^{-1}]^j}_{j'}$ and $\Omega^{\mu \nu} = \Omega^{\mu \nu}_m \frac{\partial x^m}{\partial s^a} ds^a$ ($(s^1,s^2)$ being a local curvilinear coordinate system on $M_\Lambda$). 
\end{prop}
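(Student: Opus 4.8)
The plan is to start from the definitions $\Omega^{jk} = -{\varepsilon^{jk}}_i\Re\tr(\sigma^i\mathfrak A)$ and $\Omega^{i0} = \Im\tr(\sigma^i\mathfrak A)$ and feed in the external‑gauge transformation law for $\mathfrak A$ established in Property \ref{gauge2}, namely $\hat{\mathfrak A} = u_J^{-1}(J^*\mathfrak A)u_J + \imath u_J^{-1}du_J$, where $u_J\in SU(2)$ satisfies $u_J^{-1}\sigma_j u_J = {J^i}_j\sigma_i$. Writing $\hat{\mathfrak A}$ in components along the $\mathbb R^3$ coordinates and tracing against $\sigma^i$ splits $\hat\Omega$ into a \emph{homogeneous} piece coming from $u_J^{-1}(J^*\mathfrak A)u_J$ and an \emph{inhomogeneous} piece coming from $\imath u_J^{-1}du_J$; I treat these in turn. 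I would note once at the outset that ``$d$'' throughout is the differential on $M_\Lambda$, so ``$\partial_m$'' means $\partial/\partial x^m$ composed with the embedding, consistent with $\Omega^{\mu\nu}=\Omega^{\mu\nu}_m\frac{\partial x^m}{\partial s^a}ds^a$.

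For the homogeneous part I would first expand $J^*\mathfrak A$: against $\partial/\partial x^m$ it produces the Jacobian factor $\partial_m\phi^{m'} = {J_m}^{m'}+\partial_m{J_p}^{m'}x^p$ of $\phi(x)={J(x)}^{-1}x$ multiplying $\mathfrak A_{m'}$ evaluated at $J^{-1}x$. Then, using the conjugation identity in the equivalent form $u_J\sigma^i u_J^{-1} = {J_i}^l\sigma_l$ and cyclicity of the trace, $\tr(\sigma^i u_J^{-1}\mathfrak A_{m'}u_J) = {J_i}^l\tr(\sigma_l\mathfrak A_{m'})$, transferring a single rotation matrix ${J_i}^l$ onto the index contracted with ${\varepsilon^{jk}}_i$ (for $\Omega^{jk}$) or directly onto the free index (for $\Omega^{i0}$). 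For $\Omega^{jk}$ one then rewrites ${\varepsilon^{jk}}_i{J_i}^l = {J_{j'}}^j{J_{k'}}^k{\varepsilon^{j'k'}}_l$ using the $SO(3)$‑covariance of the Levi‑Civita tensor (which relies on $\det J=1$) together with the orthogonality relations for $J$; contracting with $\Re\tr(\sigma_l\mathfrak A_{m'})$ reassembles $\Omega^{j'k'}_{m'}$ and yields ${J_{j'}}^j{J_{k'}}^k\Omega^{j'k'}_{m'}$ times the Jacobian. For $\Omega^{i0}$ there is only one $J$‑factor and no $\varepsilon$, hence no determinant is needed and the homogeneous part is just ${J_{i'}}^i\Omega^{i'0}_{m'}$ times the Jacobian — this is precisely why $\Omega^{i0}$ transforms as a vector while $\Omega^{jk}$ transforms as an antisymmetric tensor.

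For the inhomogeneous part, write the $\mathfrak{su}(2)$‑valued Maurer–Cartan form as $u_J^{-1}du_J = \imath\theta^l\sigma_l$ with real $1$‑forms $\theta^l$. Since $u_J^{-1}du_J$ is anti‑Hermitian, $\imath\tr(\sigma^i u_J^{-1}du_J) = -2\theta^i$ is real, so it contributes only to $\Omega^{jk}$ (which takes $\Re$) and \emph{not} to $\Omega^{i0}$ (which takes $\Im$), confirming the absence of an extra term in $\hat\Omega^{i0}_m$. To identify the surviving contribution $-{\varepsilon^{jk}}_i(-2\theta^i) = 2{\varepsilon^{jk}}_i\theta^i$ with $\frac12 J^{[jl}\partial_m{J^{k]}}_l$, I would differentiate the conjugation identity: $d\!\left({J^i}_j\right)\sigma_i = [\,u_J^{-1}\sigma_j u_J,\,u_J^{-1}du_J\,]$, which with $[\sigma_l,\sigma_i]=2\imath{\varepsilon_{li}}^k\sigma_k$ gives $d{J^k}_j = 2{\varepsilon^k}_{li}\theta^l{J^i}_j$, hence (multiplying by $J^{-1}$ on the right and contracting with $\varepsilon$) $\theta_q = -\frac14\varepsilon_{qab}(dJ\,J^{-1})_{ab}$. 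Substituting back and using the standard $\varepsilon\varepsilon$‑contraction identities together with $\partial_m(JJ^T)=0$ turns $2{\varepsilon^{jk}}_i\theta^i$ into $-(dJ\,J^{T})_{jk}$ evaluated on $\partial_m$, which is exactly $\frac12 J^{[jl}\partial_m{J^{k]}}_l$. Collecting the homogeneous and inhomogeneous pieces gives the two displayed formulas.

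The main obstacle is not conceptual but the index bookkeeping: one must keep straight the distinction between $J$, $J^T=J^{-1}$, and the paper's notation ${J_{j'}}^j = {[J^{-1}]^j}_{j'}$, thread the Jacobian factor $\partial_m\phi^{m'}$ consistently through both terms, and carry out the two Levi‑Civita manipulations — $SO(3)$‑covariance of $\varepsilon$ with $\det J=1$, and the reduction $2{\varepsilon^{jk}}_i\theta^i = \frac12 J^{[jl}\partial_m{J^{k]}}_l$ — which together encode the fact that $\imath u_J^{-1}du_J$ descends to the $\mathfrak{so}(3)$ connection under the double cover $SU(2)\to SO(3)$. Once that Lie‑algebra correspondence is set up explicitly, everything else is substitution.
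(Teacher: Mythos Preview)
Your proposal is correct and follows essentially the same route as the paper: both start from the transformation law $\hat{\mathfrak A}=u_J^{-1}(J^*\mathfrak A)u_J+\imath u_J^{-1}du_J$, split into homogeneous and inhomogeneous parts, use the $SO(3)$-covariance of $\varepsilon$ together with the conjugation identity to handle the homogeneous piece, and differentiate the conjugation identity to convert the Maurer--Cartan form into $dJ\,J^{-1}$ via Levi--Civita contractions. The only cosmetic differences are that you parametrise $u_J^{-1}du_J=\imath\theta^l\sigma_l$ explicitly and work with $u_J^{-1}du_J$ throughout (giving a cleaner explanation, via anti-Hermiticity, of why $\hat\Omega^{i0}$ acquires no inhomogeneous term), whereas the paper first passes to $du_J u_J^{-1}$ by cyclicity; the computations are otherwise identical.
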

\begin{proof}
  \begin{eqnarray}
    \tr(\sigma^i \hat \mathfrak A)  & = & \tr(\sigma^iu_{J}^{-1} J^* \mathfrak Au_J) +\imath\tr(\sigma^i u_J^{-1}du_J)  \\
    & = & \tr(\hat \sigma^i J^* \mathfrak A) + \imath \tr(\hat \sigma^i du_J u_J^{-1}) \\
    & = & {J_{i'}}^i \tr(\sigma^{i'} J^* \mathfrak A) + \imath {J_{i'}}^i \tr(\sigma^i du_J u_J^{-1})
  \end{eqnarray}
  with $\hat \sigma^i = u_J \sigma^i u_J^{-1} = {J_{i'}}^i \sigma^{i'}$.\\
  \begin{equation}
    J^* \mathfrak A = \left. \mathfrak A_{m'} \right|_{J^{-1}x} ({J_m}^{m'}dx^m + \partial_m {J_p}^{m'}x^p dx^m)
  \end{equation}
  
  Since $J \in SO(3)$, we know that
  \begin{eqnarray}
    {J^{i'}}_{i}{J^{j'}}_j \delta_{i'j'} & = & \delta_{ij} \\
    {\varepsilon_{i'j'}}^{k'} {J^{i'}}_i {J^{j'}}_j &= & {\varepsilon_{ij}}^k {J^{k'}}_k
  \end{eqnarray}
  ($J$ preserves the Euclidean inner product and the vector cross product). We have then
  \begin{eqnarray}
    {\varepsilon^{jk}}_i {J_{i'}}^i \Re \tr(\sigma^{i'} \mathfrak A_{m'}) & = & {\varepsilon^{j'k'}}_{i'} {J_{j'}}^j {J_{k'}}^k \Re \tr(\sigma^{i'} \mathfrak A_{m'}) \\
    & = & -{J_{j'}}^j {J_{k'}}^k \Omega^{j'k'}_{m'}
  \end{eqnarray}
  Moreover
  \begin{eqnarray}
    {J^i}_j \sigma_i & = & u_J \sigma_j u_J^{-1} \\
    \Rightarrow d{J^i}_j \sigma_i & = & du_J \sigma_j u_J^{-1} - u_J \sigma_j u_J^{-1} du_J u_J^{-1} \\
    \Rightarrow d{J^i}_j \sigma_i & = & [du_J u_J^{-1},{J^i}_j \sigma_i]
  \end{eqnarray}
  Then
  \begin{eqnarray}
    {[J^{-1}]^j}_i d{J^k}_j \sigma_k & = & [du_J u_J^{-1},\sigma_i] \\
    & = & \frac{1}{2} \tr(\sigma^l du_J u_J^{-1}) [\sigma_l,\sigma_i] \\
    & = & \imath \tr(\sigma^l du_J u_J^{-1}) {\varepsilon_{li}}^k \sigma_k
  \end{eqnarray}
  So $\tr(\sigma^l du_J u_J^{-1}) = \frac{\imath}{2} {\varepsilon^{li}}_k {[J^{-1}]^j}_i d{J^k}_j$ and then
  \begin{eqnarray}
    {\varepsilon^{jk}}_i \imath \tr(\sigma^idu_J u_J^{-1}) & = & -\frac{1}{2} {\varepsilon^{jk}}_i {\varepsilon^{il}}_m{[J^{-1}]^p}_l d{J^m}_p \\
    & = & -\frac{1}{2}(\delta^{jl}\delta^k_m - \delta^j_m \delta^{kl}) {[J^{-1}]^p}_l d{J^m}_p \\
    & = & -\frac{1}{2} [J^{-1}]^{pj} d{J^k}_p + \frac{1}{2}[J^{-1}]^{pk}d{J^j}_p \\
    & = & \frac{1}{2} (J^{kp} d{J^j}_p - J^{jp} d{J^k}_p)
  \end{eqnarray}
  Finally we have
  \begin{eqnarray}
    \hat \Omega^{jk}_m & = & -{\varepsilon^{jk}}_i \Re \tr(\sigma^i \hat \mathfrak A_m) \\
    & = & {J_{j'}}^j {J_{k'}}^k \Omega^{j'k'}_{m'} {J_m}^{m'} - \frac{1}{2}(J^{kp} \partial_m {J^j}_p - J^{jp} \partial _m {J^k}_p)
  \end{eqnarray}
  and
  \begin{eqnarray}
    \hat \Omega^{i0}_m & = & \Im \tr(\sigma^i \mathfrak A_m) \\
    & = & {J_{i'}}^i \Omega^{i'0}_{m'} {J_m}^{m'}
  \end{eqnarray} 
\end{proof}

\subsection{Covariant derivatives on $M_\Lambda$}\label{coderiv}
\begin{defi}[Covariant derivatives]
  On $\underline{\mathbb C^2 \otimes \mathcal H}_{M_\Lambda}$ and on $\underline{\mathcal L(\mathbb C^2)}_{M_\Lambda}$ we define the covariant derivatives associated with the gauge potential $\mathfrak A$ by
  \begin{eqnarray}
    \nabla_a |\Phi \rrangle & = & \left(\frac{\partial}{\partial s^a} - \imath \mathfrak A_a \right) |\Phi \rrangle \\
    \nabla_a O & = & \frac{\partial O}{\partial s^a} - \imath [\mathfrak A_a,O]_\dagger
  \end{eqnarray}
  $\forall |\Phi \rrangle \in \underline{\mathbb C^2 \otimes \mathcal H}_{M_\Lambda}$ and $\forall O \in \underline{\mathcal L(\mathbb C^2)}_{M_\Lambda}$, with $[\mathfrak A_a,O]_\dagger \equiv \mathfrak A_a O - O \mathfrak A^\dagger_a$.
\end{defi}
These definitions are consistent with the weak adiabatic approximation:
\begin{eqnarray}
  & & |\Psi(t) \rrangle = \Ted^{-\imath \int_0^t \mathfrak A_a \dot s^a dt}|\Lambda(x(t))\rrangle \nonumber \\
  & & \Rightarrow \frac{d}{dt} |\Psi(t) \rrangle = \Ted^{-\imath \int_0^t \mathfrak A_a \dot s^a dt} \left. \nabla_a |\Lambda(x)\rrangle \right|_{x=x(t)} \dot s^a
\end{eqnarray}
and let $\rho(t) = \tr_{\mathcal H} |\Psi(t)\rrangle \llangle \Psi(t)|$ be the spin density matrix, we have
\begin{eqnarray}
  & & \rho(t)^\dagger = \left(\Ted^{-\imath \int_0^t \mathfrak A_a \dot s^a dt}\right)^\dagger \rho_\Lambda(x(t))\Ted^{-\imath \int_0^t \mathfrak A_a \dot s^a dt} \nonumber \\
  & & \Rightarrow \tr\left(\rho(t)^\dagger O(x(t))\right) \nonumber \\
  & & \quad =  \tr\left(\rho_\Lambda(x(t)) \Ted^{-\imath \int_0^t \mathfrak A_a \dot s^a dt} O(x(t)) \left(\Ted^{-\imath \int_0^t \mathfrak A_a \dot s^a dt}\right)^\dagger \right)
\end{eqnarray}
where $\rho_\Lambda = \tr_{\mathcal H}|\Lambda \rrangle \llangle \Lambda|$ is the eigen density matrix, and so
\begin{equation}
  \frac{d}{dt} \tilde O(t) = \Ted^{-\imath \int_0^t \mathfrak A_a \dot s^a dt} \left. \nabla_a O(x) \right|_{x=x(t)} \dot s^a \left(\Ted^{-\imath \int_0^t \mathfrak A_a \dot s^a dt}\right)^\dagger
\end{equation}
with $\tilde O(t) = \Ted^{-\imath \int_0^t \mathfrak A_a \dot s^a dt} O(x(t)) \left(\Ted^{-\imath \int_0^t \mathfrak A_a \dot s^a dt}\right)^\dagger$
\begin{prop}
  The covariant derivative onto $\underline{\mathcal L(\mathbb C^2)}_{M_\Lambda}$ induces the following covariant derivative onto the 4-vectors $(v^0,\vec v) \in \underline{\mathbb R^4}_{M_\Lambda}$:
  \begin{eqnarray}
    \nabla_a \vec v & = & \frac{\partial}{\partial s^a} \vec v  + \Re\tr(\vec \sigma \mathfrak A_a) \times \vec v + \Im\tr(\vec \sigma \mathfrak A_a)v^0 + \Im \tr(\mathfrak A_a) \vec v\\
    \nabla_a v^0 & = & \frac{\partial}{\partial s^a} v^0 + \Im\tr(\vec \sigma \mathfrak A_a) \cdot \vec v + \Im \tr(\mathfrak A_a) v^0
  \end{eqnarray}
\end{prop}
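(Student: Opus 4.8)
The plan is to exploit the Pauli basis of $\mathcal L(\mathbb C^2)$ in order to push the operator-level covariant derivative down to components. Write any $O \in \mathcal L(\mathbb C^2)$ as $O = o^0\id + o^i\sigma_i$ with $o^0 = \frac12\tr O$ and $o^i = \frac12\tr(\sigma^i O)$; under this identification a \emph{real} $4$-vector $(v^0,\vec v)$ corresponds exactly to a self-adjoint operator $V = v^0\id + v^i\sigma_i$. Before computing I would record the consistency point that $\nabla_a$ sends self-adjoint operators to self-adjoint operators: for $V = V^\dagger$ one has $(\mathfrak A_a V - V\mathfrak A_a^\dagger)^\dagger = V\mathfrak A_a^\dagger - \mathfrak A_a V = -[\mathfrak A_a,V]_\dagger$, so $[\mathfrak A_a,V]_\dagger$ is anti-self-adjoint and $-\imath[\mathfrak A_a,V]_\dagger$ self-adjoint; hence the induced operation on $\underline{\mathbb R^4}_{M_\Lambda}$ is well defined and it makes sense to ask for its expression.

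Next I would expand $\mathfrak A_a = \alpha^0_a\id + \alpha^i_a\sigma_i$ with \emph{complex} coefficients $\alpha^0_a = \frac12\tr\mathfrak A_a$ and $\alpha^i_a = \frac12\tr(\sigma^i\mathfrak A_a)$ — complex because $\mathfrak A_a$ is only almost surely self-adjoint — so that $\mathfrak A_a^\dagger = \bar\alpha^0_a\id + \bar\alpha^i_a\sigma_i$. Using $\sigma_i\sigma_j = \delta_{ij}\id + \imath{\varepsilon_{ij}}^k\sigma_k$ one writes out $\mathfrak A_a V$ and $V\mathfrak A_a^\dagger$ and subtracts. The scalar ($\id$) part of $[\mathfrak A_a,V]_\dagger$ collapses to $(\alpha^0_a-\bar\alpha^0_a)v^0 + (\vec\alpha_a-\bar{\vec\alpha}_a)\cdot\vec v = 2\imath\big(\Im(\alpha^0_a)v^0 + \Im(\vec\alpha_a)\cdot\vec v\big)$, and the vector ($\sigma_k$) part collapses to $2\imath\big(\Im(\alpha^0_a)\vec v + v^0\Im(\vec\alpha_a) + \Re(\vec\alpha_a)\times\vec v\big)$, where the cross-product term is the only survivor of the $\imath{\varepsilon_{ij}}^k$ piece and rests on the identity $\vec\alpha_a\times\vec v - \vec v\times\bar{\vec\alpha}_a = 2\Re(\vec\alpha_a)\times\vec v$.

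Multiplying by $-\imath$ and restoring $\partial_a$, the scalar sector gives $\nabla_a v^0 = \partial_a v^0 + 2\Im(\alpha^0_a)v^0 + 2\Im(\vec\alpha_a)\cdot\vec v$ and the vector sector $\nabla_a\vec v = \partial_a\vec v + 2\Re(\vec\alpha_a)\times\vec v + 2v^0\Im(\vec\alpha_a) + 2\Im(\alpha^0_a)\vec v$; substituting $2\alpha^0_a = \tr\mathfrak A_a$ and $2\vec\alpha_a = \tr(\vec\sigma\mathfrak A_a)$ reproduces the two displayed formulas verbatim. The only genuine obstacle is bookkeeping: separating the real and imaginary parts of the complex Pauli coefficients (the non-self-adjointness of $\mathfrak A_a$ is precisely what generates the $\Re\tr(\vec\sigma\mathfrak A_a)$ rotation term on one side and the $\Im\tr(\vec\sigma\mathfrak A_a)$ and $\Im\tr(\mathfrak A_a)$ boost-and-dilation terms on the other), and getting the signs right in the antisymmetric ${\varepsilon_{ij}}^k$ contractions; the rest is the routine $\mathfrak{su}(2)$ multiplication algebra.
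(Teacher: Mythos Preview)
Your proof is correct and follows essentially the same route as the paper: identify $(v^0,\vec v)$ with $V=v^0\id+v^i\sigma_i$, expand $\mathfrak A_a$ in the Pauli basis with complex coefficients, and compute $-\imath[\mathfrak A_a,V]_\dagger$ using the $\mathfrak{su}(2)$ multiplication rules. The only cosmetic difference is that the paper organises the computation by splitting $\tr(\sigma^j\mathfrak A_a)\sigma_j\sigma_i-\overline{\tr(\sigma^j\mathfrak A_a)}\sigma_i\sigma_j$ into a commutator (the $\Re$ part) and an anticommutator (the $\Im$ part), whereas you expand the products directly and then take real and imaginary parts of the coefficients; your extra remark on self-adjointness preservation is a welcome consistency check that the paper omits.
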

\begin{proof}
  Let $O = v^0 \id + v^i \sigma_i \in \mathcal L(\mathbb C^2)$ be the image of $(v^0,\vec v)$ by the isomorphism between $\mathbb R^4$ and $\mathcal L(\mathbb C^2)$ induced by the basis $(\id,\sigma_1,\sigma_2,\sigma_3)$.
  \begin{eqnarray}
    \nabla_a O & = & \frac{\partial O}{\partial s^a} -\imath (\mathfrak A_a(v^0\id + v^i\sigma_i)-(v^0\id + v^i\sigma_i)\mathfrak A_a^\dagger) \\
    & = & \frac{\partial O}{\partial s^a} -\frac{\imath}{2} v^i (\tr(\sigma^j \mathfrak A_a) \sigma_j \sigma_i - \overline{\tr(\sigma^j \mathfrak A_a)} \sigma_i \sigma_j) \nonumber \\
    & & \quad + v^0 \Im\tr(\sigma^j \mathfrak A_a) \sigma_j + \Im \tr(\mathfrak A_a) O \\
    & = & \frac{\partial O}{\partial s^a} -\frac{\imath}{2} v^i \Re\tr(\sigma^j \mathfrak A_a) [\sigma_j,\sigma_i] +\frac{1}{2} v^i \Im\tr(\sigma^j \mathfrak A_a) \{\sigma_j,\sigma_i\} \nonumber \\
    & & \quad + v^0 \Im\tr(\sigma^j \mathfrak A_a) \sigma_j + \Im \tr(\mathfrak A_a) O \\
    & = & \frac{\partial O}{\partial s^a} + {\varepsilon_{ji}}^k \Re\tr(\sigma^j \mathfrak A_a) v^i \sigma_k + \delta_{ij} v^i \Im\tr(\sigma^j \mathfrak A_a) \nonumber \\
    & & \quad + v^0 \Im\tr(\sigma^j \mathfrak A_a) \sigma_j + \Im \tr(\mathfrak A_a) O \\
    & = & \left(\frac{\partial v^i}{\partial s^a} + {\varepsilon^i}_{jk} \Re\tr(\sigma^j \mathfrak A_a) v^k + v^0 \Im\tr(\sigma^i \mathfrak A_a) + \Im \tr(\mathfrak A_a) v^i\right) \sigma_i \nonumber \\
    & & + \left(\frac{\partial v^0}{\partial s^a} + v^i \Im\tr(\sigma_i \mathfrak A_a) + \Im \tr(\mathfrak A_a) v^0 \right) \id
  \end{eqnarray}
\end{proof}
By restricting the covariant derivative to $\underline{\mathbb R^3}_{M_\Lambda}$:
\begin{equation}
  \nabla_a \vec v = \frac{\partial}{\partial s^a}\vec v + \Re\tr(\vec \sigma \mathfrak A_a) \times \vec v
\end{equation}
we have
\begin{equation}
  \Gamma^i_{bc} = (\nabla_b e_c)^i
\end{equation}

\begin{prop}
 Let $g(t) \equiv \Ted^{-\imath \int_0^t \mathfrak A_a \dot s^a dt}$. If $t \mapsto x(s(t)) \in M_\Lambda$ is an autoparallel geodesic: $\ddot s^a + \Gamma^a_{bc} \dot s^b \dot s^c = 0$, then
  \begin{equation}
    \frac{d}{dt} \left( g(t)\frac{d\slashed D_x}{dt} g(t)^\dagger \right) =  \frac{\imath}{2} g(t)\left\{ (\mathfrak A_a - \mathfrak A_a^\dagger)\dot s^a , \frac{d\slashed D_x}{dt} \right\} g(t)^\dagger
  \end{equation}
\end{prop}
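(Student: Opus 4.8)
The plan is to differentiate $g(t)\,\frac{d\slashed D_x}{dt}\,g(t)^\dagger$ by Leibniz' rule, strip the conjugating factors, and thereby reduce the assertion to a purely internal operator identity that encodes the autoparallel condition. Writing $\mathcal A\equiv\mathfrak A_a\dot s^a$, I would first use the defining equation of $\Ted$ in the convention employed here, $\dot g=-\imath\,g\,\mathcal A$ — the convention consistent with the relation $\frac{d}{dt}|\Psi\rrangle=\Ted^{-\imath\int_0^t\mathfrak A_a\dot s^a dt}\,\nabla_a|\Lambda\rrangle\,\dot s^a$ recorded above — and hence $\dot g^\dagger=\imath\,\mathcal A^\dagger g^\dagger$. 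Leibniz' rule then gives
\begin{equation}
  \frac{d}{dt}\!\left(g\,\frac{d\slashed D_x}{dt}\,g^\dagger\right)=g\!\left(\frac{d^2\slashed D_x}{dt^2}-\imath\,\mathcal A\,\frac{d\slashed D_x}{dt}+\imath\,\frac{d\slashed D_x}{dt}\,\mathcal A^\dagger\right)\!g^\dagger .
\end{equation}
Moving the last two terms across and splitting $\mathcal A=\mathcal A^{(s)}+\mathcal A^{(a)}$ into its self-adjoint and anti-self-adjoint parts, one sees (using $\frac{\imath}{2}\{\mathcal A-\mathcal A^\dagger,\frac{d\slashed D_x}{dt}\}=\imath\{\mathcal A^{(a)},\frac{d\slashed D_x}{dt}\}$) that the proposition is equivalent to the $\mathbb C^2$-operator identity
\begin{equation}
  \frac{d^2\slashed D_x}{dt^2}=\imath\,[\mathcal A^{(s)},\tfrac{d\slashed D_x}{dt}]+2\imath\,\{\mathcal A^{(a)},\tfrac{d\slashed D_x}{dt}\}.
\end{equation}
In particular, when $\mathfrak A=\mathfrak A^\dagger$ this says $g\,\frac{d\slashed D_x}{dt}\,g^\dagger$ is conserved, consistent with the preceding remark that the rotated normal vector stays normal to the initial tangent vector.

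To establish this identity I would compute $\frac{d^2\slashed D_x}{dt^2}$ geometrically. Since $X^i$ is fixed, $\frac{d\slashed D_x}{dt}=-\sigma_i\dot x^i=-\tau_a\dot s^a$ with $\tau_a=\sigma_i\frac{\partial x^i}{\partial s^a}$, so $\frac{d^2\slashed D_x}{dt^2}=-\sigma_i\frac{\partial^2 x^i}{\partial s^b\partial s^c}\dot s^b\dot s^c-\tau_a\ddot s^a$. I then substitute the autoparallel equation $\ddot s^a=-\Gamma^a_{bc}\dot s^b\dot s^c$, writing $\Gamma^a_{bc}=e^a_i(\nabla_b e_c)^i$ with $(\nabla_b e_c)^i=\partial_b e^i_c+(\Re\tr(\vec\sigma\,\mathfrak A_b)\times\vec e_c)^i$ as established above, and using $\tau_a e^a_i=\sigma_m{\Pi^m}_i$ where $\Pi$ (and $N=\id-\Pi$) are the tangential and normal projectors of $\mathbb R^3$ along $M_\Lambda$. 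This collapses to
\begin{equation}
  \frac{d^2\slashed D_x}{dt^2}=-\sigma_i\,{N^i}_k\,\frac{\partial^2 x^k}{\partial s^b\partial s^c}\,\dot s^b\dot s^c+\sigma_m\,{\Pi^m}_i\,\big(\Re\tr(\vec\sigma\,\mathcal A)\times\dot{\vec x}\big)^i .
\end{equation}
Since $\mathcal A^{(s)}=\tfrac12\tr(\mathcal A^{(s)})\id+\tfrac12\Re\tr(\sigma_j\mathcal A)\,\sigma_j$ and $[\sigma_j,\sigma_i]=2\imath\,{\varepsilon_{ji}}^k\sigma_k$, one has $\imath[\mathcal A^{(s)},\frac{d\slashed D_x}{dt}]=\sigma_k\big(\Re\tr(\vec\sigma\,\mathcal A)\times\dot{\vec x}\big)^k$; thus the tangential term of the display already accounts for $\imath[\mathcal A^{(s)},\frac{d\slashed D_x}{dt}]$ up to exactly its own normal piece.

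It then remains to identify the residual normal contribution $-\sigma_i\,{N^i}_k\big(\frac{\partial^2\vec x}{\partial s^b\partial s^c}\dot s^b\dot s^c+\Re\tr(\vec\sigma\,\mathcal A)\times\dot{\vec x}\big)^k$ with $2\imath\{\mathcal A^{(a)},\frac{d\slashed D_x}{dt}\}$. For this I would differentiate $\slashed D_x\partial_a|\Lambda\rrangle=\tau_a|\Lambda\rrangle$ (a consequence of $\slashed D_x|\Lambda\rrangle=0$) a second time and sandwich with $\llangle\Lambda|\,\cdot\,|\Lambda\rrangle$, using $\llangle\Lambda|\slashed D_x=0$; this writes the normal-bending term as $\vec n_x\cdot\frac{\partial^2\vec x}{\partial s^b\partial s^c}\dot s^b\dot s^c=2\,\llangle\Lambda|\tfrac{d\slashed D_x}{dt}|\dot\Lambda\rrangle$ with $|\dot\Lambda\rrangle\equiv\dot s^a|\partial_a\Lambda\rrangle$. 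I then turn $\llangle\Lambda|\sigma_i|\partial_a\Lambda\rrangle$ into $\imath\,\tr(\sigma_i\,\mathfrak A_a\,\rho_\Lambda)$ via $\mathfrak A_a\,\rho_\Lambda=-\imath\,\tr_{\mathcal H}|\partial_a\Lambda\rrangle\llangle\Lambda|$, and use $\vec\sigma\cdot\vec n_x=2\rho_\Lambda-\id$ together with $\|\vec n_x\|^2=(\tr\rho_\Lambda)^2-4\det\rho_\Lambda$ to reduce both sides to traces against $\rho_\Lambda$, whereupon the self- and anti-self-adjoint parts of $\mathcal A$ separate and match. I expect this last step to be the main obstacle: the normal-bending term naturally produces $\rho_\Lambda$-weighted traces, whereas $\{\mathcal A^{(a)},\frac{d\slashed D_x}{dt}\}$ is $\rho_\Lambda$-free, so the accounting has to exploit both the gauge freedom of $\mathfrak A$ (its ambiguity up to $\rho_\Lambda$-almost-surely-vanishing terms) and the fact that the Lorentz connection $\Omega$ only involves $\Re\tr(\vec\sigma\,\mathfrak A)=\tr(\vec\sigma\,\mathcal A^{(s)})$; reconciling these two occurrences of $\mathfrak A$ is where the computation is genuinely delicate.
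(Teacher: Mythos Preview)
Your opening reduction is correct and is exactly what the paper does: Leibniz plus $\dot g=-\imath g\mathcal A$, $\dot g^\dagger=\imath\mathcal A^\dagger g^\dagger$ gives
\[
\frac{d}{dt}\big(g\,\dot{\slashed D}_x\,g^\dagger\big)=g\big(\dot s^a\nabla_a\dot{\slashed D}_x\big)g^\dagger,
\qquad \nabla_aO=\partial_aO-\imath[\mathfrak A_a,O]_\dagger,
\]
and then one has to expand $\nabla_a\tau_b$ with $\tau_b=\sigma_ie^i_b$. Where you diverge from the paper is in the \emph{packaging} of this expansion. The paper does not split $\mathcal A$ into self-adjoint/anti-self-adjoint parts and then redo the Pauli algebra by hand; it simply invokes the immediately preceding proposition (the 4-vector covariant derivative formula applied to $v^0=0$, $\vec v=\vec e_b$), which gives in one stroke
\[
\nabla_a\tau_b=\big[(\nabla_ae_b)^i+\Im\tr(\mathfrak A_a)\,e^i_b\big]\sigma_i+\Im\tr(\sigma_i\mathfrak A_a)\,e^i_b\,\id,
\]
with $(\nabla_ae_b)^i=\partial_ae^i_b+(\Re\tr(\vec\sigma\,\mathfrak A_a)\times\vec e_b)^i$. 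The two $\Im\tr(\cdots)$ pieces reassemble exactly into $-\tfrac{\imath}{2}\{\mathfrak A_a-\mathfrak A_a^\dagger,\tau_b\}$, and the first piece is, by the identity $\Gamma^i_{bc}=(\nabla_be_c)^i$ recorded just before the proposition, identified with the Christoffel term and killed by the geodesic equation. That is the entire proof.

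So your ``main obstacle'' --- matching the normal contribution $-\sigma_iN^i{}_k(\cdots)^k$ to $2\imath\{\mathcal A^{(a)},\dot{\slashed D}_x\}$ via a second differentiation of $\slashed D_x|\partial_a\Lambda\rrangle=\tau_a|\Lambda\rrangle$ and $\rho_\Lambda$-weighted traces --- is not a step the paper performs at all. The paper writes $(\nabla_ae_b)^i\sigma_i$ and passes directly to $e^i_c\Gamma^c_{ab}\sigma_i$, i.e.\ it tacitly treats the $\mathbb R^3$-covariant derivative of a tangent triad as tangent. Your instinct that something is being suppressed here is sound: the tangential projection is not verified in the paper's argument, and your proposed $\rho_\Lambda$-based computation would at best yield an ``almost surely'' statement rather than the exact operator identity as written (indeed the $\id$-component of $2\imath\{\mathcal A^{(a)},\dot{\slashed D}_x\}$ has no counterpart on your normal-projected left-hand side). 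But this is a lacuna in the paper's own proof, not a missing idea on your part; you should present the argument as the paper does --- invoke the 4-vector lemma, isolate the $(\nabla_ae_b)^i$ and the two $\Im\tr$ terms, and recombine --- rather than embark on the $\rho_\Lambda$ route, which will not close as an operator identity.
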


\begin{proof}
\begin{eqnarray}
  \frac{d}{dt}\left(g \frac{d\slashed D_x}{dt} g^\dagger \right) & = & g \left( \dot s^a \nabla_a \frac{d \slashed D_x}{dt} \right) g^\dagger \\
  & = & g \left( \frac{\partial}{\partial s^a} \left(\frac{\partial \slashed D_x}{\partial s^b} \dot s^b \right) \dot s^a -\imath \left[\mathfrak A_a,\frac{\partial \slashed D_x}{\partial s^b}\right]_\dagger \dot s^a \dot s^b \right) g^\dagger \\
  & = & g \left(\frac{\partial \slashed D_x}{\partial s^b} \ddot s^b + \left(\frac{\partial^2 \slashed D_x}{\partial s^a \partial s^b} -\imath \left[\mathfrak A_a,\frac{\partial \slashed D_x}{\partial s^b}\right]_\dagger \right)\dot s^a \dot s^b \right) g^\dagger \\
  & = & g \left(\frac{\partial \slashed D_x}{\partial s^b} \ddot s^b + \left(\nabla_a \frac{\partial \slashed D_x}{\partial s^b} \right)\dot s^a \dot s^b \right) g^\dagger
\end{eqnarray}
but $\frac{\partial \slashed D_x}{\partial s^b} = - \sigma_i \frac{\partial x^i}{\partial s^b} = -\sigma_i e^i_b$, so
\begin{eqnarray}
  \frac{d}{dt}\left(g \frac{d\slashed D_x}{dt} g^\dagger \right) & = & - g \left(\sigma_i e^i_b \ddot s^b + \nabla_a (\sigma_i e^i_b) \dot s^a \dot s^b \right)g^\dagger \\
  & = & - g \left(e^i_b \ddot s^b + (\nabla_a e_b)^i \dot s^a \dot s^b + \Im \tr(\mathfrak A_a) e^i_b\right) \sigma_i g^\dagger \nonumber \\
  & & \quad - \Im \tr(\sigma_i \mathfrak A_a) e^i_b gg^\dagger \\
  & & - g e^i_c \left( \ddot s^c + \Gamma^c_{ab} \dot s^a \dot s^b \right) g^\dagger \nonumber \\
  & & \quad - \frac{\imath}{2} g \left\{\mathfrak A_a - \mathfrak A_a^\dagger, \sigma_i e^i_b \right\} g^\dagger \dot s_a \dot s_b
\end{eqnarray}
\end{proof}

$\mathfrak A$ is almost surely self-adjoint, in the case where it is self-adjoint (implying $g^\dagger = g^{-1}$) we have then
\begin{equation}
  \frac{d}{dt}\left(g \frac{d\slashed D_x}{dt} g^{-1} \right) = 0 \Rightarrow g(t) \frac{d\slashed D_x}{dt} g^{-1}(t) = \left. \frac{d\slashed D_x}{dt} \right|_{t=0}
\end{equation}
and then
\begin{eqnarray}
  & & \llangle \Lambda(x(t))|\frac{d\slashed D_x}{dt} |\Lambda(x(t))\rrangle = \llangle \Psi(t)|\left. \frac{d\slashed D_x}{dt} \right|_{t=0} |\Psi(t) \rrangle \\
  & \iff & \llangle \Psi(t)|\left. \frac{d\slashed D_x}{dt} \right|_{t=0} |\Psi(t) \rrangle  = -\llangle \Lambda(x(t))|\sigma_i|\Lambda(x(t))\rrangle \dot x^i = 0
\end{eqnarray}
by using the adiabatic approximation $|\Psi(t)\rrangle = g(t)|\Lambda(x(t))\rrangle$ ($\llangle \Lambda|\sigma^i|\Lambda\rrangle \partial_i$ is a normal vector of $M_\Lambda$ at $x$ whereas $\dot x^i \partial_i$ is a tangent vector of $M_\Lambda$ at $x$).

\section{The categorical fibre bundle}\label{category}
Quantum mechanics deal with two evolution law. A time continuous evolution laws (by the Schr\"odinger or the Dirac equations) associated with the smooth changes of the probabilities with respect to the time, and a discontinuous abrupt law (by the Born projection rule) associated with the state change after a measurement (and depending on the output of this one). This double aspect is also present with Fuzzy spaces. The Schr\"odinger-like equation is associated with the continuous evolution induced by the smooth moves of the probe $t \mapsto x(t)$. The length of the paths $\mathscr C$ in $M_\Lambda$ by $\gamma$ is a geometric measure of this evolution (by the energy uncertainty with probe misalignment). But, since the quasicoherent states are not separated $\llangle \Lambda(x)|\Lambda(y)\rrangle \not= \delta(y-x)$, a Fuzzy space in the state $|\Lambda(x)\rrangle$ can be projected in the state $|\Lambda(y)\rrangle$ after a measurement. This evolution can be modelled by the displacement operator $\Dis(y,x)$ viewed as a quantum path. The length of the quantum paths by $\pmb{\gamma}$ ($\dist_{\pmb \gamma}$) is a geometric measure of this evolution (by the uncertainty concerning the gap energy between the two points without considering the effects of the orientation state). This two evolutions, smooth paths and quantum jumps, can be naturally incorporate in a category structure. More precisely in a categorical manifold $\mathscr M_\Lambda$, for which $\Obj \mathscr M_\Lambda$ is a topological manifold (permitting to describe the smooth paths) and for which the arrows of $\Morph \mathscr M_\Lambda$ describe quantum jumps (evolutions by the Born projection rule) from their sources to their targets.

\subsection{Adiabatic transport along a pseudo-surface}\label{pseudosurface}
The introduction of the displacement operators is related to the category $\mathscr M_\Lambda$ built on the eigenmanifold. The gauge structure is then associated with a fibration over $\mathscr M_\Lambda$ where the adiabatic transports appear as horizontal lifts of ``paths'' on $\mathscr M_\Lambda$. But the good notion of ``path'' onto a categorical manifold is a pseudo-surface \cite{Viennot5}: $s \mapsto (\varphi,x(s)) \in \Morph \mathscr M_\Lambda$ where $s \mapsto x(s)$ defines a path $\mathscr C$ onto $\Obj \mathscr M_\Lambda = M_\Lambda$.\\

By considering a path $\mathscr C$ from $x$ to $y$ on $M_\Lambda$ and $\varphi \in \Diff M_\Lambda$, we have for the strong adiabatic transport: if $|\Psi(0) \rrangle = u_{\varphi(x),x} \otimes \Dis(\varphi(x),x)|\Lambda(x) \rrangle$ (with $u_{\varphi(x),x} \in U(2)_{a.s.}$):
\begin{equation}
  e^{-\imath \int_{\mathscr C} \tilde \eta_\varphi} u_{\varphi(y),y} \otimes \Dis(\varphi(y),y) e^{-\imath \int_{\mathscr C} A} |\Lambda(y) \rrangle = e^{-\imath \int_{\varphi(\mathscr C)} A} |\Lambda(\varphi(y)) \rrangle
\end{equation}
with $\tilde \eta_\varphi = \varphi^* A - A \in \Omega^1(M_\Lambda,\mathbb R)$. $\Dis(\varphi(y),y) = \Ped_{n.c.}^{-\imath \int_{\pmb \omega(\varphi,y)} \mathbf A_\Dis}$ ($\dnc \Dis = -\imath \Dis \mathbf A_\Dis$) is assimilated to a noncommutative geometric phase. The previous equality can be represented by the following diagram:
\begin{equation}
  \begin{CD}
    \overset{x}{\bullet} @>{\Ped_{n.c.}^{-\imath \int_{\pmb \omega(\varphi,x)} \mathbf A_\Dis}}>{(\varphi,x)}> \overset{\varphi(x)}{\bullet} \\
    @V{e^{-\imath \int_{\mathscr C} A}}V{\mathscr C}V  @V{\varphi(\mathscr C)}V{e^{-\imath \int_{\varphi(\mathscr C)} A}}V \\
      \underset{y}{\bullet} @>{(\varphi,y)}>{e^{-\imath \int_{\mathscr C} \tilde \eta_\varphi} \Ped_{n.c.}^{-\imath \int_{\pmb \omega(\varphi,y)} \mathbf A_\Dis}}>  \underset{\varphi(y)}{\bullet}
  \end{CD}
\end{equation}
$|\Psi(T) \rrangle = e^{-\imath \int_{\mathscr C} \eta_\varphi} u_{\varphi(y),y}\otimes \Dis(\varphi(y),y) e^{-\imath \int_{\mathscr C} A} |\Lambda(y) \rrangle$ is then the adiabatic transport of $|\Lambda \rrangle$ along the pseudo-surface $(\varphi,\mathscr C) \in \Diff M_\Lambda \times \mathcal PM_\Lambda$ on the category $\mathscr M_\Lambda$. 
\begin{eqnarray}
  \tilde \eta_\varphi & = & \varphi^* A - A \\
  & = & -\imath \llangle \Lambda(\varphi(x))|d|\Lambda(\varphi(x))\rrangle +\imath \llangle \Lambda(x)|d|\Lambda(x)\rrangle \\
  & = & -\imath \llangle \Lambda(x)|u_{\varphi(x),x}^{-1} \Dis(\varphi(x),x)^\dagger d u_{\varphi(x),x} \Dis(\varphi(x),x)|\Lambda(x)\rrangle \nonumber \\
  & & \qquad +\imath \llangle \Lambda(x)|d|\Lambda(x)\rrangle \\
  & = & \underbrace{-\imath \omega_x\left(\Dis(\varphi(x),x)^\dagger d\Dis(\varphi(x),x)\right)}_{\eta_\varphi} - \imath \omega_x(u^{-1}_{\varphi(x),x} du_{\varphi(x),x}) 
\end{eqnarray}
$\eta \in \underline{\Omega^1(M_\Lambda,\mathbb R)}_{\Diff M_\Lambda}$ is called gauge potential-transformation \cite{Viennot5}. $A$ is the gauge object-potential on $\mathscr M_\Lambda$ (defined on the elements of $\Obj (\mathscr M_\Lambda) = M_\Lambda$) whereas $\eta$ is the gauge arrow-potential on $\mathscr M_\Lambda$ (defined on the elements of $\Morph (\mathscr M_\Lambda) = \Diff M_\Lambda \times M_\Lambda$). Note that $\pmb \eta = -\imath \Dis^\dagger d\Dis$ and $\mathbf A_\Dis = \imath \Dis^\dagger \dnc \Dis$ have a similar form but the differentials are different (the one of $M_\Lambda$ for the first and the one of $\mathfrak M$ for the second). $\pmb \eta + \mathbf A_\Dis \in \Omega^1_{\Der}(\mathfrak X \otimes \mathcal C^\infty(M_\Lambda))$ can be viewed as a noncommutative gauge potential onto a noncommutative manifold defined by $\mathfrak X \otimes \mathcal C^\infty(M_\Lambda) = \mathcal C^\infty(M_\Lambda,\mathfrak X)$ ($\mathfrak X$-valued functions on $M_\Lambda$).

$B = d\eta \in \underline{\Omega^2(M_\Lambda,\mathbb R)}_{\Diff M_\Lambda}$ is called curving \cite{Viennot5}.

\begin{example}{2}{Fuzzy surface plots}
For a diffeomorphism $\varphi \in \Diff M_\Lambda$, the gauge potential-transformation is
  \begin{eqnarray}
    \eta_{\Dis(\varphi(\alpha),\alpha)} & = & -\imath \Dis(\varphi(\alpha),\alpha)^\dagger d\Dis(\varphi(\alpha),\alpha) \\
    & = & -\imath \left(\left(\frac{\partial \varphi}{\partial \alpha}-1\right) a^+-\frac{\bar \varphi}{\partial \alpha} a \right) d\alpha \nonumber \\
    & & -\imath \left(\frac{\partial \varphi}{\partial \bar \alpha} a^+ - \left(\frac{\partial \bar \varphi}{\partial \bar \alpha}-1\right)a \right)d\bar \alpha \\
    \omega_\alpha(\eta_{\Dis(\varphi(\alpha),\alpha)}) & = & -\imath \left(\left(\frac{\partial \varphi}{\partial \alpha}-1\right) \bar \alpha -\frac{\bar \varphi}{\partial \alpha} \alpha \right) d\alpha \nonumber \\
    & & -\imath \left(\frac{\partial \varphi}{\partial \bar \alpha} \bar \alpha  - \left(\frac{\partial \bar \varphi}{\partial \bar \alpha}-1\right)\alpha \right)d\bar \alpha
  \end{eqnarray}
  and so the adiabatic curving is
  \begin{equation}
    B = d\omega_\alpha(\eta_{\Dis(\varphi(\alpha),\alpha)}) = -\imath \left(2 - \frac{\partial \varphi}{\partial \alpha} -\frac{\partial \bar \varphi}{\partial \bar \alpha} \right) d\alpha \wedge d\bar \alpha
  \end{equation}
  For $\varphi$ corresponding to an external gauge change $\varphi(x^1,x^2) = (\cos \vartheta x^1 + \sin \vartheta x^2, -\sin \vartheta x^1 + \cos \vartheta x^2) \iff \varphi(\alpha) = e^{-\imath \vartheta} \alpha$, we have $B = 2\imath (\cos \vartheta-1) d\alpha \wedge d\bar \alpha = 4(\cos \vartheta-1) dx^1 \wedge dx^2$.\\
\end{example}

\subsection{The fibre bundle on $\mathscr M_\Lambda$}\label{fibre}
Let $\{\mathcal U^\alpha\}$ be a good open cover of $M_\Lambda$. The choice of the quasicoherent state $|\Lambda(x)\rrangle_{|\mathcal U^\alpha}$ constitutes a trivialising local section of a line bundle over $M_\Lambda$. This one is the associated line bundle of a $U(1)$-principal bundle over $M_\Lambda$, $P_\Lambda$, of local trivialisation denoted by $\phi^\alpha_P : \mathcal U^\alpha \times U(1) \xrightarrow{\simeq} {P_{\Lambda}}_{|\mathcal U^\alpha}$: $\forall x \in \mathcal U^\alpha, h\in U(1), z \in \mathbb C$
\begin{equation}
 [\phi^\alpha_P(x,hh'),h'z]_{h' \in U(1)} = hz|\Lambda(x)\rrangle_{|\mathcal U^\alpha}
\end{equation}
$P_\Lambda$ is endowed with a connection of local data the gauge potential $A$ and the curvature $F$. $P_\Lambda$ is the usual Berry-Simon adiabatic bundle \cite{Shapere, Bohm}.\\

We suppose that $\forall x,y \in M_\Lambda$, $U_x \simeq U_y$ ($U_x \subset SU(2)_{a.s.}$ such that $U_x \cdot |\Lambda(x)\rrangle = \ker \slashed D_x$). We denote by $U_0$ the model group (possibly one of the groups $U_x$ at a certain point) and by $\phi^\alpha_{Ux} : U_0 \xrightarrow{\simeq} U_x$ the group homomorphism supposed continuous with respect to $x$ on $\mathcal U^\alpha$. Let $\overline{U_0} = \bigcup_{u \in SU(2)} u U_0 u^{-1}$ be the normal closure of $U_0$ and $\tilde G = SU(2) \ltimes \overline{U_0}$ be the semi-direct product of groups defined by :
\begin{equation}
 \forall u,u' \in SU(2), \forall v,v' \in U_0 \quad (u',v')(u,v) = (u'u,u^{-1}v'uv)
\end{equation}
here $SU(2)$ stands for the spin rotations induced by external gauge changes in $SO(3)$. $\tilde G$ acts on $|\Lambda \rrangle$ as gauge changes via
\begin{equation}
  (u,v)\cdot|\Lambda(x)\rrangle_{|\mathcal U^\alpha} = u\phi^\alpha_{Ux}(v)|\Lambda(x)\rrangle_{|\mathcal U^\alpha}
\end{equation}
the semi-direct product being chosen to be compatible with its action $(u',v')(u,v)\cdot|\Lambda(x)\rrangle_{|\mathcal U^\alpha} = u'\phi^{\alpha}_{Ux}(v')u\phi^{\alpha}_{Ux}(v)|\Lambda(x)\rrangle_{|\mathcal U^\alpha}$. These elements define a $\tilde G$-principal bundle over $M_\Lambda$, $Q_\Lambda$, of local trivialisation $\phi_Q^\alpha : \mathcal U^\alpha \times \tilde G \xrightarrow{\simeq} {Q_\Lambda}_{|\mathcal U^\alpha}$ defined by: $\forall x \in \mathcal U^\alpha, (u,v)\in \tilde G$
\begin{equation}
  \phi^\alpha_Q(x,u,v) = (x,u,\phi^\alpha_{Ux}(v))
\end{equation}
If $U_x$ is trivial, $\tilde G = SU(2)$ and $Q_\Lambda$ is trivial. $\mathfrak A^{off} = \mathfrak A - \tr \mathfrak A$ is the gauge potential of the connection endowing $Q_\Lambda$ which has $\mathfrak F^{off} = \mathfrak F - \tr \mathfrak F$ as curvature. Since $\imath \Omega^{\mu \nu} \frac{1}{2}\{\sigma_\mu,\sigma_\nu\} =  \mathfrak A^{off}$ (with $\sigma_0 = \id$), $\mathfrak A^{off}$ and $\mathfrak F^{off}$ are just representations of the Lorentz connection and of the Riemann curvature. $Q$ is then the frame change principal bundle of $M_\Lambda$.\\

Let $G = U(1) \times \tilde G = U(2) \rtimes \overline{U_0}$, the different gauge changes are associated with the following central extension of groups:
$$ 1 \to U(1) \to G \to G/U(1)=\tilde G \to 1 $$
It is known that a $G$-fibration compatible with $P_\Lambda$ and $Q_\Lambda$ is not a principal bundle but a categorical bundle (twisted bundle \cite{Mackaay}), i.e. a bundle for which the total space is not a manifold but a category. Let $\mathscr G$ be the groupoid with $\Obj(\mathscr G) = \tilde G$ and $\Morph(\mathscr G) = (G \times G)/U(1) \equiv \{[g_2h,g_1h]_{h \in U(1)}\}$ with the following  source, target and identity maps
\begin{equation}
  s[g_2,g_1] = g_1U(1) \quad t[g_2,g_1]=g_2U(1) \quad \id_{gU(1)} = [g,g]
\end{equation}
where $gU(1) \in G/U(1) = \tilde G$ denotes a coset. The arrow composition (vertical composition of arrows) is defined by
\begin{equation}
  [g_3,g_2] \circ [g_2h,g_1] = [g_3h,g_1]
\end{equation}
and the groupoid low (horizontal composition of arrows) is defined by
\begin{equation}
  [g_2',g_1'][g_2,g_1] = [g_2'g_2,g_1'g_1]
\end{equation}
The categorical bundle $\mathscr P_\Lambda$ over $\mathscr M_\Lambda$ is defined by
\begin{eqnarray}
  \Obj({\mathscr P_{\Lambda}}_{|\mathcal U^\alpha}) & = & \bigsqcup_{x \in \mathcal U^\alpha} G_x \subset \underline{U(2)}_{\mathcal U^\alpha} \\
  \Morph({\mathscr P_{\Lambda}}_{|\mathcal U^\alpha}) & = & \bigsqcup_{x \in \mathcal U^\alpha} \bigsqcup_{\varphi \in \Morph \mathcal U^\alpha} \Hom(G_x,G_x) \times G_x \nonumber \\
  & & \qquad \qquad \subset \underline{\Aut U(2) \times U(2)}_{\mathcal U^\alpha}
\end{eqnarray}
where $U(2)$ stands for the unitary operators of $\mathbb C^2$, $G_x \equiv \phi_P^\alpha(U(1))\times SU(2) \rtimes \overline{U_x}$; the projection functor $\mathscr P_\Lambda \to \mathscr M_\Lambda$ being canonically defined by the external direct product. The source, target and identity maps, and the composition of arrows are defined by $\forall g \in G_x$, $\forall f \in \Hom(G_x,G_x)$, $\forall f' \in \Hom(G_{\varphi(x)},G_{\varphi(x)})$:
\begin{eqnarray}
  s(f,g) = g & \qquad & \id_g = (\id_{G_x},g) \\ t(f,g) = f(g) & \qquad & (f',f(g))\circ (f,g) = (f' \circ f,g)
\end{eqnarray}
Note that to define globally $\mathscr P_\Lambda$ on the whole of $\mathscr M_\Lambda$ we need to consider for each open set $\mathcal U^\alpha$ the associated category $\mathscr U^\alpha$ with $\Obj(\mathscr U^\alpha) = \mathcal U^\alpha$ and $\Morph \mathscr U^\alpha = \Diff \mathcal U^\alpha \times \mathcal U^\alpha$, within the functors $\cup$ and $\cap$ such that: $\Obj(\mathscr U^\alpha \cup \mathscr U^\beta) = \mathcal U^\alpha \cup \mathcal U^\beta$, $\Obj(\mathscr U^\alpha \cap \mathscr U^\beta) = \mathcal U^\alpha \cap \mathcal U^\beta$, $\Morph(\mathscr U^\alpha \cup \mathscr U^\beta) = \Diff(\mathcal U^\alpha \cup \mathcal U^\beta) \times \mathcal U^\alpha \cup \mathcal U^\beta \supset \Diff \mathcal U^\alpha \cup \Diff \mathcal U^\beta \times \mathcal U^\alpha \cup \mathcal U^\beta$, and  $\Morph(\mathscr U^\alpha \cap \mathscr U^\beta) = \Diff(\mathcal U^\alpha \cap \mathcal U^\beta) \times \mathcal U^\alpha \cap \mathcal U^\beta \subset \Diff \mathcal U^\alpha \cap \Diff \mathcal U^\beta \times \mathcal U^\alpha \cap \mathcal U^\beta$ (where $\varphi \in \Diff \mathcal U^\alpha$ is naturally extended in $\Diff M_\Lambda$ by the identity on $M_\Lambda \setminus \mathcal U^\alpha$, $\Diff \mathcal U^\alpha$ being defined by the diffeomorphism of the open set $\mathcal U^\alpha$ asymptotically leaving invariant the border of $\mathcal U^\alpha$).\\
The local trivialisation functor of $\mathscr P_\Lambda$ is $\phi^\alpha_{\mathscr P} : \mathscr U^\alpha \times \mathscr G \xrightarrow{\simeq} {\mathscr P_{\Lambda}}_{|\mathscr U^\alpha}$ with: $\forall x \in \mathcal U^\alpha$, $\forall g\in G$, $\forall \varphi \in \Morph \mathcal U^\alpha$, $\forall u_t,u_s \in U(2)$, $\forall v_t,v_s \in \overline{U_0}$, $\forall h \in U(1)$
\begin{eqnarray}
  & & \phi^\alpha_{\mathscr P}(x,gU(1))  =  \phi^\alpha_Q(x,g) \\
  & & \phi^\alpha_{\mathscr P}\left((\varphi,x),(u_tv_th,u_s v_s)\right)  = \nonumber \\
  & & \left(L\left(\phi^\alpha_P(\varphi(x),h)\phi^\alpha_Q(\varphi(x),u_t,v_t)\right) \circ \phi^\alpha_{U\varphi(x)} \circ {\phi^{\alpha}_{Ux}}^{-1} , \phi^\alpha_Q(u_s,g_s) \right)
  \end{eqnarray}
$L$ being the left action of $U(2)$ (as set of unitary operators of $\mathbb C^2$) on itself. This definition is consistent with the source and target maps:
\begin{eqnarray}
  & & s(\phi^\alpha_{\mathscr P}\left((\varphi,x),(u_tv_th,u_s v_s)\right)) = \phi^\alpha_Q(x,u_sv_s) \\
  & & t(\phi^\alpha_{\mathscr P}\left((\varphi,x),(u_tv_th,u_s v_s)\right)) = \nonumber \\
  & & \qquad \phi^\alpha_P(\varphi(x),h) u_tu_s \phi^\alpha_{U\varphi(x)}(u_s^{-1}v_tu_sv_s)
\end{eqnarray}
$\mathscr G$ acts on $|\Lambda \rrangle$ as gauge changes via
\begin{eqnarray}
  & & (u_sv_s) \cdot |\Lambda(x)\rrangle_{|\mathcal U^\alpha}  =  u_s \phi^\alpha_{Ux}(v_s)|\Lambda(x)\rrangle_{|\mathcal U^\alpha} \\
  & & \left[u_tv_th,u_sv_s\right] \cdot_\varphi |\Lambda(x) \rrangle_{|\mathcal U^\alpha}  =  \nonumber \\
  & & \phi^\alpha_P(\varphi(x),h)u_t \phi^\alpha_{U\varphi(x)}(v_t) u_s \phi^\alpha_{U\varphi(x)}(v_s) \otimes \Dis(\varphi(x),x) |\Lambda(x) \rrangle_{|\mathcal U^\alpha}
\end{eqnarray}
$\mathscr P_\Lambda$ is endowed with a 2-connection described by the gauge potentials $(\mathfrak A,\eta, H)$, the fake curvature $\mathfrak F$ and the curving $B$ \cite{Viennot1, Viennot5}. Note that we have two gauge potential-transformations, $\eta$ which is related to the arrows of the base category $\mathscr M_\Lambda$ and $H$ which is related to the arrows of the structure groupoid $\mathscr G$ and defined by $H \rho_\Lambda = -\imath \langle \Lambda(x)|\Dis(\varphi(x),x)^{-1}d\Dis(\varphi(x),x)|\Lambda(x)\rangle_*$. $H$ is the non-abelian gauge potential-transformation associated with $\mathfrak A$ (it plays the same role than $\eta$ for $A$). $\tr(\rho_\Lambda H) = \eta$, $H$ is then a gauge potential-transformation random variable which has $\eta$ as mean value in the statistical distribution of states defined by $\rho_\Lambda$.\\

Let $U(\mathfrak X) \subset \Env(\mathfrak X)$ be the set of unitary operators in $\Env(\mathfrak X)$. Let $\mathscr E$ be the category such that $\Obj(\mathscr E) = \mathbb C^2 \otimes \mathcal H$, $\Morph(\mathscr E) = U(\mathfrak X) \times \mathbb C^2\otimes \mathcal H$, with the following source, target and identity maps, and composition of arrows
\begin{eqnarray}
  s(U,|\Psi\rrangle) = |\Psi \rrangle & \qquad & \id_{\Psi} = (\id,|\Psi\rrangle) \\
  t(U,|\Psi\rrangle) = U|\Psi \rrangle & \qquad & (U',U|\Psi\rrangle) \circ (U,|\Psi \rrangle) = (U'U,|\Psi\rrangle)
\end{eqnarray}
There is a canonical projection functor $\mathscr E \to \mathscr P\mathfrak X$, defined by the $|\Psi\rrangle \mapsto \tr(|\Psi\rrangle \llangle \Psi| \bullet)$ and the projection induced by the quotient $U(\mathfrak X)/\mathcal Z(\Env(\mathfrak X)) \simeq \InnAut\Env(\mathfrak X)$. $\mathscr E \to \mathscr P\mathfrak X$ can be viewed as a noncommutative bundle of flat connection defined by the gauge potential $\mathbf A_\Dis \in \Omega^1_\Der(\mathfrak X)$. The functor $\pmb \omega : \mathscr M_\Lambda \to  \mathscr P\mathfrak X$ permits to consider the associated bundle $\mathscr P_\Lambda \times_{\mathscr G} \mathscr E_{|\pmb \omega(\mathscr M_\Lambda)}$ over $\mathscr M_\Lambda$, defined by
\begin{eqnarray}
  & & \Obj(\mathscr P_\Lambda \times_{\mathscr G} \mathscr E_{|\pmb \omega(\mathscr M_\Lambda)}) =  \{[p\tilde g^{-1},\tilde g |\Lambda(x) \rrangle]_{\tilde g \in \tilde G}, p \in \Obj(\mathscr P_\Lambda)_x\} \\
  & & \Morph(\mathscr P_\Lambda \times_{\mathscr G} \mathscr E_{|\pmb \omega(\mathscr M_\Lambda)})  = \nonumber \\
  & & \left\{[(f,p)[g_th,g_s]^{-1}, \right. \nonumber \\
    & & (g_tg_s \Dis(\varphi(x),x) g_s^{-1}g_t^{-1},hg_tg_s|\Lambda(x)\rrangle)]_{[g_th,g_s]\in G\times G/U(1)} \nonumber \\
    & & \left. , (f,p) \in \Morph(\mathscr P_\Lambda)_{(\varphi,x)}\right\}
\end{eqnarray}

The strong adiabatic transport formulae: $e^{-\imath \int_{\mathscr C} A} |\Lambda(x) \rrangle$ and $e^{-\imath \int_{\mathscr C} (A+\eta_\varphi)} \Dis(\varphi(x),x)|\Lambda(x)\rrangle$, corresponds to the action of the source and the target on the arrow $\left[e^{-\imath \int_{\mathscr C} \eta_\varphi},e^{-\imath \int_{\mathscr C} A}\right]$, and the weak adiabatic transport formulae: $\Ped^{-\imath \int_{\mathscr C} \mathfrak A} |\Lambda(x) \rrangle$ and $\Ped^{-\imath \int_{\mathscr C} (\mathfrak A+H_\varphi)} \otimes \Dis(\varphi(x),x)|\Lambda(x)\rrangle$  corresponds to the action of the source and the target on the arrow $\left[\Ped^{-\imath \int_{\mathscr C} (\mathfrak A+H_\varphi)}(\Ped^{-\imath \int_{\mathscr C} \mathfrak A})^{-1},\Ped^{-\imath \int_{\mathscr C} \mathfrak A}\right]$. The adiabatic transport of $|\Lambda \rrangle$ along the pseudo-surface $(\varphi,\mathscr C) \in \Diff M_\Lambda \times \mathcal PM_\Lambda$ can be then viewed as the the horizontal lift of the pseudo-surface $(\varphi,\mathscr C)$ in $\mathscr P_\Lambda$.

\subsection{Noncommutative versus categorical geometries}\label{NCvsCat}
\begin{tabular}{c||c|c}
  \hline
  \textit{\textbf{Noncommutative geometry}} & \multicolumn{2}{c}{\textit{\textbf{Categorical geometry}}} \\
  & \textit{\textbf{Objects}} & \textit{\textbf{Arrows}} \\
  \hline
  $\InnAut\Env(\mathfrak X) \times \mathcal P(\mathfrak M)$ & $M_\Lambda$ & $\Diff M_\Lambda \times M_\Lambda$ \\
  {\footnotesize automorphisms and pure states} & {\footnotesize eigenmanifold} & {\footnotesize eigenmanifold diffeomorphisms} \\
  \hline
  $\Env(\mathfrak X)$ & $\mathcal C^\infty(M_\Lambda)$ & $\Diff M_\Lambda \times \mathcal C^\infty(M_\Lambda)$ \\
  {\footnotesize quantum observables} & {\footnotesize classical observables} & {\footnotesize transformations of observables} \\
  \hline
  $\Der(\mathfrak X)$ & $TM_\Lambda$ & $\underline{TM_\Lambda}_{\Diff M_\Lambda}$ \\
  {\footnotesize derivatives (commutators)} & {\footnotesize tangent vectors} & {\footnotesize diff. dependent tangent vectors} \\
  \hline
  $\pmb{d\ell^2}_x$ / $\pmb \gamma = \delta_{ij} \dnc X^i \otimes \dnc X^j$ & $\gamma_{ab}ds^ads^b$ & $\dist_{\pmb \gamma}(\omega_x,\omega_{\varphi(x)})$ \\
   {\footnotesize square length observable /} & {\footnotesize square length} & {\footnotesize average length}\\
        {\footnotesize noncommutative inner product} & {\footnotesize of the mean path} & {\footnotesize of the quantum paths}\\
  \hline
  $\mathbf A_\Dis \in \Omega^1_{\Der}(\mathfrak X)$ & $A \in \Omega^1(M_\Lambda,\mathbb R)$ /  & $\eta \in \underline{\Omega^1(M_\Lambda,\mathbb R)}_{\Diff M_\Lambda}$ / \\
  & $\mathfrak A \in \Omega^1(M_\Lambda,\mathfrak u(2)_{a.s.})$ & $H \in \underline{\Omega^1(M_\Lambda,\mathfrak u(2)_{a.s.})}_{\Diff M_\Lambda}$ \\
  {\footnotesize linking vector observable} & {\footnotesize shift vector /} & {\footnotesize gauge potential transformations} \\
  & {\footnotesize Lorentz connection} & \\    
  \hline
  $0$ & $F \in \Omega^2(M_\Lambda,\mathbb R)$ / & $B \in \underline{\Omega^2(M_\Lambda,\mathbb R)}_{\Diff M_\Lambda}$ / \\
  & $\mathfrak F \in \Omega^2(M_\Lambda,\mathfrak u(2))_{a.s.}$ & $dH-\imath H \wedge H$\\
  {\footnotesize noncommutative flatness} & {\footnotesize contorsion / Riemann curvature} & \footnotesize{curvings} \\
  \hline
\end{tabular}\\

The category having $\Diff M_\Lambda \times \mathcal C^\infty(M_\Lambda)$ as arrow set is endowed with the following source, target and identity maps: $s(\varphi,f) = f$, $t(\varphi,f)=f\circ \varphi$, $\id_f = (\id_{M_\Lambda},f)$; and with the following arrow composition : $(\varphi',f\circ \varphi) \circ (\varphi,f) = (\varphi \circ \varphi',f)$.\\
If we consider a paralinkable fuzzy space (\ref{paralinkable}) we have a complete symmetry for the gauge fields between the noncommutative geometry and the categorical geometry, with $\mathbf A_\Dis \in \Omega^1_\Der(\Env(\mathfrak X))$, $\pmb{\mathfrak A}_\Dis \in \Omega^1(\Env(\mathfrak X),\mathfrak{gl}(2,\mathbb C))$, $\mathbf F_\Dis \in \Omega^2_\Der(\Env(\mathfrak X))$ and $\pmb{\mathfrak F}_\Dis \in \Omega^2(\Env(\mathfrak X),\mathfrak{gl}(2,\mathbb C))$.\\

The two geometries are related by applications associated with $\omega = \tr(P_\Lambda \bullet) \in \underline{\mathcal P(\mathfrak M)}_{M_\Lambda}$:
$$ \begin{CD}
  \InnAut\Env(\mathfrak X) \times \mathcal P(\mathfrak M) @<{\pmb \omega}<< \Diff M_\Lambda \times M_\Lambda \\
  \mathcal P(\mathfrak M) @<{\omega}<< M_\Lambda \\
  \Der(\mathfrak X) @>{\pi_x \omega_{x*}}>> T_xM_\Lambda \\
  \Omega^1_\Der(\mathfrak X) @<{\omega_x^* \pi_x^*}<< \Omega^1_xM_\Lambda
\end{CD} $$

In a same way, the different metrics are related by $\omega$:\\
$$ \begin{tikzcd}
  \pmb{d\ell^2}_{x(s)} \arrow[rd, "\omega_{x(s+ds)}"] & \phantom{A} \\
  \pmb \gamma \arrow[rd, "\omega_x(\sqrt{\bullet(L_\varphi,L_\varphi)})"'] & \arrow[l, "\omega_x^* \pi_x^*"']  \gamma_{ab}ds^a ds^b \\
  & \dist_{\pmb \gamma} (\omega_{\varphi(x)},\omega_x)
\end{tikzcd} $$
where $e^{L_\varphi} = \Ad_{\Dis(\varphi(x),x)}$ (this relation holds only if $\varphi(x)$ and $x$ are linkable).\\

It is interesting to point out the relation between these two nonlocal geometries. Noncommutative geometry is nonlocal because of the quantum uncertainties $\Delta_x X^i$ and of the non-separability of quasicoherent states $\llangle \Lambda(y)|\Lambda(x)\rrangle \not= \delta(y-x)$. The categorical geometry is nonlocal because of the arrows which model the possibility to jump ``abruptly'' from the source to the target (the arrows being associated with the displacement operator). For the metric properties, the two notions of nonlocality are intrinsically related : $\pmb{d\ell^2} \to \gamma$ (square length of the mean path $\leftrightarrow$ distance between objects) $\pmb \gamma \to \dist_{\pmb \gamma}$ (average length of the paths $\leftrightarrow$ length of arrows). The reason of this relation is obvious, $\Obj \mathscr M_\Lambda$ is the manifold of the mean values (in the quasicoherent states) of $\vec X$, whereas $\pmb \omega(\Morph \mathscr M_\Lambda) = \{(\Ad_{\Dis(\varphi(x),x)},\omega_x)\}_{x \in M_\Lambda, \varphi \in \Diff M_\Lambda}$ is associated with the displacement operator $\Dis(y,x)$ which models the transformation of the quasicoherent state for a measurement with the probe at $x$ followed by a measurement with probe at $y$, without any intermediate measurement. $\Dis(y,x)$ can be then assimilated to the quantum path between the two measurements (submitted to the quantum fluctuations).

\section*{References}

\end{document}